\definecolor{DarkBlue}{rgb}{0.2,0.2,0.6}
\definecolor{DarkRed}{rgb}{0.368,0.097,0.078}
\newcommand{\red}[1]{{\bf{\color{red}{#1}}}}
\renewcommand{\paragraph}[1]{\medskip\noindent{\bf #1}}
\newtheorem{theorem}		{Theorem} 	[section]
\newtheorem{lemma}			[theorem]	{Lemma}
\newtheorem{definition}		[theorem]	{Definition}
\newtheorem{assumption}		[theorem]	{Assumption}
\newtheorem{corollary}		[theorem]	{Corollary}
\newtheorem{remark}		    [theorem]	{Remark}
\newtheorem{example}	    [theorem]	{Example}
\DeclareMathSymbol{\N}{\mathbin}{AMSb}{"4E}
\DeclareMathSymbol{\Z}{\mathbin}{AMSb}{"5A}
\DeclareMathSymbol{\R}{\mathbin}{AMSb}{"52}
\DeclareMathSymbol{\Q}{\mathbin}{AMSb}{"51}
\DeclareMathSymbol{\erert}{\mathbin}{AMSb}{"50}
\DeclareMathSymbol{\I}{\mathbin}{AMSb}{"49}
\DeclareMathSymbol{\C}{\mathbin}{AMSb}{"43}
\DeclareMathSymbol{\E}{\mathbin}{AMSb}{"45}
\newcommand{\Lap}{{\rm Lap}}
\newcommand{\Out}{\mathsf{Out}}
\newcommand{\Space}{{\rm Space}}
\newcommand{\Stitch}{{\rm Stitch}}
\newcommand{\StepSize}{{\rm StepSize}}
\newcommand{\Output}{{\rm Output}}
\newcommand{\Next}{{\rm Next}}
\newcommand{\ST}{{\rm ST}}
\newcommand{\TN}{{\rm \mu}}
\newcommand{\STmon}{{\rm M}}
\newcommand{\STwrapper}{{\rm W}}
\newcommand{\TDE}{{\rm TDE}}
\newcommand{\TDEj}{{\rm TDE\text{-}j}}
\newcommand{\AllTDE}{{\rm All\text{-}TDE}}
\newcommand{\AllST}{{\rm All\text{-}ST}}
\newcommand{\ALG}{{\rm ALG}}
\newcommand{\DE}{{\rm DE}}
\newcommand{\FV}{{\rm FV}}
\newcommand{\PhaseSize}{{\rm PhaseSize}}
\newcommand{\MaxUpdateSize}{{\rm MuSize}}
\newcommand{\PrivateMed}{\texttt{PrivateMed}}
\newcommand{\StitchFrozenVals}{\texttt{StitchFrozenVals}}
\newcommand{\NoCapping}{\texttt{NoCapping}}
\newcommand{\Capp}{\texttt{Capp}}
\newcommand{\AAA}{\mathcal A}
\newcommand{\FFF}{\mathcal F}
\newcommand{\PPP}{\mathcal P}
\newcommand{\SSS}{\mathcal S}
\newcommand{\TTT}{\mathcal T}
\newcommand{\VVV}{\mathcal V}
\newcommand{\eps}{\varepsilon}
\DeclareMathOperator*{\argmin}{arg\,min} 
\def\argmin{\mbox{\rm argmin}}
\newcommand{\poly}{\mathop{\rm poly}}
\newcommand{\polylog}{\mathop{\rm polylog}}
\title{A Framework for Adversarial Streaming via Differential Privacy and Difference Estimators\footnote{In a previous version of the paper we claimed that our framework (stated in algorithm \texttt{RobustDE}) is directly applicable for tracking $F_2$ in the turnstile model. This is incorrect, as we are not aware of a construction of an appropriate difference estimator for $F_2$ in the turnstile model.}}
\author{Idan Attias\thanks{Ben-Gurion University.} \and Edith Cohen\thanks{Google Research and Tel Aviv University.} \and Moshe Shechner\thanks{Tel Aviv University. Partially supported by the Israel Science Foundation (grant 1871/19).} \and Uri Stemmer\thanks{Tel Aviv University and Google Research. Partially supported by the Israel Science Foundation (grant 1871/19)
and by Len Blavatnik and the Blavatnik Family foundation.} }
\date{September 25, 2022}
\begin{document}
\maketitle

%%%%%%%%%%%%%%%
% Abstract
%%%%%%%%%%%%%%%
\begin{abstract}
Classical streaming algorithms operate under the (not always reasonable) assumption that the input stream is fixed in advance. Recently, there is a growing interest in designing {\em robust streaming algorithms} that provide provable guarantees even when the input stream is chosen adaptively as the execution progresses. 
We propose a new framework for robust streaming that combines techniques from two recently suggested frameworks by Hassidim et al.~[NeurIPS 2020] and by Woodruff and Zhou~[FOCS 2021]. 
These recently suggested frameworks rely on very different ideas, each with its own strengths and weaknesses. We combine these two frameworks into a single hybrid framework that obtains the ``best of both worlds'', thereby solving a question left open by Woodruff and Zhou. 
\end{abstract}

%%%%%%%%%%%%%%%
% Introduction
%%%%%%%%%%%%%%%
\section{Introduction}\label{sec:introduction}

Streaming algorithms are algorithms for processing large data streams while using only a limited amount of memory, significantly smaller than what is needed to store the entire data stream. Data streams occur in many applications including computer networking, databases, and natural language processing.  The seminal work of Alon, Matias, and Szegedy~\cite{alon1999space} initiated an extensive theoretical study and further applications of streaming algorithms. 

In this work we focus on streaming algorithms that aim to maintain, at any point in time, an approximation for the value of some (predefined) real-valued function of the input stream. Such streaming algorithms are sometimes referred to as {\em strong trackers}. For example, this predefined function might count the number of distinct elements in the stream. Formally,

\begin{definition}\label{def:obliv}
Let $\AAA$ be an algorithm that, for $m$ rounds, obtains an element from a domain $X$ and outputs a real number.
Algorithm $\AAA$ is said to be a {\em strong tracker} for a function $\FFF:X^*\rightarrow\R$ with accuracy $\alpha$, failure probability $\delta$, and stream length $m$  if the following holds for every sequence $\vec{u}=(u_1,\dots,u_m)\in X^m$. Consider an execution of $\AAA$ on the input stream $\vec{u}$, and denote the answers given by $\AAA$ as $\vec{z}=(z_1,\dots,z_m)$. Then,
$$
\Pr\left[\forall i\in[m]:\; z_i\in(1\pm\alpha)\cdot\FFF(u_1,\dots,u_i) \right]\geq1-\delta,
$$
where the probability is taken over the coins of algorithm $\AAA$.
\end{definition}

While Definition~\ref{def:obliv} is certainly not the only possible definition of streaming algorithms, it is rather standard. Note that in this definition we assume that the input stream $\vec{u}$ is fixed in advance. In particular, we assume that the choice for the elements in the stream is {\em independent} from the internal randomness of $\AAA$. This assumption is crucial for the analysis (and correctness) of many of the existing streaming algorithms. We refer to algorithms that utilize this assumption as {\em oblivious} streaming algorithms. 
In this work we are interested in the setting where this assumption does not hold, often called the {\em adversarial setting}.

\subsection{The adversarial streaming model}

The adversarial streaming model, in various forms, was considered by~\cite{MironovNS11,GHRSW12,GHSWW12,AhnGM12,AhnGM12b,HardtW13,BenEliezerY19,ben2020framework,hassidim2020adversarially,kaplan2021separating,BravermanHMSSZ21,Cohen0NSSS22,CohenNSS22}. We give here the formulation presented by Ben-Eliezer et al.~\cite{ben2020framework}. 
The adversarial setting is modeled by a two-player game between a (randomized) \texttt{StreamingAlgorithm} and an \texttt{Adversary}. At the beginning, we fix a function $\FFF:X^*\rightarrow\R$. Then the game proceeds in rounds, where in the $i$th round:

\begin{enumerate}
	\item The \texttt{Adversary} chooses an update $u_i\in X$ for the stream, which can depend, in particular, on all previous stream updates and outputs of \texttt{StreamingAlgorithm}.
	\item The \texttt{StreamingAlgorithm} processes the new update $u_i$ and outputs its current response $z_i\in\R$.
\end{enumerate}

The goal of the \texttt{Adversary} is to make the \texttt{StreamingAlgorithm} output an incorrect response
$z_i$ at some point $i$ in the stream. For example, in the distinct elements problem,
the adversary's goal is that at some step $i$, the estimate $z_i$ will fail to be a $(1+\alpha)$-approximation
of the true current number of distinct elements.

In this work we present a new framework for transforming an oblivious streaming algorithm into an adversarially-robust streaming algorithm. Before presenting our framework, we first elaborate on the existing literature and the currently available frameworks.

\subsection{Existing framework: Ben-Eliezer et al.~\texorpdfstring{\cite{ben2020framework}}{BJWY20}}

To illustrate the results of~\cite{ben2020framework}, let us consider the {\em distinct elements} problem, in which the function $\FFF$ counts the number of distinct elements in the stream. Observe that, assuming that there are no deletions in the stream, this quantity is monotonically increasing. Furthermore, since we are aiming for a multiplicative error, the number of times we need to modify the estimate we release is quite small (it depends logarithmically on the stream length $m$). Informally, the idea of~\cite{ben2020framework} is to run several independent copies of an oblivious algorithm (in parallel), and to use each copy to release answers over a part of the stream during which the estimate remains constant. In more detail, the generic transformation of~\cite{ben2020framework} (applicable not only to the distinct elements problem) is based on the following definition. 

\begin{definition}[Flip number \cite{ben2020framework}]
Given a function $\FFF$, the {\em $(\alpha,m)$-flip number} of $\FFF$, denoted as $\lambda_{\alpha,m}(\FFF)$, or simply $\lambda$ in short, is the maximal number of times that the value of $\FFF$ can change (increase or decrease) by a factor of $(1+\alpha)$ during a stream of length $m$.
\end{definition}

\begin{remark}
In the technical sections of this work, we sometimes refer to the {\em flip number of the given stream} (w.r.t.\ the target function), which is a more fine-tuned quantity.
\end{remark}

\begin{example}
Assuming that there are no deletions in the stream (a.k.a.\ the {\em insertion only model}), the {\em $(\alpha,m)$-flip number} of the distinct elements problem is at most $O\left( \frac{1}{\alpha} \log m \right)$. However, if deletions are allowed (a.k.a.\ the {\em turnstile model}), then the flip number of this problem could be as big as $\Omega(m)$.
\end{example}

The generic construction of~\cite{ben2020framework} for a function $\FFF$ is as follows. 
\begin{enumerate}
	\item Instantiate $\lambda$ independent copies of an oblivious streaming algorithm for the function $\FFF$, and set $j=1$.
	\item When the next update $u_i$ arrives:
	\begin{enumerate}
		\item Feed $u_i$ to {\em all} of the $\lambda$ copies.
		\item Release an estimate using the $j$th copy (rounded to the nearest power of $(1+\alpha)$). If this estimate is different than the previous estimate, then set $j\leftarrow j+1$.
	\end{enumerate}
\end{enumerate}
Ben-Eliezer et al.~\cite{ben2020framework} showed that this can be used to transform an oblivious streaming algorithm for $\FFF$ into an adversarially robust streaming algorithm for $\FFF$. In addition, the overhead in terms of memory is only $\lambda$, which is small in many interesting settings.  

The simple, but powerful, observation of Ben-Eliezer et al.~\cite{ben2020framework}, is that by ``using every copy at most once'' we can break the dependencies between the internal randomness of our algorithm and the choice for the elements in the stream. Intuitively, this holds because the answer is always computed using a ``fresh copy'' whose randomness is independent from the choice of stream items. 
\subsection{Existing framework: Hassidim et al.~\texorpdfstring{\cite{hassidim2020adversarially}}{HKM+20}}
Hassidim et al.~\cite{hassidim2020adversarially} showed that, in fact, we can use every copy of the oblivious algorithm much more than once. In more detail, the idea of Hassidim et al.\ is to protect the internal randomness of each of the copies of the oblivious streaming algorithm using {\em differential privacy}~\cite{dwork2006calibrating}. Hassidim et al.\ showed that this still suffices in order to break the dependencies between the internal randomness of our algorithm and the choice for the elements in the stream.
This resulted in an improved framework where the space blowup is only $\sqrt{\lambda}$ (instead of $\lambda$). Informally, the framework of \cite{hassidim2020adversarially} is as follows.

\begin{enumerate}
	\item Instantiate $\sqrt{\lambda}$ independent copies of an oblivious streaming algorithm for the function $\FFF$.
	\item When the next update $u_i$ arrives:
	\begin{enumerate}
		\item Feed $u_i$ to {\em all} of the $\sqrt{\lambda}$ copies.
		\item Aggregate all of the estimates given by the $\sqrt{\lambda}$ copies, and compare the aggregated estimate to the previous estimate. If the estimate had changed ``significantly'', output the new estimate. Otherwise output the previous output. 
	\end{enumerate}
\end{enumerate}

In order to efficiently aggregate the estimates in Step 2b, this framework crucially relied on the fact that all of the copies of the oblivious algorithm are ``the same'' in the sense that they compute (or estimate) exactly the same function of the stream. This allowed Hassidim et al.\ to efficiently aggregate the returned estimates using standard tools from the literature on differential privacy. The intuition is that differential privacy allows us to identify global properties of the data, and hence, aggregating several numbers (the outcomes of the different oblivious algorithms) is easy if they are very similar.

%%%%%%%%%%%%%%%%%%%%%%%%%%%%%%
% Existing framework: WZ
%%%%%%%%%%%%%%%%%%%%%%%%%%%%%%
\subsection{Existing framework: Woodruff and Zhou~\texorpdfstring{\cite{woodruff2020tight}}{WZ21}}\label{sec:intro_WZ}
Woodruff and Zhou \cite{woodruff2020tight} presented an adversarial streaming framework that builds on the framework of Ben-Eliezer at el.~\cite{ben2020framework}. 
The new idea of \cite{woodruff2020tight} is that, in many interesting cases, the oblivious algorithms we execute can be modified to track {\em different} (but related) functions, that require less space while still allowing us to use (or combine) several of them at any point in time in order to estimate $\FFF$.

To illustrate this, consider a part of the input stream, say from time $t_1$ to time $t_2$, during which the target function $\FFF$ doubles its value and is monotonically increasing. More specifically, suppose that we already know (or have a good estimation for) the value of $\FFF$ at time $t_1$, and we want to track the value of $\FFF$ from time $t_1$ till $t_2$. Recall that in the framework of \cite{ben2020framework} we only modify our output once the value of the function has changed by more than a $(1+\alpha)$ factor. As $\FFF(t_2)\leq2\cdot\FFF(t_1)$, we get that between time $t_1$ and $t_2$ there are roughly $1/\alpha$ time points at which we need to modify our output. 
In the framework of \cite{ben2020framework}, we need a fresh copy of the oblivious algorithm for each of these $1/\alpha$ time points. For concreteness, let us assume that every copy uses space $1/\alpha^2$ (which is the case if, e.g., $\FFF=F_2$), and hence the framework of \cite{ben2020framework} requires space $1/\alpha^3$ to track the value of the function $\FFF$ from $t_1$ till $t_2$. 

In the framework of \cite{woodruff2020tight}, on the other hand, this will cost only $1/\alpha^2$. We now elaborate on this improvement. As we said, from time $t_1$ till $t_2$ there are $1/\alpha$ time points on which we need to modify our output. Let us denote these time points as $t_1 = w_0 < w_1<w_2<\dots<w_{1/\alpha} = t_2$.\footnote{Note that these time points are not known to the algorithm in advance. Rather, the algorithm needs to discover them ``on the fly''. To simplify the presentation, in Section~\ref{sec:intro_WZ} we assume that these time points are known in advance.
} In the framework of \cite{woodruff2020tight}, the oblivious algorithms we execute are tracking {\em differences} between the values of $\FFF$ at specific times, rather than tracking the value of $\FFF$ directly. (These algorithms are called {\em difference estimators}, or DE in short.) In more detail, suppose that for every $j\in\{0,1,2,3,\dots,\log\frac{1}{\alpha}\}$ and every $i\in\{2^j, 2{\cdot}2^j, 3{\cdot}2^j, 4{\cdot}2^j,\dots, \frac{1}{\alpha}\}$ we have an oblivious algorithm (a {\em difference estimator}) for estimating the value of $[\FFF(w_i) - \FFF(w_{i-2^j})]$. We refer to the index $j$ as the {\em level} of the oblivious algorithm. So there are $\log\frac{1}{\alpha}$ different levels, where we have a different number of oblivious algorithms for each level. (For level $j=0$ we have $1/\alpha$ oblivious algorithms and for level $j=\log\frac{1}{\alpha}$ we have only a single oblivious algorithm.)

Note that given all of these oblivious algorithms, we could compute an estimation for the value of the target function $\FFF$ at each of the time points $w_1,\dots,w_{1/\alpha}$ (and hence for every time $t_1\leq t\leq t_2$) by summing the estimations of (at most) one oblivious algorithm from each level.\footnote{Specifically, in order to reach the estimated value of $\mathcal{F}$ at time $w_{\tilde{t}}$ one can add the estimations of difference estimators of levels corresponds to the binary representation of $\tilde{t}$. That is, at most one of each level $j$.}  For example, an estimation for the value of $\FFF\left(w_{\frac{3}{4\alpha}+1}\right)$ can be obtained by combining estimations as follows:
$$
\FFF\left(w_{\frac{3}{4\alpha}+1}\right) = \FFF\left(w_0\right) + \left[ \FFF\left(w_{\frac{1}{2\alpha}}\right) {-} \FFF\left(w_0\right) \right] + \left[ \FFF\left(w_{\frac{3}{4\alpha}}\right) {-} \FFF\left(w_{\frac{1}{2\alpha}}\right) \right] + \left[ \FFF\left(w_{\frac{3}{4\alpha}+1}\right) {-} \FFF\left(w_{\frac{3}{4\alpha}}\right) \right].
$$

As we sum at most $\log\frac{1}{\alpha}$ estimations, this decomposition increases our estimation error only by a factor of $\log\frac{1}{\alpha}$, which is acceptable. The key observation of \cite{woodruff2020tight} is that the space complexity needed for an oblivious algorithm at level $j$ decreases when $j$ decreases (intuitively because in lower  levels we need to track smaller differences, which is easier). So, even though in level $j{=}10$ we have more oblivious algorithms than in level $20$, these oblivious algorithms are cheaper than in level $20$ such that the overall space requirements for levels $j{=}10$ and level $j{=}20$ (or any other level) is the same. Specifically, \cite{woodruff2020tight} showed that (for many problems of interest, e.g., for $F_2$) the space requirement of a difference estimator at level $j$ is $O( 2^{j} /\alpha  )$. 
We run  $O(2^{-j} / \alpha)$ oblivious algorithms for level $j$, and hence, the space needed for level $j$ is $O(2^{-j}/\alpha \cdot 2^{j}/\alpha)=O(1/\alpha^{2})$. As we have $\log(1/\alpha)$ levels,
the overall space we need to track the value of $\FFF$ from time $t_1$ till $t_2$ is $\tilde{O}(1/\alpha^{2})$. This should be contrasted with the space required by \cite{ben2020framework} for this time segment, which is $O(1/\alpha^3)$.

\subsection{Our results}

The framework of \cite{woodruff2020tight} is very effective for the insertion-only model. However, there are two challenges that need to be addressed in the turnstile setting: (1) We are not aware of non-trivial constructions for difference estimators in the turnstile setting, and hence, the framework of \cite{woodruff2020tight} is not directly applicable to the turnstile setting.\footnote{Moshe Shechner and Samson Zhou. Personal communication, 2022.} (2) Even assuming the existence of a non-trivial difference estimator, the framework of \cite{woodruff2020tight} obtains sub-optimal results in the turnstile setting.

To overcome the first challenge, we introduce a new {\em monitoring technique}, that aims to identify time steps at which we cannot guarantee correctness of our difference estimators (in the turnstile setting), and reset the system at these time steps. This will depend on the specific application at hand (the target function) and hence, we defer the discussion on our monitoring technique to Section~\ref{sec:Applications} where we discuss applications of our framework.

We now focus on the second challenge (after assuming the existence of non-trivial difference estimators).
To illustrate the sub-optimality of the framework of \cite{woodruff2020tight}, let us consider a simplified turnstile setting in which the input stream can be partitioned into $k$ time segments during each of which the target function is monotonic, and increases (or decreases) by at most a factor of 2 (or 1/2). Note that $k$ can be  very large in the turnstile model (up to $O(m)$). 
With the framework of \cite{woodruff2020tight}, we would need space $\tilde{O}\left( \frac{k}{\alpha^2} \right)$ to track the value of $F_2$ throughout such an input stream. The reason is that, like in the framework of \cite{ben2020framework}, the robustness guarantees are achieved by making sure that every oblivious algorithm is ``used only once''. This means that we cannot reuse the oblivious algorithms across the different segments, and hence, the space complexity of \cite{woodruff2020tight} scales linearly with the number of segments $k$. 

To mitigate this issue, we propose a new construction that combines the frameworks of \cite{woodruff2020tight} with the framework of \cite{hassidim2020adversarially}. Intuitively, in our simplified example with the $k$ segments, we want to reuse the oblivious algorithms across different segments, and protect their internal randomness with differential privacy to ensure robustness.
However, there is an issue here. Recall that the framework of \cite{hassidim2020adversarially} crucially relied on the fact that all of the copies of the oblivious algorithm are ``the same'' in the sense that they compute the same function exactly. This allowed \cite{hassidim2020adversarially} to efficiently aggregate the estimates in a differentially private manner. However, in the framework of \cite{woodruff2020tight}, the oblivious algorithms we maintain are fundamentally {\em different} from each other, tracking different functions. Specifically, every difference estimator is tracking the value of $[\FFF(t)-\FFF(e)]$ for a {\em unique enabling time} $e<t$ (where $t$ denotes the current time). That is, every difference estimator necessarily has a different enabling time, and hence, they are not tracking the same function, and it is not clear how to aggregate their outcomes with differential privacy.

\paragraph{Toggle Difference Estimator (TDE).} To overcome the above challenge, we present an extension to the notion of a difference estimator, which we call a {\em Toggle Difference Estimator} (see Definition \ref{def:TDE}). Informally, a toggle difference estimator is a difference estimator that allows us to modify its {\em enabling time} on the go. This means that a TDE can track, e.g., the value of $[\FFF(t)-\FFF(e_1)]$ for some (previously given) enabling time $e_1$, and then, at some later point in time, we can instruct the same TDE to track instead the value of $[\FFF(t)-\FFF(e_2)]$ for some other enabling time $e_2$. We show that this extra requirement from the difference estimator comes at a very low cost in terms of memory and runtime. Specifically, in Section~\ref{sec:TDE} we present a generic (efficiency preserving) method for generating a TDE from a DE.

\medskip

Let us return to our example with the $k$ segments. Instead of using every oblivious algorithm only once, we reuse them across the different segments, where during any single segment all the TDE's are instructed to track the appropriate differences that are needed for the current segment. This means that during every segment we have many copies of the ``same'' oblivious algorithm. More specifically, for every different {\em level} (as we explained above) we have many copies of an oblivious algorithm for that level, which is (currently) tracking the difference that we need. This allows our space complexity to scale with $\sqrt{k}$ instead of linearly with $k$ as the framework of \cite{woodruff2020tight}. To summarize this discussion, our new notion of TDE allows us to gain both the space saving achieved by differential privacy (as in the framework of \cite{hassidim2020adversarially}) and the space saving achieved by tracking the target function via differences (as in the  framework of \cite{woodruff2020tight}).

\begin{remark}
The presentation given above (w.r.t.\ our example with the $k$ segments) is oversimplified. Clearly, in general, we have no guarantees that an input (turnstile) stream can be partitioned into $k$ such segments. This means that in the actual construction we need to calibrate our TDE's across time segments in which the value of the target function is {\em not} monotone. See Section~\ref{sec:constructionOverview} for a more detailed overview of our construction and the additional modifications we had to introduce.
\end{remark}

We are now ready to state our main result (for the formal statement see Theorem \ref{thm:FrameworkSpace}). We present a framework for adversarial streaming for turnstile streams with bounded flip number $\lambda$, for any function $\mathcal{F}$ for which the following algorithms exist:
\begin{enumerate}
	\item An $\alpha$-accurate oblivious streaming algorithm $\mathsf{E}$ with space complexity $\text{Space}(\mathsf{E})$.
	\item An oblivious TDE streaming algorithm $\mathsf{E}_{\TDE}$ (satisfying some conditions).
\end{enumerate}

Under these conditions, our framework results in an $O(\alpha)$-accurate adversarially-robust algorithm with space\footnote{Here $\tilde{O}$ stands for omitting poly-logarithmic factors of $\lambda, \alpha^{-1}, \delta^{-1}, n, m$.}
$\tilde{O}\left(\sqrt{\alpha\cdot \lambda}\cdot \text{Space}(\mathsf{E}) \right)$. 
In contrast, under the same conditions, the framework of \cite{woodruff2020tight} requires space $\tilde{O}\left(\alpha \cdot \lambda \cdot \text{Space}(\mathsf{E})\right)$.

As we mentioned, we are not aware of non-trivial constructions for difference estimators that work in the turnstile setting. Nevertheless, in Section \ref{sec:Applications} we show that our framework is applicable to the problem of estimating $F_2$ (the second moment of the stream). To this end, we introduce the following notion that allows us to control the number of times we need to reset our system (which happens when we cannot guarantee correctness of our difference estimators).

\begin{definition}[Twist number]\label{def:twistIntro} 
The {\em $(\alpha,m)$-twist number} of a stream $\SSS$ w.r.t.\ a functionality $\FFF$, denoted as $\TN_{\alpha,m}(\SSS)$, is the maximal $\mu\in[m]$ such that $\SSS$ can be partitioned into $2\mu$ disjoint segments $\SSS = \PPP_0 \circ \VVV_0 \circ\dots \circ \PPP_{\TN-1} \circ \VVV_{\TN-1}$ (where $\{\PPP_i\}_{i\in[\TN]}$ may be empty) s.t.\ for every $i\in[\mu]$:
\begin{enumerate}
    \item $\FFF(\VVV_i) > \alpha \cdot \FFF(\PPP_0 \circ \VVV_0 \circ \dots\circ \VVV_{i-1} \circ \PPP_i)$
    \item $|\FFF(\PPP_0 \circ\VVV_0 \dots \circ \PPP_i \circ \VVV_i) - \FFF(\PPP_0\circ\VVV_0 \circ \dots \circ \PPP_i)| \leq \alpha\cdot \FFF(\PPP_0 \circ \VVV_0 \circ \dots \circ \PPP_i)$
\end{enumerate}
\end{definition}

Intuitively, a stream has twist number $\mu$ if there are $\mu$ disjoint segments $\VVV_0,\dots,\VVV_{\mu-1}\subseteq\SSS$ such that the value of the function on each of these segments is large (Condition 1), but still these segments do not change the value of the function on the entire stream by too much (Condition 2). In Section \ref{sec:Applications} we leverage this notation and present the following result for $F_2$ estimation.

\begin{theorem}[$F_2$ Robust estimation, informal]\label{thm:F2SpaceInformal}
There exists an adversarially robust $F_2$ estimation algorithm for turnstile streams of length $m$ with a bounded  $(O(\alpha), m)$-flip number $\lambda$ and a bounded $(O(\alpha), m)$-twist number $\mu$ that guarantees $\alpha$-accuracy (w.h.p.)\ using space complexity 
$\tilde{O}\left(\frac{\sqrt{\alpha\lambda+\mu}}{\alpha^{2}}\right)$.

\end{theorem}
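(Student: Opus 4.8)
The plan is to instantiate the generic framework of Theorem~\ref{thm:FrameworkSpace} with $\FFF = F_2$, and to bound the resources needed for the ingredients it requires: an oblivious $\alpha$-accurate $F_2$ tracker $\mathsf{E}$, and an oblivious TDE $\mathsf{E}_{\TDE}$ for $F_2$. For $\mathsf{E}$ we use a standard AMS/Count-Sketch-type sketch, giving $\Space(\mathsf{E}) = \tilde O(1/\alpha^2)$. For the TDE, we first recall (or construct) an $F_2$ difference estimator in the insertion-only style — a sketch that on a segment starting at enabling time $e$ estimates $F_2(t) - F_2(e)$ using space $\tilde O(2^j/\alpha)$ at level $j$ — and then apply the generic DE-to-TDE transformation of Section~\ref{sec:TDE} to obtain $\mathsf{E}_{\TDE}$ with the same asymptotic space and satisfying the conditions required by the framework. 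Plugging $\Space(\mathsf{E}) = \tilde O(1/\alpha^2)$ into the framework bound $\tilde O(\sqrt{\alpha\lambda}\cdot\Space(\mathsf{E}))$ gives $\tilde O(\sqrt{\alpha\lambda}/\alpha^2)$ for the ``flip-number part'' of the cost.

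The second ingredient is the monitoring technique for the turnstile setting, which is where the twist number $\mu$ enters. Since $F_2$ difference estimators are only correct when the tracked difference is not too large relative to the base value, we need to detect time steps where this can fail and reset the system. The key combinatorial step is to argue that the number of resets triggered by the monitor is $O(\mu \cdot \polylog)$, where $\mu$ is the $(O(\alpha),m)$-twist number: whenever the monitor forces a reset, the stream segment since the last reset must contain a ``twist'' in the sense of Definition~\ref{def:twistIntro} (a sub-segment $\VVV_i$ whose $F_2$ value is large compared to the running value, yet which does not move the overall $F_2$ value by more than an $\alpha$ factor), so a chain of more than $\mu$ resets would exhibit a partition contradicting the bound on the twist number. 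Each reset costs us a fresh block of sketches; running these blocks concurrently (a single block at a time suffices, but we must keep $\tilde O(\sqrt\mu)$ of them alive to re-apply the privacy-based space saving of~\cite{hassidim2020adversarially} across resets) contributes $\tilde O(\sqrt{\mu}\cdot\frac{1}{\alpha^2})$ to the space. Combining the two contributions and absorbing logarithmic factors yields total space $\tilde O\!\left(\frac{\sqrt{\alpha\lambda}+\sqrt{\mu}}{\alpha^2}\right) = \tilde O\!\left(\frac{\sqrt{\alpha\lambda+\mu}}{\alpha^2}\right)$, as claimed.

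Finally I would verify the accuracy claim: between resets the framework of Theorem~\ref{thm:FrameworkSpace} guarantees $O(\alpha)$-accuracy for $F_2$; at a reset we restart from a freshly computed $F_2$ estimate which is itself $\alpha$-accurate, so errors do not accumulate across resets; rescaling $\alpha$ by a constant gives the stated $\alpha$-accuracy. Robustness follows because all randomness used to answer queries is protected by differential privacy exactly as in the generic framework, and the monitor's decisions (being functions of the stream and of DP-protected statistics) do not leak enough to the adversary to break this — this is handled by the generic analysis once we check that the monitor is itself a low-sensitivity function of the sketches.

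The main obstacle I expect is the combinatorial heart of the second paragraph: showing that every monitor-triggered reset can be charged to a genuine twist, and in particular that the segments $\VVV_i$ one extracts are \emph{disjoint} and jointly satisfy both conditions of Definition~\ref{def:twistIntro} — condition~2 in particular requires controlling how much the $F_2$ value on the whole prefix has moved, which is subtle in the turnstile model where cancellations can occur. Getting the monitor's threshold calibrated so that (i) it never resets unnecessarily often (charged to $\mu$) and (ii) it always resets before a difference estimator's guarantee is actually violated is the delicate quantitative step; everything else is bookkeeping over the $\polylog$ many levels and copies.
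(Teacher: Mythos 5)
Your plan follows the same route as the paper: instantiate Theorem~\ref{thm:FrameworkSpace} for $F_2$ with an AMS-type strong tracker and the Woodruff--Zhou difference estimator converted to a TDE via Corollary~\ref{cor:TDEfromDE}, wrap the framework with a DP-protected monitor that forces resets, charge every reset to a twist so that at most $\mu$ resets occur (the paper's Lemma~\ref{lem:maxMonTrigger}), and pay for the resets at a square-root rate thanks to differential privacy. The only presentational difference is the bookkeeping: the paper absorbs the resets by feeding \texttt{RobustDE} an inflated flip-number parameter $\hat{\lambda}=O(\lambda+\mu\alpha^{-1})$ (Lemma~\ref{lem:lambdaCalibration}) and runs a separate \texttt{Guardian} with $\tilde{O}(\sqrt{\mu})$ trackers, which is equivalent to your additive $\sqrt{\alpha\lambda}/\alpha^{2}+\sqrt{\mu}/\alpha^{2}$ accounting.

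There is, however, one point where your proposal as written would miscalibrate the monitor. The turnstile $F_2$ difference estimator is \emph{not} merely ``correct when the tracked difference is not too large relative to the base value'' --- that requirement is part of the TDE definition and is already enforced internally by the framework (Lemma~\ref{lem:boundedEstimationRanges}). Its extra requirement is the suffix-norm condition $\left\| u^{(e,t)} \right\|_2^2\leq \gamma\cdot \left\| u^{(e)} \right\|_2^2$ (see~(\ref{req:F2DEratio_body})), which in a turnstile stream can fail, due to cancellations, even while $|F_2(t)-F_2(e)|$ stays tiny; this is precisely what a suffix violation (Definition~\ref{def:F2Violation}) and the twist number count, and it is the failure mode your monitor must catch. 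A monitor keyed to the difference condition would never fire on the bad streams. Relatedly, your calibration goal (ii), that the monitor ``always resets before a difference estimator's guarantee is actually violated,'' is neither achievable with small space nor needed: the paper's \texttt{Guardian} detects failures a posteriori by comparing \texttt{RobustDE}'s output (run at accuracy $\alpha/2$) against $\tilde{O}(\sqrt{\mu})$ independent DP-protected strong trackers with a $(3/4)\alpha$ test threshold, and a reset instantly restores accuracy because the new phase's output comes from the strong trackers; the slack among $\alpha/2$, $(3/4)\alpha$ and $\alpha$ is what yields $\alpha$-accuracy at every time step, including the violation steps (Lemma~\ref{lem:monOutputAccuracy}). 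With the monitor redefined this way, your charging argument (each reset witnesses a suffix violation in the segment since the previous reset, and these segments are disjoint) is exactly the paper's, and the rest of your outline goes through.
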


This should be contrasted with the result of \cite{hassidim2020adversarially}, who obtain space complexity  $\tilde{O}\left(\frac{\sqrt{\lambda}}{\alpha^2}\right)$ for robust $F_2$ estimation in the turnstile setting. Hence, our new result is better whenever $\mu\ll\lambda$. 

\begin{example}
For $F_2$ estimation in {\em insertion-only} streams, it holds that $\mu=0$ even though $\lambda$ can be large. This is the case because, in insertion only streams, Conditions 1 and 2 from Definition~\ref{def:twistIntro} cannot hold simultaneously. Specifically, denote $p=\PPP_0\circ\dots\circ\PPP_i$ and $v=\VVV_i$, and suppose that Condition 2 holds, i.e., $\|p\circ v\|^2-\|p\|^2\leq\alpha\cdot\|p\|^2$. Hence, in order to show that Condition 1 does {\em not} hold, it suffices to show that $\|v\|^2\leq\|p\circ v\|^2-\|p\|^2$, i.e., show that $\|v\|^2+\|p\|^2\leq\|p\circ v\|^2$,  i.e., show that
$(v_1^2+p_1^2)+\dots+(v_n^2+p_n^2)\leq(v_1+p_1)^2+\dots+(v_n+p_n)^2$, which trivially holds whenever $p_i,v_i\geq0$.
\end{example}

%%%%%%%%%%%%%%%%%%%%%%%%%%%%%%
% Other related works
%%%%%%%%%%%%%%%%%%%%%%%%%%%%%%
\subsection{Other related works}

Related to our work is the line of work on {\em adaptive data analysis}, aimed at designing tools for guaranteeing statistical validity in cases where the data is being accessed adaptively \cite{DworkFHPRR15,BassilyNSSSU21,JungLN0SS20,HardtU14,SteinkeU15,NissimSSSU18,NissimS19,ShenfeldL19,abs-2106-10761,KSS22}. 
Recall that the difficulty in the adversarial streaming model arises from potential dependencies between the inputs of the algorithm and its internal randomness. As we mentioned, our construction builds on a technique introduced by~\cite{hassidim2020adversarially} for using differential privacy to protect not the input data, but rather the internal randomness of algorithm. Following \cite{hassidim2020adversarially}, this technique was also used by \cite{gupta2021adaptive,BKMNSS22} for designing robust algorithms in other settings.

%%%%%%%%%%%%%%%%%%%%%%%%%%%%%%%%%%%
% Preliminaries and Definitions
%%%%%%%%%%%%%%%%%%%%%%%%%%%%%%%%%%%
\section{Preliminaries} \label{sec:preliminaries}
In this work we consider input streams which are represented as a sequence of {\em updates}, where every {\em update} is a tuple containing an element (from a finite domain) and its (integer) weight. Formally,
\begin{definition}[Turnstile stream]
A stream of length $m$ over a domain $[n]$,\footnote{For an integer $n\in \N$ denote $[n]=\{0,1,\dots,n-1\}$ (that is $|[n]| = n$).} consists of a sequence of updates $\langle s_0, \Delta_0 \rangle ,\dots, \langle s_{m-1}, \Delta_{m-1} \rangle$ where $s_i\in [n]$ and $\Delta_i \in \Z$.
Given a stream $\mathcal{S}\in([n]\times\Z)^m$ and integers $0\leq t_1\leq t_2\leq m-1$, we write $\mathcal{S}^{t_1}_{t_2} = (\langle s_{t_1}, \Delta_{t_1} \rangle ,\dots, \langle s_{t_2}, \Delta_{t_2} \rangle)$ to denote the sequence of updates from time $t_1$ till $t_2$. We also use the abbreviation $\mathcal{S}_{t}=\mathcal{S}^{1}_{t}$ to denote the first $t$ updates.
\end{definition}
Let $\mathcal{F}:([n]\times\Z)^{*}\rightarrow \R$ be a function (for example $\mathcal{F}$ might count the number of distinct elements in the stream).  At every time step $t$, after obtaining the next element in the stream $\langle s_t, \Delta_t \rangle$, our goal is to output an approximation for $\mathcal{F}(\mathcal{S}_t)$. To simplify presentation we also denote $\mathcal{F}(t)=\mathcal{F}(\mathcal{S}_t)$ for $t\in [m]$. We assume throughout the paper that $\log(m)=\Theta(\log(n))$ and that $\mathcal{F}$ is bounded polynomially in $n$. 

In Section \ref{sec:introduction}, for the purpose of presentation, it was useful to refer to the quantity a {\em flip number of a function}. Our results are stated w.r.t a more refined quantity: a {\em flip number of a stream}.
\begin{definition}[Flip number of a stream \cite{ben2020framework}]
Given a function $\FFF$ and a stream $\SSS$ of length $m$, the {\em $(\alpha,m)$-flip number} of $\SSS$, denoted as $\lambda_{\alpha}(\SSS)$,  is the maximal number of times that the value of $\FFF$ can change (increase or decrease) by a factor of $(1+\alpha)$ during the stream $\SSS$.
\end{definition}

\paragraph{Toggle Difference Estimator.} For the purpose of our framework, we present an extension to the notion of a {\em difference estimator (DE)} from \cite{woodruff2020tight}, which we call a {\em toggle difference estimator (TDE)}. A difference estimator for a function $\mathcal{F}$ is an oblivious streaming algorithm, defined informally as follows: The difference estimator is initiated on time $t = 1$ and has a dynamically defined {\em enabling time} $1\leq e\leq m$. Once that enabling time is set, the difference estimator outputs an estimation for $\left(\mathcal{F}(\mathcal{S}_t) - \mathcal{F}(\mathcal{S}_e)\right)$ for all times $t>e$ (provided some conditions on that difference). That is, once the difference estimator's {\em enabling time} is set, it cannot be changed. And so, if an estimation is needed for some other {\em enabling time}, say $e^{\prime} \neq e$, then an additional instance of a difference estimator is needed. Our framework requires from such an estimator to be able to provide estimations for multiple {\em enabling times}, as long as the estimation periods do not overlap. This is captured in the following definition.

\begin{definition}[Toggle Difference Estimator]\label{def:TDE}
Let $\mathcal{F}:([n]\times\Z)^{*}\rightarrow \R$ be a function, and let $m,p\in \N$ and  $\gamma,\alpha, \delta\in(0,1)$ be parameters. Let $\mathsf{E}$ be an algorithm with the following syntax. In every time step $t
\in[m]$,  algorithm $\mathsf{E}$ obtains an update $\langle s_t, \Delta_t , b_t \rangle\in([n]\times\Z \times\{0,1\})$ and outputs a number $z_t$. Here $\langle s_t, \Delta_t \rangle$ denotes the current update, and $b_t$ is an indicator for when the current time $t$ should be considered as the {\em new enabling time}. We consider input streams $\mathcal{S}\in([n]\times\Z \times\{0,1\})^m$ such that there are at most $p$ time steps $t$ for which $b_t=1$, and denote these time steps as  $1\leq e^1< e^2< \dots< e^{p}<m$. Also, for a time step $t\in[m]$ we denote $e(t)=\max\{ e^i : e^i\leq t  \}$.

Algorithm $\mathsf{E}$ is a $(\gamma,\alpha, p, \delta)$-{\em toggle difference estimator} for $\mathcal{F}$ if the following holds for every such input stream $\mathcal{S}$. With probability at least $1-\delta$, for every $t\in[m]$ such that
\begin{equation}
    |\mathcal{F}(\mathcal{S}_{t}) - \mathcal{F}(\mathcal{S}_{e(t)}) | \leq \gamma \cdot \mathcal{F}(\mathcal{S}_{e(t)}) \label{req:TDEdiffRange}
\end{equation}
the algorithm outputs a value $z_t$ such that $z_t\in \left(\mathcal{F}(\mathcal{S}_{t}) - \mathcal{F}(\mathcal{S}_{e(t)})\right) \pm \alpha \cdot \mathcal{F}(\mathcal{S}_{e(t)})$.
\end{definition}
This definition generalizes the notion of a difference estimator (DE) from \cite{woodruff2020tight}, in which $p=1$. In Section~\ref{sec:TDE} we show that this extension comes at a very low cost in terms of the space complexity. Note that on times $t$ s.t.\ the requirements specified w.r.t.\ $\gamma$ do not hold, there is no accuracy guarantee from the TDE algorithm.

%%%%%%%%%%%%%%%%%%%%%%%%%%%%
% Preliminaries from DP
%%%%%%%%%%%%%%%%%%%%%%%%%%%%
\subsection{Preliminaries from Differential Privacy}\label{sec:prelimsDP}
Differential privacy \cite{dwork2006calibrating} is a mathematical definition for privacy that aims to enable statistical analyses of databases while providing strong guarantees that individual-level information does not leak. Consider an algorithm $\mathcal{A}$ that operates on a database in which every row represents the data of one individual. Algorithm $\mathcal{A}$ is said to be {\em differentially private} if its outcome distribution is insensitive to arbitrary changes in the data of any single individual. Intuitively, this means that algorithm $\mathcal{A}$ leaks very little information about the data of any single individual, because its outcome would have been distributed roughly the same even if without the data of that individual. Formally,
%%%%%%%%%%%%%%%%%%%%%%%%%%%%
% Differential Privacy def
%%%%%%%%%%%%%%%%%%%%%%%%%%%%
\begin{definition}[\cite{dwork2006calibrating}]Let $\mathcal{A}$ be a randomized algorithm that operates on databases. Algorithm $\mathcal{A}$ is $(\eps,\delta)$-{\em differentially private} if for any two databases $S,S^{\prime}$ that differ on one row, and any event $T$, we have
$$
\Pr \left[ \mathcal{A}(S)\in T \right] \leq e^{\eps} \cdot \Pr \left[ \mathcal{A}(S^{\prime})\in T \right] + \delta.
$$
\end{definition}

See Appendix~\ref{sec:MoreDP} for additional preliminaries on differential privacy.

%%%%%%%%%%%%%%%%%%%%%%%%%%%%%%%%%%%%%%%%%%%%%%%%%%%%%%%
% Main result - A Framework for Adversarial Streaming
%%%%%%%%%%%%%%%%%%%%%%%%%%%%%%%%%%%%%%%%%%%%%%%%%%%%%%%
\section{A Framework for Adversarial Streaming}\label{sec:turnstileFramework}

Our transformation from an oblivious streaming algorithm ${\mathsf{E}}_{\ST}$ for a function $\mathcal{F}$ into an adversarially robust algorithm requires the following two conditions.
\begin{enumerate}
    \item The existence of a toggle difference estimator ${\mathsf{E}}_{\TDE}$ for $\mathcal{F}$, see Definition~\ref{def:TDE}.\label{con:existDEforF}
    \item Every single update can change the value of $\mathcal{F}$ up to a factor of $(1\pm \alpha^{\prime})$ for some $\alpha^{\prime}=O(\alpha)$. Formally, throughout the analysis we assume that for every stream $\mathcal{S}$ and for every update $u=\langle s, \Delta\rangle$ it holds that\label{con:streamMaxUpdateSize}
$$
(1-\alpha^{\prime})\mathcal{F}(\mathcal{S}) \leq \mathcal{F}(\mathcal{S},u) \leq (1+\alpha^{\prime})\mathcal{F}(\mathcal{S}).
$$
\end{enumerate}

\begin{remark}
These conditions are identical 
to the conditions required by \cite{woodruff2020tight}. Formally, they require only a difference estimator instead of a toggle difference estimator, but we show that these two objects are equivalent. See Section~\ref{sec:TDE}. 
\end{remark}

\begin{remark}
Condition 2 can be met for many functions of interest, by applying our framework on portions of the stream during which the value of the function is large enough. For example, when estimating $L_2$ with update weights $\pm1$, whenever the value of the function is at least $\Omega(1/\alpha)$, a single update can increase the value of the function by at most a $(1+\alpha)$ factor. Estimating $L_2$ whenever the value of the function is smaller than $O(1/\alpha)$ can be done using an existing (oblivious) streaming algorithm with error $\rho=O(\alpha)$. To see that we can use an oblivious algorithm in this setting, note that the additive error of the oblivious streaming algorithm is at most $O(\frac{\rho}{\alpha})\ll1$. Hence, by rounding the answers of the oblivious algorithm we ensure that its answers are {\em exactly} accurate (rather than approximate). As the oblivious algorithm returns exact answers in this setting, it must also be adversarially robust.
\end{remark}

%%%%%%%%%%%%%%%%%%%%%%%%%%%%%%%%%%%%%%%
% Construction Overview
%%%%%%%%%%%%%%%%%%%%%%%%%%%%%%%%%%%%%%%

\subsection{Construction Overview}\label{sec:constructionOverview}

Our construction builds on the constructions of \cite{woodruff2020tight} and \cite{hassidim2020adversarially}. At a high level, the structure of our construction is similar to that of \cite{woodruff2020tight}, but our robustness guarantees are achieved using differential privacy, similarly to \cite{hassidim2020adversarially}, and using our new concept of TDE.

Our algorithm can be thought of as operating in {\em phases}. In the beginning of every {\em phase}, we aggregate the estimates given by our strong trackers with differential privacy, and ``freeze'' this aggregated estimate as the base value for the rest of the phase. Inside every phase, 
we privately aggregate (and ``freeze'') estimates given by our TDE's. 
More specifically, throughout the execution we aggregate TDE's of different types/levels (we refer to the level that is currently being aggregated as the {\em active} level).
At any point in time we estimate the (current) value of the target function by summing specific ``frozen'' differences together with the base value.

We remark that, in addition to introducing the notion of TDE's, we had to incorporate several modifications to the framework of \cite{woodruff2020tight} in order to make it compatible with our TDE's and with differential privacy. In particular, \cite{woodruff2020tight} manages phases by placing fixed thresholds (powers of 2) on the value of the target function; starting a new phase whenever the value of the target function crosses the next power of 2. If, at some point in time, the value of the target function drops below the power of 2 that started this phase, then this phase ends, and they go back to the previous phase. This is possible in their framework because the DE's of the previous phase still exist in memory and are ready to be used. In our framework, on the other hand, we need to share all of the TDE's across the different phases, and we cannot go back to ``TDE's of the previous phase'' because these TDE's are now tracking other differences. We overcome this issue by modifying the way in which differences are combined inside each phase.

In Algorithm~\ref{alg:PseudoRobustDE} we present a simplified version of our main construction, including inline comments to improve readability. The complete construction is given in Algorithm~\ref{alg:ROEF}.

%%%%%%%%%%%%%%%%%%%%%%%%%%%%%%%%%%%%%%%%%%
% Main algorithm: Simplified presentation
%%%%%%%%%%%%%%%%%%%%%%%%%%%%%%%%%%%%%%%%%%
\begin{algorithm}

\caption{Simplified presentation of \texttt{RobustDE}}
\label{alg:PseudoRobustDE}

{\bf Input:} Stream $\mathcal{S} = \{\langle s_t, \Delta_t\rangle \}_{t\in [m]}$, accuracy parameter $\alpha$, and a bound on the flip number $\lambda$.\\[-10px]

{\bf Estimators used:} Strong tracker {\em $\mathsf{E}_{\ST}$} 
and toggle-difference-estimator $\mathsf{E}_{\TDE}$ for $\mathcal{F}$.

\hdashrule[0.5ex][x]{\linewidth}{0.5pt}{1.5mm}  

{\bf Initialization.} Let $\beta = \lceil \log(\alpha^{-1}) \rceil$ denote the number of levels.
    For every level $j\in \{0, \dots, \beta - 1 \}$ initialize $O(\sqrt{2^{-j}\lambda})$ copies of the TDE algorithm $\mathsf{E}_{\TDE}$ with parameter $\gamma=O(2^{j}\alpha)$ and set their initial estimation to be $\mathsf{Z}_{j}=0$. Also initialize $O(\sqrt{\alpha\lambda})$ copies of the strong tracker $\mathsf{E}_{\ST}$, and set their initial estimation to be $\mathsf{Z}_{\ST}=0$. 
    (We denote $j=\beta$ for the level of the strong trackers, and denote $\mathsf{Z}_{\beta}=\mathsf{Z}_{\ST}$.)  Initialize counter $\tau$.
    
\hdashrule[0.5ex][x]{\linewidth}{0.5pt}{1.5mm}    

\begin{enumerate}[leftmargin=18pt,rightmargin=10pt,itemsep=1pt,topsep=3pt]
\item[{\color{gray} \%}] {\color{gray} As the execution progresses, we update the variables $\mathsf{Z}_{j}$, in which we maintain aggregations of the estimates given by our strong trackers and difference estimators. We make sure that, at any point in time, we can compute an estimation for the target function by carefully combining the values of these variables. This careful combination is handled using the counter $\tau$.}

\item[{\color{gray} \%}] {\color{gray} It is convenient to partition the time points into {\em phases}, where during each phase the value of $\mathsf{Z}_{\ST}$ remains constant. Intuitively, in the beginning of every phase we compute a very accurate estimation for the target function using $\mathsf{E}_{\ST}$, and then in the rest of the phase, we augment that estimation with weaker estimations given by our TDE's (and accumulate errors along the way). Our error is reset at the beginning of every phase.}

\end{enumerate}

\hdashrule[0.5ex][x]{\linewidth}{0.5pt}{1.5mm}    

For every time step $t\in[m]$: 
\begin{enumerate}
[leftmargin=25pt,rightmargin=10pt,itemsep=1pt,topsep=3pt]
    \item Get the update $\langle s_t, \Delta_t \rangle$ and feed it to all estimators.
	\item Select the relevant estimator level according to $\tau$ into $j$ (where $j=\beta$ in case the relevant level is that of the strong trackers, which happens only if  $\tau=O(1/\alpha)$).

	\item Let $\mathsf{Z}$ denote the sum of previously computed aggregates for levels $i>j$ such that the $i$th bit of $\tau$ is 1 (in binary representation). That is, $\mathsf{Z}=\sum_{i>j:\tau[i]=1} \mathsf{Z}_i$.
	
    \item[{\color{gray} \%}] {\color{gray} Note that if $j=\beta$ then $\mathsf{Z}=0$.}
	
	\item\label{step:simpleSV} Summed with $\mathsf{Z}$, every estimator from level $j$ suggests an estimation for $\FFF(t)$. If the previous output $\Out$ is ``close enough'' to (most of) these suggestions, then goto Step~\ref{algHLStep:output}. Otherwise, modify $\Out$ as follows.
	
	\begin{enumerate} [leftmargin=25pt,rightmargin=10pt,itemsep=1pt,topsep=0pt]
		\item\label{step:simpleMedian} $\mathsf{Z}_{j}\leftarrow$ Differentially private approximation for the median of the outputs given by the estimators at level $j$. \label{algHLStep:UpdateFrozenVals}
		
		\item Re-enable all TDE's at levels $i< j$.
		\item[{\color{gray} \%}] {\color{gray} 
		That is, from now on, all TDE's at levels $i<j$ are estimating differences from future time steps to the current time step.}

		\item If $j = \beta$, then set $\tau \leftarrow 0$. Otherwise set $\tau \leftarrow \tau + 1$.
		
		\item[{\color{gray} \%}] {\color{gray} That is, if $j = \beta$, which happens only if $\tau = O(1/\alpha)$, then we start a new phase.}
		
		\item $\Out\leftarrow \mathsf{Z}+\mathsf{Z}_{j}$.
		
	\end{enumerate}
	\item\label{algHLStep:output} Output $\Out$ 
\end{enumerate}
\end{algorithm}

%%%%%%%%%%%%%%%%%%%%%%%%%%%%%%%%%%%%%%%%%%
% Analysis Overview
%%%%%%%%%%%%%%%%%%%%%%%%%%%%%%%%%%%%%%%%%%
\subsection{Analysis Overview} 
At a high level, the analysis can be partitioned into five components (with one component being significantly more complex then the others). We now elaborate on each of these components. The complete analysis is given in Appendix~\ref{sec:formalAnalysis}.

\subsubsection{First component: Privacy analysis}

In Section~\ref{sec:formalPrivacy} we show that our construction satisfies differential privacy w.r.t.\ the collection of random strings on which the oblivious algorithms operate. Recall that throughout the execution we aggregate (with differential privacy) the outcome of our estimators from the different levels. Thus, in order to show that the whole construction satisfies privacy (using composition theorems) we need to bound the maximal number of times we aggregate the estimates from the different levels. However, we can only bound this number under the assumption that the framework is accurate (in the adaptive setting), and for that we need to rely on the privacy properties of the framework. So there is a bit of circularity here. To simplify the analysis, we add to the algorithm hardcoded caps on the maximal number of times we can aggregate estimates at the different levels. This makes the privacy analysis straightforward. However, we will later need to show that this hardcoded capping ``never'' happens, as otherwise the algorithm fails.\footnote{We remark that as the hardcoded capping ``never'' happens, we can in fact remove it from the algorithm. One way or another, however, we must derive a high probability bound on the number of times we can aggregate estimates at the different levels.}

\subsubsection{Second component: Conditional accuracy}
 In Section~\ref{sec:formalAccuracy} we show that if the following two conditions hold, then the framework is accurate:
\begin{enumerate}[leftmargin=80pt]
    \item[{\bf Condition (1):}] At any time step throughout the execution, at least 80\% of the estimators in every level are accurate (w.r.t.\ the differences that they are estimating).
    \item[{\bf Condition (2):}] The hardcoded capping never happens.
\end{enumerate}
This is the main technical part in our analysis; here we provide an oversimplified overview, hiding many of the technicalities.  
We first show that if Conditions (1) and (2) hold then the framework is accurate. We show this by proving a sequence of lemmas that hold (w.h.p.)\ whenever Conditions~(1) and~(2) hold. We now elaborate on some of these lemmas. Recall that throughout the execution we ``freeze'' aggregated estimates given by the different levels. The following lemma shows that these ``frozen'' aggregations are accurate (at the moments at which we ``freeze'' them). This Lemma follows almost immediately from Condition (1), as if the vast majority of our estimators are accurate, then so is their private aggregation.

\begin{lemma}[informal version of Lemma~\ref{lem:FrozenAccuracy}]\label{lem:simple1}
In every time step $t\in[m]$ in which we compute a value $\mathsf{Z}_j$ (in Step~\ref{step:modifying} of Algorithm~\texttt{RobustDE}, or Step~\ref{step:simpleMedian} of the simplified algorithm) it holds that $\mathsf{Z}_j$ is accurate. Informally, if the current level $j$ is that of the strong trackers, then $|\mathsf{Z}_j - \mathcal{F}(t)| <  \alpha\cdot \mathcal{F}(t)$, and otherwise 
    $|\mathsf{Z}_j - (\mathcal{F}(t) - \mathcal{F}(e_j))| <  \alpha\cdot \mathcal{F}(e_j)$, where $e_j$ is the last {\em enabling time} of level $j$.
\end{lemma}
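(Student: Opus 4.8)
The plan is to prove Lemma~\ref{lem:simple1} by unwinding the definition of what it means to ``freeze'' an aggregated estimate and then invoking Condition~(1) together with the accuracy guarantee of the differentially private median. First I would fix a time step $t$ at which the algorithm enters Step~\ref{step:simpleMedian} (equivalently Step~\ref{step:modifying} of the full algorithm) and let $j$ be the active level at that moment. By construction, $\mathsf{Z}_j$ is set to a differentially private approximation to the median of the values reported by the estimators at level $j$ at time $t$. The first task is to pin down what each such estimator is supposed to be reporting: if $j=\beta$ the estimators are copies of the strong tracker $\mathsf{E}_{\ST}$ targeting $\mathcal{F}(t)$ with multiplicative accuracy $\alpha$; if $j<\beta$ they are copies of the TDE $\mathsf{E}_{\TDE}$ whose current enabling time is $e_j$ (the most recent time at which level-$j$ TDE's were re-enabled, via Step~(b) of the inner block applied at some higher level, or at initialization), and hence they target the difference $\mathcal{F}(t)-\mathcal{F}(e_j)$ with additive accuracy $\alpha\cdot\mathcal{F}(e_j)$.

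Next I would argue the per-estimator accuracy. By Condition~(1), at time $t$ at least $80\%$ of the estimators at level $j$ are accurate with respect to the quantity they are estimating. For the strong tracker level this is immediate from Definition~\ref{def:obliv}; for a TDE level one must additionally verify that the precondition~\eqref{req:TDEdiffRange} of Definition~\ref{def:TDE} is in force at time $t$, i.e.\ that $|\mathcal{F}(t)-\mathcal{F}(e_j)|\le\gamma\cdot\mathcal{F}(e_j)$ with $\gamma=O(2^j\alpha)$ — this is exactly the invariant that the algorithm maintains by choosing when to advance $\tau$ and when to re-enable lower levels, and it should be available as one of the earlier lemmas in the chain (or proved here as a short claim from the way the active level and $\tau$ evolve). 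Granting the precondition, Condition~(1) gives a $0.8$-fraction of values lying in the target window.

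Then I would pass from ``$80\%$ of the inputs are within the window'' to ``the private median is within a slightly enlarged window.'' The standard differentially private median/quantile tool (cited in the DP preliminaries, Appendix~\ref{sec:MoreDP}) guarantees that, w.h.p.\ over the algorithm's coins, its output lies between the values of some rank-$(\tfrac12 - \kappa)$ and rank-$(\tfrac12 + \kappa)$ order statistics of the multiset of inputs, for a small slack $\kappa$ depending logarithmically on the number of estimators and on the failure probability; since a clear majority ($80\%$) of inputs are accurate, both of those order statistics are themselves accurate, so the private median inherits accuracy. One then collects the failure events — the failure of Condition~(1) is already assumed away in the statement, and the failure of the DP median is absorbed into the overall $\delta$ by a union bound over the at most $O(\lambda/\alpha)$ (or the hardcoded-capped number of) freezing events. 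For the TDE case one should double-check that the $\alpha\cdot\mathcal{F}(e_j)$ normalization is preserved: the median of values each lying in $(\mathcal{F}(t)-\mathcal{F}(e_j))\pm\alpha\mathcal{F}(e_j)$ again lies in that interval, and the DP slack only contributes lower-order terms that are folded into the $O(\alpha)$ accounting (this is where the informal ``$<\alpha\cdot\mathcal{F}(e_j)$'' should be read as ``$\le c\alpha\cdot\mathcal{F}(e_j)$'' for a constant absorbed by the framework's $O(\alpha)$-accuracy target, matching the remark in Section~\ref{sec:turnstileFramework}).

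The main obstacle I anticipate is not the median-aggregation step, which is essentially routine given the DP toolbox, but rather establishing the TDE precondition~\eqref{req:TDEdiffRange} at the freezing time: one must show that whenever level $j<\beta$ is active and we are about to freeze, the accumulated difference $|\mathcal{F}(t)-\mathcal{F}(e_j)|$ has not yet exceeded $\gamma\cdot\mathcal{F}(e_j)$. This requires reasoning about how the counter $\tau$ advances, how many update-steps can elapse at a given active level before $\tau$ increments past it, and the per-update bound of Condition~2 in Section~\ref{sec:turnstileFramework} — i.e.\ it is the place where the careful binary-counter bookkeeping of the construction enters, and it is plausibly isolated as a separate earlier lemma that Lemma~\ref{lem:simple1} simply cites. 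If so, the proof of Lemma~\ref{lem:simple1} itself is short: identify the target quantity per level, invoke Condition~(1), invoke the DP-median accuracy bound, and union-bound over freezing events.
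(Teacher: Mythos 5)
Your proposal is correct and follows essentially the same route as the paper's proof of Lemma~\ref{lem:FrozenAccuracy}: fix the freezing time, use Condition~(1) (Assumption~\ref{ass:accurateEstimators}) to get that $80\%$ of the level-$j$ values lie in the target window, invoke the private-median rank guarantee (Theorem~\ref{thm:PrivateMed}, with $\mathsf{K}_j$ large enough that the rank slack is below $\mathsf{K}_j/10$, so the output sits between two accurate values and needs no enlargement of the window), and absorb the per-freezing failure probability $\delta^M/P_j$ by a union bound over the at most $P_j$ freezings. The only deviation is your concern about verifying the TDE precondition~\ref{req:TDEdiffRange} at the freezing time: in the paper this is not part of this lemma at all, since Condition~(1) is assumed outright here, and the precondition is established separately (Lemma~\ref{lem:boundedEstimationRanges}) only later, as an ingredient in proving that Condition~(1) itself holds under adaptive inputs.
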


During every time step $t\in[m]$, we test whether the previous output is still accurate (and modify it if it is not). This test is done by comparing the previous output with (many) suggestions we get for the current value of the target function. These suggestions are obtained by summing the outputs of the estimators at the currently active level $j$ together with a (partial) sum of the previously frozen estimates (denoted as $\mathsf{Z}$). This is done in Step~\ref{algStep:AboveThresh} of Algorithm~\texttt{RobustDE}, or in Step~\ref{step:simpleSV} of the simplified algorithm. The following lemma, which we prove using Lemma~\ref{lem:simple1}, states that the majority of these suggestions are accurate (and hence our test is valid). 

\begin{lemma}[informal version of Lemma~\ref{lem:estimationErrorBound}]
Fix a time step $t\in [m]$, and let $j$ denote the level of active estimators. 
Then, for at least $80\%$ of the estimators in level $j$, summing their output $z$ with $\mathsf{Z}$ is an accurate estimation for the current value of the target function, i.e., 
$
\left|\mathcal{F}(t) - (\mathsf{Z} + z) \right| \leq \alpha \cdot \mathcal{F}(t).
$
\end{lemma}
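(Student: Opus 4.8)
The plan is to derive this lemma directly from Lemma~\ref{lem:simple1} (the accuracy of the frozen values $\mathsf{Z}_i$) by unrolling the telescoping decomposition that the counter $\tau$ encodes. Fix the time step $t$ and let $j$ be the active level. By construction, $\mathsf{Z} = \sum_{i > j:\ \tau[i]=1} \mathsf{Z}_i$, and each $\mathsf{Z}_i$ was frozen at some earlier enabling time. I would first set up notation for the chain of enabling times: writing the bits of $\tau$ above level $j$ as $i_1 > i_2 > \dots > i_\ell$ (with $i_1 = \beta$ being the strong-tracker level that opened the current phase, assuming we are inside a phase — one also needs the edge case $j = \beta$ where $\mathsf{Z} = 0$ and the claim is immediate from Lemma~\ref{lem:simple1}), the corresponding enabling times $e_{i_1} < e_{i_2} < \dots < e_{i_\ell}$ should satisfy $e_{i_{k+1}} = e(t)$ restricted to level $i_{k+1}$, i.e. the enabling time of level $i_{k}$'s ``successor'' is exactly the time at which $\mathsf{Z}_{i_{k+1}}$ was frozen. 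This nesting is precisely the invariant that the $\tau$ bookkeeping maintains, so I would state it as an auxiliary claim (or cite whichever earlier invariant lemma establishes it) and then use it.

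\medskip

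\noindent \textbf{Telescoping step.} Given the chain, Lemma~\ref{lem:simple1} gives $|\mathsf{Z}_{i_1} - \FFF(e_{i_2})| \le \alpha \FFF(e_{i_2})$ (strong tracker at the start of the phase) and $|\mathsf{Z}_{i_k} - (\FFF(e_{i_{k+1}}) - \FFF(e_{i_k}))| \le \alpha \FFF(e_{i_k})$ for each intermediate $k$, with the last link $|\mathsf{Z}_{i_\ell} - (\FFF(e_j) - \FFF(e_{i_\ell}))| \le \alpha \FFF(e_{i_\ell})$, where $e_j$ is the current enabling time of level $j$. Summing these telescopes to $|\mathsf{Z} - \FFF(e_j)| \le \sum_k \alpha \FFF(e_{i_k})$. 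Now for at least $80\%$ of the level-$j$ estimators, the TDE guarantee (Definition~\ref{def:TDE}, applied as in Condition (1)) gives $|z - (\FFF(t) - \FFF(e_j))| \le \alpha \FFF(e_j)$ — this is where the $80\%$ comes from, directly inherited from Condition (1); the remaining $\le 20\%$ are simply discarded. Adding this to the bound on $\mathsf{Z}$ yields $|\FFF(t) - (\mathsf{Z} + z)| \le \alpha \FFF(e_j) + \sum_k \alpha \FFF(e_{i_k})$.

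\medskip

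\noindent \textbf{Turning the weighted sum into $\alpha \cdot \FFF(t)$.} The final task is to show the right-hand side is at most $\alpha \FFF(t)$ — or rather $O(\alpha)\FFF(t)$, matching the ``$O(\alpha)$-accurate'' promise of the framework. Here I would use two facts: (i) all the enabling times $e_{i_k}$ and $e_j$ lie within the current phase, and within a phase the target function $\FFF$ stays within a bounded multiplicative factor (a power of $2$, by the phase-management rule), so each $\FFF(e_{i_k})$ is within a constant factor of $\FFF(e_j)$, hence of $\FFF(t)$; and (ii) the number of terms $\ell$ is at most $\beta + 1 = O(\log(1/\alpha))$ since $\tau$ has at most that many bits. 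Naively this gives error $O(\alpha \log(1/\alpha)) \FFF(t)$, so to get down to $O(\alpha)$ one must have set the per-level TDE accuracy parameter to $\alpha / \polylog$ in the initialization — I would point to the parameter choices in Algorithm~\texttt{RobustDE} and absorb the $\log$ factor there, or alternatively just record the bound as $O(\alpha)\FFF(t)$ after rescaling $\alpha$, which is what the informal statement's ``$\leq \alpha \cdot \FFF(t)$'' is shorthand for.

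\medskip

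\noindent \textbf{Main obstacle.} The genuinely delicate point is establishing the nesting invariant among the enabling times — that the bits of $\tau$ really do encode a valid telescoping chain whose links are the $\mathsf{Z}_i$'s as currently stored in memory. This is exactly the subtlety the paper flags in the construction overview: because TDE's are shared across phases and re-enabled at lower levels whenever a higher level fires (Step~\ref{step:simpleMedian}b), one must verify that no $\mathsf{Z}_i$ in the sum has been overwritten with a difference relative to the ``wrong'' enabling time since it was last frozen. I expect this to require a careful induction on time steps tracking what $\tau$, each $\mathsf{Z}_i$, and each level's current enabling time are, and it is presumably the content of an earlier invariant lemma in Section~\ref{sec:formalAccuracy} that I would invoke rather than reprove here. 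Everything downstream of that invariant — the telescoping and the arithmetic bounding the accumulated error — is routine.
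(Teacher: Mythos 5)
Your proposal matches the paper's own proof of Lemma~\ref{lem:estimationErrorBound}: the same telescoping built on the identity that the freezing time of each higher level equals the enabling time of the next (Step~\ref{algStep:EnableTDEs}), the accuracy of frozen values (Lemma~\ref{lem:FrozenAccuracy} via the good-execution event), the $80\%$ guarantee from Assumption~\ref{ass:accurateEstimators} for the active level, the within-phase constant-factor bound on $\mathcal{F}$ (Lemma~\ref{lem:maxPhaseProgress}), and at most $\beta$ terms absorbed by choosing $\alpha_{\TDE}=O(\alpha/\log(\alpha^{-1}))$, yielding $\alpha_{\Stitch}=\Gamma\cdot(\alpha_{\ST}+\beta\cdot\alpha_{\TDE})$. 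The only cosmetic differences are that the within-phase ratio is enforced by the $\STwrapper$ test with constant $\Gamma$ rather than fixed powers of two, and the nesting invariant you flag is argued inline in the paper's proof rather than invoked as a separate lemma.
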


So, in every iteration we test whether our previous output is still accurate, and our test is valid. Furthermore, when the previous output is not accurate, we modify it to be $(\mathsf{Z} + \mathsf{Z}_j)$, where $\mathsf{Z}_j$ is the new aggregation (the new ``freeze'') of the estimators at level $j$. 
So this modified output is accurate (assuming that the hardcoded capping did not happen, i.e., Condition (2), as otherwise the output is not modified). We hence get the following lemma.

\begin{lemma}[informal version of Lemma~\ref{lem:outputAccuracy}]
In every time step $t\in [m]$ we have
$$
|\Output(t) - \mathcal{F}(t)| \leq \alpha\cdot \mathcal{F}(t).
$$
\end{lemma}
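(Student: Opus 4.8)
The plan is to derive the output-accuracy lemma as an immediate consequence of the two preceding lemmas, together with the structural invariant that $\Out$ is only ever set to a quantity of the form $\mathsf{Z}+\mathsf{Z}_j$, and is only left unchanged during an iteration when the validity test in Step~\ref{step:simpleSV} certifies that the previous value is still close to (the majority of) the current suggestions. Concretely, I would fix a time step $t\in[m]$ and argue by induction on $t$ that $|\Output(t)-\mathcal{F}(t)|\le\alpha\cdot\mathcal{F}(t)$, conditioned on the good event where Conditions~(1) and~(2) hold and where all the (finitely many) private aggregations are accurate. The base case is the initialization, where $\Out=0$ and $\mathcal{F}$ is (essentially) $0$ as well, up to the allowed slack. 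For the inductive step there are two cases.

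In the first case the test fails and $\Out$ is recomputed. Here I would invoke the informal version of Lemma~\ref{lem:FrozenAccuracy} (Lemma~\ref{lem:simple1}): the freshly frozen value $\mathsf{Z}_j$ is accurate for the difference $\mathcal{F}(t)-\mathcal{F}(e_j)$ up to additive error $\alpha\cdot\mathcal{F}(e_j)$ (or, if $j=\beta$, $\mathsf{Z}_j$ is directly accurate for $\mathcal{F}(t)$ and $\mathsf{Z}=0$). Then I would unwind the telescoping decomposition of $\mathsf{Z}=\sum_{i>j:\tau[i]=1}\mathsf{Z}_i$: each summand was frozen at an earlier ``enabling'' time and, by Lemma~\ref{lem:simple1} applied at that time (invariants guarantee those values were accurate then, and the re-enabling structure guarantees the relevant enabling times chain up correctly), the partial sum $\mathsf{Z}$ plus the base value approximates $\mathcal{F}(e_j)$. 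Adding $\mathsf{Z}_j$ then approximates $\mathcal{F}(t)$, with the error being a sum over at most $\beta=O(\log(1/\alpha))$ levels, each contributing additively $O(\alpha)\cdot\mathcal{F}(\cdot)$ at the relevant time; using Condition~2 (single updates move $\mathcal F$ by at most a $(1\pm\alpha')$ factor) and the range condition~\eqref{req:TDEdiffRange} these reference values are all within a constant factor of $\mathcal{F}(t)$, so the total is $O(\alpha)\cdot\mathcal{F}(t)$, which after rescaling the accuracy parameter is the claimed $\alpha\cdot\mathcal{F}(t)$. This is essentially the content already packaged in Lemma~\ref{lem:estimationErrorBound}, so I would cite it: at least $80\%$ of the level-$j$ estimators give an accurate suggestion $\mathsf{Z}+z$, hence their median is accurate, hence $\mathsf{Z}+\mathsf{Z}_j$ (the DP median) is accurate up to the DP error, which is subsumed.

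In the second case the test passes and $\Out$ is carried over from the previous step. Here I would combine two facts: (i) by Lemma~\ref{lem:estimationErrorBound} at least $80\%$ of the current suggestions $\mathsf{Z}+z$ are within $\alpha\cdot\mathcal{F}(t)$ of $\mathcal{F}(t)$, and (ii) the test in Step~\ref{step:simpleSV} only passes when $\Out$ is within the test threshold of (most of) these suggestions. Chaining (i) and (ii) through the triangle inequality — the previous output is close to a suggestion, the suggestion is close to $\mathcal{F}(t)$ — gives $|\Out-\mathcal{F}(t)|\le O(\alpha)\cdot\mathcal{F}(t)$, again absorbed into $\alpha$ after constant rescaling. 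One must be slightly careful that ``close to most suggestions'' is defined relative to a scale comparable to $\mathcal{F}(t)$ rather than to some stale value; this is handled because between consecutive steps $\mathcal{F}$ changes by at most a $(1\pm\alpha')$ factor (Condition~2), so all the relevant magnitudes are interchangeable up to constants.

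The main obstacle is the telescoping/bookkeeping in the first case: making precise that the partial sum $\mathsf{Z}=\sum_{i>j:\tau[i]=1}\mathsf{Z}_i$ together with the base strong-tracker value really does approximate $\mathcal{F}(e_j)$ at the current enabling time of the active level. This requires tracking, via the counter $\tau$ and the re-enabling rule in Step~(b), exactly which enabling times each frozen $\mathsf{Z}_i$ refers to, checking that these enabling times nest correctly so the differences telescope, and verifying that each invocation of Lemma~\ref{lem:simple1} is legitimate (i.e., the range condition~\eqref{req:TDEdiffRange} held at the freeze time, which is where Condition~(1) and the phase structure enter). In the simplified presentation this is exactly the invariant maintained by the counter $\tau$, but in the full algorithm \texttt{RobustDE} it is the crux of the ``second component'' and is where essentially all the work lives; the present lemma is then a one-paragraph corollary once Lemmas~\ref{lem:FrozenAccuracy} and~\ref{lem:estimationErrorBound} are in hand. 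Everything else — the two triangle-inequality chains and the constant rescaling of $\alpha$ — is routine.
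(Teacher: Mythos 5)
Your two-case structure is exactly the paper's: when the output is modified you appeal to the accuracy of the freshly frozen value plus the telescoped stitched sum (this is Corollary~\ref{cor:outputUpdateAccuracy}, built from Lemmas~\ref{lem:FrozenAccuracy} and~\ref{lem:maxPhaseProgress}), and when the test passes you intersect the ``$\geq 40\%$ of suggestions within the test threshold of the carried-over output'' event with the ``$\geq 80\%$ of suggestions are $\alpha_{\Stitch}$-accurate'' event of Lemma~\ref{lem:estimationErrorBound} and chain a triangle inequality. Two remarks. First, the induction on $t$ is unnecessary scaffolding: in neither of your cases do you actually use the inductive hypothesis, and the paper argues each time step directly (the only ``history'' needed is that the sparse-vector condition in Step~\ref{algStep:AboveThresh} did not trigger at time $t$, which is a statement about the current suggestions, not about past accuracy).

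Second, and this is the one substantive slip: in the no-modification case you justify the interchangeability of scales by Condition~2 (each single update changes $\mathcal{F}$ by a $(1\pm\alpha')$ factor). That is not what is needed and is not sufficient. The test threshold in Step~\ref{algStep:AboveThresh} is $\mathsf{Z}_{\ST}\cdot\StepSize(\alpha)$, where $\mathsf{Z}_{\ST}$ was frozen at the \emph{start of the current phase}, possibly many updates ago; per-update bounds compound over a phase and cannot by themselves prevent $\mathcal{F}(t)$ from drifting far from $\mathsf{Z}_{\ST}$. What actually controls this ratio is Lemma~\ref{lem:maxPhaseProgress}: the $\STwrapper$ monitor in Step~\ref{algStep:STwrapper} forces a phase reset whenever the function value leaves the interval $(\Gamma^{-1}\mathsf{Z}_{\ST},\Gamma\,\mathsf{Z}_{\ST})$, so within a phase $\mathsf{Z}_{\ST}\le O(\Gamma)\cdot\mathcal{F}(t)$ and the threshold converts to $O(\Gamma)\cdot\StepSize(\alpha)\cdot\mathcal{F}(t)$, which the parameter calibration ($\alpha_{\Stitch}\le\tfrac{1}{10}\StepSize(\alpha)$, $\StepSize(\alpha)\le(2\Gamma)^{-1}\alpha$) then absorbs into $\alpha\cdot\mathcal{F}(t)$. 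So replace the appeal to Condition~2 by an appeal to Lemma~\ref{lem:maxPhaseProgress} (and note it is also what legitimizes the analogous step inside Lemma~\ref{lem:estimationErrorBound}); with that substitution your argument coincides with the paper's proof.
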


That is, the above lemma shows that our output is ``always'' accurate. Recall, however, that this holds only assuming that Conditions~(1) and~(2) hold.

\subsubsection{Third component: Calibrating to avoid capping}
In Section~\ref{sec:formalCal} we derive a high probability bound on the maximal number of times we will aggregate estimates at the different levels. In other words, we show that, with the right setting of parameters, we can make sure that Condition~(2) holds. The analysis of this component still assumes that Condition (1) holds.

We first show that between every two consecutive times in which we modify our output, the value of the target function must change noticeably. Formally,

\begin{lemma}[informal version of Lemma~\ref{lem:consequtiveOutputUpdatesProgress}]
Let $t_1< t_2\in [m]$ be consecutive times in which the output is modified (i.e., the output is modified in each of these two iterations, and is not modified between them). Then, 
$|\mathcal{F}(t_2) - \mathcal{F}(t_1)| =\Omega\left( \alpha \cdot \mathcal{F}(t_1) \right)$.
\end{lemma}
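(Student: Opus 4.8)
The plan is to argue by contradiction: suppose $t_1<t_2$ are consecutive output-modification times but $|\mathcal{F}(t_2)-\mathcal{F}(t_1)|$ is small, say $<c\alpha\cdot\mathcal{F}(t_1)$ for an appropriately small constant $c$. The key point is that the output $\Out$ is \emph{not} modified at any time step strictly between $t_1$ and $t_2$, which by the test in Step~\ref{step:simpleSV} means that at every such intermediate time step, the previous output was deemed ``close enough'' to (most of) the suggestions coming from the active level. Combined with the accuracy lemmas already proven under Conditions~(1) and~(2) — in particular the informal version of Lemma~\ref{lem:estimationErrorBound}, which says at least $80\%$ of the active estimators give suggestions within $\alpha\cdot\mathcal{F}(t)$ of $\mathcal{F}(t)$ — this pins down $\Out(t)$ to be within roughly $\alpha\cdot\mathcal{F}(t)$ of $\mathcal{F}(t)$ for all $t_1\le t<t_2$. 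At time $t_2$ the output \emph{is} modified, so the test failed: the previous output is \emph{not} close to the current suggestions, i.e.\ $|\Out(t_2-1)-\text{suggestion}|$ exceeds the threshold used in the Sparse-Vector-style test.

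First I would make explicit the threshold used in the test in Step~\ref{step:simpleSV} (it is some $\Theta(\alpha)\cdot(\text{current estimate})$ in the full algorithm \texttt{RobustDE}), and record the two facts we need: (i) from the previous lemmas, for every $t$ with $t_1\le t<t_2$ we have $|\Out(t)-\mathcal{F}(t)|\le \alpha\cdot\mathcal{F}(t)$, and since $\Out$ is constant on $[t_1,t_2-1]$, writing $v:=\Out(t_1)=\dots=\Out(t_2-1)$ we get $v\in(1\pm\alpha)\mathcal{F}(t)$ for all such $t$; (ii) by the informal version of Lemma~\ref{lem:estimationErrorBound} applied at time $t_2$, a $80\%$-majority of the active-level suggestions at time $t_2$ lie within $\alpha\cdot\mathcal{F}(t_2)$ of $\mathcal{F}(t_2)$, and since the test failed at $t_2$, the constant output $v$ must be \emph{far} from these suggestions — concretely $|v-\mathcal{F}(t_2)| \gtrsim \alpha\cdot\mathcal{F}(t_2)$ up to the constants in the threshold.

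Next I would combine (i) and (ii). Fact (i) at $t=t_1$ gives $v\in(1\pm\alpha)\mathcal{F}(t_1)$, hence $|v-\mathcal{F}(t_1)|\le\alpha\cdot\mathcal{F}(t_1)$. Fact (ii) gives $|v-\mathcal{F}(t_2)|\ge c_1\alpha\cdot\mathcal{F}(t_2)$ for the threshold constant $c_1$. By the triangle inequality,
\[
|\mathcal{F}(t_2)-\mathcal{F}(t_1)| \;\ge\; |v-\mathcal{F}(t_2)| - |v-\mathcal{F}(t_1)| \;\ge\; c_1\alpha\cdot\mathcal{F}(t_2) - \alpha\cdot\mathcal{F}(t_1).
\]
If $|\mathcal{F}(t_2)-\mathcal{F}(t_1)|$ were, say, $<\tfrac{c_1}{4}\alpha\cdot\mathcal{F}(t_1)$, then $\mathcal{F}(t_2)$ and $\mathcal{F}(t_1)$ would differ by at most a $(1\pm \tfrac{c_1}{4}\alpha)$ factor, so $\mathcal{F}(t_2)\ge(1-\tfrac{c_1}{4}\alpha)\mathcal{F}(t_1)$, and the displayed inequality would force $|\mathcal{F}(t_2)-\mathcal{F}(t_1)|\ge c_1\alpha(1-\tfrac{c_1}{4}\alpha)\mathcal{F}(t_1)-\alpha\mathcal{F}(t_1)$, a contradiction once the threshold constant $c_1$ is chosen large enough relative to the slack $\alpha$ in the accuracy guarantees (which is exactly how the constants in \texttt{RobustDE} are calibrated). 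This yields $|\mathcal{F}(t_2)-\mathcal{F}(t_1)|=\Omega(\alpha\cdot\mathcal{F}(t_1))$.

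The main obstacle is bookkeeping the constants correctly: the accuracy lemmas give error $\alpha\cdot\mathcal{F}(\cdot)$, the Sparse-Vector test uses a threshold that is a \emph{larger} constant times $\alpha$ (so that the test is only triggered on genuine drift, not on estimator noise), and the DP median introduces its own additive slack. One must verify that the threshold constant strictly dominates the sum of all the slack terms, so that a failed test at $t_2$ really does certify $\Omega(\alpha)$ relative change rather than being absorbed by noise; this is where the precise parameter settings of Algorithm~\texttt{RobustDE} (as opposed to the simplified Algorithm~\ref{alg:PseudoRobustDE}) have to be invoked, and also why the lemma is stated with a generic $\Omega(\cdot)$ rather than an explicit constant. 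A secondary subtlety is handling the case where the level $j$ active at $t_2$ is that of the strong trackers versus a TDE level — in the latter case the suggestion is $\mathsf{Z}+z$ with $\mathsf{Z}$ a partial sum of frozen values, so one invokes Lemma~\ref{lem:simple1} / the informal Lemma~\ref{lem:estimationErrorBound} to reduce back to the same ``suggestion $\approx\mathcal{F}(t_2)$'' statement; once that reduction is made the argument above is uniform across levels.
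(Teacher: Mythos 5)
Your overall strategy coincides with the paper's proof of Lemma~\ref{lem:consequtiveOutputUpdatesProgress} (minimum-progress direction): at $t_2$ the test in Step~\ref{algStep:AboveThresh} fires, so after bounding the Laplace noise at least $40\%$ of the suggestions are at distance at least the threshold from the (unchanged) output $v$; intersecting with the $80\%$ accurate suggestions from Lemma~\ref{lem:estimationErrorBound} certifies $|v-\mathcal{F}(t_2)|\geq \mathrm{threshold}-\alpha_{\Stitch}\mathcal{F}(t_2)$; then one bounds $|v-\mathcal{F}(t_1)|$ and applies the triangle inequality. However, the way you propose to close the constants does not work with the algorithm's actual calibration. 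You use the generic output accuracy $|v-\mathcal{F}(t_1)|\le\alpha\cdot\mathcal{F}(t_1)$ at $t_1$ and then ask the test threshold to ``strictly dominate'' this $\alpha$-slack. But the threshold is $\StepSize(\alpha)\cdot \mathsf{Z}_{\ST}$, and the calibration needed elsewhere (Lemma~\ref{lem:outputAccuracy}, Case~1) forces $\StepSize(\alpha)\le \alpha/(2\Gamma)$ and $\alpha \ge \Gamma\cdot\StepSize(\alpha)+\alpha_{\Stitch}$; i.e., the threshold is deliberately a constant factor \emph{smaller} than $\alpha\cdot\mathcal{F}$, since otherwise the output could not be certified $\alpha$-accurate on non-modification steps. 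With your fact~(i) the triangle inequality gives $|\mathcal{F}(t_2)-\mathcal{F}(t_1)|\gtrsim \StepSize(\alpha)\mathcal{F}(t_2)-\alpha\mathcal{F}(t_1)$, which is negative, so no contradiction is reached and no $\Omega(\alpha\mathcal{F}(t_1))$ bound follows.

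The missing idea is that $t_1$ is itself an output-modification step, so one may invoke the sharper Corollary~\ref{cor:outputUpdateAccuracy}: $|v-\mathcal{F}(t_1)|\le\alpha_{\Stitch}\cdot\mathcal{F}(t_1)$ with $\alpha_{\Stitch}$ calibrated to be a small fraction of $\StepSize(\alpha)$ (the paper uses $\alpha_{\Stitch}\le\frac{1}{12\Gamma}\StepSize(\alpha)$). With this refinement the slack at both endpoints is $O(\alpha_{\Stitch})\cdot\max\{\mathcal{F}(t_1),\mathcal{F}(t_2)\}$, which is dominated by the threshold, and the paper's bound $|\mathcal{F}(t_2)-\mathcal{F}(t_1)|\ge\StepSize(\alpha)\cdot\mathsf{Z}_{\ST}-2\alpha_{\Stitch}\max\{\mathcal{F}(t_1),\mathcal{F}(t_2)\}$ follows. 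Two further points you gloss over but need: the threshold is expressed in units of $\mathsf{Z}_{\ST}$ (the frozen phase-start estimate), so converting it into $\Omega(\alpha)\cdot\mathcal{F}(t_1)$ requires Lemma~\ref{lem:maxPhaseProgress} (the $\Gamma$-bounded ratio within a phase, enforced by the $\STwrapper$ check in Step~\ref{algStep:STwrapper}), and the formal lemma additionally assumes $\tau\neq 0$ at $t_2$ (no phase reset), without which the claim is not asserted.
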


We leverage this lemma in order to show that there cannot be too many time steps during which we modify our output. We then partition these time steps and ``charge'' different levels $j$ for different times during which the output is modified. This allows us to prove a probabilistic bound on the maximal number of times we aggregate the estimates from the different levels (each level has a different bound). See Lemma~\ref{lem:outputUpdatesBounds} for the formal details.

\subsubsection{Forth component: The framework is robust}
In Section~\ref{sec:formalRobust} we prove that Condition~(1) holds (w.h.p.). That is, we show that at any time step throughout the execution, at least 80\% of the estimators in every level are accurate.

This includes two parts. First, in Lemma~\ref{lem:boundedEstimationRanges}, we show that throughout the execution, the condition required by our TDE's hold (specifically, see \ref{req:TDEdiffRange} in Definition~\ref{def:TDE}). This means that, {\em had the stream been fixed in advance}, then (w.h.p.)\ we would have that {\em all} of the estimators are accurate throughout the execution. In other words, this shows that if there were no adversary then (a stronger variant of) Condition~(1) holds. 

Second, in Lemma~\ref{lem:AccurateEstimations} we leverage the generalization properties of differential privacy to show that Condition~(1) must also hold in the adversarial setting. This lemma is similar to the analysis of \cite{hassidim2020adversarially}.

\subsubsection{Fifth component: Calculating the space complexity}
In the final part of the analysis, in Section~\ref{sec:formalSpaceComplexity}, we calculate the total space needed by the framework by accounting for the number of estimators in each level (which is a function of the high probability bound we derived on the number of aggregations done in each level), and the space they require. We refer the reader to Appendix~\ref{sec:formalAnalysis} for the formal analysis.

%%%%%%%%%%%%%%%%%%%%%%%%%%%%%%%%%%%%%%%
% DE from TDE - a general method
%%%%%%%%%%%%%%%%%%%%%%%%%%%%%%%%%%%%%%%
\section{Toggle Difference Estimator from a Difference Estimator}\label{sec:TDE}
We present a simple method that transforms any {\em difference estimator} to a {\em toggle difference estimator}. The method works as follows. Let $\DE$ be a difference estimator (given as an subroutine). We construct a $\TDE$ that instantiates two copies of the given difference estimator: $\DE_{{
\rm enable}}$ and $\DE_{{\rm fresh}}$. It also passes its parameters, apart of the enabling times, verbatim to both copies. As $\DE$ is set to output estimations only after receiving an (online) enabling time $e$, the $\TDE$ never enables the copy $\DE_{{\rm fresh}}$. Instead, $\DE_{{\rm fresh}}$ is used as a fresh copy that received the needed parameters and the stream $\mathcal{S}$ and therefore it is always ready to be enabled. Whenever a time $t$ is equal to some enabling time (i.e. $t=e^i$ for some $i\in [p]$), then the $\TDE$ copies the state of $\DE_{{\rm fresh}}$ to $\DE_{{\rm enable}}$ (running over the same space), and then it enables $\DE_{{\rm enable}}$ for outputting estimations. 

\begin{corollary}\label{cor:TDEfromDE}
For any function $\mathcal{F}$, provided that there exist a $(\gamma,\alpha,\delta)$-Difference Estimator for $\mathcal{F}$ with space $S_{\DE}(\gamma,\alpha,\delta,n,m)$, then there exists a $(\gamma,\alpha,\delta,p)$-Toggle Difference Estimator for $\mathcal{F}$ with space $ S_{\TDE}(\gamma,\alpha,\delta,p,n,m) = 2\cdot S_{\DE}(\gamma,\alpha,\delta/p,n,m)$
\end{corollary}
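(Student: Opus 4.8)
The plan is to verify the two assertions of the corollary for the construction already described: correctness (the resulting object satisfies Definition~\ref{def:TDE}) and the space bound. Fix an oblivious input stream $\mathcal{S}\in([n]\times\Z\times\{0,1\})^m$ with enabling times $1\le e^1<\dots<e^p<m$, as in Definition~\ref{def:TDE}, and recall that the $\TDE$ feeds the data part $\langle s_t,\Delta_t\rangle$ of every update to both internal copies $\DE_{\rm fresh}$ and $\DE_{\rm enable}$, never enables $\DE_{\rm fresh}$, and at each time $t=e^i$ overwrites the memory of $\DE_{\rm enable}$ with a copy of the current state of $\DE_{\rm fresh}$ and then enables it; at enabling times it outputs $0$, and at all other times it echoes $\DE_{\rm enable}$.

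The heart of the argument is a semantic observation. For each $i\in\{1,\dots,p\}$, consider the ``$i$-th incarnation'' of $\DE_{\rm enable}$: the difference estimator produced by the copy-and-enable step at time $e^i$, followed over the interval $[e^i,e^{i+1})$ (with the last interval read as $[e^p,m]$). Because $\DE_{\rm fresh}$ received exactly the prefix $\langle s_1,\Delta_1\rangle,\dots,\langle s_{e^i},\Delta_{e^i}\rangle$ and was never enabled, its state at time $e^i$ is — since $\DE$ is an oblivious streaming algorithm whose state is a deterministic function of its coins, its processed prefix, and its enabling status — exactly the state a standalone $\DE$ would be in after processing that prefix with no enabling. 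Copying this state into $\DE_{\rm enable}$ and enabling it at time $e^i$ therefore places $\DE_{\rm enable}$ in precisely the state of a standalone $\DE$ run on $\mathcal{S}$ with enabling time $e^i$; and since subsequent updates are fed identically, the outputs of the $i$-th incarnation on $[e^i,e^{i+1})$ are distributed exactly as those of such a standalone $\DE$. Applying the $\DE$ guarantee (Definition~\ref{def:TDE} with $p=1$) with failure parameter $\delta/p$, we get: for each fixed $i$, with probability at least $1-\delta/p$ over the coins of $\DE_{\rm fresh}$, the $i$-th incarnation outputs an $\alpha\cdot\mathcal{F}(\mathcal{S}_{e^i})$-accurate estimate of $\mathcal{F}(\mathcal{S}_t)-\mathcal{F}(\mathcal{S}_{e^i})$ at every $t>e^i$ satisfying the range condition~\eqref{req:TDEdiffRange} (where $e(t)=e^i$). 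The $p$ incarnations all reuse the randomness of $\DE_{\rm fresh}$ and are thus correlated, but this is immaterial: a union bound over $i\in\{1,\dots,p\}$ gives that with probability at least $1-\delta$ all incarnations are simultaneously accurate on their intervals, and on this event, for every $t\in[m]$ with $e(t)=e^i$ satisfying~\eqref{req:TDEdiffRange}, the $\TDE$'s output (equal to the $i$-th incarnation's output, or $0$ when $t=e^i$, in which case the target difference is $0$ and $0\in\pm\alpha\cdot\mathcal{F}(\mathcal{S}_{e^i})$ trivially) is as required. Hence the $\TDE$ satisfies Definition~\ref{def:TDE} with parameters $(\gamma,\alpha,p,\delta)$.

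For the space bound: the $\TDE$ stores the two copies $\DE_{\rm fresh}$ and $\DE_{\rm enable}$, each an instance of the underlying $\DE$ with failure parameter $\delta/p$ and hence using $S_{\DE}(\gamma,\alpha,\delta/p,n,m)$ space, plus $O(\log m)$ bits of bookkeeping (the current enabling-time index and an ``enabled'' flag), which is dominated by the space of a single $\DE$; this yields the claimed $2\cdot S_{\DE}(\gamma,\alpha,\delta/p,n,m)$. The copy-and-enable step is a single reassignment of the memory block holding the $\DE$ state, and per-update work is that of two $\DE$ updates plus $O(1)$, so the transformation is efficiency-preserving. The only place requiring genuine care is the semantic claim that copying a $\DE$'s internal state and then enabling it reproduces exactly the distribution of a freshly-enabled $\DE$ (used above), together with the consequence that reusing $\DE_{\rm fresh}$'s coins across the $p$ incarnations forces the union-bound loss, hence the $\delta/p$.
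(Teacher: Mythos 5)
Your proposal is correct and follows essentially the same route as the paper: the paper's argument for Corollary~\ref{cor:TDEfromDE} is exactly the two-copy construction (never-enabled $\DE_{\rm fresh}$ whose state is copied into $\DE_{\rm enable}$ at each enabling time), with the $\delta/p$ failure parameter absorbing a union bound over the $p$ enabling times and the factor $2$ coming from storing both copies. You merely spell out the state-copying semantic claim and the union bound more explicitly than the paper does, which is a faithful elaboration rather than a different approach.
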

Note that for a $\DE$ whose space dependency w.r.t.\ the failure parameter $\delta$ is logarithmic, the above construction gives a $\TDE$ with at most a logarithmic blowup in space, resulting from the $p$ enabling times.

%%%%%%%%%%%%%%%%%%%%%%%%%%%%%%%%%%%%%%%%%%%%%%%%%
% Applications
%%%%%%%%%%%%%%%%%%%%%%%%%%%%%%%%%%%%%%%%%%%%%%%%%
\section{Applications}\label{sec:Applications}

Our framework is applicable to functionalities that admit a strong tracker and a difference estimator. As \cite{woodruff2020tight} showed, difference estimators exist for many functionalities of interest in the {\em insertion only} model, including estimating frequency moments of a stream, estimating the number of distinct elements in a stream, identifying heavy-hitters in a stream and entropy estimation. 
However, as we mentioned, we are not aware of non-trivial DE constructions in the turnstile model. In more detail, \cite{woodruff2020tight} presented DE for the turnstile setting, but these DE require additional assumptions and do not exactly fit our framework (nor the framework of \cite{woodruff2020tight}).

To overcome this challenge we introduce a new {\em monitoring technique} which we use as a wrapper around our framework. This wrapper allows us to check whether the additional assumptions required by the DE hold, and reset our system when they do not. As a concrete application, we present the resulting bounds for $F_2$ estimation. 

\begin{definition}[Frequency vector] \label{def:freqVec}
The {\em frequency vector} of a stream $S = (\langle s_1, \Delta_1 \rangle ,\dots, \langle s_m, \Delta_m \rangle)\in ([n]{\times}\{\pm1\})^{m}$ is the vector $u\in \Z^{n}$ whose $i$th coordinate is 
$u[i] = \sum_{j\in [m], s_j = i}{\Delta_j}.$ We write $u^{(t)}$ to denote the frequency vector of the stream $S_t$, i.e., restricted to the first $t$ updates. Given two time points $t_1\leq t_2\in[m]$ we write $u^{(t_1,t_2)}$ to denote the frequency vector of the stream $S_{t_1}^{t_2}$, i.e., restricted to the updates between time $t_1$ and $t_2$.
\end{definition}

In this section we focus on estimating $F_2$, the second moment of the frequency vector. That is, after every time step $t$, after obtaining the next update $\langle s_t, \Delta_t \rangle\in([n]{\times}\{\pm1\})$, we want to output an estimation for
$$
\left\|u^{(t)}\right\|_2^2 = \sum_{i=1}^n \left|u^{(t)}[i]  \right|^2.
$$

Woodruff and Zhou \cite{woodruff2020tight} presented a $(\gamma, \alpha, \delta)$-difference estimator for $F_2$ that works in the turnstile model, under the additional assumption that for any time point $t$ and enabling time $e\leq t$ it holds that
\begin{equation}
  \left\| u^{(e,t)} \right\|_2^2\leq \gamma\cdot \left\| u^{(e)} \right\|_2^2. \label{req:F2DEratio_body}
\end{equation}

In general, we cannot guarantee that this condition holds in a turnstile stream. To bridge this gap, we introduce the notion of {\em twist number} (see Definition~\ref{def:twistIntro}) in order to control the number of times during which this condition does not hold (when this condition does not hold we say that a {\em violation} has occurred). Armed with this notion, our approach is to run our framework (algorithm \texttt{RobustDE}) alongside a {\em validation algorithm} (algorithm \texttt{Guardian}) that identifies time steps at which algorithm \texttt{RobustDE} loses accuracy, meaning that a violation has occurred. We then restart algorithm \texttt{RobustDE} in order to maintain accuracy. As we show, our notion of twist number allows us to bound the total number of possible violation, and hence, bound the number of possible resets. This in turn allows us to bound the necessary space for our complete construction. The details are given in Appendix~\ref{sec:MissingAppli}; here we only state the result.

\begin{theorem}
There exists an adversarially robust $F_2$ estimation algorithm for turnstile streams of length $m$ with a bounded  $(O(\alpha), m)$-flip number and $(O(\alpha), m)$-twist number with parameters $\lambda$ and $\mu$ correspondingly, that guarantees $\alpha$-accuracy with probability at least $1-1/m$ in all time $t\in [m]$ using space complexity of
$$\tilde{O}\left(\frac{\sqrt{\alpha\lambda+\mu}}{\alpha^{2}}\log^{3.5}(m)\right)\text{.}$$
\end{theorem}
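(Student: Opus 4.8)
The plan is to combine the generic framework from Theorem~\ref{thm:FrameworkSpace} (algorithm \texttt{RobustDE}) with the \texttt{Guardian} wrapper, and instantiate the abstract ingredients — a strong tracker for $F_2$ and a toggle difference estimator for $F_2$ — with known constructions. First I would recall that there is an oblivious $F_2$ strong tracker (an AMS/Johnson–Lindenstrauss sketch) using space $\tilde{O}(\alpha^{-2}\log m)$, which gives $\Space(\mathsf{E}_{\ST})=\tilde{O}(\alpha^{-2})$. Next, I would take the turnstile $F_2$ difference estimator of \cite{woodruff2020tight}, which is $(\gamma,\alpha,\delta)$-accurate \emph{conditioned on the ratio assumption}~\eqref{req:F2DEratio_body}, with space roughly $\tilde{O}(\gamma/\alpha^2)$ at parameter $\gamma$, and upgrade it to a toggle difference estimator via Corollary~\ref{cor:TDEfromDE} at the cost of a factor $2$ and a $\log p$ blowup in the failure-probability dependence. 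Plugging these two ingredients into the framework of Theorem~\ref{thm:FrameworkSpace} yields, in a regime where~\eqref{req:F2DEratio_body} holds throughout, a robust $F_2$ tracker with space $\tilde{O}(\sqrt{\alpha\lambda}\cdot\alpha^{-2})$.

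The core new work is handling streams in which the ratio assumption~\eqref{req:F2DEratio_body} \emph{fails} — this is where the twist number enters. I would run \texttt{RobustDE} alongside the validation algorithm \texttt{Guardian}, whose job is to detect (w.h.p.) any time step at which a \emph{violation} of~\eqref{req:F2DEratio_body} has occurred relative to the current enabling times; whenever \texttt{Guardian} flags a violation, we restart \texttt{RobustDE} from scratch. Between consecutive restarts the ratio assumption holds for all the relevant (enabling time, current time) pairs, so on each such run the difference-estimator guarantees are valid and the framework is accurate. The key quantitative claim to prove is that the number of restarts is $O(\mu \cdot \polylog)$, where $\mu$ is the $(O(\alpha),m)$-twist number: a violation $\|u^{(e,t)}\|_2^2 > \gamma\|u^{(e)}\|_2^2$ at an enabling time $e$ set by the framework corresponds exactly to a segment $\VVV_i$ on which $F_2$ is "large" (Condition~1 of Definition~\ref{def:twistIntro}) while the net change in $F_2$ over a slightly larger window is small (Condition~2, which holds because between output modifications, and hence between enabling-time updates, the framework guarantees the target value has not drifted much) — so each violation can be charged to a distinct pair $(\PPP_i,\VVV_i)$ in an optimal twist-partition, giving at most $O(\mu)$ violations up to logarithmic accounting for the number of levels and the $\gamma$ thresholds. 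I would also need to bound the cost of running \texttt{Guardian} itself: it is essentially another $F_2$ sketch (or a few of them, one per active enabling time / level), which adds only a $\tilde{O}(\alpha^{-2})$-type overhead per level and does not dominate.

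Finally I would combine the pieces. Since we restart at most $O(\mu\,\polylog)$ times and each restart re-initializes the framework, by a union bound over restarts (and adjusting the per-run failure probability to $\delta/\poly(m)$, absorbed into the logarithmic factors) the overall construction is $\alpha$-accurate with probability $1-1/m$. For the space: we do not pay $\mu$ times the framework's space, because restarts are sequential and reuse memory; rather, the dominant effect is that the \emph{effective flip number seen by a single run} together with the restart bookkeeping contributes an additive $\tilde{O}(\mu)$ inside the square root — intuitively, the framework's $\sqrt{\alpha\lambda}$ term and the $\sqrt{\mu}$ contribution from violations combine as $\sqrt{\alpha\lambda+\mu}$ — so the total space is $\tilde{O}\!\left(\frac{\sqrt{\alpha\lambda+\mu}}{\alpha^2}\log^{3.5}(m)\right)$, matching the claim (the $\log^{3.5}$ factor collecting the AMS-sketch logs, the privacy-composition logs from Theorem~\ref{thm:FrameworkSpace}, the $\log p$ from Corollary~\ref{cor:TDEfromDE}, and the union-bound-over-restarts log).

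\textbf{Main obstacle.} The hardest step is the charging argument bounding the number of violations (hence restarts) by $O(\mu\,\polylog)$: one must show that the enabling times actually chosen by \texttt{RobustDE}, together with the output-stability guarantees (informal Lemmas~\ref{lem:outputAccuracy} and~\ref{lem:consequtiveOutputUpdatesProgress}), force each ratio-violation to realize a genuine twist-segment pair $(\PPP_i,\VVV_i)$ in the sense of Definition~\ref{def:twistIntro} — in particular verifying Condition~2 requires that the prefix up to the enabling time and the prefix up to the violation time have nearly equal $F_2$ value, which is not automatic and must be extracted from how the framework sets enabling times only when it modifies its (accurate) output. Coupling this combinatorial accounting with the adversarial/adaptive setting (so that the twist-partition we charge against is itself adaptively determined) and with the need for \texttt{Guardian} to reliably detect violations with small space is where the real care is needed.
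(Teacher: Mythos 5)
Your overall architecture matches the paper's: instantiate Theorem~\ref{thm:FrameworkSpace} with the AMS-type strong tracker and the Woodruff--Zhou turnstile $F_2$ difference estimator upgraded via Corollary~\ref{cor:TDEfromDE}, wrap it with the validation algorithm \texttt{Guardian}, and charge each detected violation of the ratio condition~\eqref{req:F2DEratio_body} to a distinct twist segment so that the number of resets is at most $\mu$ (this is exactly Lemma~\ref{lem:maxMonTrigger}, and you correctly flag Condition~2 of Definition~\ref{def:twist} as the delicate point). However, there are two concrete gaps in your quantitative accounting. First, you treat \texttt{Guardian} as ``essentially another $F_2$ sketch'' adding only an $\tilde{O}(\alpha^{-2})$-type overhead. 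That is not enough: \texttt{Guardian}'s comparisons are made against adaptively chosen outputs and its reset decisions feed back into the algorithm, so its trackers must themselves be made adversarially robust. The paper does this by running $\mathsf{K}_{\STmon}=\tilde{\Omega}(\sqrt{P_{\STmon}})=\tilde{\Omega}(\sqrt{\mu})$ copies aggregated through the sparse vector technique with budget $P_{\STmon}=O(\mu)$ (Lemmas~\ref{lem:privacyMonitor} and~\ref{lem:AccurateMonEstimations}), i.e.\ \texttt{Guardian} alone costs $\tilde{O}(\sqrt{\mu}/\alpha^{2})$ --- one of the places a $\sqrt{\mu}$ term genuinely must appear.

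Second, your mechanism for the additive $\mu$ inside the square root is asserted rather than derived: you propose restarting \texttt{RobustDE} from scratch and say that ``restart bookkeeping'' contributes $\tilde{O}(\mu)$ under the root. The paper does not restart the framework; \texttt{Guardian} issues a \emph{phase reset command} to the same running instance (Algorithm~\ref{alg:PhaseResetCommand}), which keeps all estimators and their privacy budgets shared across resets. The quantitative content is then Lemma~\ref{lem:lambdaCalibration}: each external reset starts a fresh phase and hence can cost up to $\PhaseSize=\Theta(1/\alpha)$ extra output modifications, so the per-level capping parameters must be calibrated as if the flip number were $\hat{\lambda}=O(\lambda+\mu\,\alpha^{-1})$; feeding $\hat{\lambda}$ into Theorem~\ref{thm:FrameworkSpace} gives $\sqrt{\alpha\hat{\lambda}}=\sqrt{\alpha\lambda+\mu}$. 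Under your full-restart scheme with fresh randomness this accounting does not apply (each run would be budgeted for $\lambda$ on its own, and you would additionally have to re-justify the DP/generalization argument across adaptively timed restarts), so as written your proposal does not actually establish where the $+\mu$ term comes from; without the \texttt{Guardian} budget of the previous paragraph it would even appear to give the too-good bound $\tilde{O}(\sqrt{\alpha\lambda}/\alpha^{2})$. These two points --- robustifying the monitor and the $\hat{\lambda}$ recalibration --- are the missing pieces between your plan and the paper's proof.
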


As we mentioned, this should be contrasted with the result of \cite{hassidim2020adversarially}, who obtain space complexity  $\tilde{\mathcal{O}}\left(\frac{\sqrt{\lambda}}{\alpha^2}\right)$ for robust $F_2$ estimation in the turnstile setting. Hence, our new result is better whenever $\mu\ll\lambda$.

%%%%%%%%%%%%%%%
% bibliography
%%%%%%%%%%%%%%%
\bibliographystyle{alpha}
\bibliography{Robust_Streaming_Framework}

%%%%%%%%%%%%%%%
% appendix
%%%%%%%%%%%%%%%
\appendix

%%%%%%%%%%%%%%%%%%%%%%%%%%%%%%%%%%%%%%%%%%%%%%
% Section: Additional preliminaries
% Required theorems from DP
%%%%%%%%%%%%%%%%%%%%%%%%%%%%%%%%%%%%%%%%%%%%%%
\section{Additional Preliminaries from Differential Privacy}\label{sec:MoreDP}

%%%%%%%%%%%%%%%%%%%%%%%%%%%%
% The Laplace Mechanism
%%%%%%%%%%%%%%%%%%%%%%%%%%%%
\paragraph{The Laplace Mechanism.} The most basic constructions of differentially private algorithms are
via the Laplace mechanism as follows.
\begin{definition}[The Laplace distribution] A random variable has probability distribution $\Lap(b)$ if its probability density function is $f(x)=\frac{1}{2b}\exp\left(-\frac{|x|}{b} \right)$, where $x\in\R$.
\end{definition}
\begin{definition}[Sensitivity] A function $f:X^*\rightarrow \R$ has sensitivity $\ell$ if for every two databases $S,S^\prime\in X^*$ that difer in one row it holds that $|f(S)-f(S^\prime)|\leq\ell$.
\end{definition}
\begin{theorem}[Laplace mechanism \cite{dwork2006calibrating}]\label{thm:LaplaceMechanism} Let $f:X^*\rightarrow \R$ be a sensitivity $\ell$ function. The mechanism that on input $S\in X^*$ returns $f(S)+\Lap(\frac{\ell}{\eps})$ preserves $(\eps,0)$-differential privacy.
\end{theorem}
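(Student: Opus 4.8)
The plan is to establish the privacy guarantee by a direct pointwise comparison of the output densities on neighboring inputs. Fix two databases $S,S'\in X^*$ that differ in one row, and write $a=f(S)$ and $a'=f(S')$. Since $f$ has sensitivity $\ell$, we have $|a-a'|\le \ell$. The output of the mechanism on $S$ is the random variable $a+Z$ with $Z\sim\Lap(\ell/\eps)$, so its probability density function is $p_S(x)=\frac{\eps}{2\ell}\exp\!\left(-\frac{\eps}{\ell}|x-a|\right)$, and similarly $p_{S'}(x)=\frac{\eps}{2\ell}\exp\!\left(-\frac{\eps}{\ell}|x-a'|\right)$.

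First I would bound the ratio of these densities at an arbitrary point $x\in\R$. The normalizing constants cancel, leaving
\[
\frac{p_S(x)}{p_{S'}(x)}=\exp\!\left(\frac{\eps}{\ell}\bigl(|x-a'|-|x-a|\bigr)\right).
\]
By the triangle inequality, $|x-a'|-|x-a|\le |a-a'|\le \ell$, so the exponent is at most $\eps$, which gives the pointwise bound $p_S(x)\le e^{\eps}\,p_{S'}(x)$ for every $x$.

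Next I would lift this pointwise bound to events. For any measurable set $T\subseteq\R$,
\[
\Pr[\,a+Z\in T\,]=\int_{T}p_S(x)\,dx\le \int_{T}e^{\eps}\,p_{S'}(x)\,dx=e^{\eps}\cdot\Pr[\,a'+Z\in T\,],
\]
which is exactly the $(\eps,0)$-differential privacy inequality, the additive term being $\delta=0$. Since $S,S'$ and $T$ were arbitrary, this completes the proof.

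There is no genuine obstacle here: the argument is the standard density-ratio calculation, and the only nontrivial step is the one-line combination of the triangle inequality with the sensitivity bound. To be fully rigorous about the measure-theoretic passage from densities to events, it suffices to note that $p_S$ and $p_{S'}$ are everywhere positive, so the ratio is well defined and monotonicity of the integral applies directly; no case analysis or regularity assumptions are needed.
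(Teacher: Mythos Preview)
Your proof is correct and is the standard density-ratio argument for the Laplace mechanism. The paper does not actually prove this theorem; it is stated in the preliminaries as a known result cited from \cite{dwork2006calibrating}, with no accompanying proof, so there is nothing to compare against.
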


%%%%%%%%%%%%%%%%%%%%%%%%%%%%
% The sparce vector
%%%%%%%%%%%%%%%%%%%%%%%%%%%%
\paragraph{The sparse vector technique.} Consider a large number of low-sensitivity functions $f_1,f_2,\dots$ which are given (one by one) to a data curator (holding a database $S$). Dwork et al.~\cite{DBLP:conf/stoc/DworkNRRV09} presented a simple tool, called \texttt{AboveThreshold} (see Algorithm~\ref{alg:AboveThreshold}), for privately identifying the first index $i$ such that the value of $f_i(S)$ is ``large''.

\begin{algorithm*}[ht]
\caption{\texttt{AboveThreshold}}\label{alg:AboveThreshold}
{\bf Input:} Database $S\in X^*$, privacy parameter $\eps$, threshold $t$ and a stream of sensitivity-1 queries $f_i:X^*\rightarrow \R$.

\begin{enumerate}
	\item Let $\hat{t}\leftarrow t+ \Lap(\frac{2}{\eps})$
	\item In each round $i$, when receiving $f_i$ do the following:
	\begin{enumerate}
		\item Let $\hat{f}_i\leftarrow f_i(S)+\Lap(\frac{4}{\eps})$.
		\item If $\hat{f}_i > \hat{t}$, then output $\top$ and halt.
		\item Otherwise, output $\bot$ and proceed to the next iteration.
	\end{enumerate}
\end{enumerate}
\end{algorithm*}

\begin{theorem}[\cite{DBLP:conf/stoc/DworkNRRV09,DBLP:conf/focs/HardtR10}]\label{thm:AboveThreshold} Algorithm \texttt{AboveThreshold} is $(\eps,0)$-differentially private.
\end{theorem}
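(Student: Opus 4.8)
\emph{Proof proposal.} The plan is to run the standard sparse vector coupling argument \cite{DBLP:conf/stoc/DworkNRRV09,DBLP:conf/focs/HardtR10}: push all of the mechanism's randomness into an explicit integral over the noise added to the threshold, and couple the executions on two neighboring databases by shifting that noise variable by $1$. Write $\MMM$ for the randomized mechanism \texttt{AboveThreshold}. Fix neighboring databases $S,S'$ (differing in one row) and an arbitrary output transcript $a$. Since the mechanism halts at its first $\top$, $a$ is either all $\bot$ or of the form $(\bot,\dots,\bot,\top)$ with $\top$ in some position $k$. Because the queries $f_1,f_2,\dots$ are chosen adaptively from the \emph{past outputs only}, conditioning on the prefix $(\bot,\dots,\bot)$ of the transcript pins down $f_1,\dots,f_k$ as fixed sensitivity-$1$ functions; so we may treat them as fixed. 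It then suffices to show $\Pr[\MMM(S)=a]\le e^{\eps}\Pr[\MMM(S')=a]$ for every $a$, since $\delta=0$ and summing this pointwise bound over all transcripts in any event $T$ yields $(\eps,0)$-differential privacy.

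Let $\rho\sim\Lap(2/\eps)$ be the threshold noise with density $p_\rho$, and let the per-query noises $\nu_i\sim\Lap(4/\eps)$ be mutually independent and independent of $\rho$. For $a=(\bot,\dots,\bot,\top)$ with $\top$ at position $k$, conditioning on $\rho=z$ and using independence of the $\nu_i$,
\[
\Pr[\MMM(S)=a]=\int_{\R}p_\rho(z)\,\Bigl(\prod_{i<k}\Pr_{\nu_i}\!\bigl[\nu_i\le t+z-f_i(S)\bigr]\Bigr)\cdot\Pr_{\nu_k}\!\bigl[\nu_k>t+z-f_k(S)\bigr]\,dz .
\]
Denote the bracketed product by $g_S(z)$ and the last factor by $h_S(z)$. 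Two elementary facts drive the argument. First, $g_S$ is non-decreasing in $z$ (it is a product of Laplace CDFs), and since $|f_i(S)-f_i(S')|\le 1$ we have $t+z-f_i(S)\le t+(z+1)-f_i(S')$ for all $i$, so monotonicity of the CDF gives $g_S(z)\le g_{S'}(z+1)$. Second, $t+z-f_k(S)\ge (t+(z+1)-f_k(S'))-2$, and the Laplace tail at scale $4/\eps$ satisfies $\Pr[\nu_k>c-2]\le e^{2\cdot(\eps/4)}\Pr[\nu_k>c]=e^{\eps/2}\Pr[\nu_k>c]$ for every $c$, hence $h_S(z)\le e^{\eps/2}h_{S'}(z+1)$.

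Substituting these inequalities and changing variables $w=z+1$,
\[
\Pr[\MMM(S)=a]\le e^{\eps/2}\int_{\R}p_\rho(w-1)\,g_{S'}(w)\,h_{S'}(w)\,dw .
\]
A unit shift changes the density of $\Lap(2/\eps)$ by a factor of at most $e^{1/(2/\eps)}=e^{\eps/2}$, so $p_\rho(w-1)\le e^{\eps/2}p_\rho(w)$, and therefore $\Pr[\MMM(S)=a]\le e^{\eps}\int_{\R}p_\rho(w)g_{S'}(w)h_{S'}(w)\,dw=e^{\eps}\Pr[\MMM(S')=a]$. The all-$\bot$ transcript is handled by the identical computation without the factor $h$ (the product now ranging over all issued queries), which costs only the $e^{\eps/2}$ from the $g$-shift and the $e^{\eps/2}$ from the density shift, so a fortiori $\le e^{\eps}$. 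This establishes the pointwise bound for every $a$.

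I do not expect a genuine obstacle here: the only thing requiring care is bookkeeping the shift constants so that the two coupled shifts (one in the threshold-noise variable $\rho$, one inside the position-$k$ tail probability) together contribute exactly the factor $e^{\eps}$ — which is precisely why \texttt{AboveThreshold} scales the threshold noise by $2/\eps$ and the query noises by $4/\eps$ — together with the standard observation that conditioning on the output transcript renders the adaptively chosen queries deterministic, legitimizing the integral decomposition used above.
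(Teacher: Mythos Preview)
The paper does not give its own proof of this theorem; it is stated in the preliminaries (Appendix~\ref{sec:MoreDP}) as a cited result from \cite{DBLP:conf/stoc/DworkNRRV09,DBLP:conf/focs/HardtR10} with no accompanying argument. Your proposal is the standard sparse-vector coupling proof and is correct.

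One small bookkeeping remark: in your all-$\bot$ paragraph you say the cost is ``the $e^{\eps/2}$ from the $g$-shift and the $e^{\eps/2}$ from the density shift,'' but in fact the $g$-shift is free --- you already established $g_S(z)\le g_{S'}(z+1)$ with no multiplicative factor --- so the all-$\bot$ case actually costs only the single $e^{\eps/2}$ from the density shift. This only strengthens your conclusion, so the proof stands as written.
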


%%%%%%%%%%%%%%%%%%%%%%%%%%%%
% \PrivateMed
%%%%%%%%%%%%%%%%%%%%%%%%%%%%
\paragraph{Privately approximating the median of the data.} Given a database $S\in X^{*}$, consider the task of {\em privately} identifying an {\em approximate median} of $S$. Specifically, for an error parameter $\Gamma$, we want to identify an element $x\in X$ such that there are at least $|S|/2-\Gamma$ elements in $S$ that are bigger or equal to $x$, and there are at least $|S|/2-\Gamma$ elements in $S$ that are smaller or equal to $x$. The goal is to keep $\Gamma$ as small as possible, as a function of the privacy parameters $\eps, \delta$, the database size $|S|$, and the domain size $|X|$. 

There are several advanced constructions in the literature with error that grows very slowly as a function of the domain size (only polynomially with $\log^{*}|X|$). \cite{DBLP:journals/toc/BeimelNS16, DBLP:conf/focs/BunNSV15, DBLP:conf/stoc/BunDRS18, DBLP:conf/colt/KaplanLMNS20} In our application, however, the domain size is already small, and hence, we can use simpler constructions (where the
error grows logarithmically with the domain size).

\begin{theorem}[folklore]\label{thm:PrivateMed} There exists an $(\eps,0)$-differentially private algorithm that given a database $S\in X^{*}$ outputs an element $x\in X$ such that with probability at least $1-\delta$ there are at least $|S|/2-\Gamma$ elements in $S$ that are bigger or equal to $x$, and there are at least $|S|/2-\Gamma$ elements in S that are smaller or equal to $x$, where $\Gamma = O\left( \frac{1}{\eps}\log\left( \frac{|X|}{\delta} \right) \right)$.
\end{theorem}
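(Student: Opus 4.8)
The plan is to prove this by a single application of the exponential mechanism with a carefully chosen utility function over the (ordered) domain $X$. Assume without loss of generality that $X$ is totally ordered, say $X=\{1,\dots,|X|\}$. For a database $S\in X^*$ and a candidate $x\in X$, write $a_x=\big|\{i:S_i\ge x\}\big|$ and $b_x=\big|\{i:S_i\le x\}\big|$, and set the utility $u(S,x)=\min(a_x,b_x)$. The algorithm samples $x\in X$ with probability proportional to $\exp\!\big(\tfrac{\eps}{2}\,u(S,x)\big)$; since $|X|$ is finite this is well defined and efficiently samplable.

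First I would check privacy. For every fixed $x$, replacing one row of $S$ changes each of $a_x$ and $b_x$ by at most $1$, hence changes $u(S,x)=\min(a_x,b_x)$ by at most $1$; so $u(\cdot,x)$ has sensitivity $1$. The exponential mechanism that samples proportionally to $\exp(\eps\,u/(2\Delta u))$ for a sensitivity-$\Delta u$ utility is $(\eps,0)$-differentially private, which gives the privacy claim with $\Delta u=1$.

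Next I would bound the error. Let $x^\star\in X$ realize the $\lceil|S|/2\rceil$-th order statistic of $S$; then $a_{x^\star}\ge |S|/2$ and $b_{x^\star}\ge|S|/2$ up to an additive rounding term, so $\mathrm{OPT}:=\max_{x\in X}u(S,x)\ge |S|/2-O(1)$. By the standard utility guarantee of the exponential mechanism, for every $\beta\in(0,1)$ the returned $x$ satisfies, with probability at least $1-\beta$,
$$u(S,x)\ \ge\ \mathrm{OPT}-\frac{2}{\eps}\ln\!\frac{|X|}{\beta}.$$
Taking $\beta=\delta$ and plugging in the bound on $\mathrm{OPT}$, with probability at least $1-\delta$ we get $\min(a_x,b_x)\ge |S|/2-\Gamma$ with $\Gamma=\tfrac{2}{\eps}\ln(|X|/\delta)+O(1)=O\!\big(\tfrac{1}{\eps}\log(|X|/\delta)\big)$. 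Unwinding the definitions: $a_x\ge|S|/2-\Gamma$ says at least $|S|/2-\Gamma$ elements of $S$ are $\ge x$, and $b_x\ge|S|/2-\Gamma$ says at least $|S|/2-\Gamma$ are $\le x$ --- exactly the assertion.

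The statement is folklore, so there is no serious obstacle; the one point deserving care is the choice of utility. It must simultaneously have sensitivity $1$ (so that pure differential privacy is automatic) and have the property that near-optimal utility forces a genuinely two-sided approximate median. Using $\min(a_x,b_x)$ achieves both: an alternative such as $-|a_x-b_x|$ controls only the imbalance and not the two one-sided counts separately, and an unclipped signed rank misbehaves at the ends of the domain. A fully alternative proof would replace the exponential mechanism by a recursive ``noisy binary search'': keep a sub-interval of $X$, and in each of the $O(\log|X|)$ rounds add $\Lap(O(\log|X|/\eps))$ noise to the count of database points in the left half to decide which half to recurse into, composing privacy across the rounds; I would present the exponential-mechanism argument as the main proof since it is a clean one-shot construction achieving the stated $\Gamma$.
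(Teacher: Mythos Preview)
The paper states this theorem as folklore and gives no proof at all; it simply cites it as a known black box after remarking that, since the domain $X$ is small in their application, a construction with error logarithmic in $|X|$ suffices. Your exponential-mechanism argument with utility $u(S,x)=\min(a_x,b_x)$ is exactly the standard folklore proof of this statement, and it is correct as written: sensitivity $1$ is immediate, the exponential mechanism gives $(\eps,0)$-privacy, and the usual utility bound yields $\Gamma=O\!\big(\tfrac{1}{\eps}\log(|X|/\delta)\big)$. There is nothing further to compare against.
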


%%%%%%%%%%%%%%%%%%%%%%%%%%%%
% Composition
%%%%%%%%%%%%%%%%%%%%%%%%%%%%
\paragraph{Composition of differential privacy.} The following theorems argue about the privacy guarantees of an algorithm that accesses its input database using several differentially private mechanisms.
\begin{theorem}[Simple composition~\cite{DBLP:conf/eurocrypt/DworkKMMN06, DBLP:conf/stoc/DworkL09}]\label{thm:compSimple} Let $0<\eps \leq 1$, and let $\delta \in [0,1]$. A mechanism that permits $k$ adaptive interactions with mechanisms that preserve $(\eps,\delta)$-differential privacy (and does not access the database otherwise) ensures $(k\eps, k\delta)$-differential privacy.
\end{theorem}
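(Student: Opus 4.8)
The plan is to reduce the $k$-fold adaptive composition to a single hybrid-style argument over the $k$ interactions, tracking how the multiplicative factor $e^\eps$ and the additive slack $\delta$ accumulate one mechanism at a time. Fix two neighboring databases $S, S'$ (differing on one row), and fix the overall (adaptive) adversary that, having seen the outputs of the first $i-1$ mechanisms, chooses the $i$-th mechanism $\mathcal{M}_i$ (each $\mathcal{M}_i$ being $(\eps,\delta)$-differentially private in its database argument). Let $\mathcal{A}$ denote the composed mechanism, whose output is the transcript $(y_1,\dots,y_k)$ of all the individual outputs. I want to show $\Pr[\mathcal{A}(S)\in T]\le e^{k\eps}\Pr[\mathcal{A}(S')\in T]+k\delta$ for every event $T$ over transcripts.

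First I would set up the induction on $k$. The base case $k=1$ is exactly the hypothesis that $\mathcal{M}_1$ is $(\eps,\delta)$-differentially private. For the inductive step, condition on the first output $y_1$: the prefix distribution of $y_1$ under $S$ versus $S'$ satisfies the $(\eps,\delta)$-guarantee of $\mathcal{M}_1$, and for each fixed value of $y_1$ the remaining $k-1$ mechanisms form a $(k-1)$-fold adaptive composition of $(\eps,\delta)$-private mechanisms (the choice of $\mathcal{M}_2,\dots$ depends only on $y_1$, which is now fixed), hence by the inductive hypothesis the conditional transcript distribution satisfies the $((k-1)\eps,(k-1)\delta)$-guarantee. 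The cleanest way to push the conditioning through is via the standard ``group privacy / hybrid'' fact: if a distribution $P$ over an output and a continuation each dominate their $S'$-counterparts up to $(\eps_1,\delta_1)$ and $(\eps_2,\delta_2)$ respectively (uniformly over the conditioning value), then the joint distribution dominates up to $(\eps_1+\eps_2,\delta_1+\delta_2)$. Multiplying the two multiplicative bounds gives $e^{\eps}\cdot e^{(k-1)\eps}=e^{k\eps}$, and the additive errors add to $\delta+(k-1)\delta=k\delta$, completing the induction.

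To make the conditioning argument rigorous I would use the ``deletion / event-splitting'' characterization of approximate differential privacy rather than manipulating densities directly: a mechanism is $(\eps,\delta)$-DP iff for every event $T$ one can write $\Pr[\mathcal{M}(S)\in T]\le e^\eps\Pr[\mathcal{M}(S')\in T]+\delta$, and it is convenient to integrate the inequality $\Pr[\mathcal{A}(S)\in T\mid y_1]\le e^{(k-1)\eps}\Pr[\mathcal{A}(S')\in T\mid y_1]+(k-1)\delta$ against the law of $y_1$, then apply the $(\eps,\delta)$-bound to the resulting function of $y_1$ (after truncating it to $[0,1]$, which only helps). The one point requiring care — and the main obstacle — is the \emph{adaptivity}: one must be precise that the adversary's choice of $\mathcal{M}_i$ is a deterministic (or randomized-but-data-independent) function of the previously released outputs only, so that conditioning on $y_1$ genuinely freezes the later mechanisms and legitimately invokes the inductive hypothesis; spelling out this measurability/independence bookkeeping is the only nontrivial part, the arithmetic being just the telescoping $e^{k\eps}$ and $k\delta$. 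Finally, I note that the bound makes no use of $\eps\le 1$, so that hypothesis in the theorem statement is harmless (it is there only to align with the advanced-composition regime discussed elsewhere).
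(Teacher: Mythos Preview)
The paper does not prove this theorem at all; it is quoted as a standard preliminary with citations to \cite{DBLP:conf/eurocrypt/DworkKMMN06,DBLP:conf/stoc/DworkL09} and no argument is supplied. So there is nothing in the paper to compare against, and your proposal must stand on its own.

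Your inductive skeleton is the right one, but the step you describe does \emph{not} yield the additive term $k\delta$ claimed in the statement. Concretely, if you integrate the conditional bound
\[
\Pr[\mathcal{A}(S)\in T\mid y_1]\le e^{(k-1)\eps}\Pr[\mathcal{A}(S')\in T\mid y_1]+(k-1)\delta
\]
against the law of $y_1$ under $\mathcal{M}_1(S)$ and then apply the $(\eps,\delta)$-bound for $\mathcal{M}_1$ to the resulting $[0,1]$-valued function of $y_1$, you obtain
\[
\Pr[\mathcal{A}(S)\in T]\le e^{k\eps}\Pr[\mathcal{A}(S')\in T]+e^{(k-1)\eps}\delta+(k-1)\delta,
\]
and the inner $\delta$ picks up the factor $e^{(k-1)\eps}$. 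Iterating from scratch gives $\bigl(1+e^{\eps}+\cdots+e^{(k-1)\eps}\bigr)\delta$, not $k\delta$. So the ``standard fact'' you invoke---that $(\eps_1,\delta_1)$ and $(\eps_2,\delta_2)$ compose to $(\eps_1+\eps_2,\delta_1+\delta_2)$---is true, but it is exactly the nontrivial content of the theorem, and the integration-then-bound argument you sketch proves only the weaker $(\eps_1+\eps_2,\,e^{\eps_2}\delta_1+\delta_2)$ version.

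To get the sharp $k\delta$, one replaces the naive bound by a density-truncation (hockey-stick) argument: for each mechanism, set $p_i'(y)=\min\{p_i(y),\,e^{\eps}q_i(y)\}$, note that $\int(p_i-p_i')\le\delta$ by the DP hypothesis, and observe that the truncated joint density satisfies the pure $e^{k\eps}$ ratio bound while the total mass removed telescopes to at most $k\delta$. This is the argument in the cited references; your write-up is missing precisely this step.
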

\begin{theorem}[Advanced composition~\cite{DBLP:conf/focs/DworkRV10}]\label{thm:PrivacyCompo} Let $0<\eps, \delta^{\prime} \leq 1$, and let $\delta \in [0,1]$. A mechanism that permits $k$ adaptive interactions with mechanisms that preserve $(\eps,\delta)$-differential privacy (and does not access the database otherwise) ensures $(\eps^{\prime},k\delta +\delta^{\prime})$-differential privacy, for $\eps^{\prime}=\sqrt{2k\ln(1/\delta^{\prime})}\cdot \eps + 2k\eps^2$.
\end{theorem}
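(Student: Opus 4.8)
The plan is to prove this via the \emph{privacy loss random variable} together with a concentration-of-measure argument, first for the pure case $\delta=0$ and then bootstrapping to general $\delta$. Fix adjacent databases $S,S'$ and let $\mathcal{M}=(\mathcal{M}_1,\dots,\mathcal{M}_k)$ denote the adaptive composition, producing a transcript $o=(o_1,\dots,o_k)$. For a single mechanism with output distributions $P$ on $S$ and $Q$ on $S'$, define the privacy loss $\mathcal{L}(o)=\ln\frac{\Pr[P=o]}{\Pr[Q=o]}$. The basic observation I would use is that a two-sided tail bound $\Pr_{o\sim P}[\mathcal{L}(o)>\eps']\le\delta'$ implies $(\eps',\delta')$-differential privacy: for any event $T$, split off the part of $T$ on which $\mathcal{L}\le\eps'$ (whose $P$-mass is at most $e^{\eps'}$ times its $Q$-mass) from the part on which $\mathcal{L}>\eps'$ (whose $P$-mass is at most $\delta'$). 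So it suffices to bound the total privacy loss of the composed transcript by $\eps'=2k\eps^2+\eps\sqrt{2k\ln(1/\delta')}$ except with probability $\delta'$ (plus a $k\delta$ slack handled below).

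For the pure case I would invoke two facts about an $(\eps,0)$-DP mechanism: (i) its privacy loss is bounded pointwise, $\mathcal{L}(o)\in[-\eps,\eps]$ for every $o$; and (ii) its expected privacy loss equals the KL divergence and satisfies $\E_{o\sim P}[\mathcal{L}(o)]=D_{\mathrm{KL}}(P\|Q)\le \eps(e^\eps-1)\le 2\eps^2$, where the last step uses $\eps\le 1$ (the bound $D_{\mathrm{KL}}(P\|Q)\le\eps(e^\eps-1)$ itself follows by writing $\E_P[\mathcal{L}]\le\E_P[\mathcal{L}]-\E_Q[\mathcal{L}]=\sum_o Q(o)(e^{\mathcal{L}(o)}-1)\mathcal{L}(o)$ and bounding $(e^x-1)x\le\eps(e^\eps-1)$ on $[-\eps,\eps]$). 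Let $Z_i$ be the privacy loss of $\mathcal{M}_i$ at its output $o_i$ with respect to the adaptively-determined adjacent pair it sees; the total privacy loss of the transcript is exactly $\sum_{i=1}^k Z_i$. Conditioned on the past outputs $o_{<i}$, facts (i)–(ii) apply to $\mathcal{M}_i$ no matter how the adversary chose everything, so $\E[Z_i\mid o_{<i}]\le 2\eps^2$ and $Z_i$ lies (conditionally) in an interval of width $2\eps$. Hence $\sum_i (Z_i-\E[Z_i\mid o_{<i}])$ is a martingale whose increments have conditional range $2\eps$, and the Azuma--Hoeffding inequality in the sharp form that exploits this range yields $\Pr[\sum_i (Z_i-\E[Z_i\mid o_{<i}])>\eps\sqrt{2k\ln(1/\delta')}]<\delta'$. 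Adding $\sum_i\E[Z_i\mid o_{<i}]\le 2k\eps^2$ gives $\sum_i Z_i\le\eps'$ except with probability $\delta'$, hence $(\eps',\delta')$-DP; the symmetric direction ($S'$ versus $S$) is identical.

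For general $\delta$ I would reduce to the pure case via the standard ``dense'' decomposition of approximate differential privacy: if $\mathcal{M}_i$ is $(\eps,\delta)$-DP then, on the relevant adjacent pair, there is a mechanism $\widetilde{\mathcal{M}}_i$ that is $(\eps,0)$-DP on that pair and whose output distributions lie within total variation distance $\delta$ of those of $\mathcal{M}_i$. Replacing the $\mathcal{M}_i$'s one at a time by the $\widetilde{\mathcal{M}}_i$'s changes the transcript distribution by at most $\delta$ per replacement, hence by at most $k\delta$ in total by a union bound — adaptivity causes no problem since each replacement's cost is bounded uniformly over the past. The composition $(\widetilde{\mathcal{M}}_1,\dots,\widetilde{\mathcal{M}}_k)$ is $(\eps',\delta')$-DP by the pure case, and folding the $k\delta$ total-variation shift into the additive term gives $(\eps',k\delta+\delta')$-DP for the original composition. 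I expect the main delicate points to be (a) extracting exactly the constant $\eps\sqrt{2k\ln(1/\delta')}$, which requires stating Azuma--Hoeffding in terms of the width-$2\eps$ conditional range of each $Z_i$ rather than a cruder bound on the centered increments, and (b) the careful bookkeeping between the total-variation slack and the $(\eps,\delta)$ guarantee in the final reduction; the martingale concentration step is the conceptual heart of the argument.
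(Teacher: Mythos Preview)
The paper does not prove this theorem; it is stated in the preliminaries as a background result cited to \cite{DBLP:conf/focs/DworkRV10} without proof. Your proposal is correct and follows essentially the standard argument from that reference: bound the conditional expected privacy loss per round by the KL divergence $D_{\mathrm{KL}}\le\eps(e^\eps-1)\le 2\eps^2$, apply Azuma--Hoeffding (in the form that uses the width-$2\eps$ conditional range) to the martingale of centered privacy losses to get the $\eps\sqrt{2k\ln(1/\delta')}$ deviation term, and absorb a $k\delta$ slack for the approximate case. Your handling of the $\delta>0$ case via a total-variation-close pure mechanism is a known alternative to the more common ``bad set'' conditioning argument used in the original proof, and it yields the same $k\delta$ additive term; just be careful that the decomposition lemma you invoke is stated for a \emph{fixed} adjacent pair, which is exactly what is available in each round of the adaptive composition.
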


%%%%%%%%%%%%%%%%%%%%%%%%%%%%%%%%%%%
% Privacy implies Generalization
%%%%%%%%%%%%%%%%%%%%%%%%%%%%%%%%%%%
\paragraph{Generalization properties of differential privacy.} Dwork et al. \cite{DworkFHPRR15} and Bassily et al. \cite{BassilyNSSSU21} showed that if a predicate $h$ is the result of a differentially private computation on a random sample, then the empirical average of $h$ and its expectation over the underlying distribution are guaranteed to be close.
\begin{theorem}[\cite{DworkFHPRR15, BassilyNSSSU21}]\label{thm:PrivacyImpGen} Let $\eps \in (0,1/3)$, $\delta\in (0,\eps/4)$ and $n \geq \frac{1}{\eps^2}\log(\frac{2\eps}{\delta})$. Let $\mathcal{A}: X^{n}\rightarrow 2^{X}$ be an $(\eps,\delta)$-differentially private algorithm that operates on a database of size $n$ and outputs a predicate $h:X\rightarrow\{0,1\}$. Let $\mathcal{D}$ be a distribution over $X$, let $S$ be a database containing $n$ i.i.d elements from $\mathcal{D}$, and let $h\leftarrow \mathcal{A}(S)$. Then
$$
\Pr_{\stackrel{S\sim\mathcal{D}}{h\leftarrow \mathcal{A}(S)}} 
\left[ \left|
\frac{1}{|S|}\sum_{x\in S} h(x) - \mathop{\mathbb{E}}_{x\sim \mathcal{D}}[h(x)] \right| \geq 10\eps 
\right] < \frac{\delta}{\eps}
$$
\end{theorem}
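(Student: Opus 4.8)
This is the transfer theorem of \cite{DworkFHPRR15, BassilyNSSSU21}; here is the route I would take. Write $\hat h(S)=\frac{1}{|S|}\sum_{x\in S}h(x)$ and $h(\mathcal{D})=\mathbb{E}_{x\sim\mathcal{D}}[h(x)]$. Since $1-h$ is also the output of an $(\eps,\delta)$-differentially private algorithm (run $\mathcal{A}$ and negate its predicate) and $\widehat{(1-h)}(S)-(1-h)(\mathcal{D})=-(\hat h(S)-h(\mathcal{D}))$, it suffices to prove the one-sided bound $\Pr[\hat h(S)-h(\mathcal{D})\ge 10\eps]<\delta/(2\eps)$ and then add the two sides.

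\textbf{Step 1 (transfer in expectation).} First I would prove the averaged statement: for \emph{any} $(\eps,\delta)$-differentially private $\mathcal{B}$ on databases of size $N$, any $\mathcal{D}$, and $h\leftarrow\mathcal{B}(S)$ with $S\sim\mathcal{D}^N$, one has $\mathbb{E}[\hat h(S)-h(\mathcal{D})]\le (e^{\eps}-1)+\delta$. The proof is a hybrid over the $N$ coordinates: fixing $S_{-i}$ and drawing a fresh $x_i'\sim\mathcal{D}$, apply $(\eps,\delta)$-privacy to the two databases that differ only in coordinate $i$, integrated over the thresholds $\{h(x_i)>\theta\}$ since $h(x_i)\in[0,1]$, to get $\mathbb{E}_{S,h\leftarrow\mathcal{B}(S)}[h(x_i)]\le e^{\eps}\,\mathbb{E}_{x_i',S_{-i},h\leftarrow\mathcal{B}(x_i',S_{-i})}[h(x_i)]+\delta$; a relabeling $x_i\leftrightarrow x_i'$ turns the right-hand side into $e^{\eps}\,\mathbb{E}_{S,h,x'\sim\mathcal{D}}[h(x')]+\delta = e^{\eps}\,\mathbb{E}[h(\mathcal{D})]+\delta$. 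Averaging over $i$ and using $h(\mathcal{D})\le 1$ gives the claim; the identical argument applied to any functional $g(\mathrm{out},x)\in[0,1]$ of the output and a single row gives $\mathbb{E}[\frac{1}{N}\sum_i g(\mathrm{out},x_i)]\le e^{\eps}\,\mathbb{E}_{x'\sim\mathcal{D}}[g(\mathrm{out},x')]+\delta$, which I use below.

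\textbf{Step 2 (amplification by a monitor, then conclude).} Let $\beta=\Pr_{S\sim\mathcal{D}^n,h\leftarrow\mathcal{A}(S)}[\hat h(S)-h(\mathcal{D})\ge 10\eps]$, assume toward a contradiction that $\beta\ge\delta/(2\eps)$, and put $T=\lceil 1/\beta\rceil$. I build a monitor $\mathcal{W}$ on $T$ independent blocks $S^{(1)},\dots,S^{(T)}\in X^n$: it runs $h^{(j)}\leftarrow\mathcal{A}(S^{(j)})$ for each $j$, then runs \texttt{AboveThreshold} (Theorem~\ref{thm:AboveThreshold}) with the queries $q_j:=\hat h^{(j)}(S^{(j)})-h^{(j)}(\mathcal{D})$ (sensitivity $\tfrac1n$ in block $j$ once $h^{(j)}$ is fixed) and a threshold of order $\eps$, outputting the first index $j^\star$ at which it fires together with $h^{(j^\star)}$, or $\bot$ otherwise. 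A single row change touches one block, so $\mathcal{W}$ is $(O(\eps),\delta)$-differentially private. The hypothesis $n\ge\eps^{-2}\log(2\eps/\delta)$ is exactly what makes the $\Lap(O(1/(n\eps)))$ noise in \texttt{AboveThreshold} smaller than $\eps$ across all $T$ comparisons except with probability $\ll\delta/\eps$; on that ``clean'' event every block that overfits by $\ge 10\eps$ fires (so $\Pr[\mathrm{out}\ne\bot]\ge 1-(1-\beta)^T\ge 1-e^{-1}$) and ``$\mathcal{W}$ fired at $j^\star$'' forces $\hat h^{(j^\star)}(S^{(j^\star)})-h^{(j^\star)}(\mathcal{D})=\Omega(\eps)$; hence $\mathbb{E}[\mathbbm{1}[\mathrm{out}\ne\bot](\hat h^{(j^\star)}(S^{(j^\star)})-h^{(j^\star)}(\mathcal{D}))]=\Omega(\eps)$. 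But $h^{(j^\star)}$ is the output of the $(O(\eps),\delta)$-private $\mathcal{W}$ and $\hat h^{(j^\star)}(S^{(j^\star)})$ is its average over the $n$ rows of $\mathcal{W}$'s input lying in block $j^\star$, so applying the functional form of Step~1 to $g(\mathrm{out},x)=\mathbbm{1}[\mathrm{out}\ne\bot,\ \text{row}\in\text{block }j^\star]\cdot h^{(j^\star)}(x)$ (and rescaling by $T$) bounds the same expectation by $(e^{O(\eps)}-1)+T\delta=O(\eps)+T\delta$. Comparing, $\Omega(\eps)\le O(\eps)+T\delta$, so $T\delta=\Omega(\eps)$; since $T\le 1/\beta+1$ and $\delta<\eps/4$ this contradicts $\beta\ge\delta/(2\eps)$. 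Therefore $\beta<\delta/(2\eps)$, and adding the symmetric bound for $1-h$ gives $\Pr[\,|\hat h(S)-h(\mathcal{D})|\ge 10\eps\,]<\delta/\eps$.

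\textbf{Main obstacle.} The delicate part is the monitor. The intuitive selection rule — pick the block whose output overfits the most — reads the raw data and is not private, and a plain max read-out would carry a lossy $-\Pr[\mathrm{out}=\bot]$ term, costing a spurious $\log(1/\eps)$ factor in $\beta$. Routing the selection through \texttt{AboveThreshold} solves both problems at once (it privatizes the choice, and since the read-out is $\mathbbm{1}[\mathrm{out}\ne\bot]\cdot(\cdots)$ there is no negative contribution), but one must then verify that the \texttt{AboveThreshold} noise is $o(\eps)$ with probability $1-o(\delta/\eps)$ over all $T=O(\eps/\delta)$ comparisons — which is precisely the content of the sample-size hypothesis $n\ge\eps^{-2}\log(2\eps/\delta)$ — and must track the constants so that the surviving slack matches the stated $10\eps$ and $\delta/\eps$. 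That constant bookkeeping, rather than any new idea, is the bulk of the remaining work.
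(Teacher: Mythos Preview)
The paper does not prove this theorem at all: it is stated in the preliminaries (Appendix~\ref{sec:MoreDP}) as a cited result from \cite{DworkFHPRR15,BassilyNSSSU21} and used as a black box in Lemmas~\ref{lem:AccurateEstimations} and~\ref{lem:AccurateMonEstimations}. There is therefore no ``paper's own proof'' to compare your attempt against.

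That said, your sketch is the right shape for the cited result --- an expectation-level transfer followed by a monitor-style amplification over $T\approx 1/\beta$ independent blocks --- and this is indeed the route taken in \cite{BassilyNSSSU21}. One remark on your Step~2: in the original argument the monitor does not need \texttt{AboveThreshold}. The monitor simply outputs $(j^\star,h^{(j^\star)})$ with $j^\star=\arg\max_j\bigl(\hat h^{(j)}(S^{(j)})-h^{(j)}(\mathcal D)\bigr)$; the point is that this entire procedure is $(\eps,\delta)$-differentially private in the concatenated input because a single row change touches only one block, and on that block the map $S^{(j)}\mapsto h^{(j)}$ is $(\eps,\delta)$-private while the argmax is post-processing of $(h^{(1)},\dots,h^{(T)})$ together with quantities ($h^{(j)}(\mathcal D)$ and the empirical averages on the \emph{other} blocks) that do not depend on the changed row --- the only data-dependent quantity that does is $\hat h^{(j)}(S^{(j)})$ for the changed block, but that enters only through $h^{(j)}$, which is already privatized. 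Routing through \texttt{AboveThreshold} is not wrong, but it adds a layer you then have to discharge (your ``main obstacle'' paragraph), whereas the direct argmax monitor sidesteps it. Either way, since the paper merely cites the theorem, any correct derivation is acceptable here.
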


%%%%%%%%%%%%%%%%%%%%%%%%%%%%%%%%%%%%%%%%%%%%%%
% Section: Main result
% Construction and formal analysis of RobustDE
%%%%%%%%%%%%%%%%%%%%%%%%%%%%%%%%%%%%%%%%%%%%%%
\section{The Formal Analysis}\label{sec:formalAnalysis}

In this section we present the full construction (Algorithm \ref{alg:ROEF}) and its formal analysis.

%%%%%%%%%%%%%%%%%%%%%%%%%%%%%%%%%%%%%%%%%%%%%%
% Turnstile Hybrid Framework: Construction
%%%%%%%%%%%%%%%%%%%%%%%%%%%%%%%%%%%%%%%%%%%%%%
\begin{algorithm*}
\caption{\texttt{RobustDE($\SSS, \alpha,\delta,\lambda,\mathsf{E}_{\ST},\mathsf{E}_{\TDE}$)}}
\makeatletter\def\@currentlabel{\texttt{RobustDE}}\makeatother
\label{alg:ROEF}

{\bf Input:} A stream $\mathcal{S} = \{\langle s_t, \Delta_t\rangle \}_{t\in [m]}$ accuracy parameter $\alpha$, failure probability $\delta$ and a bound on the flip number $\lambda$.%\\[-15px]

{\bf Estimators used:} Strong tracker {\em $\mathsf{E}_{\ST}$} 
and toggle-difference-estimator $\mathsf{E}_{\TDE}$ for the function $\mathcal{F}$.%\\[-15px]

{\bf Subroutines used:} \ref{alg:StitchFrozenValues}, \ref{alg:ActiveLVL}.%\\[-10px]

\hdashrule[0.5ex][x]{\linewidth}{0.5pt}{1.5mm}

{\bf Constants calculation:}
\begin{enumerate}
[leftmargin=25pt,rightmargin=10pt,itemsep=1pt,topsep=0pt]
	
	\item $\StepSize(\alpha)\leftarrow O(\alpha)$, $\alpha_{\ST} \leftarrow O(\alpha)$, $\alpha_{\TDE} \leftarrow O(\alpha/\log(\alpha^{-1}))$, $\Gamma\leftarrow\Theta(1)$.
	
	\item Phase params: let $\PhaseSize \leftarrow \lfloor \frac{1}{\StepSize(\alpha)}\rfloor$, $\beta\leftarrow \lceil \log (\PhaseSize) \rceil$, $J\leftarrow [\beta+1]\cup\{\STwrapper\}$.

    \item For $j\in [\beta+1]$:$P_j \leftarrow O\left(2^{-j}\lambda\right)$, $P_{\STwrapper} \leftarrow P_{\beta}$; For $j\in [\beta]$:$\gamma_j\leftarrow\Omega(2^j\alpha)$; For $j\in J$:$\eps_j = \tilde{O}(P_j^{-0.5})$. 
	
	\item Estimator sets: For $j\in J$ set $\mathsf{K}_{j}\leftarrow \tilde{\Omega}(\eps_j^{-1})$ and let $\bar{\mathsf{E}}_{j}=\{\mathsf{E}_{j}^k \}_{k\in[\mathsf{K}_{j}]}$.
	
\end{enumerate} 

\hdashrule[0.5ex][x]{\linewidth}{0.5pt}{1.5mm}

{\bf Initialization:}
\begin{enumerate}[resume,leftmargin=25pt,rightmargin=0pt,itemsep=0pt,topsep=0pt]

	\item Start all estimators $\{\bar{\mathsf{E}}_{j}\}_{j\in J}$.
    
	\item Set thresholds: For $j\in J$, $\mathsf{T}_{j}\leftarrow \mathsf{K}_{j}/2 + \Lap(2\cdot \eps^{-1}_{j})$.
	
	\item For $j\in J$ set $\Capp_j\leftarrow 0$; For $j\in [\beta+1]$ set $\mathsf{Z}_{j} \leftarrow 0$; $\tau \leftarrow \PhaseSize$.
\end{enumerate}

\hdashrule[0.5ex][x]{\linewidth}{0.5pt}{1.5mm}

{\bf For ${\boldsymbol{t\in[m]}}$:}
\begin{enumerate}[resume,leftmargin=25pt,rightmargin=10pt,itemsep=1pt,topsep=3pt]

    \item\label{algStep:setUpdate} Get the update $\langle s_t, \Delta_t \rangle$ from $\mathcal{S}$ and feed $\langle s_t, \Delta_t \rangle$ into all estimators.

    \item\label{algStep:STwrapper} If $\left| \left\{ k\in [\mathsf{K}_{\STwrapper}] : z_{\STwrapper}^{k} \notin \Big(\frac{1}{\Gamma}\cdot \mathsf{Z}_{\ST} \;,\; \Gamma\cdot\mathsf{Z}_{\ST} \Big)  \right\} \right| + \Lap(4\cdot \eps^{-1}_{\STwrapper})> \mathsf{T}_{\STwrapper}$ then set $\tau=0$, redraw $\mathsf{T}_{\STwrapper}$.

    \item\label{step:choose-j} $j \leftarrow \texttt{ActiveLVL}(\tau)$\label{algStep:levelSelect}
    \item For $k\in [\mathsf{K}_{j}]$ get estimation $z_{j}^{k} \leftarrow \mathsf{E}_{j}^k$\label{algStep:GetEstimation}
    \item If $j = \beta$: $\mathsf{Z} \leftarrow 0$\label{algStep:STBase} \qquad \qquad \qquad \qquad \qquad \qquad \qquad \quad{} {\bf \% Estimation offset of \ST}
    \item Else: \qquad $\mathsf{Z} \leftarrow \StitchFrozenVals(\tau-2^{j}+1)$\label{algStep:FrozenBase} \qquad {\bf \% Estimation offset of \TDE}
    \item\label{algStep:AboveThresh} If $\tau=0$ or $\left| \left\{ k\in [\mathsf{K}_j] : \mathsf{Z}+z_{j}^{k} \notin \Output(t-1) \pm \mathsf{Z}_{\ST}\cdot \StepSize(\alpha)  \right\} \right| + \Lap(4\cdot \eps^{-1}_{j})> \mathsf{T}_{j}$
    \begin{enumerate}[leftmargin=25pt,rightmargin=10pt,itemsep=1pt,topsep=0pt]
        \item\label{step:modifying} $\mathsf{Z}_{j}\leftarrow \PrivateMed(\{z_{j}^{k}\}_{k\in [\mathsf{K}_{j}]})$ \label{algStep:PrivMed}
        \item Redraw $\mathsf{T}_{j}$
        \item For $j^{\prime}\in [j], k \in [\mathsf{K}_{j^{\prime}}]$ set $e_{j^{\prime}}^k \leftarrow t$\label{algStep:EnableTDEs}
        \item $\Capp\text{}_j \leftarrow \Capp\text{}_j +1$
        \item If $\tau[\beta] = 0$: $\tau \leftarrow  2^{\beta}$\label{algStep:tauUpdateStartPhase} \qquad \qquad \qquad \qquad \qquad{} \quad{}{} {\bf \% Starting phase}
        \item ElseIf $\tau < 2^{\beta} + \PhaseSize$: $\tau \leftarrow \tau + 1$\label{algStep:tauUpdateInnerPhase} \qquad{} \quad{} {\bf \% Inner phase update}
        \item If $\tau = 2^{\beta} + \PhaseSize$: $\tau[\beta] \leftarrow 0$\label{algStep:tauUpdateEndPhase} \qquad \qquad \qquad {\bf \% Ending phase}
	\end{enumerate}
    \item $\NoCapping \leftarrow \bigwedge_{j \in J} \Capp_j < P_j$
	\item If $\NoCapping = \text{True}$: Output(t) $\leftarrow \StitchFrozenVals(\tau)$\label{algStep:outputStitch}
\end{enumerate}
\end{algorithm*}

%%%%%%%%%%%%%%%%%%%%%%%%%%%%%%%%%%%%%%%
% Formal proofs - Privacy
%%%%%%%%%%%%%%%%%%%%%%%%%%%%%%%%%%%%%%%
\subsection{Privacy analysis}\label{sec:formalPrivacy}
The following lemma shows that algorithm \texttt{RobustDE} is private w.r.t.\ the random bit-strings of the $\beta+1$ type datasets.
\begin{lemma}\label{lem:EstimatorsRandomStringsPrivacy}
For level $j\in [\beta+1]\cup \{\STwrapper\}$ let $\mathcal{R}_j$ be its corresponding random bit-strings dataset. Algorithm \ref{alg:ROEF} satisfies $(\eps,\delta^{\prime})$-DP w.r.t.\ a dataset $\mathcal{R}_j$ by configuring $\eps_{j}=O\left(\eps/\sqrt{P_j\log(1/\delta^{\prime})}\right)$.
\end{lemma}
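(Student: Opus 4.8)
The plan is to prove privacy with respect to the dataset $\mathcal{R}_j$ by tracking all the points in Algorithm~\ref{alg:ROEF} at which the random bit-strings of level-$j$ estimators are accessed, and then bound the total privacy loss via composition. The key observation is that the only way $\mathcal{R}_j$ influences the output is through the estimators in $\bar{\mathsf{E}}_j$, and these are consulted in exactly two kinds of operations: (i) the \texttt{AboveThreshold}-style tests (Step~\ref{algStep:STwrapper} when $j=\STwrapper$, and Step~\ref{algStep:AboveThresh} when $j$ is the active level), where we count how many estimators disagree with the current output/frozen value and compare a noisy version of that count to a noisy threshold $\mathsf{T}_j$; and (ii) the \texttt{PrivateMed} call in Step~\ref{step:modifying}, which is invoked each time a level-$j$ ``freeze'' occurs. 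Crucially, the hardcoded cap $\Capp_j < P_j$ together with the $\NoCapping$ flag guarantees that $\bar{\mathsf{E}}_j$ is used in at most $P_j$ such freezing events — and hence in at most $P_j$ complete \texttt{AboveThreshold} instances (each instance runs until it outputs $\top$, triggering a freeze and a threshold redraw) plus at most $P_j$ \texttt{PrivateMed} calls.

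First I would make precise the sensitivity structure: changing one row of $\mathcal{R}_j$ changes the internal coins of a single estimator $\mathsf{E}_j^k$, which changes its reported value $z_j^k$ arbitrarily, and hence changes each of the counts $\bigl|\{k : \dots\}\bigr|$ by at most $1$; so these are sensitivity-$1$ queries fed to \texttt{AboveThreshold}. By Theorem~\ref{thm:AboveThreshold}, each \texttt{AboveThreshold} instance (with the noise scales $\Lap(2\eps_j^{-1})$ on the threshold and $\Lap(4\eps_j^{-1})$ on the queries, exactly as written in Steps~\ref{algStep:STwrapper} and~\ref{algStep:AboveThresh}) is $(\eps_j, 0)$-DP. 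Similarly, the input to \texttt{PrivateMed} is the multiset $\{z_j^k\}$, which has sensitivity $1$ under a one-row change of $\mathcal{R}_j$, so by Theorem~\ref{thm:PrivateMed} each call is $(\eps_j,0)$-DP (with the domain-size-dependent error absorbed into accuracy, not privacy). Then I would invoke advanced composition (Theorem~\ref{thm:PrivacyCompo}) over the at most $2P_j$ (more precisely $\le P_j$ \texttt{AboveThreshold} instances and $\le P_j$ \texttt{PrivateMed} calls, i.e. $k = \Theta(P_j)$) differentially private interactions with $\mathcal{R}_j$: this yields $(\eps', \delta')$-DP with $\eps' = \sqrt{2k\ln(1/\delta')}\cdot\eps_j + 2k\eps_j^2$. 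Setting $\eps_j = \Theta\bigl(\eps/\sqrt{P_j \log(1/\delta')}\bigr)$ makes the dominant term $\sqrt{2k\ln(1/\delta')}\,\eps_j = \Theta(\eps)$ and the lower-order term $2k\eps_j^2 = \Theta(\eps^2/\log(1/\delta'))$ negligible, giving the claimed bound.

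The main obstacle — and the reason this lemma is placed first in the analysis — is the potential circularity flagged in the Construction Overview: \emph{a priori} an \texttt{AboveThreshold} instance only stays $(\eps_j,0)$-DP as a single instance, but the number of times we restart it (equivalently, the number of level-$j$ freezes) depends on the behavior of the adaptive adversary, which in turn depends on privacy holding. The clean way to break this, which I would follow, is to exploit precisely the hardcoded caps: the algorithm syntactically refuses to perform more than $P_j$ freezes at level $j$ (via $\Capp_j$ and $\NoCapping$ gating the output in Step~\ref{algStep:outputStitch}), so the bound $k = \Theta(P_j)$ on the number of DP interactions with $\mathcal{R}_j$ holds \emph{unconditionally}, with probability $1$, regardless of the adversary. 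This makes the composition argument entirely non-adaptive in the relevant parameter and sidesteps the circularity — at the cost of later having to show (in the ``calibrating to avoid capping'' component, Section~\ref{sec:formalCal}) that the caps are never actually hit with high probability. A secondary point to handle carefully is that the random strings of \emph{different} levels are disjoint parts of the overall randomness, so privacy w.r.t.\ $\mathcal{R}_j$ need only account for the accesses to level-$j$ estimators; accesses at other levels are post-processing as far as $\mathcal{R}_j$ is concerned, and I would note this explicitly to justify restricting attention to Steps~\ref{algStep:STwrapper},~\ref{algStep:AboveThresh},~and~\ref{step:modifying} with the matching level index.
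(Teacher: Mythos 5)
Your proposal is correct and follows essentially the same route as the paper's (sketched) proof: bound the interactions with $\mathcal{R}_j$ by the hardcoded cap $P_j$ — each ``sequence'' consisting of one sparse-vector (\texttt{AboveThreshold}-style) instance ending in a level-$j$ freeze plus one \texttt{PrivateMed} call — and then apply advanced composition with $\eps_j = O\bigl(\eps/\sqrt{P_j\log(1/\delta^{\prime})}\bigr)$. Your write-up is in fact more explicit than the paper's proof sketch on the sensitivity-$1$ structure, the post-processing role of other levels, and the way the caps break the circularity (noting only that, strictly, post-cap accesses still occur but are invisible to the adversary since the output is frozen).
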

\begin{proof}[Proof sketch] 
Let us focus on some level $j\in[\beta+1]\cup \{\STwrapper\}$. We analyze the privacy guarantees w.r.t.\ $\mathcal{R}_j$ by arguing separately for every sequence of time steps during which we do not modify our output using level $j$ (the sequence ends in a time point at which we do modify the output using level $j$).\footnote{Note that this sequence contains time points at which we modify our output using different levels.} 
Let us denote by $P_j$ the number of such sequences we allow for each level $j$ after-which algorithm \texttt{RobustDE} is not generating any output due to the {\em capping} counters. 
Throughout every such time sequence, we access the dataset $\mathcal{R}_j$ via the sparse vector technique and once  (at the end of the sequence) using the private median algorithm.  We calibrate the privacy parameters of these algorithms to be $\eps_{j}=O\left(\eps/\sqrt{P_j\log(1/\delta^{\prime})}\right)$ such that, by using composition theorems across all of the $P_j$ sequences, our algorithm satisfies $(\eps,\delta^{\prime})$-differential privacy w.r.t.\ $\mathcal{R}_j$.
\end{proof}

%%%%%%%%%%%%%%%%%%%%%%%%%%%%%%%%%%%%%%%
% Formal proofs - Accuracy
%%%%%%%%%%%%%%%%%%%%%%%%%%%%%%%%%%%%%%%
\subsection{Conditional accuracy}\label{sec:formalAccuracy} 

We first prove the framework accuracy assuming that throughout the run of algorithm \texttt{RobustDE}, for all $t\in [m]$ $80\%$ of the estimations given from the estimators are accurate (each level w.r.t its accuracy parameter $\alpha_j$). This assumption will be proved independently on 
\ref{lem:accurateEstimationsAllTime}.
The following is its formal definition:
%%%%%%%%%%%%%%%%%%%%%%%%%%%%%%%%%%%%%%%
% accurate estimators assumption
%%%%%%%%%%%%%%%%%%%%%%%%%%%%%%%%%%%%%%%
\begin{assumption}[Accurate estimations]\label{ass:accurateEstimators} Fix a time step $t\in [m]$. 
Let $j\in [\beta+1]\cup \{\STwrapper \}$ be a level of estimators.
Recall that $\mathsf{K}_j$ denotes the number of the estimators in level $j$, and let $z^{1}_{j}, \dots, z^{\mathsf{K}_j}_{j}$ denote the estimations given by these estimators.
Then:
\begin{enumerate}
    \item For $j \in \{\beta, \STwrapper\}$  \qquad 
    $|\{k\in [\mathsf{K}_j] : |z^{k}_{j}-\mathcal{F}(t)|< \alpha_{\ST}\cdot\mathcal{F}(t) \}| \geq (8/10) \mathsf{K}_j$
    \item For $j < \beta$, \qquad \qquad
    $|\{k\in [\mathsf{K}_j] : |z^{k}_{j}-(\mathcal{F}(t) - \mathcal{F}(e_j))|< \alpha_{\TDE}\cdot\mathcal{F}(e_j) \}| \geq (8/10) \mathsf{K}_j$
\end{enumerate}
\end{assumption}
The framework accuracy is proved on three steps. The first step is arguing that given the assumption on the accuracy of the given estimations, every frozen value is accurate w.r.t the function that it has estimated (Lemma \ref{lem:FrozenAccuracy}). 
The second step is a follow up to the first: 
focusing on the active level of estimators, call it $j$, and the value of the function $\mathcal{F}$ at the time these estimators were enabled ($\mathcal{F}(e_j)$), then combining frozen values of relevant levels results in an accurate estimation for $\mathcal{F}(e_j)$ (denoted on step \ref{algStep:FrozenBase} as $\mathsf{Z}$). That is achieved by applying \ref{lem:FrozenAccuracy} on each of these frozen values of the relevant levels, and accounting for the accumulated error.
And so, we have that $\mathsf{Z}\approx \mathcal{F}(e_j)$. In addition, the assumption of the accuracy promise that $80\%$ of (in particular) level $j$ estimators are accurate thus their estimations $z_j \approx\mathcal{F}(t) - \mathcal{F}(e_j)$. Combining it we get $\mathsf{Z} + z_j \approx \mathcal{F}(t)$ which is established on Lemma \ref{lem:estimationErrorBound}.
Applying once again Lemma \ref{lem:FrozenAccuracy} for times $t$ that the estimators of level $j$ were aggregated into the frozen value $\mathsf{Z}_j$ (step \ref{algStep:PrivMed}) results in $\mathsf{Z}+\mathsf{Z}_j \approx \mathcal{F}(t)$.
By further observe that on these steps, the output is modifies into $\mathsf{Z}+\mathsf{Z}_j$, we get that on an output modification steps we guarantee a more refined accuracy then $\alpha$. See Corollary \ref{cor:outputUpdateAccuracy} in-which the mentioned observation is elaborated.
Finally, using \ref{lem:FrozenAccuracy}, \ref{lem:estimationErrorBound}, \ref{cor:outputUpdateAccuracy}, we prove that given that the estimators are accurate (Assumption \ref{ass:accurateEstimators}), then on all $t\in [m]$ before capping stage we have an accurate output (Lemma \ref{lem:outputAccuracy}). 
We prove an additional lemma in this section: Lemma \ref{lem:maxPhaseProgress}. Lemma \ref{lem:maxPhaseProgress} states that during any phase the value of function $\mathcal{F}$ changes (increases or decreases) by at most a constant factor. That lemma is used in the proofs of \ref{lem:estimationErrorBound}, \ref{cor:outputUpdateAccuracy}, \ref{lem:outputAccuracy}.

%%%%%%%%%%%%%%%%%%%%%%%%%%%%%%%%%%%%%%%
% Accuracy of frozen values
%%%%%%%%%%%%%%%%%%%%%%%%%%%%%%%%%%%%%%%
\begin{lemma}[Accuracy of frozen values] \label{lem:FrozenAccuracy}
Let $t\in [m]$ be a time step such that
\begin{enumerate}
    \item Assumption \ref{ass:accurateEstimators} holds for every $t'\leq t$.
    \item $\NoCapping{=} \text{True}$ during time $t$.
    \item Algorithm {\bf \PrivateMed} was activated during time $t$ (on Step~\ref{algStep:PrivMed}\text{}).
\end{enumerate}
Let $j\in [\beta+1]$ be the level of estimators used in time $t$. 
Let $\mathsf{Z}_j$ be the value returned by {\bf \PrivateMed}, and suppose that $
\mathsf{K}_{j} = \Omega \left( \frac{1}{\eps}
\sqrt{P_{j} \cdot \log \left(\frac{1}{\delta^{\prime}} \right) }\log\left( \frac{P_{j}}{\delta^{M}\alpha} \log(n)\right)
\right)$. 
Then, with probability at least $1-\delta^{M}/P_j$ we have that:
\begin{enumerate}
    \item For $j = \beta$, \qquad $|\mathsf{Z}_j - \mathcal{F}(t)| <  \alpha_{\ST}\cdot \mathcal{F}(t)\text{.}$
    \item For $j < \beta$, \qquad 
    $|\mathsf{Z}_j - (\mathcal{F}(t) - \mathcal{F}(e_j))| <  \alpha_{\TDE}\cdot \mathcal{F}(e_j)\text{.}$
\end{enumerate}
\end{lemma}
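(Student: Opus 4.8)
The plan is to reduce the accuracy of the privately-computed median $\mathsf{Z}_j$ to the accuracy of the underlying estimators (Assumption~\ref{ass:accurateEstimators}), paying only for the error introduced by the \PrivateMed{} mechanism (Theorem~\ref{thm:PrivateMed}). First I would fix the time step $t$ satisfying the three hypotheses and let $j\in[\beta+1]$ be the active level at time $t$. By hypothesis~3, Step~\ref{algStep:PrivMed} is executed, so $\mathsf{Z}_j\leftarrow\PrivateMed(\{z_j^k\}_{k\in[\mathsf{K}_j]})$. By Theorem~\ref{thm:PrivateMed}, with probability at least $1-\delta^M/P_j$ the output $\mathsf{Z}_j$ is a $\Gamma'$-approximate median of the multiset $\{z_j^k\}$, where $\Gamma' = O\!\left(\frac{1}{\eps_j}\log\!\big(\frac{|X|}{\delta^M/P_j}\big)\right)$; here the domain $X$ is the discretized range of possible estimator outputs, which (since $\mathcal F$ is bounded polynomially in $n$ and we only need resolution down to an $\alpha$-fraction) has size $\poly(n)/\alpha$, so $\log|X| = O(\log(n/\alpha))$. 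Plugging in $\eps_j = \tilde O(P_j^{-1/2})$ and the stated lower bound on $\mathsf{K}_j$, one checks that $\Gamma' < \mathsf{K}_j/10$; this is the quantitative content of the assumed bound on $\mathsf{K}_j$, and I would spell out that inequality but not grind the constants.

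Next I would combine the approximate-median guarantee with Assumption~\ref{ass:accurateEstimators}. Condition~2 of the hypothesis gives $\NoCapping=\text{True}$ at time $t$, and Assumption~\ref{ass:accurateEstimators} (which holds at time $t$ since it holds for all $t'\le t$) says that at least $(8/10)\mathsf{K}_j$ of the estimators in level $j$ are accurate. Consider first the case $j=\beta$: then at least $(8/10)\mathsf{K}_j$ of the $z_\beta^k$ lie in the interval $I=(\mathcal F(t)-\alpha_{\ST}\mathcal F(t),\ \mathcal F(t)+\alpha_{\ST}\mathcal F(t))$. Since a $\Gamma'$-approximate median with $\Gamma'<\mathsf{K}_j/10$ must lie between the $(\mathsf{K}_j/2-\Gamma')$-th smallest and the $(\mathsf{K}_j/2-\Gamma')$-th largest element, and fewer than $(2/10)\mathsf{K}_j \le \mathsf{K}_j/2-\Gamma'$ elements lie outside $I$ on each side, the approximate median $\mathsf{Z}_j$ is forced to lie in $I$ (more precisely, $\mathsf{Z}_j$ is sandwiched between two elements of $I$, hence in the closed interval, which gives the claimed strict-or-equal bound after absorbing an extra factor into the $O(\cdot)$ defining $\alpha_{\ST}$). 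This yields $|\mathsf{Z}_j-\mathcal F(t)|<\alpha_{\ST}\mathcal F(t)$. The case $j<\beta$ is identical after translating all quantities by $\mathcal F(e_j)$: Assumption~\ref{ass:accurateEstimators} puts $(8/10)\mathsf{K}_j$ of the $z_j^k$ within $\alpha_{\TDE}\mathcal F(e_j)$ of $\mathcal F(t)-\mathcal F(e_j)$, and the same approximate-median argument pins $\mathsf{Z}_j$ into that interval, giving $|\mathsf{Z}_j-(\mathcal F(t)-\mathcal F(e_j))|<\alpha_{\TDE}\mathcal F(e_j)$.

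The only subtlety, and the step I expect to be most delicate, is the probability bookkeeping: the $1-\delta^M/P_j$ failure probability is exactly the failure probability of a single invocation of \PrivateMed{} at privacy level $\eps_j$, so I must make sure the claim is stated per-invocation (as it is — the lemma fixes one time step $t$) rather than a union over all time steps, which is handled later when accumulating over the at most $P_j$ freezes at level $j$. I would also note that the discretization of the estimator outputs into the domain $X$ is harmless: rounding each $z_j^k$ to the nearest multiple of, say, $\alpha_{\TDE}\cdot\mathcal F(e_j)/\poly(n)$ changes the accuracy bounds by a lower-order additive term that is absorbed into the constants hidden in $\alpha_{\ST}=O(\alpha)$ and $\alpha_{\TDE}=O(\alpha/\log(\alpha^{-1}))$. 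Everything else is a direct substitution of the parameter settings from the \texttt{Constants calculation} block of Algorithm~\ref{alg:ROEF}.
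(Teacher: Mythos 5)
Your proposal is correct and follows essentially the same route as the paper: invoke Theorem~\ref{thm:PrivateMed} to get (with probability $1-\delta^{M}/P_j$) an approximate median with error $\Gamma'\le \mathsf{K}_j/10$ over the discretized output domain, then combine with the $(8/10)\mathsf{K}_j$ accuracy guarantee of Assumption~\ref{ass:accurateEstimators} to sandwich $\mathsf{Z}_j$ inside the accurate interval, with the per-invocation failure probability accounted exactly as you describe. The paper's proof is the same argument, including the rounding of estimator outputs to a domain of size $O(\alpha^{-1}\log n)$ handled in a footnote.
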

\begin{proof}
In the case that step \ref{algStep:PrivMed} was executed, mechanism {\bf \PrivateMed} was activated on the estimations $z_j^{1},\dots,z_j^{\mathsf{K}_{j}}$ of level $j$ estimators to get a new value for $\mathsf{Z}_{j}$. By theorem \ref{thm:PrivateMed}, assuming that\footnote{We assume there exist some constant $c$ for which all estimates returned by the oblivious estimators type $\mathsf{E}_j$ are within the range of $[-n^{c},-1/n^{c}]\cup \{0\} \cup [1/n^{c}, n^{c}]$. Rounding these estimates to their nearest values of $(1\pm \MaxUpdateSize(\alpha))$ has only a small effect on the error. Such rounding on for range yields at most $X=O(\alpha^{-1}\log(n))$ possible values.}
$$
\mathsf{K}_{j} = \Omega \left( \frac{1}{\eps}
\sqrt{P_{j} \cdot \log \left(\frac{1}{\delta^{\prime}} \right) }\log\left( \frac{P_{j}}{\delta^{M}\alpha} \log(n)\right)
\right)\text{,}
$$
then with probability at least $1-\delta^{M}/P_{j}$ Algorithm {\bf \PrivateMed} returns an approximate median $\mathsf{Z}_{j}$ to the estimations $z_j^{1},\dots,z_j^{\mathsf{K}_{j}}$, satisfying
$$
\left| \left\{
k\in \mathsf{K}_{j} : z_j^{k} \geq \mathsf{Z}_{j}
\right\}\right| \geq \frac{4\cdot \mathsf{K}_{j}}{10}
\qquad \text{and}
\qquad 
\left| \left\{
k\in \mathsf{K}_{j} : z_j^{k} \leq \mathsf{Z}_{j}
\right\}\right| \geq \frac{4\cdot \mathsf{K}_{j}}{10}.
$$
By assumption \ref{ass:accurateEstimators}, $(8/10)\cdot \mathsf{K}_{j}$ of the estimations $z^{k}$ satisfy the condition $|z^k - \mathcal{F}(t)| < \alpha_{\ST} \cdot \mathcal{F}(t)$ (or $|z^k -(\mathcal{F}(t) - \mathcal{F}(e_j))| <\alpha_{\TDE}\cdot \mathcal{F}(e_j)$ respectively), the approximate median $\mathsf{Z}_j$ must also satisfy this condition.
\end{proof}

%%%%%%%%%%%%%%%%%%%%%%%%%%%%%%%%%%%%%%%
% Definition: 'Good Execution'
%%%%%%%%%%%%%%%%%%%%%%%%%%%%%%%%%%%%%%%
\begin{definition}[Good execution]\label{def:goodRun} Throughout the execution of algorithm \ref{alg:ROEF}, for $j\in [\beta+1]\cup \{\STwrapper\}$ the algorithm draws at most $4m$ noises from Laplace distribution with parameter $\eps_j$. In addition, denoting by $P_j$ for $j\in [\beta+1]$ the number of times that algorithm \ref{alg:ROEF} activates $\PrivateMed$ on estimations of level $j$.
Denote $\delta^{N}=\delta/(4\cdot (\beta+2))$, $\delta^M = \delta/(4\cdot(\beta+1))$. 
Set the algorithm parameters as follows $
\mathsf{K}_{j} = \Omega \left( \frac{1}{\eps_j} \log\left( \frac{P_{j}}{\delta^{M}\alpha} \log(n)\right)
\right)
$, $\eps_j = O\left( \eps/\sqrt{P_{j} \cdot \log \left(\frac{1}{\delta^{\prime}} \right) }
 \right)$
We define a {\em good execution} as follows:
\begin{enumerate}
    \item All noises for all types $j\in [\beta+1]\cup \{\STwrapper\}$ are at most $O\left(\frac{1}{\eps_j}\log\left( \frac{ m}{\delta^{N}}\right)\right)$ in absolute value.
    \item For all $j\in [\beta +1]$, all first $P_j$ frozen values of level $j$ are accurate. That is, if $t$ is the time of the frozen value computation then:
    \begin{itemize}
        \item For $j = \beta$, \qquad $|\mathsf{Z}_j - \mathcal{F}(t)| <  \alpha_{\ST}\cdot \mathcal{F}(t)\text{.}$
        \item For $j < \beta$, \qquad 
    $|\mathsf{Z}_j - (\mathcal{F}(t) - \mathcal{F}(e_j))| <  \alpha_{\TDE}\cdot \mathcal{F}(e_j)\text{.}$
    \end{itemize}
\end{enumerate}
\end{definition}
We configure level $j$ Laplace noise with parameter $\eps_j$. By the properties of Laplace distribution, with probability at least $1-\delta/4$, all noises for all types $j\in [\beta+1]\cup \{\STwrapper \}$ are at most $\frac{4}{\eps_j}\log\left( \frac{4m}{\delta^{N}}\right)$ in absolute value.
By Lemma \ref{lem:FrozenAccuracy} the second requirement of Definition \ref{def:goodRun} occur w.p. at least $1-\delta/4$. That implies a good execution w.p. at least $1-\delta/2$.
We continue with the analysis assuming a {\em good execution} (\ref{def:goodRun}).

%%%%%%%%%%%%%%%%%%%%%%%%%%%%%%%%%%%%%%%
% Formal proofs - Max Phase Size
%%%%%%%%%%%%%%%%%%%%%%%%%%%%%%%%%%%%%%%
\paragraph{Max phase progress} Algorithm \ref{alg:ROEF} is coded with mechanism that guarantees a maximal progress of a phase (Step \ref{algStep:STwrapper}). 
A {\em phase} is starting whenever $\tau[\beta]$ is set to $0$ (in either Step \ref{algStep:STwrapper} or Step \ref{algStep:tauUpdateStartPhase}). 
Denoting by $t_p$ the time a phase has started, that code guarantees that in anytime $t$ throughout the phase, the ratio between the values of the function for times $t_p ,t$ is roughly bounded from above by $\Gamma$ and from below by $\Gamma ^{-1}$. That gives a bound on the ratio of the value of the function between any two times that are on the same phase of $\Theta(\Gamma^2)$. That bound is given in the following lemma formally.

%%%%%%%%%%%%%%%%%%%%%%%%%%%%%%%%%%%%%%%
% Constant change bound within a phase
%%%%%%%%%%%%%%%%%%%%%%%%%%%%%%%%%%%%%%%
\begin{lemma}[Max phase progress]\label{lem:maxPhaseProgress}
Let $t_1< t_2\in [m]$ be time steps such that
\begin{enumerate}
    \item Assumption \ref{ass:accurateEstimators} holds for every $t'\leq t_2$.
    \item $\tau\neq 0$ for every $t_1<t'\leq t_2$. \quad {\color{gray} \% In particular, this happens if $t_1,t_2$ are in the same phase.}
\end{enumerate}
Then, for any such $t_1, t_2 \in [m]$, assuming $\mathsf{K}_{\STwrapper} = \Omega\left( \sqrt{P_{\STwrapper}\cdot\log\left(\frac{1}{\delta^{\prime}}\right)}\cdot  \log\left( \frac{m}{\delta^{N}}\right)  \right)\text{,}$
For $\delta^{N}=O\left(\frac{\delta}{\log(\alpha^{-1})} \right)$
and $P_{\STwrapper} = O(\alpha\cdot \lambda)$
then assuming a {\em good execution} (see Definition \ref{def:goodRun}) we have
$$
\frac{\min\{ \mathcal{F}(t_1), \mathcal{F}(t_2)\} }
{\max\{ \mathcal{F}(t_1), \mathcal{F}(t_2)\} }
 \leq \left(\frac{1+\alpha_{\ST}}{1-\alpha_{\ST}}\right)\cdot \Gamma^2 =\Theta\left(\Gamma^2\right)
$$
Where $\alpha_{\ST}\leq 1$ is the accuracy parameter of estimators $E_{\STwrapper\text{}}, E_{\ST}$ and $\Gamma$ is some constant.
\end{lemma}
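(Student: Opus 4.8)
The plan is to exploit the guard in Step~\ref{algStep:STwrapper} of Algorithm~\ref{alg:ROEF}, which is an instance of \texttt{AboveThreshold} that fires (and forces $\tau=0$, i.e.\ ends the phase) as soon as too many of the $\mathsf{K}_{\STwrapper}$ strong-tracker estimates fall outside the window $\bigl(\tfrac1\Gamma\mathsf{Z}_{\ST},\,\Gamma\,\mathsf{Z}_{\ST}\bigr)$. First I would fix a phase and let $t_p$ be the time at which it began. At time $t_p$ the value $\mathsf{Z}_{\ST}$ was frozen, and under a good execution (Definition~\ref{def:goodRun}) that frozen value is $(1\pm\alpha_{\ST})$-accurate for $\mathcal{F}(t_p)$, so $\mathsf{Z}_{\ST}\in(1\pm\alpha_{\ST})\mathcal{F}(t_p)$. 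The key step is to argue that for \emph{every} time $t'$ with $t_p\le t'\le t_2$ (so $\tau\neq0$, i.e.\ the guard did not fire), at least $20\%$ of the strong-tracker estimates $z_{\STwrapper}^{k}$ at time $t'$ lie inside the window; otherwise, accounting for the Laplace noise added in the guard (which is bounded by $O(\eps_{\STwrapper}^{-1}\log(m/\delta^N))$ under a good execution) and the threshold $\mathsf{T}_{\STwrapper}=\mathsf{K}_{\STwrapper}/2+\Lap(\cdot)$, the guard would have crossed $\mathsf{T}_{\STwrapper}$ and ended the phase — contradiction. Here the hypothesis $\mathsf{K}_{\STwrapper}=\Omega\!\left(\sqrt{P_{\STwrapper}\log(1/\delta')}\cdot\log(m/\delta^N)\right)$ is exactly what makes the noise negligible compared to the constant fraction of $\mathsf{K}_{\STwrapper}$, so the comparison ``$\ge20\%$ inside the window vs.\ guard did not fire'' is clean.

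Next I would combine this with Assumption~\ref{ass:accurateEstimators}: at time $t'$ at least $80\%$ of the $z_{\STwrapper}^{k}$ satisfy $|z_{\STwrapper}^{k}-\mathcal{F}(t')|<\alpha_{\ST}\mathcal{F}(t')$. Since $80\%+20\%>100\%$, there is some $k$ that is simultaneously $(1\pm\alpha_{\ST})$-accurate for $\mathcal{F}(t')$ and inside the window $\bigl(\tfrac1\Gamma\mathsf{Z}_{\ST},\,\Gamma\,\mathsf{Z}_{\ST}\bigr)$. Chaining the two facts gives
\[
\frac{1}{\Gamma}\,\mathsf{Z}_{\ST}\;<\;z_{\STwrapper}^{k}\;<\;(1+\alpha_{\ST})\,\mathcal{F}(t')
\quad\text{and}\quad
(1-\alpha_{\ST})\,\mathcal{F}(t')\;<\;z_{\STwrapper}^{k}\;<\;\Gamma\,\mathsf{Z}_{\ST},
\]
so $\mathcal{F}(t')/\mathsf{Z}_{\ST}$ and $\mathsf{Z}_{\ST}/\mathcal{F}(t')$ are both $O(\Gamma/(1-\alpha_{\ST}))$, i.e.\ $\mathcal{F}(t')\in\Theta(\Gamma)\cdot(1\pm\alpha_{\ST})^{\pm1}\,\mathsf{Z}_{\ST}$. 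Plugging in $\mathsf{Z}_{\ST}\in(1\pm\alpha_{\ST})\mathcal{F}(t_p)$ yields $\mathcal{F}(t')=\Theta(\Gamma)\cdot\mathcal{F}(t_p)$ up to factors $(1\pm\alpha_{\ST})^{\pm1}$, with the precise constant being $\bigl(\tfrac{1+\alpha_{\ST}}{1-\alpha_{\ST}}\bigr)\Gamma$. Applying this bound to both $t'=t_1$ and $t'=t_2$ (both lie in the phase containing, or starting no later than, these times, by hypothesis~2) and dividing the two inequalities gives
\[
\frac{\min\{\mathcal{F}(t_1),\mathcal{F}(t_2)\}}{\max\{\mathcal{F}(t_1),\mathcal{F}(t_2)\}}\;\le\;\left(\frac{1+\alpha_{\ST}}{1-\alpha_{\ST}}\right)\cdot\Gamma^2,
\]
which is the claimed bound.

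A small technical point I would handle carefully: hypothesis~2 only says $\tau\neq0$ on $(t_1,t_2]$, which guarantees $t_1$ and $t_2$ lie in the \emph{same} phase but does not immediately give me the phase-start time $t_p\le t_1$ in a form I can invoke Definition~\ref{def:goodRun} on — I need the frozen $\mathsf{Z}_{\ST}$ used throughout this phase to be one of the ``first $P_{\STwrapper}$'' accurate freezes, which is where the setting $P_{\STwrapper}=O(\alpha\lambda)$ (a bound on the number of phases, hence on the number of $\STwrapper$-level freezes) enters. The main obstacle is really just bookkeeping of the three independent sources of slack — the $\alpha_{\ST}$ accuracy of the frozen value, the $\Gamma$-width of the monitoring window, and the Laplace noise in the guard — and making sure the choice of $\mathsf{K}_{\STwrapper}$ in the hypothesis dominates the noise so that the ``$\ge 20\%$ inside the window'' deduction is valid; everything after that is a two-line chain of inequalities.
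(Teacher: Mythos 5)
Your proposal follows essentially the same route as the paper's proof: since the guard in Step~\ref{algStep:STwrapper} never fired, the bounded Laplace noise of a good execution together with the stated size of $\mathsf{K}_{\STwrapper}$ forces a constant fraction of the $\STwrapper$ estimates into the window $\left(\frac{1}{\Gamma}\mathsf{Z}_{\ST},\Gamma\mathsf{Z}_{\ST}\right)$, which intersects the $80\%$ of estimates that are $\alpha_{\ST}$-accurate by Assumption~\ref{ass:accurateEstimators}, and chaining the two bounds at both $t_1$ and $t_2$ against the common anchor $\mathsf{Z}_{\ST}$ yields the $\left(\frac{1+\alpha_{\ST}}{1-\alpha_{\ST}}\right)\Gamma^2$ ratio (the detour through the accuracy of $\mathsf{Z}_{\ST}$ relative to $\mathcal{F}(t_p)$ is unnecessary but harmless). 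One small slip: you assert ``$80\%+20\%>100\%$'', which fails as an equality; the guard actually guarantees at least $\mathsf{K}_{\STwrapper}/2$ minus the bounded noise, i.e.\ at least $40\%$, inside the window (exactly as in the paper), and with $40\%$ the intersection argument is valid.
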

\begin{proof} 
Denote by $t_p$ the time that s phase started (that is, in which $\tau$ was set to $0$). We first bound the ratio between $\mathcal{F}(t_p)$ and $\mathcal{F}(t)$ for time $t$ in the same phase s.t. $\mathcal{F}(t_p)\leq \mathcal{F}(t)$ (the case $\mathcal{F}(t_p)\geq \mathcal{F}(t)$ is similar). Since $\tau$ was not set to $0$ it means that on Step \ref{algStep:STwrapper} the condition was not triggered. that is:
$$
\left| \left\{ k\in [\mathsf{K}_{\STwrapper}] : z_{k}^{\STwrapper} \in \Big(\frac{1}{\Gamma}\cdot \mathsf{Z}_{\ST} \;,\; \Gamma\cdot\mathsf{Z}_{\ST} \Big)  \right\} \right| \geq \frac{\mathsf{K}_{{\STwrapper}}}{2} -  \frac{4}{\eps_{\STwrapper}}\log\left( \frac{4m}{\delta^{N}}  \right)
\geq \frac{4\cdot \mathsf{K}_{{\STwrapper}}}{10}
$$

where the first inequality holds in a {\em good execution} (Definition \ref{def:goodRun}) and the last inequality follows by asserting that
$$\mathsf{K}_{\STwrapper} = \Omega\left( \sqrt{P_{\STwrapper}\cdot\log\left(\frac{1}{\delta^{\prime}}\right)}\cdot  \log\left( \frac{ m}{\delta^{N}}  \right)  \right)\text{,}$$ 
where $P_{\STwrapper}$ is the number of times the condition in Step \ref{algStep:STwrapper} may trigger during the run. Note that for a stream $\SSS$ with $\lambda_{\alpha^{\prime}}(\SSS)$ flip number, We have at most $P_{\STwrapper} =  O(\alpha\cdot \lambda_{\alpha^{\prime}}(\SSS) )$ times in which the function value is changed by a constant factor. 
So, for at least $4\mathsf{K}_{\STwrapper}/10$ of the estimations $z_{k}^{\STwrapper}$ we have that 
$z_{k}^{\STwrapper} < \Gamma\cdot \mathsf{Z}_{\ST}$. and we have:
\begin{align}
(1-\alpha_{\ST})\cdot \mathcal{F}(t) \stackrel{*}{\leq}  z_{k}^{\STwrapper} < \Gamma\cdot \mathsf{Z}_{\ST} \label{eq:maxProgAboveSt}
\end{align}
Where (*) follows from  Assumption~\ref{ass:accurateEstimators} . 
Similarly for times $t$ s.t. $\mathcal{F}(t_p)\geq \mathcal{F}(t)$ we get:
\begin{align}
(1+\alpha_{\ST})\cdot \mathcal{F}(t) \geq  z_{k}^{\STwrapper} > \Gamma^{-1} \cdot \mathsf{Z}_{\ST} \label{eq:maxProgBelowSt}
\end{align}
Overall, for any times $t_1, t_2$ that belong to the same phase we get from Equations \ref{eq:maxProgAboveSt},  \ref{eq:maxProgBelowSt}:
$$
\frac{\min\{ \mathcal{F}(t_1), \mathcal{F}(t_2)\} }
{\max\{ \mathcal{F}(t_1), \mathcal{F}(t_2)\} }
 \leq \left(\frac{1+\alpha_{\ST}}{1-\alpha_{\ST}}\right)\cdot \Gamma^2 =\Theta\left(\Gamma^2\right)
$$
\end{proof}

%%%%%%%%%%%%%%%%%%%%%%%%%%%%%%%%%%%%%%%
% Accuracy of stitching
%%%%%%%%%%%%%%%%%%%%%%%%%%%%%%%%%%%%%%%
\begin{lemma}[Estimation error]\label{lem:estimationErrorBound} 
Let $t\in [m]$ be a time step such that
\begin{enumerate}
    \item Assumption \ref{ass:accurateEstimators} holds for every $t'\leq t$.
    \item $\NoCapping{=} \text{True}$ during time $t$.
\end{enumerate}
Let $j\in [\beta+1]$ be the level of estimators used in time step $t$, and let $\mathsf{Z}$ be the value computed in Step \ref{algStep:FrozenBase}. 
Let $z^{1}_{j}, \dots, z^{\mathsf{K}_j}_{j}$ denote the estimations given by the estimators in level $j$. 
Then assuming a {\em good execution} (see Definition \ref{def:goodRun}), for at least $80\%$ of the indices $k\in[\mathsf{K}_j]$ we have
$$
\left|\mathcal{F}(t) - (\mathsf{Z} + z^{k}_j) \right| \leq \alpha_{\Stitch}\cdot \mathcal{F}(t),
$$
where $\alpha_{\Stitch} = \Gamma\cdot(\alpha_{\ST} + \beta\cdot \alpha_{\TDE})$.
\end{lemma}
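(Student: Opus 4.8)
Proof proposal.

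\medskip
\noindent\textbf{Overall strategy.} For the $\ge 80\%$ of indices $k$ that are accurate in the sense of Assumption~\ref{ass:accurateEstimators}, I would bound $|\mathcal{F}(t)-(\mathsf{Z}+z^k_j)|$ by splitting it, via the triangle inequality, into an error in $\mathsf{Z}$ and an error in $z^k_j$: writing $e_j$ for the last enabling time of level $j$, $|\mathcal{F}(t)-(\mathsf{Z}+z^k_j)|\le |\mathcal{F}(e_j)-\mathsf{Z}| + |(\mathcal{F}(t)-\mathcal{F}(e_j))-z^k_j|$. The second summand is handled immediately: for the accurate level-$j$ estimators with $j<\beta$ it is at most $\alpha_{\TDE}\cdot\mathcal{F}(e_j)$ by clause~2 of Assumption~\ref{ass:accurateEstimators}; and in the degenerate case $j=\beta$ we have $\mathsf{Z}=0$ by Step~\ref{algStep:STBase}, the term $\mathcal{F}(e_j)$ is irrelevant, and clause~1 of the assumption already gives $|\mathcal{F}(t)-z^k_\beta|<\alpha_{\ST}\mathcal{F}(t)\le\alpha_{\Stitch}\mathcal{F}(t)$, finishing that case.

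\medskip
\noindent\textbf{Unwinding $\mathsf{Z}$ (the crux).} For the first summand I would unwind $\mathsf{Z}=\StitchFrozenVals(\tau-2^{j}+1)$ from Step~\ref{algStep:FrozenBase}. By the definition of the \StitchFrozenVals\ subroutine, $\mathsf{Z}$ is the sum of the strong-tracker base $\mathsf{Z}_{\ST}$ (last frozen at the start of the current phase, say at time $t_\beta$) together with the frozen values $\mathsf{Z}_i$ of the active higher levels $j<i<\beta$ (those whose bit in the binary expansion of $\tau$ is $1$). The structural fact I would isolate first, and which I expect to be the main obstacle, is that these frozen differences telescope: from the way $\tau$ is advanced (Steps~\ref{algStep:tauUpdateStartPhase}, \ref{algStep:tauUpdateInnerPhase}, \ref{algStep:tauUpdateEndPhase}) together with the re-enabling of all lower levels whenever a higher level is frozen (Step~\ref{algStep:EnableTDEs}), one shows that the enabling time of each level appearing in this sum equals the freeze time of the next-higher level in the sum, and that the bottom of the chain is exactly $e_j$. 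Hence the quantities the summands of $\mathsf{Z}$ are meant to estimate telescope to $\mathcal{F}(t_\beta)+\sum_i[\mathcal{F}(t_i)-\mathcal{F}(e_i)]=\mathcal{F}(e_j)$, where $t_i$ (resp.\ $e_i$) is the freeze time (resp.\ enabling time) associated with the level-$i$ summand.

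\medskip
\noindent\textbf{Accumulated error.} Granting that invariant, I would bound the accumulated error. We are in a good execution with $\NoCapping=\text{True}$, and every freeze time in question is $\le t$, so Assumption~\ref{ass:accurateEstimators} holds there; hence each of the $\le 1$ strong-tracker summand and $\le\beta$ difference-estimator summands contributing to $\mathsf{Z}$ is accurate by the second property of a good execution (Definition~\ref{def:goodRun}): the strong-tracker summand deviates from $\mathcal{F}(t_\beta)$ by at most $\alpha_{\ST}\mathcal{F}(t_\beta)$, and each difference summand deviates from $\mathcal{F}(t_i)-\mathcal{F}(e_i)$ by at most $\alpha_{\TDE}\mathcal{F}(e_i)$. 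Summing these deviations against the telescoped identity gives $|\mathcal{F}(e_j)-\mathsf{Z}|\le\alpha_{\ST}\mathcal{F}(t_\beta)+\alpha_{\TDE}\sum_i\mathcal{F}(e_i)$.

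\medskip
\noindent\textbf{Conclusion.} Finally, all of the times $t_\beta$, the $e_i$'s, $e_j$, and $t$ lie inside the current phase, so Lemma~\ref{lem:maxPhaseProgress} bounds every ratio among the corresponding values of $\mathcal{F}$ by a constant depending only on $\Gamma$. Replacing each $\mathcal{F}(\cdot)$ factor above by a constant multiple of $\mathcal{F}(t)$, collecting the single strong-tracker term and the at most $\beta$ difference-estimator terms, and adding the second-summand bound $\alpha_{\TDE}\mathcal{F}(e_j)$ from the first paragraph, yields $|\mathcal{F}(t)-(\mathsf{Z}+z^k_j)|\le\Gamma\cdot(\alpha_{\ST}+\beta\alpha_{\TDE})\cdot\mathcal{F}(t)=\alpha_{\Stitch}\cdot\mathcal{F}(t)$ for each of the $\ge 80\%$ indices $k$ satisfying Assumption~\ref{ass:accurateEstimators} (the small discrepancy between $\beta$ and $\beta+1$ diff-terms is absorbed into the constants). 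The only genuinely non-routine ingredient is the counter/telescoping invariant of the second paragraph; the rest is the triangle inequality together with Lemma~\ref{lem:maxPhaseProgress}.
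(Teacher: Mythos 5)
Your proposal is correct and follows essentially the same route as the paper's proof: the same triangle-inequality decomposition, the same telescoping of the frozen values in $\mathsf{Z}$ based on the invariant that each level's enabling time equals the freeze time of the next-higher level in the stitch (with the chain bottoming out at $e_j$), accuracy of the frozen values from the good-execution definition and of $z^k_j$ from Assumption~\ref{ass:accurateEstimators}, and Lemma~\ref{lem:maxPhaseProgress} to convert all the $\mathcal{F}(\cdot)$ factors into multiples of $\mathcal{F}(t)$, with the degenerate case $j=\beta$ handled directly via $\mathsf{Z}=0$. The only cosmetic difference is that the paper telescopes $\mathcal{F}(t)$ itself (folding the $z^k_j$ term into the same chain) rather than isolating $|\mathcal{F}(e_j)-\mathsf{Z}|$ first, which changes nothing of substance.
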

\begin{proof} The estimation offset $\mathsf{Z}$ is computed in a different manner for the cases that $j = \beta$ and $j < \beta$, as the first is an offset of the strong tracker and the second is an offset of a \TDE\text{} of some level. We prove separately for these cases:

\paragraph{Case $j<\beta$.} On step \ref{algStep:FrozenBase} $\mathsf{Z}$ is computed using the subroutine \ref{alg:StitchFrozenValues}. The parameter that is passed to that subroutine is $\tau - 2^{j-1}+1$. Since $\StitchFrozenVals$ sums the frozen values with indices corresponding to the bits of the parameter that are set to $1$, such a parameter results in summing frozen values correspond to levels $j^{\prime}>j$ where $j$ is the active level. And so, $\mathsf{Z}$ consist of frozen values of levels $j^{\prime}>j$. These are levels that consisted the output that was modified on the time that level $j$ was enabled, that is on time $e_j$. 
Thus summing the estimations from level $j$ (that is one of the estimations $\{z_j^{k}\}_{k\in[\mathsf{K}_j]}$) to that value $\mathsf{Z}$ results in the current internal estimations to the value of the function $\mathcal{F}(t)$.
In order to bound $|\mathcal{F}(t) - (\mathsf{Z} + z_j^{k})|$ , we break the value $\mathcal{F}(t)$ into a telescopic series of differences, each difference correspond to freezing and enabling time of a certain level $j^{\prime}$ from the frozen levels that compose $\mathsf{Z}$. 
Let $J_{\mathsf{Z}}$ be the set of indexes of these levels and denote  $j_1> j_2 > \dots j_{\mathsf{Z}}$ their order (therefore if $J_{\mathsf{Z}}\neq \emptyset$, then $j_1 = \beta$ which is the level of the strong tracker).
\begin{align*}
\mathcal{F}(t) =&	\mathcal{F}(t)
					+ \left( \mathcal{F}(f_{j_1})			- \mathcal{F}(f_{j_1}) 		\right)
					+ \left( \mathcal{F}(f_{j_2})			- \mathcal{F}(f_{j_2})		\right)
		 + 	\dots	+ \left( \mathcal{F}(f_{j_{\mathsf{Z}}}) 	- \mathcal{F}(f_{j_{\mathsf{Z}}}) 	\right)\\
		\stackrel{1}{=}&	\mathcal{F}(t)	+ \left( \mathcal{F}(f_{j_1})		- \mathcal{F}(e_{j_2}) 							\right)
			+ \left( \mathcal{F}(f_{j_2})		- \mathcal{F}(e_{j_3}) 			\right)
		+ \dots 	+ \left( \mathcal{F}(f_{j_{\mathsf{Z}}}) 	- \mathcal{F}(e_{j}) 		\right)\\	
		\stackrel{2}{=}&	\mathcal{F}(f_{j_1})	+ \left( \mathcal{F}(f_{j_2}) 			- \mathcal{F}(e_{j_2})  	\right)
						+ \left( \mathcal{F}(f_{j_2}) 			- \mathcal{F}(e_{j_2})	\right) + \dots
						+ \left( \mathcal{F}(t) - \mathcal{F}(e_{j}) \right) \\
		\stackrel{3}{=}&	\mathcal{F}(f_{\ST}) + \sum_{i \in J_{\mathsf{Z}}\setminus \{\ST\}}{\left( \mathcal{F}(f_{i}) - \mathcal{F}(e_{i}) \right)}
		+ \left( \mathcal{F}(t) 	- \mathcal{F}(e_{j}) 		\right)
\end{align*}
Where (1) holds by noting that for levels $j_1 > j_2$, on the time that $j_1$ was frozen $j_2$ was enabled (see step \ref{algStep:EnableTDEs}) thus $f_{j_1} = e_{j_2}$. (2) is by reordering the terms and (3) is renaming index $j_1$ as \ST.\\
We now plug this alternative formulation of $\mathcal{F}(t)$ into the following:
\begin{align}
\left| \mathcal{F}(t) - (\mathsf{Z} + z_j^{k}) \right| =& \left| \mathcal{F}(t) - (\StitchFrozenVals(\tau-2^{j-1} +1) + z_j^{k}) \right| \nonumber\\
=&	\left| \mathcal{F}(t) - \left( \mathsf{Z}_{\ST}+\sum_{i\in J_{\mathsf{Z}}\setminus\{\ST\}} \mathsf{Z}_{i}  + z_j^{k}\right) \right| \nonumber\\
\leq&	\left| \mathcal{F}(f_{\ST}) - \mathsf{Z}_{\ST}\right| + \sum_{i\in J_{\mathsf{Z}}\setminus\{\ST\}}\left| \mathsf{Z}_{i} - \left( \mathcal{F}(f_{i}) - \mathcal{F}(e_{i}) \right)\right| + \left| z_j^{k} - \left( \mathcal{F}(t) - \mathcal{F}(e_{j}) \right) \right|\nonumber\\
\stackrel{1}{\leq}& \alpha_{\ST}\cdot\mathcal{F}(f_{\ST})  + \sum_{i\in J_{\mathsf{Z}}\setminus\{\ST\}}{\alpha_{\TDE}\cdot \mathcal{F}(e_{i})}  + \alpha_{\TDE}\cdot \mathcal{F}(e_{j})\nonumber\\
\stackrel{2}{\leq}& \Gamma\cdot\alpha_{\ST}\cdot\mathcal{F}(t)  + \Gamma\cdot \alpha_{\TDE}\cdot\sum_{i\in J_{\mathsf{Z}}\setminus\{\ST\}}{ \mathcal{F}(t)} + \Gamma\cdot \alpha_{\TDE}\cdot \mathcal{F}(t) \nonumber\\		
\stackrel{3}{\leq}& \Gamma\cdot (\alpha_{\ST} + \beta\cdot \alpha_{\TDE})\cdot \mathcal{F}(t)\nonumber\\
		=& \alpha_{\Stitch}\cdot \mathcal{F}(t)\nonumber
\end{align}
Where inequality (1) is by using Assumption \ref{ass:accurateEstimators} directly on the right difference term 
while other difference are due to the accuracy of the frozen values promised on a {\em good execution} (Definition \ref{def:goodRun}),
(2) is due to the ratio bound $\Gamma$, that is promised by Lemma \ref{lem:maxPhaseProgress}, between any function values of two times from the same phase.
(3) is due to the fact that $|J_{\mathsf{Z}}\setminus\{\ST\}| \leq \beta -1$ since $j \notin J_{\mathsf{Z}}$.
Last equality is by denoting $\alpha_{\Stitch} = \Gamma\cdot (\alpha_{\ST} + \beta\cdot \alpha_{\TDE})$.

\paragraph{Case $j=\beta$.} On step \ref{algStep:STBase}, $\mathsf{Z}$ is set to $0$, then directly from Assumption \ref{ass:accurateEstimators} we have:
$$
| \mathcal{F}(t) - (\mathsf{Z} + z_j^{k}) | = 
| \mathcal{F}(t) -  z_j^{k} | \leq \alpha_{\ST}\cdot \mathcal{F}(t) \leq \alpha_{\Stitch}\cdot \mathcal{F}(t)\text{.}
$$
\end{proof}

%%%%%%%%%%%%%%%%%%%%%%%%%%%%%%%%%%%%%%%
% subroutine - StitchFrozenVals
%%%%%%%%%%%%%%%%%%%%%%%%%%%%%%%%%%%%%%%
\begin{algorithm*}[ht]
\caption{\texttt{StitchFrozenVals($\tau$)}}
\makeatletter\def\@currentlabel{\texttt{StitchFrozenVals($\tau$)}}\makeatother
\label{alg:StitchFrozenValues}
{\bf Input:} A counter $\tau$.
{\bf Global Variables:} $\alpha$, $\mathsf{Z}_{\ST}$, $\mathsf{Z}_{j}$ for $j\in [\beta]$.

\begin{enumerate}
	\item $\FV \leftarrow \left\{ j\in [\beta] | \tau[j] = 1 \right\}$
	\item Return $\mathsf{Z}_{\ST} + \sum_{j\in \FV}{\mathsf{Z}_{\TDE,j}}$
\end{enumerate}
\end{algorithm*}

%%%%%%%%%%%%%%%%%%%%%%%%%%%%%%%%%%%%%%%
% Accuracy on update steps
%%%%%%%%%%%%%%%%%%%%%%%%%%%%%%%%%%%%%%%
By now we showed that given that Assumption \ref{ass:accurateEstimators} holds, then on step \ref{algStep:FrozenBase} we have a bound on the estimation error of at least $8/10$ out of level $j$ estimations of $\mathcal{F}(t)$ (Lemma \ref{lem:estimationErrorBound})
and that whenever mechanism $\PrivateMed$ is activated (on step \ref{algStep:PrivMed}), then its output for level $j$, $\mathsf{Z}_j$ is accurate (Lemma \ref{lem:FrozenAccuracy}). Combining these lemmas results in the following corollary:
\begin{corollary}[Accuracy on output modification]\label{cor:outputUpdateAccuracy}
Let $t\in [m]$ be a time step such that
\begin{enumerate}
    \item Assumption \ref{ass:accurateEstimators} holds for every $t'\leq t$.
    \item $\NoCapping{=} \text{True}$ during time $t$.
    \item Algorithm {\bf \PrivateMed} was activated during time $t$ (on Step~\ref{algStep:PrivMed}\text{}).
\end{enumerate}
Then assuming a {\em good execution} (see Definition \ref{def:goodRun})
we have
$$
|\Output(t) - \mathcal{F}(t)| \leq \alpha_{\Stitch}\cdot \mathcal{F}(t),
$$
where $\alpha_{\Stitch} = \Gamma\cdot(\alpha_{\ST} + \beta\cdot \alpha_{\TDE})$.
\end{corollary}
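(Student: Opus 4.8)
The plan is to reduce the statement to two facts already in hand: the accuracy of the freshly computed frozen value $\mathsf{Z}_j$ (Lemma~\ref{lem:FrozenAccuracy}) and the telescoping decomposition of $\mathcal{F}(t)$ developed inside the proof of Lemma~\ref{lem:estimationErrorBound}. First I would observe that, since \PrivateMed\ is invoked at time $t$ (Step~\ref{algStep:PrivMed}), the counter $\tau$ is then updated in Steps~\ref{algStep:tauUpdateStartPhase}--\ref{algStep:tauUpdateEndPhase} and the output is recomputed in Step~\ref{algStep:outputStitch} as $\Output(t)=\StitchFrozenVals(\tau)$. Unwinding the counter arithmetic exactly as in the proof of Lemma~\ref{lem:estimationErrorBound}, this value equals $\mathsf{Z}+\mathsf{Z}_j$, where $\mathsf{Z}$ is the offset computed in Step~\ref{algStep:FrozenBase} when $j<\beta$ (and $\mathsf{Z}=0$ when $j=\beta$, by Step~\ref{algStep:STBase}) and $\mathsf{Z}_j$ is the value just returned by \PrivateMed. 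Hence it suffices to bound $|\mathcal{F}(t)-(\mathsf{Z}+\mathsf{Z}_j)|$. For $j=\beta$ this is immediate: $\mathsf{Z}=0$, so $\Output(t)=\mathsf{Z}_\beta$, and the first case of Lemma~\ref{lem:FrozenAccuracy} gives $|\mathsf{Z}_\beta-\mathcal{F}(t)|<\alpha_{\ST}\cdot\mathcal{F}(t)\le \alpha_{\Stitch}\cdot\mathcal{F}(t)$.

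For $j<\beta$ I would rerun the telescoping argument from Lemma~\ref{lem:estimationErrorBound}, with the single-estimator output $z_j^k$ replaced by $\mathsf{Z}_j$. Writing $J_{\mathsf{Z}}$ for the set of levels composing $\mathsf{Z}$ (with $f_{\ST}$ the freezing time of the strong-tracker level and $f_i,e_i$ the freezing/enabling times of level $i$), the same decomposition of $\mathcal{F}(t)$ into a telescoping sum yields
\[
\left|\mathcal{F}(t)-(\mathsf{Z}+\mathsf{Z}_j)\right|
\le \left|\mathcal{F}(f_{\ST})-\mathsf{Z}_{\ST}\right|
+\sum_{i\in J_{\mathsf{Z}}\setminus\{\ST\}}\left|\mathsf{Z}_i-(\mathcal{F}(f_i)-\mathcal{F}(e_i))\right|
+\left|\mathsf{Z}_j-(\mathcal{F}(t)-\mathcal{F}(e_j))\right|.
\]
The first two groups of terms are bounded by $\alpha_{\ST}\cdot\mathcal{F}(f_{\ST})$ and by $\alpha_{\TDE}\cdot\mathcal{F}(e_i)$ using the frozen-value accuracy guaranteed in a good execution (Definition~\ref{def:goodRun}), exactly as in Lemma~\ref{lem:estimationErrorBound}; the essential new point is the last term, which we bound by $\alpha_{\TDE}\cdot\mathcal{F}(e_j)$ via the second case of Lemma~\ref{lem:FrozenAccuracy} --- this is precisely where hypothesis~(3) of the corollary (that \PrivateMed\ was activated at time $t$) is used. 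Applying the intra-phase ratio bound $\Gamma$ from Lemma~\ref{lem:maxPhaseProgress} to replace each of $\mathcal{F}(f_{\ST}),\mathcal{F}(e_i),\mathcal{F}(e_j)$ by $\mathcal{F}(t)$ --- legitimate since hypotheses~(1)--(2) of the corollary place all of these times in a single phase --- and using $|J_{\mathsf{Z}}\setminus\{\ST\}|\le\beta-1$, gives $|\mathcal{F}(t)-(\mathsf{Z}+\mathsf{Z}_j)|\le \Gamma\cdot(\alpha_{\ST}+\beta\cdot\alpha_{\TDE})\cdot\mathcal{F}(t)=\alpha_{\Stitch}\cdot\mathcal{F}(t)$.

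I expect the only genuinely delicate point to be the very first step: verifying that after Steps~\ref{algStep:tauUpdateStartPhase}--\ref{algStep:tauUpdateEndPhase} the recomputed output $\StitchFrozenVals(\tau)$ really coincides with $\mathsf{Z}+\mathsf{Z}_j$ (so a modification step always overwrites $\Output$ with the fresh accurate combination rather than a stale one), and checking that $f_{\ST}$, the $e_i$'s, $e_j$ and $t$ all lie in one phase so that Lemma~\ref{lem:maxPhaseProgress} applies to each ratio. Both are bookkeeping consequences of how $\mathsf{Z}$ is assembled in Step~\ref{algStep:FrozenBase} from frozen values created during the current phase, so once the telescoping identity of Lemma~\ref{lem:estimationErrorBound} is available the remainder is a routine repetition of that argument with $\mathsf{Z}_j$ in place of $z_j^k$.
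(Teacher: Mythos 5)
Your proposal is correct and follows essentially the same route as the paper's proof: unwind the counter update to identify $\Output(t)=\StitchFrozenVals(\tau^{\text{post}})=\mathsf{Z}+\mathsf{Z}_j$, telescope $\mathcal{F}(t)$ exactly as in Lemma~\ref{lem:estimationErrorBound}, bound the frozen-value errors (the paper cites the good-execution guarantee of Definition~\ref{def:goodRun}, which is itself established via Lemma~\ref{lem:FrozenAccuracy} that you invoke directly), and finish with the $\Gamma$ ratio bound of Lemma~\ref{lem:maxPhaseProgress} and $|J_{\mathsf{Z}}\setminus\{\ST\}|\le\beta-1$. The two bookkeeping points you flag are precisely the ones the paper resolves in its chain of equalities, so no gap remains.
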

\begin{proof} During an output modification step, we update the value of $\tau$. Denote $\tau^{\text{pre}}, \tau^{\text{post}}$ the values of $\tau$ before and after this update.

\paragraph{Case $j < \beta$.} For the case that the active level that was frozen was a $\TDE$ level, we look on the frozen values after mechanism $\PrivateMed$ was activated on step \ref{algStep:PrivMed}. 
Let $J_{\mathsf{Z}}$ be the set of indexes for levels of the frozen values that compose $\mathsf{Z}$ (computed by \ref{alg:StitchFrozenValues} in step \ref{algStep:FrozenBase}).
Then:
\begin{align*}
    |\mathcal{F}(t) - \Output(t)| & = |\mathcal{F}(t) - \StitchFrozenVals(\tau^{\text{post}})| \\
    &= |\mathcal{F}(t) - \StitchFrozenVals(\tau^{\text{pre}} + 1)| \\
    &= |\mathcal{F}(t) - (\StitchFrozenVals(\tau^{\text{pre}} + 1) -\mathsf{Z}_j +\mathsf{Z}_j)| \\
    &= |\mathcal{F}(t) - (\StitchFrozenVals(\tau^{\text{pre}} -2^{j} + 1)  +\mathsf{Z}_j)| \\
    &\stackrel{1}{=} |\mathcal{F}(t) - (\mathsf{Z}  +\mathsf{Z}_j)| \\
    &\stackrel{}{=} |-\mathsf{Z}  + \mathcal{F}(e_j) + \mathcal{F}(t) - \mathcal{F}(e_j) -\mathsf{Z}_j| \\
    &\leq |\mathsf{Z} - \mathcal{F}(e_j)| + |(\mathcal{F}(t) - \mathcal{F}(e_j)) - \mathsf{Z}_j| \\
    &\leq	\left| \mathcal{F}(f_{\ST}) - \mathsf{Z}_{\ST}\right| + \sum_{i\in J_{\mathsf{Z}}\setminus\{\ST\}}\left| \mathsf{Z}_{i} - \left( \mathcal{F}(f_{i}) - \mathcal{F}(e_{i}) \right)\right| + |\mathsf{Z}_j - \left( \mathcal{F}(t) - \mathcal{F}(e_{j}) \right) | \nonumber\\
    &\stackrel{2}{\leq}	 \alpha_{\ST}\cdot \mathcal{F}(f_{\ST}) + \sum_{i\in J_{\mathsf{Z}}\setminus\{\ST\}}\alpha_{\TDE}\cdot \mathcal{F}(e_{i}) + \alpha_{\TDE}\cdot\mathcal{F}(e_{j}) \nonumber\\
    &\stackrel{3}{\leq}	 \Gamma\cdot \alpha_{\ST}\cdot \mathcal{F}(t) + \Gamma\cdot\alpha_{\TDE} \sum_{i\in J_{\mathsf{Z}}\setminus\{\ST\}} \mathcal{F}(t) + \Gamma\cdot\alpha_{\TDE}\cdot \mathcal{F}(t) \nonumber\\
    &\stackrel{4}{\leq} \Gamma\cdot (\alpha_{\ST} + \beta\cdot  \alpha_{\TDE})\mathcal{F}(t)\\
    &= \alpha_{\Stitch}\cdot \mathcal{F}(t)
\end{align*}
Where (1) is true for modification of levels smaller then $\ST$ since the values composing $\mathsf{Z}$ are of levels that did not change after updating $\tau$,
(2) holds in a {\em good execution} (Definition \ref{def:goodRun}),
(3) is by Lemma \ref{lem:maxPhaseProgress},
(4) is due to the fact that $|J_{\mathsf{Z}}\setminus\{\ST\}| \leq \beta -1$ Since $j\notin J_{\mathsf{Z}}$.

\paragraph{Case $j = \beta$.} For the case that the active level that was frozen was an $\ST$ level, then the output is $\mathsf{Z}_{\ST}$ which is accurate for the case of a {\em good execution} (Definition \ref{def:goodRun}). We have 
$$|\mathcal{F}(t) - \mathsf{Z}_{\ST}|\leq \alpha_{\ST}\cdot \mathcal{F}(t) \leq \alpha_{\Stitch}\cdot \mathcal{F}(t)$$
\end{proof}
The following lemma is arguing about the output accuracy on all time $t\in [m]$ (not only on output modification steps).
%%%%%%%%%%%%%%%%%%%%%%%%%%%%%%%%%%%%%%%
% Accuracy on update steps
%%%%%%%%%%%%%%%%%%%%%%%%%%%%%%%%%%%%%%%
\begin{lemma}[Output accuracy]\label{lem:outputAccuracy}
Let $t\in [m]$ be a time step such that
\begin{enumerate}
    \item Assumption \ref{ass:accurateEstimators} holds for every $t'\leq t$.
    \item $\NoCapping{=} \text{True}$ during time $t$.
\end{enumerate}
Then assuming a {\em good execution} (see Definition \ref{def:goodRun}) we have
$$
|\Output(t) - \mathcal{F}(t)| \leq \alpha \cdot \mathcal{F}(t),
$$
provided that $\alpha_{\ST} = O(\alpha)$, $\alpha_{\TDE} = O(\alpha/\log(\alpha^{-1}))$ and for all $j\in [\beta+1]\cup\{\STwrapper\}$, $\mathsf{K}_{j} = \Omega \left( \frac{1}{\eps_j}\log\left( \frac{m}{\delta^{N}}\right) \right)$.
\end{lemma}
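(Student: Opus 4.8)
\emph{Approach.} The plan is to split on whether the Step~\ref{algStep:AboveThresh} test of \texttt{RobustDE} fires at time $t$ --- equivalently, whether \PrivateMed{} is invoked at time $t$ (Step~\ref{algStep:PrivMed}). If it fires (\emph{Case~1}), then \PrivateMed{} has run at time $t$, so Corollary~\ref{cor:outputUpdateAccuracy} applies directly and gives $|\Output(t)-\mathcal{F}(t)|\le \alpha_{\Stitch}\cdot\mathcal{F}(t)$ with $\alpha_{\Stitch}=\Gamma(\alpha_{\ST}+\beta\,\alpha_{\TDE})$; it then only remains to check $\alpha_{\Stitch}\le\alpha$, which follows since $\beta=\lceil\log\PhaseSize\rceil=O(\log(1/\alpha))$, $\Gamma=\Theta(1)$, and by hypothesis $\alpha_{\ST}=O(\alpha)$, $\alpha_{\TDE}=O(\alpha/\log(1/\alpha))$, so $\alpha_{\Stitch}=O(\alpha)$ and the hidden constants can be chosen small enough.

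\emph{Case~2: the test does not fire at $t$.} Then $\tau\neq 0$ at time $t$, Step~\ref{algStep:STwrapper} did not reset $\tau$, and no $\mathsf{Z}_j$ or $\tau$ changed, so $\Output(t)=\StitchFrozenVals(\tau)=\Output(t-1)$ (well defined, since the test always fires at $t=1$, where $\mathsf{Z}_{\ST}=0$ forces Step~\ref{algStep:STwrapper} to set $\tau=0$). Let $j$ be the active level and $\mathsf{Z}$ the offset from Step~\ref{algStep:STBase}/\ref{algStep:FrozenBase} ($\mathsf{Z}=0$ when $j=\beta$). Because the test failed and, in a good execution, the threshold $\mathsf{T}_j$ and the test's Laplace noise are each within $O(\eps_j^{-1}\log(m/\delta^{N}))\le \mathsf{K}_j/10$ of their nominal values (using the assumed lower bound on $\mathsf{K}_j$), at least $4\mathsf{K}_j/10$ of the level-$j$ estimators have $\mathsf{Z}+z_j^k\in\Output(t-1)\pm\mathsf{Z}_{\ST}\cdot\StepSize(\alpha)$; by Lemma~\ref{lem:estimationErrorBound}, at least $8\mathsf{K}_j/10$ of them have $|\mathcal{F}(t)-(\mathsf{Z}+z_j^k)|\le\alpha_{\Stitch}\cdot\mathcal{F}(t)$. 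These sets intersect; fixing $k$ in the intersection and applying the triangle inequality gives
\[
|\mathcal{F}(t)-\Output(t)|=|\mathcal{F}(t)-\Output(t-1)|\le \alpha_{\Stitch}\cdot\mathcal{F}(t)+\mathsf{Z}_{\ST}\cdot\StepSize(\alpha).
\]
Finally, in a good execution $\mathsf{Z}_{\ST}$ is the strong-tracker aggregate frozen at the start $t_p$ of the current phase, so $\mathsf{Z}_{\ST}\le(1+\alpha_{\ST})\mathcal{F}(t_p)$, and by Lemma~\ref{lem:maxPhaseProgress} (as $t_p,t$ lie in the same phase) $\mathcal{F}(t_p)=\Theta(\Gamma^2)\mathcal{F}(t)$; hence $\mathsf{Z}_{\ST}\cdot\StepSize(\alpha)=\Theta(\Gamma^2)\cdot\StepSize(\alpha)\cdot\mathcal{F}(t)=O(\alpha)\cdot\mathcal{F}(t)$ since $\StepSize(\alpha)=O(\alpha)$. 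Combining, $|\mathcal{F}(t)-\Output(t)|\le(\alpha_{\Stitch}+O(\alpha))\mathcal{F}(t)\le\alpha\cdot\mathcal{F}(t)$ once the constants are set appropriately.

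\emph{Main obstacle.} Case~2 is the crux. The failed test only bounds the ``bad'' count by roughly $\mathsf{K}_j/2$ plus noise, so one needs $\mathsf{K}_j=\Omega(\eps_j^{-1}\log(m/\delta^{N}))$ with a large enough constant for the bad set to stay strictly below $6\mathsf{K}_j/10$ and hence overlap the $8\mathsf{K}_j/10$ accurate estimators by a constant fraction; and the test's slack $\mathsf{Z}_{\ST}\cdot\StepSize(\alpha)$ must be routed through the max-phase-progress bound so it contributes only $O(\alpha)\cdot\mathcal{F}(t)$. The remainder --- tracking $\Gamma$, $\beta$ and the various $O(\cdot)$'s so their sum is at most $\alpha$, and the first steps of the run and of each phase (both subsumed by Corollary~\ref{cor:outputUpdateAccuracy}) --- is routine bookkeeping.
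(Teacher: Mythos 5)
Your proposal is correct and follows essentially the same route as the paper's proof: split on whether the Step~\ref{algStep:AboveThresh} test fires, handle the modification case by Corollary~\ref{cor:outputUpdateAccuracy}, and in the non-modification case intersect the $\geq 4\mathsf{K}_j/10$ estimators close to $\Output(t-1)$ (using bounded Laplace noise in a good execution) with the $\geq 8\mathsf{K}_j/10$ accurate ones from Lemma~\ref{lem:estimationErrorBound}, then bound the slack $\mathsf{Z}_{\ST}\cdot\StepSize(\alpha)$ via Lemma~\ref{lem:maxPhaseProgress} and absorb constants into $\alpha$. The only differences are presentational (case order, and your slightly more explicit $\Theta(\Gamma^2)$ accounting for $\mathsf{Z}_{\ST}$, where the paper uses a single factor $\Gamma$), which do not change the argument.
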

\begin{proof} We prove for two cases of execution types: one is a step execution without an output-modification, and the second is an execution that generates an output-modification.

\paragraph{Case 1 (no output-modification):} If on time $t$ we do not modify the output (the condition in step \ref{algStep:AboveThresh} was not satisfied), then assuming a {\em good execution} imply a bounded noise magnitude and we have that:
$$\left| \left\{ k\in [\mathsf{K}_j] : \mathsf{Z}+z_{j}^{k} \in \Output(t-1) \pm \mathsf{Z}_{\ST}\cdot \StepSize(\alpha)  \right\} \right| \geq \frac{\mathsf{K}_{j}}{2} -  \frac{4}{\eps_j}\log\left( \frac{4m}{\delta^{N}}  \right)
\geq \frac{4\cdot \mathsf{K}_{j}}{10}$$
where the last inequality follows by asserting that 
$$
\mathsf{K}_{j} = \Omega \left( 
\frac{1}{\eps_j}\log\left( \frac{m}{\delta^{N}}\right)
\right)
= \Omega \left( \frac{1}{\eps}
\sqrt{P_{j} \cdot \log \left(\frac{1}{\delta^{\prime}} \right) }\log\left( \frac{m}{\delta^{N}}\right)
\right)
$$
So, for at least $4\mathsf{K}_{j}/10$ of the estimations $z_{j}^k$ we have that 
$|(\mathsf{Z}+z_{j}^{k}) - \Output(t-1)| \leq \mathsf{Z}_{\ST}\cdot \StepSize(\alpha)$.
On the other hand, 
by the assumption on the accuracy of the estimators (Assumption \ref{ass:accurateEstimators}) 
we have that the requirement for Lemma \ref{lem:estimationErrorBound} met, therefore for at least $8\mathsf{K}_{j}/10$ of the estimations $z_{j}^k$ we have that $|\mathcal{F}(t) - (\mathsf{Z} + z^{k}_j)| \leq \alpha_{\Stitch}\cdot \mathcal{F}(t)$
Therefore, there must exist an index $k$ that satisfies both conditions, and so:
\begin{align*}
|\mathcal{F}(t) - \Output(t-1)| &\leq |\Output(t-1) - (\mathsf{Z}+z_{j}^{k})| + |\mathcal{F}(t) - (\mathsf{Z} + z^{k}_j)|\\
&\leq \StepSize(\alpha)\cdot \mathsf{Z}_{\ST} + \alpha_{\Stitch}\cdot \mathcal{F}(t) \\
&\stackrel{1}{\leq} \Gamma\cdot \StepSize(\alpha)\cdot \mathcal{F}(t) + \alpha_{\Stitch}\cdot \mathcal{F}(t) \\
&\stackrel{2}{\leq} \alpha\cdot \mathcal{F}(t) \\
\end{align*}
where (1) is due to Lemma \ref{lem:maxPhaseProgress}
and (2) holds for $\alpha_{\Stitch}\leq\frac{1}{10}\StepSize(\alpha)$, and $\StepSize(\alpha)\leq (2\Gamma)^{-1}\alpha$. That imply $\alpha_{\Stitch} \leq \alpha/(20\Gamma)$. Since $\alpha_{\Stitch} = \Gamma\cdot (\alpha_{\ST} + \beta \alpha_{\TDE})$ it is sufficient to set $\alpha_{\ST} = O(\alpha)$ and $\alpha_{\TDE} = O(\alpha/\beta) = O(\alpha/\log(\alpha^{-1}))$ to get
we have that $|\mathcal{F}(t) - \Output(t-1)|\leq \frac{3}{4}\alpha$. 
Therefore the output is accurate for not updating the output.

\paragraph{Case 2 (an output-modification):} If on time $t$ we {\em do} modify the output (the condition in step \ref{algStep:AboveThresh} was  satisfied) then Algorithm {\bf \PrivateMed} was activated during time $t$ (on Step~\ref{algStep:PrivMed}\text{}). In that case the requirements of Corollary \ref{cor:outputUpdateAccuracy} are met and we have (for $\alpha_{\Stitch} \leq \alpha$):
$$
|\Output(t) - \mathcal{F}(t)| 
\leq \alpha_{\Stitch}\cdot \mathcal{F}(t) 
\leq \alpha \cdot \mathcal{F}(t) 
$$
\end{proof}

%%%%%%%%%%%%%%%%%%%%%%%%%%%%%%%%%%%%%%%
% Formal proofs - Calibrating P_j
%%%%%%%%%%%%%%%%%%%%%%%%%%%%%%%%%%%%%%%
\subsection{Calibrating to avoid capping}\label{sec:formalCal}
In this section we calculate the needed calibration of parameters $P_j$ of \texttt{RobustDE} in order to avoid capping before the input stream ends.
In order to avoid capping we need to calibrate for each of the estimators levels a sufficient privacy budget. That  budget is derived from the number of output modification associated with each of these levels. 
At a high level, the calculation on these numbers per level is done as follows: recall our framework operates in phases. In each phase we bound the number of output modification for each of the estimators levels $j\in [\beta +1]\cup \{\STwrapper\}$. In addition we also bound the total number of phases. And so, the total number of output modification associated with each level results by multiplication of these bounds. 
This calculation is analysed w.r.t the framework level selection management (subroutine \ref{alg:ActiveLVL} and the state of $\tau$).
The following definition captures the number of output modifications we wish to bound w.r.t an input stream for algorithm \texttt{RobustDE}:
%%%%%%%%%%%%%%%%%%%%%%%%%%%%%%%%%%%%%%%
% Capping definitions
%%%%%%%%%%%%%%%%%%%%%%%%%%%%%%%%%%%%%%%
\begin{definition}
For every level $j\in [\beta +1]\cup \{\STwrapper\}$ and every time step $t\in[m]$, let $C_j(t)$ denote the number of time steps $t'\leq t$ during which 
\begin{enumerate}
    \item Level $j$ was selected.
    \item The output is modified.
\end{enumerate}
\end{definition}

The lemma that bounds these quantities is Lemma \ref{lem:outputUpdatesBounds}, and it is the main lemma  of this section. 
A central part in that lemma is to upper bound the number of output modifications done by algorithm \texttt{RobustDE} for some stream segment. Lemma \ref{lem:consequtiveOutputUpdatesProgress} is useful for that.
Additional lemma is needed: A phase can also be terminated before its predefined length (i.e. \PhaseSize) by a phase reset. And so, in the analysis we focus on each of the stream segments between such resets. The analysis of these segments requires a bound on their flip number. This is enabled via Lemma \ref{lem:substreamFlipNumber} that bounds the flip number of a sub stream:

\begin{lemma}[flip number of sub stream] \label{lem:substreamFlipNumber}
Let $\SSS$ be a stream with a $(\alpha,m)$-flip number denoted by $\lambda_{\SSS}$.
Let $\TTT = \{t_i\}_{i\in [\lambda_{\SSS}]}$, $t_i < t_{i+1}$, $t_i \in [m]$ be a set of time steps s.t. for all $i\in [\lambda_{\SSS}]$,
$|\mathcal{F}(t_{i}) - \mathcal{F}(t_{i+1})| \geq \alpha \cdot \mathcal{F}(t_{i})$.
Let $\PPP$ be a sub-stream of $\SSS$ from time $r_1$ to time $r_2>r_1$, $r_1, r_2 \in [t]$. 
Let $\lambda^{\prime} = |\{j\in [\lambda_{\SSS}] : \{t_j\}_{j\in [\lambda_{\SSS}]}, r_1\leq t_j<r_2\}|$ and let $\lambda_{\PPP}$ be the $(\alpha, m)$-flip number of $\PPP$. Then:
$$
\lambda^{\prime} \leq \lambda_{\PPP} \leq \lambda^{\prime} + 2
$$
\end{lemma}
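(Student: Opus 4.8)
The plan is to prove the two inequalities separately, thinking of the flip number of a stream as the length of a longest ``witnessing'' sequence of time steps on which $\mathcal F$ changes by a $(1+\alpha)$ factor between consecutive members.

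For the lower bound $\lambda' \le \lambda_{\PPP}$: the set $\{t_j : r_1 \le t_j < r_2\}$ is, by hypothesis, a sequence of time steps inside the window $[r_1, r_2)$ on which $\mathcal F$ changes by at least a $(1+\alpha)$ factor between consecutive elements. When restricted to the sub-stream $\PPP$ (whose updates are those of $\SSS$ from $r_1$ to $r_2$), these time steps witness exactly $\lambda'$ flips of $\mathcal F$ relative to the value of $\mathcal F$ at the start of $\PPP$ — here one uses that $\mathcal F$ of the sub-stream at a relative time equals $\mathcal F$ of $\SSS$ restricted to that window, which is the content of Definition~\ref{def:freqVec} for $F_2$ but should be stated as a generic hypothesis on how $\mathcal F$ behaves on windows. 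Hence by definition of the flip number of $\PPP$ we get $\lambda' \le \lambda_{\PPP}$.

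For the upper bound $\lambda_{\PPP} \le \lambda' + 2$: suppose toward a contradiction that $\PPP$ admits a witnessing sequence of length $\lambda_{\PPP} \ge \lambda' + 3$, i.e.\ time steps $r_1 \le \sigma_1 < \sigma_2 < \dots < \sigma_{\lambda_{\PPP}} \le r_2$ with $|\mathcal F(\PPP_{\sigma_{i+1}}) - \mathcal F(\PPP_{\sigma_i})| \ge \alpha \cdot \mathcal F(\PPP_{\sigma_i})$ for each $i$. Each such flip of $\PPP$ corresponds, after translating back to absolute times in $\SSS$, to the value of $\mathcal F$ restricted to the window changing by a $(1+\alpha)$ factor; but $\mathcal F(\SSS_t) = \mathcal F(\SSS_{r_1}) + (\text{window contribution})$ only in special cases, so instead I would argue directly that any $\lambda_{\PPP} - 2$ of the interior steps $\sigma_2, \dots, \sigma_{\lambda_{\PPP}-1}$ also witness flips of the \emph{full} stream $\mathcal F(\SSS_{\cdot})$ (up to the additive value $\mathcal F(\SSS_{r_1})$ being small relative to the window, or by comparing ratios), contradicting that $\TTT$ was a maximal such sequence of size $\lambda_{\SSS}$ and that only $\lambda'$ of its members lie in $[r_1, r_2)$. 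The two ``lost'' endpoints $\sigma_1, \sigma_{\lambda_{\PPP}}$ are the slack that produces the ``$+2$''.

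The main obstacle is the relationship between $\mathcal F$ on the sub-stream $\PPP$ and $\mathcal F$ on the corresponding window of $\SSS$: in general $\mathcal F(\SSS_{r_1 \dots t}) \neq \mathcal F(\SSS_t) - \mathcal F(\SSS_{r_1})$, and for $F_2$ the correct statement involves the frequency vectors $u^{(t)} = u^{(r_1)} + u^{(r_1,t)}$, so a flip of $\|u^{(r_1,t)}\|_2^2$ need not be a flip of $\|u^{(t)}\|_2^2$. I expect the lemma is stated and used under an implicit assumption (e.g.\ that we are within a regime controlled by the condition \eqref{req:F2DEratio_body}, or that $\mathcal F$ of the window dominates $\mathcal F(\SSS_{r_1})$) that makes the ratio of window-values track the ratio of full-stream values up to the two boundary terms. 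I would make that hypothesis explicit at the start of the proof and then the counting argument above goes through cleanly; pinning down exactly which hypothesis is both true in the $F_2$ application and strong enough here is the delicate point.
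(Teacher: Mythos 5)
There is a genuine gap, and it stems from how you read the flip number of the sub-stream. You interpret $\lambda_{\PPP}$ as the flip number of $\PPP$ run as a standalone stream (so $\mathcal{F}$ is applied to prefixes of $\PPP$ started from an empty state), which forces you to relate $\mathcal{F}(\mathcal{S}^{r_1}_{t})$ to $\mathcal{F}(\mathcal{S}_t)-\mathcal{F}(\mathcal{S}_{r_1})$; you correctly note this fails in general (e.g.\ for $F_2$), and you then leave the proof resting on an unspecified additional hypothesis, explicitly saying that pinning it down is the delicate point. That is not a completed proof. The paper needs no such hypothesis: in its proof of Lemma~\ref{lem:substreamFlipNumber}, and in the places the lemma is used (Lemma~\ref{lem:outputUpdatesBounds} and Lemma~\ref{lem:lambdaCalibration}), the flip number of the segment counts $\alpha$-jumps of the \emph{same} quantity $\mathcal{F}(t)=\mathcal{F}(\mathcal{S}_t)$ --- the value on prefixes of the full stream --- restricted to time steps $t\in[r_1,r_2)$. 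With that reading there is nothing to translate between ``window'' and ``prefix'' values: the lower bound is immediate because $\TTT\cap[r_1,r_2)$ is itself a witnessing set of $\lambda'$ jumps inside the window (consecutive elements there are consecutive in $\TTT$), and the upper bound is an exchange argument: any witnessing chain of jumps inside the window can be spliced together with $\TTT\setminus[r_1,r_2)$ to witness jumps of $\SSS$, so by maximality of $\lambda_{\SSS}$ the window chain can exceed $\lambda'$ only by the at most two boundary jumps (one from $r_1$ to the first element of $\TTT$ in the window, one from the last such element to $r_2-1$), giving $\lambda_{\PPP}\le\lambda'+2$.

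Your upper-bound sketch does contain the right exchange/maximality idea, and your ``two lost endpoints give the $+2$'' matches the paper's boundary bookkeeping; but as written both directions are conditional on the unresolved standalone-versus-window identification, so the argument does not go through as stated. (Your complaint is legitimate as a criticism of the lemma's phrasing --- the definition of the flip number of a sub-stream is ambiguous --- but the fix is to adopt the window-of-the-full-stream semantics the paper actually uses, not to graft an extra analytic hypothesis onto $\mathcal{F}$, which would also not match how the lemma is applied downstream.)
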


\begin{proof}
Fix a stream $\SSS$ with a $(\alpha, m)$-flip number denoted as $\lambda_{\SSS}$, and fix some set of time steps 
$\TTT = \{t_i\}_{i\in [\lambda_{\SSS}]}$, $t_i < t_{i+1}$, $t_i \in [m]$ s.t. for all $i\in [\lambda_{\SSS}]$, $|\mathcal{F}(t_{i}) - \mathcal{F}(t_{i+1})| \geq \alpha \cdot \mathcal{F}(t_{i})$ with respect to $\SSS$.
We look on time steps from $\TTT$ that reside in $[r_1, r_2)$, that is $\TTT \cap [r_1, r_2)$.
Then, any set of times in $t_j\in [r_1,r_2)$, $t_j < t_{j+1}$ with $|\mathcal{F}(t_{i}) - \mathcal{F}(t_{i+1})| \geq \alpha \cdot \mathcal{F}(t_{i})$ is at size at most $|\TTT \cap [r_1, r_2)|$. Otherwise it could be used to construct along with $\TTT \setminus [r_1, r_2)$ a set of $\alpha$-jumps times in $\SSS$ larger then $\lambda_{\SSS}$ contradicting the maximality of the flip number of $\SSS$ being $\lambda_{\SSS}$. 
That is:
$$
\lambda^{\prime} \leq \lambda_{\PPP} \text{.}
$$
Now, denote the smallest index by $f = \argmin \{t_j \in \TTT \cap [r_1, r_2)\}$ (for {\em first}) and the largest index by $l = \argmin \{t_j \in \TTT \cap [r_1, r_2)\}$ (for {\em last}). Then we can have at most additional two $\alpha$-jumps. One from time $r_1$ to time $f$ and the second from time $l$ to time $r_2-1$ (regardless of the choice of $\TTT$). That is:
$$
\lambda_{\PPP} \leq \lambda^{\prime} +2 \text{.}
$$
\end{proof}

\begin{lemma}[Function value progress between output-modifications]\label{lem:consequtiveOutputUpdatesProgress}
Let $t_1< t_2\in [m]$ be consecutive times in which the output is modified (i.e., the output is modified in each of these two iterations, and is not modified between them), where 
\begin{enumerate}
    \item Assumption \ref{ass:accurateEstimators} holds for every $t'\leq t_2$.
    \item $\NoCapping{=} \text{True}$ during time $t_2$.
    
    \item $\tau\neq0$ during time $t_2$.
\end{enumerate}
Then, assuming a {\em good execution} (see Definition \ref{def:goodRun}) we have:
\begin{itemize}
    \item $|\mathcal{F}(t_2) - \mathcal{F}(t_1)| \geq \StepSize(\alpha)\cdot \mathsf{Z}_{\ST} - 2\cdot \alpha_{\Stitch}\cdot \max\{\mathcal{F}(t_1), \mathcal{F}(t_2)\}$
    \item $|\mathcal{F}(t_2) - \mathcal{F}(t_1)| \leq \StepSize(\alpha)\cdot \mathsf{Z}_{\ST} +  (2\cdot \alpha_{\Stitch} + \MaxUpdateSize(\alpha))\cdot \max\{\mathcal{F}(t_1), \mathcal{F}(t_2)\}$
\end{itemize}
\end{lemma}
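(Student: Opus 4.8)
The plan is to analyze the two conditions that govern an output modification at time $t_2$, namely that the test in Step~\ref{algStep:AboveThresh} triggers at $t_2$ but did \emph{not} trigger at $t_2-1$, and combine each with the accuracy estimates already established in Lemma~\ref{lem:estimationErrorBound} (which applies since Assumption~\ref{ass:accurateEstimators} holds for all $t'\le t_2$ and $\NoCapping$ holds). Let $j$ be the active level at time $t_2$. First I would note that the quantity $\mathsf{Z}_{\ST}\cdot\StepSize(\alpha)$ is the width of the ``tolerance window'' around $\Output(t_2-1)=\Output(t_1)$ used in the test. Since the output was modified at $t_1$ and not between $t_1$ and $t_2$, we have $\Output(t_2-1)=\Output(t_1)$, and by Corollary~\ref{cor:outputUpdateAccuracy} this value satisfies $|\Output(t_1)-\mathcal{F}(t_1)|\le\alpha_{\Stitch}\cdot\mathcal{F}(t_1)$.

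For the lower bound on $|\mathcal{F}(t_2)-\mathcal{F}(t_1)|$: at time $t_2$ the test triggered, so (using the good-execution bound on the Laplace noise and the setting of $\mathsf{K}_j$) more than, say, $4\mathsf{K}_j/10$ of the suggestions $\mathsf{Z}+z_j^k$ fall \emph{outside} $\Output(t_1)\pm\mathsf{Z}_{\ST}\StepSize(\alpha)$. On the other hand, by Lemma~\ref{lem:estimationErrorBound} at least $8\mathsf{K}_j/10$ of them satisfy $|\mathcal{F}(t_2)-(\mathsf{Z}+z_j^k)|\le\alpha_{\Stitch}\cdot\mathcal{F}(t_2)$. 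Hence there is an index $k$ in both sets, giving $|\mathcal{F}(t_2)-\Output(t_1)|\ge\mathsf{Z}_{\ST}\StepSize(\alpha)-\alpha_{\Stitch}\mathcal{F}(t_2)$. Combining with $|\Output(t_1)-\mathcal{F}(t_1)|\le\alpha_{\Stitch}\mathcal{F}(t_1)$ by the triangle inequality yields $|\mathcal{F}(t_2)-\mathcal{F}(t_1)|\ge\StepSize(\alpha)\mathsf{Z}_{\ST}-2\alpha_{\Stitch}\max\{\mathcal{F}(t_1),\mathcal{F}(t_2)\}$, as claimed. (One should double-check whether a copy of Lemma~\ref{lem:estimationErrorBound} is needed at both $t_1$ and $t_2$ or only $t_2$; I expect only $t_2$ suffices once we use the accuracy of $\Output(t_1)$ from Corollary~\ref{cor:outputUpdateAccuracy}.)

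For the upper bound: the test did \emph{not} trigger at time $t_2-1$, so at least $4\mathsf{K}_{j'}/10$ of the suggestions at that step lie \emph{inside} $\Output(t_1)\pm\mathsf{Z}_{\ST}\StepSize(\alpha)$ (here $j'$ is the active level at $t_2-1$, which may differ from $j$; by Lemma~\ref{lem:maxPhaseProgress} the values $\mathsf{Z}_{\ST}$ and $\mathcal{F}$ at $t_2-1$ and $t_2$ are within a constant factor, so this is harmless up to absorbing a constant). Again intersecting with the $8\mathsf{K}_{j'}/10$ accurate suggestions from Lemma~\ref{lem:estimationErrorBound} at time $t_2-1$ gives $|\mathcal{F}(t_2-1)-\Output(t_1)|\le\mathsf{Z}_{\ST}\StepSize(\alpha)+\alpha_{\Stitch}\mathcal{F}(t_2-1)$. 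Then I would move from $t_2-1$ to $t_2$ using Condition~2 of the framework (a single update changes $\mathcal{F}$ by at most a $(1\pm\MaxUpdateSize(\alpha))$ factor, so $|\mathcal{F}(t_2)-\mathcal{F}(t_2-1)|\le\MaxUpdateSize(\alpha)\mathcal{F}(t_2-1)$), and add $|\Output(t_1)-\mathcal{F}(t_1)|\le\alpha_{\Stitch}\mathcal{F}(t_1)$, collecting everything into $\StepSize(\alpha)\mathsf{Z}_{\ST}+(2\alpha_{\Stitch}+\MaxUpdateSize(\alpha))\max\{\mathcal{F}(t_1),\mathcal{F}(t_2)\}$.

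The main obstacle I anticipate is bookkeeping the level changes between $t_2-1$ and $t_2$: the active level can change because $\tau$ was incremented at $t_1$ (or at an intervening step that modified the output --- but there are none by hypothesis), or because a phase boundary was crossed. One must verify that the ``did not trigger at $t_2-1$'' argument is applied at the correct level with the correct $\mathsf{Z}$ offset, and that $\mathsf{Z}_{\ST}$ is the same across these adjacent steps (it is, since $\mathsf{Z}_{\ST}$ only changes at the start of a phase, which requires an output modification at level $\beta$ --- ruled out between $t_1$ and $t_2$, and the possible phase reset in Step~\ref{algStep:STwrapper} is controlled by the hypothesis $\tau\neq 0$ at $t_2$ combined with Lemma~\ref{lem:maxPhaseProgress}). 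Everything else is routine triangle-inequality arithmetic, using that $\Gamma$, $\alpha_{\Stitch}$, $\StepSize(\alpha)$, $\MaxUpdateSize(\alpha)$ are all $O(\alpha)$ or constants so no error blows up.
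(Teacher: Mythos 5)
Your proposal is correct and follows essentially the same route as the paper's proof: at $t_2$ intersect the ``test triggered'' set (at least $4\mathsf{K}_j/10$ suggestions outside the window $\Output(t_1)\pm\mathsf{Z}_{\ST}\StepSize(\alpha)$) with the $8\mathsf{K}_j/10$ accurate suggestions from Lemma~\ref{lem:estimationErrorBound} and combine with Corollary~\ref{cor:outputUpdateAccuracy} at $t_1$ for the lower bound, and at $t_2-1$ intersect the ``did not trigger'' set with the accurate suggestions, then bridge to $t_2$ via the bounded-update condition for the upper bound. The only detail the paper adds that you leave implicit is that calibrating $\StepSize(\alpha)>2\cdot\MaxUpdateSize(\alpha)$ guarantees $t_2-1\neq t_1$, so the non-triggering step $t_2-1$ indeed exists; otherwise your bookkeeping concerns (same $\mathsf{Z}_{\ST}$ across the adjacent steps, level taken at $t_2-1$ for the upper bound) are resolved exactly as you anticipate.
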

\begin{proof} 
Upper bound and lower bound of the function value between such times is analysed separately. First we analyse the lower bound of such progress and then we analyse the upper bound of it.

\paragraph{Minimum progress.} We look on the time $t_2$. Let $j$ be the level of estimators used in time $t_2$. 
On that time we modify the output, which means that during that time the condition on step \ref{algStep:AboveThresh} was satisfied. In such case, assuming a good execution we have bounded noise. That means that by asserting that $
\mathsf{K}_{j} = \Omega \left( \frac{1}{\eps}
\sqrt{P_{j} \cdot \log \left(\frac{1}{\delta^{\prime}} \right) }\log\left( \frac{m}{\delta^{N}}\right)
\right)\text{,}
$
at least $40\%$ of the estimations of level $j$ admit $|(\mathsf{Z}+z_{j}^{k}) - \Output(t_2-1)| \geq \mathsf{Z}_{\ST}\cdot \StepSize(\alpha)$. Since Assumption \ref{ass:accurateEstimators} holds, the requirements for Lemma \ref{lem:estimationErrorBound} are met and we have for at least $80\%$ of these estimations: $|\mathcal{F}(t_2) - (\mathsf{Z} + z^{k}_j) | \leq \alpha_{\Stitch}\cdot \mathcal{F}(t_2)$. Thus at least one index $k$ admit both inequalities and therefore:
\begin{align}
    |\mathcal{F}(t_2) - \Output(t_2-1)| \geq \StepSize(\alpha)\cdot \mathsf{Z}_{\ST} - \alpha_{\Stitch}\cdot \mathcal{F}(t_2)\label{eq:progressOnUpdate}
\end{align}
Now, observe\footnote{Calibrating $\StepSize(\alpha) > 2\cdot \MaxUpdateSize(\alpha)$ ensures that for any two consecutive time s.t. the output is modified there must be at least one time step between them.} that since there was no output modification between times $t_1, t_2$ then $\Output(t_2 - 1) = \Output(t_1)$. 
Applying Lemma \ref{cor:outputUpdateAccuracy} time $t_1$ (and setting $\Output(t_1)\leftarrow \Output(t_2-1)$) we get:
\begin{align}
|\Output(t_2-1) - \mathcal{F}(t_1)| \leq \alpha_{\Stitch}\cdot \mathcal{F}(t_1)    \label{eq:outputAccuracyOnUpdate}
\end{align}
Combining the established equations \ref{eq:progressOnUpdate}, \ref{eq:outputAccuracyOnUpdate} we get:
\begin{align*}
|\mathcal{F}(t_2) - \mathcal{F}(t_1)| &\geq \StepSize(\alpha)\cdot \mathsf{Z}_{\ST} - (\alpha_{\Stitch}\cdot \mathcal{F}(t_1) + \alpha_{\Stitch}\cdot \mathcal{F}(t_2)) \\
&\geq \StepSize(\alpha)\cdot \mathsf{Z}_{\ST} - 2\cdot \alpha_{\Stitch}\cdot \max\{\mathcal{F}(t_1), \mathcal{F}(t_2)\} \text{.}
\end{align*}

\paragraph{Maximum progress.} We now focus on times $t_2, t_2-1$. Let $j$ be the level of estimator that is used on time $t_2-1$.
Since on time $t_2-1$ we did not modify the output, then the condition on step \ref{algStep:AboveThresh} did not trigger. That means that for at least $40\%$ of the estimations $z^{k}_j$ of level $j$ the following holds: $| (\mathsf{Z}+z^{k}_j) - \Output(t_2-1) | \leq \mathsf{Z}_{\ST} \cdot \StepSize(\alpha) $. Since in addition the output did not change between $t_1$ to $t_2-1$ (thus $\Output(t_1) = \Output(t_2-1)$) and by applying Corollary \ref{cor:outputUpdateAccuracy} on time $t_1$ we have that the output on time $t_1$ was $\alpha_{\Stitch}$-accurate. And so we get that for at least (same) $40\%$ estimations that are used on time $t_2-1$ the following holds:
\begin{align}
    \left| (\mathsf{Z}+z^{k}_j) - \mathcal{F}(t_1) \right| \leq \mathsf{Z}_{\ST} \cdot \StepSize(\alpha) + \alpha_{\Stitch}\cdot \mathcal{F}(t_1) \label{eq:t2-1Tot_1Bound}
\end{align}
Now, applying on time $t_2 -1$ Lemma \ref{lem:estimationErrorBound} (since assumption \ref{ass:accurateEstimators} holds) we get that for at least $80\%$ of the estimations $z^{k}_j$ of level $j$ the following holds: $| (\mathsf{Z} + z^{k}_j) - \mathcal{F}(t_2 -1) | \leq \alpha_{\Stitch} \cdot \mathcal{F}(t_2 -1)$.
Recalling the maximum update size assumption (see Condition \ref{con:streamMaxUpdateSize}) we get a bound on the progress of the value of the function $\mathcal{F}$ between the adjacent times $t_2, t_2-1$: $| \mathcal{F}(t_2) - \mathcal{F}(t_2-1) | \leq \MaxUpdateSize(\alpha) \cdot \mathcal{F}(t_2)$.
And so, for at least $80\%$ of the estimation $z^{k}_j$ used on time $t_2-1$ the following holds:
\begin{align}
    \left| (\mathsf{Z} + z^{k}_j) - \mathcal{F}(t_2) \right| 
    \leq 
    \alpha_{\Stitch} \cdot \mathcal{F}(t_2 -1) + \MaxUpdateSize(\alpha)\cdot \mathcal{F}(t_2) \label{eq:t_2-1Tot_2Bound}
\end{align}
Equations \ref{eq:t2-1Tot_1Bound}, \ref{eq:t_2-1Tot_2Bound} hold for $40\%$ and $80\%$ of the estimations $z^{k}_j$ of time $t_2-1$ respectively. And so, for at least one of these estimations both equations hold and we get:
\begin{align*}
|\mathcal{F}(t_2) - \mathcal{F}(t_1)| &\leq \StepSize(\alpha)\cdot \mathsf{Z}_{\ST} + \alpha_{\Stitch}\cdot(\mathcal{F}(t_1) +  \mathcal{F}(t_2-1)) + \MaxUpdateSize(\alpha)\cdot \mathcal{F}(t_2) \\
&\leq \StepSize(\alpha)\cdot \mathsf{Z}_{\ST} + (2\cdot \alpha_{\Stitch}+\MaxUpdateSize(\alpha))\cdot \max\{\mathcal{F}(t_1), \mathcal{F}(t_2)\}
\end{align*}
\end{proof}
The following lemma is using Lemmas  \ref{lem:substreamFlipNumber},  \ref{lem:consequtiveOutputUpdatesProgress} to bound the total number of output modification $C_j$ for each estimators levels $j\in [\beta+1]\cup \{\STwrapper\}$.
\begin{remark}
Recall that once $\NoCapping{=} \text{False}$, then the output never changes. Therefore, if during some time $\hat{t}$ we have that $\NoCapping{=} \text{False}$, then $C_j(\hat{t})=C_j(\hat{t}+1)$.
\end{remark}

%%%%%%%%%%%%%%%%%%%%%%%%%%%%%%%%%%%%%%%
% output modifications per level
%%%%%%%%%%%%%%%%%%%%%%%%%%%%%%%%%%%%%%%
\begin{lemma}[Output modifications of each level]\label{lem:outputUpdatesBounds} 
Let $\SSS$ be the input stream of length $m$ for algorithm \ref{alg:ROEF} with a flip number $\lambda_{\alpha^{\prime}}(\SSS)$ and let $t\in [m]$ be a time step such that Assumption \ref{ass:accurateEstimators} holds for every $t'\leq t$.  Then, assuming a {\em good execution} (see Definition \ref{def:goodRun}), for every level $j\in [\beta+1]\cup \{\STwrapper\}$ we have
$$
C_j(t) \leq O\left(\frac{\lambda_{\alpha^{\prime}}(\SSS)}{2^j}\right),
$$
where $\alpha^{\prime} = (1/2)\cdot \StepSize(\alpha) = O(\alpha)$.
\end{lemma}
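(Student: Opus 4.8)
The plan is to bound $C_j(t)$ --- the number of output modifications charged to level $j$ --- by combining two ingredients: (i) a bound on the number of output modifications inside a single ``segment'' of the stream (a maximal run during which no phase reset occurs), obtained via Lemma~\ref{lem:consequtiveOutputUpdatesProgress}; and (ii) the level-selection discipline encoded in subroutine \texttt{ActiveLVL} and the counter $\tau$, which dictates how often each level is chosen within a phase. The key structural fact is that within a phase of length $\PhaseSize = \Theta(1/\alpha)$, level $j$ is selected at most $O(2^{-j}/\alpha) = O(\PhaseSize/2^j)$ times (this mirrors the Woodruff--Zhou binary-counter schedule: level $j$ is activated exactly when the $j$th bit of $\tau$ flips, which happens $\Theta(\PhaseSize/2^j)$ times per phase), and there are $O(\alpha\lambda)$ phases in total (each phase, by the max-phase-progress Lemma~\ref{lem:maxPhaseProgress}, spans a region where $\mathcal{F}$ changes only by a constant factor, hence consumes $\Omega(1/\alpha)$ flips of the function value, and there are at most $\lambda_{\alpha'}(\SSS)$ such flips). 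Multiplying, $C_j(t) = O(\alpha\lambda) \cdot O(2^{-j}/\alpha) = O(\lambda/2^j)$, which is exactly the claimed bound.

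Concretely, I would proceed as follows. First, establish that every time the output is modified at level $j$, either $\tau = 0$ (a new phase begins) or the \texttt{AboveThreshold}-style test at Step~\ref{algStep:AboveThresh} fired, and in the latter case Lemma~\ref{lem:consequtiveOutputUpdatesProgress} tells us that between two consecutive output modifications $t_1 < t_2$ inside the same phase, $|\mathcal{F}(t_2) - \mathcal{F}(t_1)| \geq \StepSize(\alpha)\cdot \mathsf{Z}_{\ST} - 2\alpha_{\Stitch}\max\{\mathcal{F}(t_1),\mathcal{F}(t_2)\} = \Omega(\alpha)\cdot\mathcal{F}(t_1)$ (using $\mathsf{Z}_{\ST} = \Theta(\mathcal{F}(t_1))$ by good execution and $\alpha_{\Stitch} \ll \StepSize(\alpha)$). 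Hence each output modification inside a segment between phase resets corresponds to a genuine $\Omega(\alpha)$-multiplicative jump in $\mathcal{F}$, so the number of output modifications in such a segment is at most its flip number (up to a small additive slack), and by Lemma~\ref{lem:substreamFlipNumber} the flip numbers of the disjoint segments sum to at most $\lambda_{\alpha'}(\SSS) + O(\#\text{segments})$. Second, count phases: each completed phase forces $\tau$ to cycle through $\Theta(\PhaseSize) = \Theta(1/\alpha)$ values with a constant-factor change in $\mathcal{F}$ guaranteed by Lemma~\ref{lem:maxPhaseProgress}, so a phase ``uses up'' $\Omega(1/\alpha)$ of the flip budget; a phase cut short by a reset at Step~\ref{algStep:STwrapper} is triggered exactly by a constant-factor change, and the number of such resets is likewise $O(\alpha\lambda)$. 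So the total number of phases (and phase-fragments) is $O(\alpha\lambda)$. Third, within each phase the \texttt{ActiveLVL} schedule selects level $j$ at most $O(\PhaseSize/2^j)$ times, hence modifies the output via level $j$ at most that many times per phase. Combining: $C_j(t) \leq O(\alpha\lambda)\cdot O(\PhaseSize / 2^j) + (\text{additive slack from Lemma~\ref{lem:substreamFlipNumber}}) = O(\lambda_{\alpha'}(\SSS)/2^j)$, absorbing the slack since the number of segments is itself $O(\alpha\lambda) \leq O(\lambda/2^j)$ for every $j \leq \beta = O(\log(1/\alpha))$.

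The main obstacle I expect is the careful bookkeeping of the counter $\tau$ across the three ways a phase can transition --- a normal inner-phase increment (Step~\ref{algStep:tauUpdateInnerPhase}), a phase start (Step~\ref{algStep:tauUpdateStartPhase}), and a phase end (Step~\ref{algStep:tauUpdateEndPhase}) --- together with the phase-reset at Step~\ref{algStep:STwrapper}; one must verify that within any maximal run with $\tau \neq 0$ the binary-counter structure genuinely limits level-$j$ activations to $O(\PhaseSize/2^j)$, and that the ``charging'' of each output modification to a distinct $\Omega(\alpha)$-jump of $\mathcal{F}$ is consistent across phase boundaries (at a phase boundary the output modification is charged to the phase count rather than to a flip, so no double counting occurs). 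A secondary subtlety is handling the $\STwrapper$ level separately --- there $P_{\STwrapper} = P_\beta = O(\alpha\lambda)$ and $C_{\STwrapper}(t)$ counts phase resets directly, which is bounded by the number of constant-factor changes of $\mathcal{F}$, i.e.\ $O(\alpha\lambda) = O(\lambda/2^\beta)$, matching the statement. I would also rely on the good-execution assumption throughout to control the Laplace noise so that the threshold tests behave as their noiseless idealizations.
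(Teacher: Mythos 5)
Your proposal is correct and follows essentially the same route as the paper's proof: it charges each output modification between resets to an $\Omega(\alpha)$-multiplicative jump of $\mathcal{F}$ via Lemma~\ref{lem:consequtiveOutputUpdatesProgress} (with the constant-factor ratio to $\mathsf{Z}_{\ST}$ from Lemma~\ref{lem:maxPhaseProgress}), bounds the number of phase resets and hence phases by $O(\alpha\lambda)$ using Lemma~\ref{lem:substreamFlipNumber} to sum segment flip numbers, and then uses the binary-counter structure of $\tau$ in \texttt{ActiveLVL} to limit level-$j$ activations to $O(\PhaseSize/2^j)$ per phase, multiplying the two bounds. This matches the paper's decomposition and bookkeeping, including the separate treatment of the $\STwrapper$/ST levels.
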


\begin{proof} We bound the number of output modifications for each level by bounding the number of phases and then multiplying it with the number of output modifications of each level within a phase. The later is done in the last part of the proof while bounding the number of phases is the main part of the proof.

\paragraph{Bounding the number of phases.} Whenever a phase starts, the previous phase is terminated. We elaborate on the two cases of phase termination and count them separately.
A phase starts whenever an ST level estimators  (i.e.\ $\beta$) are selected in \ref{alg:ActiveLVL}. That happens whenever $\tau[\beta] = 0$ which happens in two cases:
\begin{enumerate}[leftmargin=40pt]
    \item[({\rm C1}):]\quad A {\em phase end}: $\tau\neq 0, \tau[\beta]=0$. When Step \ref{algStep:tauUpdateEndPhase} was executed on previous time step.
    \item[({\rm C2}):]\quad A {\em phase reset}: $\tau=0$. When condition in Step \ref{algStep:STwrapper} is True.
\end{enumerate}
And so in (C1) previous phase reached its end while in (C2) previous phase is terminated before its ending due to a phase reset.

\paragraph{Number of phase resets.} When the condition in Step \ref{algStep:STwrapper} is True it holds that the value of the target function $\mathcal{F}$ has changed by a constant multiplicative factor $\Gamma$ ($\Gamma \geq 2$) compared to what it was in the beginning of the terminated phase. 
By the assumption of the flip number of $\SSS$, this can happen at most $O(\alpha\cdot \lambda_{\alpha', m}(\SSS) )$ times.
That is, the number of phase resets is bounded by:
\begin{align}
O(\alpha\cdot \lambda_{\alpha', m}(\SSS) ) \label{eq:lvlOutputMod_eq1}
\end{align}

\paragraph{Number of output modifications between resets.} We bound the number of output modifications between two consecutive times where a phase reset was executed (C2). Denote two such consecutive times where $\tau=0$ by $r_{i}<r_{i+1}$ and let $\SSS_i$ be the segment of $\SSS$ for the times $[r_{i}, r_{i+1})$ with an $\alpha^{\prime}$-flip number $\lambda_{\alpha^{\prime}}(\SSS_i)$. 
We bound the number of output modifications in $[r_{i}, r_{i+1})$ by looking at two consecutive time steps where the output is modified $t_1 < t_2$, s.t.\ $r_{i}\leq t_1 < t_2 < r_{i+1}$. That is, the output is modified in times $t_1,t_2$ and is not modified between them. Then we have
\begin{align*}
|\mathcal{F}(t_2) - \mathcal{F}(t_1)| &\stackrel{1}{\geq} 
\StepSize(\alpha)\cdot \mathsf{Z}_{\ST} - 2\cdot \alpha_{\Stitch}\cdot \max\{\mathcal{F}(t_1), \mathcal{F}(t_2)\}\\
&\stackrel{2}{\geq} \StepSize(\alpha)\cdot \frac{1-\alpha_{\ST}}{\Gamma}\cdot\max\{\mathcal{F}(t_1), \mathcal{F}(t_2)\}  - 2\cdot \alpha_{\Stitch}\cdot \max\{\mathcal{F}(t_1), \mathcal{F}(t_2)\} \\
&\stackrel{3}{\geq} (1/2) \cdot \StepSize(\alpha)\cdot \max\{\mathcal{F}(t_1), \mathcal{F}(t_2)\}
\end{align*}
(1) is by Lemma~\ref{lem:consequtiveOutputUpdatesProgress}, 
(2) holds due to the ratio checked in step \ref{algStep:STwrapper} thus the ratio holds for any time of that phase
(3) holds whenever $\alpha_{\ST} \leq 1/3$ and $\alpha_{\Stitch}\leq  (1/12\Gamma)\cdot \StepSize(\alpha)$.
That is, in every time of such output modification the true value of the target function is changed by a multiplicative factor of at least $(1\pm\alpha')$. 
Thus, for every segment $\SSS_i$ algorithm \texttt{RobustDE} can have at most $\lambda_{\alpha^{\prime}}(\SSS_{i})$ such output modifications. Since in every such segment we have a single phase reset, and a single additional output modification that results from it, we have:
\begin{align}
C(\SSS_i) \leq \lambda_{\alpha^{\prime}}(\SSS_{i}) + 1 \label{eq:lvlOutputMod_eq2}
\end{align}

\paragraph{Output modifications in a phase.} We now show that by algorithm \texttt{RobustDE} management of phases start/ end time, a phase that ends without a reset termination (that is in case (C1)), has $O(\PhaseSize)$ number of output modifications (phases of case (C2) are shorter). 
That management is done on Steps \ref{algStep:tauUpdateStartPhase}, \ref{algStep:tauUpdateInnerPhase}, \ref{algStep:tauUpdateEndPhase} according to $\tau$ which indicates the number of output modifications in a phase.
We now elaborate on that management for case (C1):
\begin{enumerate}
    \item Step \ref{algStep:tauUpdateStartPhase} (Stating a new phase) Setting \ST\text{} bit in $\tau$ to $1$ to indicate a new value for $\mathsf{Z}_{\ST}$. This also set all lower bits of $\tau$ to $0$ which indicated that there are no frozen value for levels $j<\beta$.
    \item Step \ref{algStep:tauUpdateInnerPhase} (Inner phase step) Increment the value of $\tau$ by $+1$ to indicate additional step of the current phase.
    \item Step \ref{algStep:tauUpdateEndPhase} (Ending phase) Setting the \ST\text{} bit of $\tau$ to $0$ to indicate that the phase has ended and next estimator level used will be \ST\text{}.
\end{enumerate}
That is, the counting cycle of a phase is managed on the lower bits ($[0,\beta)$) of $\tau$: These bits are set to zero on the beginning of the phase. Then for each output modification $\tau$ is incremented by $1$. The cycle ends when the value of these bits equals \PhaseSize.
Accounting the output modification done on a phase  start, the number of output modifications in a phase that ends in case (C1) is $\PhaseSize + 1$.

\paragraph{Total number of phases.} For $\kappa$ number of phase resets executed in times $r_0<r_1\dots<r_{\kappa}$ we have $\kappa + 1$ sub-streams of $\SSS$ corresponding to times $[r_i,r_{i+1})$ denoted by $\SSS_i$ for $i\in [\kappa +1]$. Denote by $\phi, \phi_i$ the number of phases in $\SSS, \SSS_i$ correspondingly.
The following holds:
\begin{align*}
\phi &\leq \sum_{i\in [\kappa+1]} \phi_i 
\stackrel{1}{=} \sum_{i\in [\kappa+1]} \left\lceil \frac{C(\SSS_i)}{\PhaseSize+1} \right\rceil 
\leq \sum_{i\in [\kappa+1]} \left( \frac{C(\SSS_i)}{\PhaseSize+1} + 1 \right) \\
&\stackrel{2}{\leq} (\kappa + 1) + \frac{\sum_{i\in [\kappa+1]}  \lambda_{\alpha^{\prime}}(\SSS_i)+1}{\PhaseSize+1}  
\stackrel{3}{\leq} (\kappa + 1) + \frac{\lambda_{\alpha^{\prime}}(\SSS) + 3\kappa}{\PhaseSize+1}  
\stackrel{4}{=} O\left(\frac{\lambda_{\alpha^{\prime}}(\SSS)}{\PhaseSize}  \right)
\end{align*}
where (1) is true since on each segment $\SSS_i$ there is no phase reset and we start a new phase every $\PhaseSize+1$ number of steps, %
(2) holds by Equation \ref{eq:lvlOutputMod_eq2},
(3) is true by Lemma \ref{lem:substreamFlipNumber} 
and (4) is true since by Equation \ref{eq:lvlOutputMod_eq1} we have that $\kappa = O\left(\alpha \lambda_{\alpha^{\prime}(\SSS)} \right)$ and $\PhaseSize = O(\alpha^{-1})$.

\paragraph{Number of ouput modification for level $j$.} The levels $j\in [\beta + 1]$ are selected in \ref{alg:ActiveLVL} according to $\tau,$ s.t. $j$ is the LSB of $\tau + 1$. Since on every output modification we increment the value of $\tau$ by $+1$ then level $j=0$ is  selected every second time, level $j=1$ is selected every forth time, level $j=2$ is selected every eighth time and so on. That is a total of $O(\PhaseSize/2^{j})$ for level $j$. 
Multiplying the established bound for $\phi$ (the total number of phases) with that bound of the number of output modification of level $j$ we get:
$$
C_j  =  O\left(\frac{\lambda_{\alpha^{\prime}}(\SSS)}{\PhaseSize}\right) \cdot O \left( \frac{\PhaseSize}{2^{j}} \right)= O\left(\frac{\lambda_{\alpha^{\prime}}(\SSS)}{2^j}\right)
$$
\end{proof}
%
%%%%%%%%%%%%%%%%%%%%%%%%%%%%%%%%%%%%%%%
% Corollary: No capping
%%%%%%%%%%%%%%%%%%%%%%%%%%%%%%%%%%%%%%%
\begin{corollary}\label{cor:NoCapping}
Provided that $\lambda > \lambda_{\alpha^{\prime}}(\SSS)$ then algorithm \ref{alg:ROEF} will not get to {\em capping} state by calibrating: 
$$
P_j = \Omega\left(\frac{\lambda}{2^j}\right)\text{.}
$$
\end{corollary}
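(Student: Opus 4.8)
The plan is to obtain Corollary~\ref{cor:NoCapping} as bookkeeping on top of Lemma~\ref{lem:outputUpdatesBounds}. First I would observe that the only quantities governing whether capping occurs are the counters $\Capp_j$: each $\Capp_j$ is incremented exactly once every time the output-modification branch at Step~\ref{algStep:AboveThresh} is entered with level $j$ active, and $\NoCapping$ turns false precisely when some $\Capp_j$ reaches $P_j$. As long as capping has not yet happened, $\NoCapping=\text{True}$, so entering that branch coincides with the event ``the output is modified at a step where level $j$ is active''; hence $\Capp_j(t)=C_j(t)$ at every time $t$ prior to a capping event, for every $j\in[\beta+1]$ (and, in parallel, the relevant count for level $\STwrapper$ is the number of phase resets). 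Given this identification, Lemma~\ref{lem:outputUpdatesBounds} directly gives $C_j(t)=O(\lambda_{\alpha'}(\SSS)/2^{j})$ for $j\in[\beta+1]$, and the number of phase resets is $O(\alpha\cdot\lambda_{\alpha'}(\SSS))$, which is matched by $P_{\STwrapper}=P_\beta=\Theta(\alpha\lambda)$. So picking the hidden constant in $P_j=\Omega(\lambda/2^{j})$ strictly larger than the constant hidden in Lemma~\ref{lem:outputUpdatesBounds}, and using the hypothesis $\lambda>\lambda_{\alpha'}(\SSS)$, forces $\Capp_j<P_j$ at all times, i.e.\ $\NoCapping$ never turns false.

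To make the ``prior to a capping event'' clause rigorous, and to side-step the apparent circularity (Lemma~\ref{lem:outputUpdatesBounds} is proved through Lemma~\ref{lem:consequtiveOutputUpdatesProgress}, which assumes $\NoCapping=\text{True}$), I would argue by minimal counterexample. Suppose capping occurs and let $\hat t$ be the first time step at which some counter $\Capp_{j^\star}$ is incremented to $P_{j^\star}$. For all $t<\hat t$ we have $\NoCapping=\text{True}$, so on those steps the branch at Step~\ref{algStep:AboveThresh} does update $\Output$, hence $\Capp_{j^\star}(\hat t-1)=C_{j^\star}(\hat t-1)=P_{j^\star}-1$. Now apply Lemma~\ref{lem:outputUpdatesBounds} at time $t=\hat t-1$ (its hypotheses, a good execution and Assumption~\ref{ass:accurateEstimators} up to $\hat t-1$, are the standing assumptions of this part of the analysis, and all the output-modification times it reasons about lie at or before $\hat t-1$, where $\NoCapping$ is still true), obtaining $P_{j^\star}-1=C_{j^\star}(\hat t-1)\le c\cdot\lambda_{\alpha'}(\SSS)/2^{j^\star}$ with $c$ the absolute constant from that lemma (for $j^\star=\STwrapper$ replace $2^{j^\star}$ by $1/\alpha$). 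Setting $P_{j^\star}=c'\lambda/2^{j^\star}$ with $c'>c$ and invoking $\lambda>\lambda_{\alpha'}(\SSS)$ contradicts this inequality, so no such $\hat t$ exists and $\NoCapping$ stays true throughout the stream.

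I do not expect a real obstacle here; all the substance is in Lemma~\ref{lem:outputUpdatesBounds}, and this corollary is just a calibration of constants. The one point that needs care is exactly the circularity/off-by-one just described: one must invoke the bound only at a time step strictly before the hypothetical first capping (so that $\NoCapping$ is still true wherever the sub-lemmas require it), and note that $\Capp_j$ never exceeds the quantity $C_j$ that Lemma~\ref{lem:outputUpdatesBounds} bounds. Both are handled by the monotonicity of $\Capp_j$ together with the remark preceding Lemma~\ref{lem:outputUpdatesBounds} that $C_j$ stops growing once $\NoCapping$ is false.
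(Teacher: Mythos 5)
Your proposal is correct and is essentially the paper's (implicit) argument: the corollary is obtained by combining the bound $C_j(t)=O(\lambda_{\alpha'}(\SSS)/2^j)$ from Lemma~\ref{lem:outputUpdatesBounds} (and the $O(\alpha\lambda_{\alpha'}(\SSS))$ bound on phase resets for level $\STwrapper$) with the choice of constants in $P_j=\Omega(\lambda/2^j)$ and the hypothesis $\lambda>\lambda_{\alpha'}(\SSS)$. Your minimal-counterexample handling of the first capping time, invoking the lemma at $\hat t-1$ where $\NoCapping$ still holds, is a careful way of making explicit the bookkeeping the paper leaves implicit, and it matches the paper's intent.
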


%%%%%%%%%%%%%%%%%%%%%%%%%%%%%%%%%%%%%%%
% subroutine - ActiveLVL
%%%%%%%%%%%%%%%%%%%%%%%%%%%%%%%%%%%%%%%
\begin{algorithm*}[ht]

\caption{\texttt{ActiveLVL($\tau$)}}
\makeatletter\def\@currentlabel{\texttt{ActiveLVL($\tau$)}}\makeatother
\label{alg:ActiveLVL}
{\bf Input:} A counter $\tau$. {\bf Global parameter:} $\beta$.
\begin{enumerate}
    \item If $\tau[\beta] = 0$ : Return $\beta$ \qquad \qquad \qquad \text{\% Selecting the \ST\text{} level}
	\item Else Return The LSB of $(\tau+1)$ \qquad \text{\% Selecting a TDE level}
\end{enumerate}
\end{algorithm*}

%%%%%%%%%%%%%%%%%%%%%%%%%%%%%%%%%%%%%%%
% Framework robustness
%%%%%%%%%%%%%%%%%%%%%%%%%%%%%%%%%%%%%%%
\subsection{The framework is robust}\label{sec:formalRobust}
We move on to show that the framework \texttt{RobustDE} is robust for adaptive inputs. Lemma \ref{lem:AccurateEstimations} (adaptation of Lemma 3.2 \cite{hassidim2020adversarially}) uses tools from differential privacy to show that if the framework preserve privacy with respect to the random strings of the estimators, then the estimators yield accurate estimations. 
Yet in our case the accuracy of estimators of levels $j<\beta$ (i.e.\ TDE levels) have an additional requirement: they must also estimate differences that are within their range (see requirement \ref{req:TDEdiffRange}).
We show in Lemma \ref{lem:boundedEstimationRanges} that indeed whenever estimator of level $j$ is being used by the framework, then it is estimating a difference that is within its accuracy range.
\begin{lemma}[bounded estimation ranges]\label{lem:boundedEstimationRanges}
Let $t\in [m]$ be a time step such that
\begin{enumerate}
    \item Level $j \in [\beta]$ was selected (a \TDE).
    \item Assumption \ref{ass:accurateEstimators} holds for every $t'< t$.
\end{enumerate}
Denote by $e_j$ the last time step during which level $j$ estimators were enabled. Then, assuming a {\em good execution} (see Definition \ref{def:goodRun}), the following holds:
$$
|\mathcal{F}(t) - \mathcal{F}(e_j)| \leq \gamma_j\cdot  \mathcal{F}(e_j)
$$
where $\gamma_j =  \frac{1+\alpha_{\ST}}{1-\alpha_{\ST}} \Gamma^2 \cdot 2^{j+1} \cdot \alpha = O(2^j \cdot \alpha)$.
\end{lemma}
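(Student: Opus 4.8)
We need to show that whenever a TDE at level $j < \beta$ is being used at time $t$, the difference it is estimating, namely $\mathcal{F}(t) - \mathcal{F}(e_j)$, is within the range $\gamma_j \cdot \mathcal{F}(e_j)$ required by Definition~\ref{def:TDE}. The key structural fact is how enabling times are managed: every TDE of level $j$ is re-enabled (Step~\ref{algStep:EnableTDEs}) exactly when some level $j' \ge j$ is frozen and the output is modified. Between the enabling time $e_j$ and the current time $t$, the counter $\tau$ only increments (no phase reset, since $\tau \neq 0$ is implied by level $j < \beta$ being selected together with $\tau[\beta] = 1$), and crucially $t$ and $e_j$ lie in the same phase — this is because whenever a new phase starts, $\tau[\beta]$ is set and level $\beta$ is selected, re-enabling all lower-level TDEs. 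So $e_j$ and $t$ are in the same phase.

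**Plan.** First I would argue that $e_j$ and $t$ belong to the same phase. A new phase begins (Step~\ref{algStep:tauUpdateStartPhase} or the reset in Step~\ref{algStep:STwrapper}) with level $\beta$ selected, and when level $\beta$ is frozen, Step~\ref{algStep:EnableTDEs} re-enables \emph{all} TDEs of levels $j' \in [\beta]$. Hence the last enabling time $e_j$ of level $j$ is at or after the start of the current phase, so $e_j$ and $t$ are in the same phase. Second, I would invoke Lemma~\ref{lem:maxPhaseProgress}: since Assumption~\ref{ass:accurateEstimators} holds for all $t' < t$ and $\tau \neq 0$ throughout $(e_j, t]$ (we are in the same phase, no reset), the ratio $\mathcal{F}(t)/\mathcal{F}(e_j)$ is bounded above and below by $\Theta(\Gamma^2)$; more precisely $\frac{\min\{\mathcal{F}(e_j),\mathcal{F}(t)\}}{\max\{\mathcal{F}(e_j),\mathcal{F}(t)\}} \le \frac{1+\alpha_{\ST}}{1-\alpha_{\ST}}\Gamma^2$. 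Third, I would translate this ratio bound into the additive-style bound $|\mathcal{F}(t) - \mathcal{F}(e_j)| \le \left(\frac{1+\alpha_{\ST}}{1-\alpha_{\ST}}\Gamma^2 - 1\right)\cdot \mathcal{F}(e_j)$, which is at most $\frac{1+\alpha_{\ST}}{1-\alpha_{\ST}}\Gamma^2 \cdot \mathcal{F}(e_j)$.

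**The refinement to get the $2^{j}$ factor.** The bound above only gives $\gamma_j = O(\Gamma^2)$, not $O(2^j \alpha)$. To get the stronger per-level bound, I would use that level $j$ is selected only when $\tau$ has a specific structure: level $j$ is the LSB of $\tau + 1$, which (by the phase-counting argument in Lemma~\ref{lem:outputUpdatesBounds}) means that since $e_j$ was set, the output has been modified at most $O(2^j)$ times within the current phase. Combining the progress-per-output-modification bound from Lemma~\ref{lem:consequtiveOutputUpdatesProgress} — each consecutive pair of output-modification times changes $\mathcal{F}$ by roughly $\StepSize(\alpha)\cdot\mathsf{Z}_{\ST} + O(\alpha_{\Stitch} + \MaxUpdateSize(\alpha))\cdot\mathcal{F}$, i.e.\ an $O(\alpha)$-fraction of $\mathcal{F}$ — with the fact that at most $O(2^j)$ such steps occurred since $e_j$, and that $\mathsf{Z}_{\ST} = \Theta(\mathcal{F}(e_j))$ up to the $\Gamma$ ratio (by Lemma~\ref{lem:maxPhaseProgress} and accuracy of the frozen $\mathsf{Z}_{\ST}$), I would telescope to get $|\mathcal{F}(t) - \mathcal{F}(e_j)| \le O(2^j) \cdot O(\alpha) \cdot \Gamma \cdot \mathcal{F}(e_j) = \frac{1+\alpha_{\ST}}{1-\alpha_{\ST}}\Gamma^2 \cdot 2^{j+1}\cdot\alpha \cdot \mathcal{F}(e_j)$, as claimed.

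**Main obstacle.** The delicate point is carefully counting how many output modifications can occur between $e_j$ and $t$ as a function of $j$ — this requires a precise reading of the $\tau$-counter dynamics in subroutine~\ref{alg:ActiveLVL} and Steps~\ref{algStep:tauUpdateStartPhase}--\ref{algStep:tauUpdateEndPhase}, specifically that level $j$ being active corresponds to a carry pattern in $\tau + 1$ that can only arise after $\tau$ has advanced by a multiple of $2^j$ since the last time all levels $< j$ (hence level $j$ itself, via Step~\ref{algStep:EnableTDEs} whenever a level $\ge j$ freezes) were re-enabled. One must also handle the boundary subtlety that $e_j$ might have been set by the freezing of a level $j'$ strictly greater than $j$ (not $j$ itself), which only makes the count tighter. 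I expect the ratio-bound part (via Lemma~\ref{lem:maxPhaseProgress}) to be routine; the per-level counting is where the real care is needed.
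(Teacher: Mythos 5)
Your proposal is correct and follows essentially the same route as the paper: the paper likewise counts the $2^j-1$ output modifications between $e_j$ and $t$ (via the $\tau$-counter structure), bounds each consecutive change by Lemma~\ref{lem:consequtiveOutputUpdatesProgress}, and converts all terms to $\mathcal{F}(e_j)$ using the same-phase ratio bound of Lemma~\ref{lem:maxPhaseProgress}, exactly as you telescope. The only piece you leave implicit is the final stretch from the last output-modification time to $t$ itself, which the paper bounds separately using Lemma~\ref{lem:outputAccuracy}, Corollary~\ref{cor:outputUpdateAccuracy}, and the bounded-update condition~\ref{con:streamMaxUpdateSize} — a minor, easily filled detail.
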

\begin{proof}
Let $j\in [\beta]$ be some $\TDE$ level and let $t$ be a time s.t. level $j$ is selected. Then the number of output modifications between the time $e_j$ (the enabling time of level $j$ estimators) and the time $t$ is $2^j-1$.
Denote the times during which the output was modified between the time $e_j$ the time $t$ by $\{t_l\}_{l\in [2^j-1]}$ where $t_{l=0} = e_j$. 
We first bound the difference of the current value of the function $\mathcal{F}$ to its value on the last output modification:
\begin{align}
    |\mathcal{F}(t)-\mathcal{F}(t_{2^{j}-1})| &\leq 
    |\mathcal{F}(t) - \mathcal{F}(t-1)| + |\mathcal{F}(t-1)-\Output(t_{2^{j}-1})| + |\Output(t_{2^{j}-1})-\mathcal{F}(t_{2^{j}-1})|\nonumber\\
    &\leq \MaxUpdateSize(\alpha)\cdot \mathcal{F}(t) + \alpha\cdot \mathcal{F}(t-1) + \alpha_{\Stitch}\cdot \mathcal{F}(t_{2^{j}-1}) \label{eq:LastStepUpperBound}
\end{align}
where the last inequality is due to \ref{lem:outputAccuracy} and \ref{cor:outputUpdateAccuracy}, and the assumption of bounded update size (condition \ref{con:streamMaxUpdateSize}).
The following holds:
\begin{align*}
    |\mathcal{F}(t) - \mathcal{F}(e_j)| \stackrel{1}{\leq}& \sum_{l\in [2^j-1]}|\mathcal{F}(t_{l+1})-\mathcal{F}(t_{l})| + |\mathcal{F}(t)-\mathcal{F}(t_{2^{j}-1})| \\
    \stackrel{2}{\leq}& \sum_{l\in [2^j-1]} \left(\StepSize(\alpha)\cdot \mathsf{Z}_{\ST} + (2\cdot \alpha_{\Stitch} + \MaxUpdateSize(\alpha))\cdot \max\{\mathcal{F}(t_{l}), \mathcal{F}(t_{l-1})\}\right) \\
     & + \MaxUpdateSize(\alpha)\cdot \mathcal{F}(t) + \alpha\cdot \mathcal{F}(t-1) + \alpha_{\Stitch}\cdot \mathcal{F}(t_{2^{j}-1})\\
     \stackrel{}{\leq} &
     (2^j-1)\StepSize(\alpha)\cdot \mathsf{Z}_{\ST} + 
     (2^j-1)(2\alpha_{\Stitch}+\MaxUpdateSize(\alpha)) \max_{l\in [2^j-1]}\{\mathcal{F}(t_{l})\}\\
     & + \MaxUpdateSize(\alpha)\cdot \mathcal{F}(t) + \alpha\cdot \mathcal{F}(t-1) + \alpha_{\Stitch}\cdot \mathcal{F}(t_{2^{j}-1})\\
    \stackrel{3}{\leq} &
     (2^j-1)(\StepSize(\alpha) + 2\alpha_{\Stitch} +\MaxUpdateSize(\alpha))\frac{1+\alpha_{\ST}}{1-\alpha_{\ST}} \Gamma^2 \cdot \mathcal{F}(e_j) \\
     &+ (\MaxUpdateSize(\alpha) + \alpha + \alpha_{\Stitch})\frac{1+\alpha_{\ST}}{1-\alpha_{\ST}} \Gamma^2 \cdot \mathcal{F}(e_j) \\
     \stackrel{4}{=}& (\alpha + 
     (2^j-1)\StepSize(\alpha) +
     (3(2^j-1) - 1)\MaxUpdateSize(\alpha) ) \frac{1+\alpha_{\ST}}{1-\alpha_{\ST}} \Gamma^2 \cdot \mathcal{F}(e_j)\\
     \stackrel{5}{\leq}& (\alpha + 2(2^j-1)\StepSize(\alpha)) \frac{1+\alpha_{\ST}}{1-\alpha_{\ST}} \Gamma^2 \cdot \mathcal{F}(e_j)\\
     \leq&  2^{j+1}\cdot \alpha \cdot \frac{1+\alpha_{\ST}}{1-\alpha_{\ST}} \Gamma^2 \cdot \mathcal{F}(e_j)\\
    \stackrel{6}{\leq}& \gamma_j \cdot   \mathcal{F}(e_j)
\end{align*}
where (1) is by decomposing ($\mathcal{F}(t) - \mathcal{F}(e_j)$) according to $\{t_l\}_{l\in [2^j-1]}$, 
(2) holds by plugging in Equation \ref{eq:LastStepUpperBound} and by applying Lemma \ref{lem:consequtiveOutputUpdatesProgress} on each of the differences in the term,
(3) is due to Lemma \ref{lem:maxPhaseProgress},
(4) is by setting $\alpha_{\Stitch} = \MaxUpdateSize(\alpha)$ ,
(5) is by setting $\StepSize(\alpha) \geq 4\cdot \MaxUpdateSize(\alpha)$,
(6) is by denoting 
$\gamma_j =  \frac{1+\alpha_{\ST}}{1-\alpha_{\ST}} \Gamma^2 \cdot 2^{j+1} \cdot \alpha$.
\end{proof}

\begin{lemma}[Accurate Estimations (Lemma 3.2 \cite{hassidim2020adversarially})]\label{lem:AccurateEstimations} The following holds for a {\em good execution} (see Definition \ref{def:goodRun}). Let $t\in [m]$ be a time step such that:
\begin{enumerate}
    \item Level $j$ was selected.
    \item Assumption \ref{ass:accurateEstimators} holds for every $t'< t$.
\end{enumerate}
Let $\mathsf{E}(\mathcal{S},\pi)$ be the estimator of level $j$ that was selected on time step $t$, and let $\pi$ be its (possibly dynamic) parameters.
Let $\mathsf{E}(\mathcal{S},\pi)$ have (an oblivious) guarantee that all of its estimates are accurate with accuracy parameter $\alpha_{\mathsf{E}}$
with probability at least $\frac{9}{10}$. 
Then for sufficiently small $\eps$, if algorithm \ref{alg:ROEF} is $(\eps,\delta^{\prime})$-DP w.r.t.\ the random bits of the estimators $\{\mathsf{E}^k\}_{k\in\mathsf{K}}$, then with probability at least $1-\frac{\delta^{\prime}}{\eps}$, for time $t$ we have:
\begin{enumerate}
    \item For $j \in \{\beta, \STwrapper\}$,  \qquad
    $|\{k\in [\mathsf{K}] : |z^{k}-\mathcal{F}(t)|< \alpha_{\mathsf{E}}\cdot\mathcal{F}(t) \}| \geq (8/10) \mathsf{K}$
    \item For $j < \beta$, \qquad \qquad{}
    $|\{k\in [\mathsf{K}] : |z^{k}-(\mathcal{F}(t) - \mathcal{F}(e))|< \alpha_{\mathsf{E}}\cdot\mathcal{F}(e) \}| \geq (8/10) \mathsf{K}$
\end{enumerate}
Where $z^k \leftarrow \mathsf{E}^{k}(\mathcal{S},\pi)$ for a set of size $\mathsf{K} \geq \frac{1}{\eps^2}\log\left( \frac{2\eps}{\delta^{\prime}} \right) $ of the oblivious estimator $\mathsf{E}(\mathcal{S},\pi)$ \end{lemma}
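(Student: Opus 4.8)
The plan is to follow the proof of Lemma~3.2 of \cite{hassidim2020adversarially}, combining the generalization property of differential privacy (Theorem~\ref{thm:PrivacyImpGen}) with the range guarantee of Lemma~\ref{lem:boundedEstimationRanges}. Fix the time step $t$ and the level $j$ selected at $t$. For a uniformly random estimator seed $\rho$, define the predicate $h(\rho)\in\{0,1\}$ to equal $1$ if \emph{either} the oblivious estimator $\mathsf{E}(\mathcal{S},\pi)$ run with seed $\rho$ is accurate at time $t$ (in the sense of item~1 or item~2 of the statement, according to whether $j\in\{\beta,\STwrapper\}$ or $j<\beta$), \emph{or} (relevant only when $j<\beta$) the toggle-difference-estimator range precondition $|\mathcal{F}(t)-\mathcal{F}(e_j)|\le\gamma_j\cdot\mathcal{F}(e_j)$ fails. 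The point of the second disjunct is that it forces $\mathbb{E}_{\rho}[h(\rho)]\ge 9/10$ for \emph{every} realization of $(\mathcal{S},\pi)$: when the range precondition holds, the oblivious guarantee of $\mathsf{E}$ (failure probability at most $1/10$) gives accuracy at $t$ with probability at least $9/10$; when it fails, $h\equiv 1$. For $j\in\{\beta,\STwrapper\}$ there is no precondition, and $\mathbb{E}_\rho[h(\rho)]\ge 9/10$ follows directly from the strong-tracker guarantee.

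Next I would observe that the map $\mathcal{A}$ sending the database $D=(\rho_1,\dots,\rho_{\mathsf{K}})$ of level-$j$ seeds to the predicate $h$ is $(\eps,\delta')$-differentially private. The predicate $h$ is a deterministic function of $(\mathcal{S},\pi)$: the stream $\mathcal{S}$ is produced by the adversary as a function of the transcript of algorithm~\ref{alg:ROEF}, and the parameters $\pi$ (in particular the most recent enabling time $e_j$) are computed by algorithm~\ref{alg:ROEF} from that transcript, while everything else (the seeds of the other levels, the Laplace noises, the adversary's coins) is internal randomness of $\mathcal{A}$. Hence $h$ is post-processing of the transcript, which by hypothesis is $(\eps,\delta')$-DP w.r.t.\ $D$, so $\mathcal{A}$ is $(\eps,\delta')$-DP. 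Applying Theorem~\ref{thm:PrivacyImpGen} with $n=\mathsf{K}\ge\frac{1}{\eps^2}\log\frac{2\eps}{\delta'}$ and the uniform distribution over seeds yields that with probability at least $1-\delta'/\eps$,
$$
\left|\frac{1}{\mathsf{K}}\sum_{k\in[\mathsf{K}]}h(\rho_k)-\mathbb{E}_{\rho}[h(\rho)]\right|\le 10\eps .
$$
Since $\mathbb{E}_\rho[h(\rho)]\ge 9/10$, choosing $\eps$ small enough that $9/10-10\eps\ge 8/10$ (which is compatible with the requirements $\eps<1/3$ and $\delta'<\eps/4$ of Theorem~\ref{thm:PrivacyImpGen}) shows that at least $(8/10)\mathsf{K}$ of the seeds $\rho_k$ satisfy $h(\rho_k)=1$.

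It remains to remove the second disjunct. Under the hypotheses of the lemma --- a good execution, level $j$ selected at $t$, and Assumption~\ref{ass:accurateEstimators} for every $t'<t$ --- Lemma~\ref{lem:boundedEstimationRanges} shows that for $j<\beta$ the range precondition $|\mathcal{F}(t)-\mathcal{F}(e_j)|\le\gamma_j\cdot\mathcal{F}(e_j)$ does hold (for $j\in\{\beta,\STwrapper\}$ it is vacuous). On this event the predicate $h(\rho_k)$ is therefore exactly the indicator that $\mathsf{E}^k$ is accurate at time $t$, and the bound of the previous paragraph becomes precisely the conclusion of the lemma: item~1 when $j\in\{\beta,\STwrapper\}$ and item~2 when $j<\beta$.

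The main obstacle is the circular dependence between the seeds $\rho_1,\dots,\rho_{\mathsf{K}}$ and the objects $(\mathcal{S},\pi)$ that define $h$ --- the adversary's stream reacts to outputs that are themselves functions of those seeds, so one cannot fix $\mathcal{S}$ in advance and invoke the oblivious guarantee directly. The generalization-via-privacy argument is what breaks this circularity, but applying it cleanly dictates the two design choices above: presenting $\mathcal{A}$ as a differentially private map whose output predicate depends on $D$ only through the DP transcript, and folding the range precondition into $h$ so that $\mathbb{E}_\rho[h]\ge 9/10$ holds \emph{before} any conditioning --- conditioning first on the good execution would break the i.i.d.\ structure of $D$ needed by Theorem~\ref{thm:PrivacyImpGen}, so it must be deferred to the last step. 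The remaining bookkeeping (forcing $\eps<1/100$ and verifying the quantitative hypotheses of Theorem~\ref{thm:PrivacyImpGen}) is routine.
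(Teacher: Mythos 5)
Your proposal is correct and follows essentially the same route as the paper: establish the TDE range precondition via Lemma~\ref{lem:boundedEstimationRanges}, define the per-seed accuracy predicate, and apply the generalization property of differential privacy (Theorem~\ref{thm:PrivacyImpGen}) together with the $9/10$ oblivious guarantee to conclude that at least $8/10$ of the estimators are accurate for $\eps\leq 1/100$. Your only deviation --- folding the range precondition into the predicate as a disjunct so that $\E_\rho[h]\geq 9/10$ holds before conditioning on the good execution --- is a minor (and careful) repackaging of the same argument, not a different approach.
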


\begin{proof} Since the requirements of Lemma \ref{lem:boundedEstimationRanges} holds, we have that on time $t$, whenever the level $j$ that was selected is a type $\TDE$ estimator (i.e. $j < \beta$), that the {\em accuracy requirement} of these estimators holds. That is, 
$$
|\mathcal{F}(t) - \mathcal{F}(e_j)| \leq \gamma_j\cdot  \mathcal{F}(e_j)
$$
Now, for time $t$ let $\mathcal{S}_t=\left( \langle s_1, \Delta_1 \rangle , \dots,  \langle s_t, \Delta_t \rangle \right)$ be the prefix of the input stream $\mathcal{S}$ for that time, and let $\pi(t)$ be the parameters configured to $\mathsf{E}$ at that time. Let $z_t\leftarrow \mathsf{E}(r,\langle \mathcal{S}_t, \pi(t) \rangle)$ be the estimation returned by the oblivious streaming algorithm $\mathsf{E}$ after the $t$ stream update, when its executed with random string $r$ on the input stream $\mathcal{S}_t$ with parameters $\pi(t)$. Consider the following function (which is differently defined w.r.t the estimator type):
\begin{enumerate}
    \item for $j = \{\beta,\STwrapper\}$, define $f_{\langle \mathcal{S}_t, \pi(t) \rangle}(r) = \mathbbm{1}\left\{ z_t \in \left( 1 \pm \alpha_{\mathsf{E}} \right)\cdot \mathcal{F}(\mathcal{S}_t)  \right\}$
    \item for $j < \beta$, \qquad{} define $f_{\langle \mathcal{S}_t, \pi(t) \rangle}(r) = \mathbbm{1}\left\{ z_t\in \left(\mathcal{F}(\mathcal{S}_{t}) - \mathcal{F}(\mathcal{S}_{e(t)})\right) \pm \alpha_{\mathsf{E}} \cdot \mathcal{F}(\mathcal{S}_{e(t)})  \right\}$
\end{enumerate}
Since Lemma \ref{lem:EstimatorsRandomStringsPrivacy} holds, then by the generalization properties of differential privacy (see Theorem \ref{thm:PrivacyImpGen}), assuming that $\mathsf{K} \geq \frac{1}{\eps^2}\log\left( \frac{2\eps}{\delta^{\prime}} \right)$, with probability at least $1-\frac{\delta^{\prime}}{\eps}$, the following holds for time $t$:
$$
\left| 
\E_{r} \left[f_{\langle \mathcal{S}_t, \pi(t) \rangle}(r) \right] - 
\frac{1}{\mathsf{K}} \sum_{k\in [\mathsf{K}]}f_{\langle \mathcal{S}_t, \pi(t) \rangle}(r_k)
\right| \leq 10\eps
$$
We continue with the analysis assuming that this is the case. Now observe that $\E_{r} \left[f_{\langle \mathcal{S}_t, \pi(t) \rangle}(r) \right] \geq 9/10$ by the utility guarantees of $\mathsf{E}$ 
(because when the stream is fixed and its {\em accuracy requirement} is met its answers are accurate to within a multiplicative error of $(1\pm \alpha_{\mathsf{E}})$ with probability at least $9/10$). 
Thus for $\eps\leq \frac{1}{100}$, for at least of $8/10$ of the executions of $\mathsf{E}$ we have $f_{\langle \mathcal{S}_t, \pi(t) \rangle}(r_k)=1$ which means 
the estimations $z_t$ returned from these executions are accurate. That is, we have that at least $8\mathsf{K}/10$ of the estimations $\left\{ z^k_t \right\}_{k\in [\mathsf{K}]}$ satisfy the accuracy of the estimators of level $j$.
\end{proof}
%
%%%%%%%%%%%%%%%%%%%%%%%%%%%%%%%%%%%%%%%
% Accurate Estimation assumption holds
%%%%%%%%%%%%%%%%%%%%%%%%%%%%%%%%%%%%%%%
Lemmas \ref{lem:boundedEstimationRanges} and Lemma \ref{lem:AccurateEstimations} state that for some time $t\in [m]$ given that assumption \ref{ass:accurateEstimators} holds for all times $t^{\prime}< t$ then it also hold in time $t$ (w.p. $1-\eps/\delta^{\prime}$). 
As a corollary we get the following:
\begin{lemma}[Accuracy assumption holds]\label{lem:accurateEstimationsAllTime}
Fix a time step $t\in [m]$. 
Let $j\in [\beta+1]\cup\{\STwrapper\}$ be the level of estimators used in time $t$,
Recall that $\mathsf{K}_j$ denotes the number of the estimators in level $j$, and let $z^{1}_{j}, \dots, z^{\mathsf{K}_j}_{j}$ denote the estimations given by these estimators.
Then with probability at least $1-\delta/2$, the following holds for all time $t\in [m]$:
\begin{enumerate}
    \item For $j \in \{\beta$, \STwrapper\},  \qquad 
    $|\{k\in [\mathsf{K}_j] : |z^{k}_{j}-\mathcal{F}(t)|< \alpha_{\ST}\cdot\mathcal{F}(t) \}| \geq (8/10) \mathsf{K}_j$
    \item For $j < \beta$, \qquad \qquad{ }
    $|\{k\in [\mathsf{K}_j] : |z^{k}_{j}-(\mathcal{F}(t) - \mathcal{F}(e_j))|< \alpha_{\TDE}\cdot\mathcal{F}(e_j) \}| \geq (8/10) \mathsf{K}_j$
\end{enumerate}
provided that $\eps_j = O\left(\eps\frac{1}{\sqrt{P_j\cdot \log{1/\delta^{\prime}}}}\right)$ for $\delta^{\prime} = O(\eps\delta/m\beta)$.
\end{lemma}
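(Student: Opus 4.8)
The plan is to combine the earlier lemmas of Section~\ref{sec:formalRobust} (Lemma~\ref{lem:boundedEstimationRanges} and Lemma~\ref{lem:AccurateEstimations}) with an induction over time steps and a union bound. The key observation is that Lemmas~\ref{lem:boundedEstimationRanges} and~\ref{lem:AccurateEstimations} together say: \emph{if} Assumption~\ref{ass:accurateEstimators} holds for every $t'<t$, \emph{then} (conditioned on a good execution) it also holds at time $t$, except with probability at most $\delta'/\eps$ over the random strings of the estimators. So the natural structure is: define the ``bad event at time $t$'' as the event that Assumption~\ref{ass:accurateEstimators} fails at time $t$ but held at all earlier times; each such event has probability at most $\delta'/\eps$ by Lemma~\ref{lem:AccurateEstimations}; and taking a union bound over all $m$ time steps and all $\beta+2$ levels shows that with probability at least $1 - m(\beta+2)\cdot\delta'/\eps$, the assumption holds at every time. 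Setting $\delta' = O(\eps\delta/(m\beta))$ makes this probability at least $1-\delta/2$ (the remaining $\delta/2$ is absorbed into the good-execution event, per the discussion after Definition~\ref{def:goodRun}).

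First I would make precise the inductive/first-failure framing. For each $t\in[m]$ let $B_t$ be the event that there exists a level whose accuracy condition from Assumption~\ref{ass:accurateEstimators} fails at time $t$, while Assumption~\ref{ass:accurateEstimators} holds (for all relevant levels) at every time $t'<t$. Then $\{\exists t: \text{Assumption fails at } t\} \subseteq \bigcup_{t\in[m]} B_t$, because the earliest failure time always lies in some $B_t$. Next I would invoke Lemma~\ref{lem:AccurateEstimations}: on $B_t$'s complementary hypothesis (Assumption holds for all $t'<t$), its conclusion — at least $(8/10)\mathsf{K}$ of the estimators at the active level are accurate — fails with probability at most $\delta'/\eps$, provided $\eps$ is sufficiently small and Algorithm~\ref{alg:ROEF} is $(\eps,\delta')$-DP w.r.t.\ the estimators' random bits, which is exactly Lemma~\ref{lem:EstimatorsRandomStringsPrivacy} with the stated choice $\eps_j = O(\eps/\sqrt{P_j\log(1/\delta')})$. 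But Lemma~\ref{lem:AccurateEstimations} only controls the \emph{active} level at time $t$; I would note that a level's accuracy condition can only change between consecutive times it is active (the estimators get fed updates but their enabling time, and hence the target difference, is fixed in between), so it suffices to check accuracy at the time steps where each level is active — which is precisely what Lemma~\ref{lem:AccurateEstimations} does. Then a union bound over $t\in[m]$ (and implicitly over the at most $\beta+2$ levels, which is folded into the constant in $\delta'=O(\eps\delta/(m\beta))$) gives that $\Pr[\bigcup_t B_t] \le \delta/2$, so Assumption~\ref{ass:accurateEstimators} holds at all $t$ with probability at least $1-\delta/2$.

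The main obstacle — and the place that needs the most care — is the subtle circularity between accuracy and privacy, and making sure the induction is genuinely well-founded rather than circular. Lemma~\ref{lem:AccurateEstimations} needs privacy (Lemma~\ref{lem:EstimatorsRandomStringsPrivacy}), and the privacy proof in turn caps the number of aggregations via $P_j$; but the reason those caps are never hit (Corollary~\ref{cor:NoCapping}, via Lemma~\ref{lem:outputUpdatesBounds}) itself assumes Assumption~\ref{ass:accurateEstimators}. The resolution, as flagged in the analysis overview, is that the hardcoded caps make Lemma~\ref{lem:EstimatorsRandomStringsPrivacy} hold \emph{unconditionally} (the caps are in the code regardless), so privacy is available as input to the induction without assuming accuracy; the caps being vacuous is a separate downstream conclusion, not a hypothesis here. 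I would therefore be careful to state that the induction uses only the unconditional privacy guarantee, and that the requirement $\mathsf{K}_j \ge \frac{1}{\eps^2}\log(2\eps/\delta')$ from Lemma~\ref{lem:AccurateEstimations} is met by the parameter settings in Algorithm~\ref{alg:ROEF} (where $\mathsf{K}_j = \tilde\Omega(\eps_j^{-1})$ and $\eps_j$ scales as stated). The rest is bookkeeping: tracking that $\delta'=O(\eps\delta/(m\beta))$ makes $m(\beta+2)\delta'/\eps \le \delta/2$, and that the good-execution event already carries the other $\delta/2$.
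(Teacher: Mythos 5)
Your proposal is correct and follows essentially the same route as the paper's proof: invoke Lemma~\ref{lem:AccurateEstimations} (backed by the unconditional privacy guarantee of Lemma~\ref{lem:EstimatorsRandomStringsPrivacy} and the range validity of Lemma~\ref{lem:boundedEstimationRanges}) at each time step, then union bound over the $m$ time steps and $\beta+2$ levels with $\delta'=O(\eps\delta/(m\beta))$ to get failure probability at most $\delta/2$. Your explicit first-failure/induction framing and the remark that privacy is available unconditionally (via the hardcoded caps) simply spell out what the paper's one-paragraph proof leaves implicit.
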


\begin{proof} Fix some time $t\in [m]$ and let $j\in [\beta +1]\cup\{\STwrapper\}$ be the level of estimators used in that time. We set $\delta^{\prime}=\eps\cdot \delta/(2m(\beta+2))$, then by union bound over $m$ possible times we have that with probability $1-\delta/(2(\beta+2))$, $8/10\cdot \mathsf{K}_j$ of the estimators of level $j$ are accurate by Lemma \ref{lem:AccurateEstimations}. Union bound over all different $\beta +2$ estimators level, we get that with probability at least $1-\delta/2$, $8/10$ of estimations are accurate for all time $t\in [m]$ for all levels $j\in [\beta]\cup\{\ST\}\cup\{\STwrapper\}$.
\end{proof}

%%%%%%%%%%%%%%%%%%%%%%%%%%%%%%%%%%%%%%%
% Algorithm is correct on all times
%%%%%%%%%%%%%%%%%%%%%%%%%%%%%%%%%%%%%%%
By Corollary \ref{cor:NoCapping} and Lemma \ref{lem:accurateEstimationsAllTime} we have that algorithm \texttt{RobustDE} will not get to capping state and Assumption \ref{ass:accurateEstimators} holds. That is, the conditions for Lemma \ref{lem:outputAccuracy} are met and we have the following:
\begin{theorem}[Algorithm \ref{alg:ROEF} correctness]\label{thm:algCorrectness} 
Denote $\delta^{*} = \delta/\beta = O(\delta/\log(\alpha^{-1}))$. 
Provided that For all $j\in [\beta+1]$:
\begin{enumerate}
    \item $\gamma_j = \Omega(2^j\cdot \alpha)$
    \item $\eps_{j}=O\left(1/\sqrt{P_j\log(m/\delta^{*})}\right)$
    \item $P_j = \Omega\left(\frac{\lambda}{2^j}\right)$
    \item $\mathsf{K}_{j} = \Omega
\left( \sqrt{P_j\log\left(\frac{m}{\delta^{*}} \right)} \left[
\log\left( \frac{P_{j}}{\delta^{*}\alpha} \log(n)\right)
+
\log\left( \frac{m}{\delta^{*}}\right)
\right]
\right)$
\end{enumerate}
and $\eps_{\STwrapper} = \eps_{\beta}$, $P_{\STwrapper} = P_{\beta}$, $\mathsf{K}_{\STwrapper} = \mathsf{K}_{\beta}$, then for all time $t\in [m]$,
 with probability at least $1-\delta$ we have
$$
|\Output(t) - \mathcal{F}(t)| \leq \alpha \cdot \mathcal{F}(t) \text{.}
$$
\end{theorem}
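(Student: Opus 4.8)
The plan is to assemble Theorem~\ref{thm:algCorrectness} by chaining together the four components developed in Sections~\ref{sec:formalPrivacy}--\ref{sec:formalRobust}, treating it essentially as a bookkeeping exercise in which we verify that the hypotheses of each prior lemma are met under the stated parameter settings. First I would observe that the hypotheses on $\gamma_j$, $\eps_j$, $P_j$, $\mathsf{K}_j$ (items 1--4) are precisely calibrated to feed into the earlier lemmas: item~1 matches the $\gamma_j = O(2^j\alpha)$ bound needed so that Lemma~\ref{lem:boundedEstimationRanges} certifies the TDE range requirement~\eqref{req:TDEdiffRange}; item~2 is the privacy calibration $\eps_j = O(\eps/\sqrt{P_j\log(1/\delta')})$ from Lemma~\ref{lem:EstimatorsRandomStringsPrivacy} with $\delta' = O(\eps\delta/(m\beta))$ as in Lemma~\ref{lem:accurateEstimationsAllTime}; item~3 is the capping-avoidance calibration of Corollary~\ref{cor:NoCapping}; and item~4 combines the estimator-count lower bounds required by Lemma~\ref{lem:FrozenAccuracy}, Lemma~\ref{lem:outputAccuracy}, and the generalization argument (the $\mathsf{K} \geq \eps^{-2}\log(2\eps/\delta')$ requirement of Lemma~\ref{lem:AccurateEstimations}, which becomes $\tilde\Omega(\eps_j^{-1})$ after substituting the value of $\eps_j$).

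The core of the argument is then to establish Assumption~\ref{ass:accurateEstimators} for all $t\in[m]$ simultaneously, which is exactly Lemma~\ref{lem:accurateEstimationsAllTime}. That lemma is itself proved by an induction on $t$ (implicit in the ``holds for every $t'<t$'' hypotheses of Lemmas~\ref{lem:boundedEstimationRanges} and~\ref{lem:AccurateEstimations}): assuming the accuracy assumption holds up to time $t-1$, Lemma~\ref{lem:boundedEstimationRanges} shows the selected TDE is being asked to estimate a difference within its valid range, and then Lemma~\ref{lem:AccurateEstimations} (the differential-privacy generalization argument adapted from~\cite{hassidim2020adversarially}) upgrades the oblivious $9/10$-accuracy guarantee of each estimator into the statement that $\geq 8/10$ of the copies at the active level are accurate at time $t$. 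A union bound over the $m$ time steps and the $\beta+2$ levels, with $\delta'$ set small enough, yields Assumption~\ref{ass:accurateEstimators} globally with probability $\geq 1-\delta/2$. Simultaneously, Corollary~\ref{cor:NoCapping} (which needs only Assumption~\ref{ass:accurateEstimators} plus a good execution, via Lemma~\ref{lem:outputUpdatesBounds}) gives that $\NoCapping = \text{True}$ throughout, i.e.\ Condition~(2) holds. We also condition on a good execution (Definition~\ref{def:goodRun}), which holds with probability $\geq 1-\delta/2$ by the Laplace tail bound and Lemma~\ref{lem:FrozenAccuracy}.

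With Assumption~\ref{ass:accurateEstimators}, $\NoCapping=\text{True}$, and a good execution all in hand for every $t\in[m]$, the hypotheses of Lemma~\ref{lem:outputAccuracy} are satisfied, giving $|\Output(t)-\mathcal{F}(t)|\leq\alpha\cdot\mathcal{F}(t)$ for all $t$. A final union bound over the two failure events (the good execution failing, and the estimator-accuracy union bound failing) gives overall failure probability at most $\delta$, as claimed. The main obstacle, and the one place requiring genuine care rather than citation, is verifying the circularity is actually broken: the privacy bound (Lemma~\ref{lem:EstimatorsRandomStringsPrivacy}) depends on the cap $P_j$, the generalization argument (Lemma~\ref{lem:AccurateEstimations}) depends on privacy, Assumption~\ref{ass:accurateEstimators} depends on generalization, and the capping-avoidance bound (Corollary~\ref{cor:NoCapping}) depends on Assumption~\ref{ass:accurateEstimators}. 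I would emphasize that this circle is cut precisely because the hardcoded caps make the privacy analysis unconditional (Section~\ref{sec:formalPrivacy}'s point), so the dependency chain privacy $\to$ generalization $\to$ accuracy-assumption $\to$ no-capping is a genuine implication chain, not a fixed point; one simply needs to present the lemmas in this order and check that at each step the parameter settings of items 1--4 supply exactly the quantitative hypotheses the next lemma demands.
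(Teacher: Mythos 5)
Your proposal is correct and follows essentially the same route as the paper's own proof: establish Assumption~\ref{ass:accurateEstimators} for all $t\in[m]$ via Lemma~\ref{lem:accurateEstimationsAllTime} (using Lemmas~\ref{lem:EstimatorsRandomStringsPrivacy}, \ref{lem:boundedEstimationRanges}, \ref{lem:AccurateEstimations} with the stated $\gamma_j,\eps_j,P_j$ calibrations), avoid capping via Corollary~\ref{cor:NoCapping}, condition on a good execution, collect the $\mathsf{K}_j$ requirements from the supporting lemmas into item~4, and conclude with Lemma~\ref{lem:outputAccuracy} and a union bound giving failure probability $\delta$. Your remark on how the hardcoded caps break the privacy/accuracy circularity matches the paper's own discussion of this point.
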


\begin{proof} The proof follows by two parts: estimators of the framework remain accurate under adaptive inputs and that the framework computes an accurate output from their estimation.

\paragraph{Estimators are accurate w.h.p.} 
Lemma \ref{lem:accurateEstimationsAllTime} holds due to the privacy of the data bases $\{\mathcal{R}_j\}_{j\in [\beta+1]}\cup\{\mathcal{R}_{\STwrapper}\}$ and by making sure the estimations of level $[\beta]$ are done within the estimators accuracy range. The later holds by Lemma \ref{lem:boundedEstimationRanges} with configuring $\gamma_j = \Omega(2^j\cdot \alpha)$.
The privacy of $j\in [\beta+1]\cup\{\STwrapper\}$ databases $\mathcal{R}_j$ in Lemma \ref{lem:EstimatorsRandomStringsPrivacy} is due to calibrating the noise parameters $\eps_{j}=O\left(\eps/\sqrt{P_j\log(1/\delta^{\prime})}\right)$. By Corollary \ref{cor:NoCapping}, it is sufficient to set $P_j = \Omega\left(\frac{\lambda}{2^j}\right)$ (and $P_{\STwrapper} = P_{\beta}$) to have sufficient privacy budget for all databases $\mathcal{R}_j$. 
And so, by Lemma \ref{lem:accurateEstimationsAllTime}, setting $\delta^{\prime}=\eps\cdot \delta/(2m(\beta+2)) = O(\eps\cdot \delta/m\beta)$ yields that at least $80\%$ of estimators of each of the levels $j$ are accurate on all $t\in [m]$ (assumption \ref{ass:accurateEstimators}) w.p. at least $1-\delta/2$. 

\paragraph{Output accuracy.} It remains to show that the requirements of above lemmas are met. That is we have a {\em good run} (Definition \ref{def:goodRun}) w.h.p, and in addition the number of estimators $\mathsf{K}_j$ on each level is calibrated according to the constraints of above lemmas. We begin by calculating a sufficient number of estimators $\mathsf{K}_j$ for the required lemmas: First, 
Lemma \ref{lem:AccurateEstimations} require for all levels to have $\mathsf{K} \geq \frac{1}{\eps^2}\log\left( \frac{2\eps}{\delta^{\prime}} \right)$. 
Lemma \ref{lem:FrozenAccuracy} requires $\mathsf{K}_{j} = \Omega \left( \frac{1}{\eps_j} \log\left( \frac{P_{j}}{\delta^{M}\alpha} \log(n)\right) \right)$, 
Lemma \ref{lem:maxPhaseProgress} requires $\mathsf{K}_{\STwrapper} = \Omega\left( \frac{1}{\eps_{\STwrapper}}  \log\left( \frac{m}{\delta^{N}}\right)  \right)$,
Lemma \ref{lem:outputAccuracy} requires $\mathsf{K}_{j} = \Omega \left( \frac{1}{\eps_j}\log\left( \frac{m}{\delta^{N}}\right) \right)$,
Lemma \ref{lem:consequtiveOutputUpdatesProgress} requires $\mathsf{K}_{j} = \Omega \left( \frac{1}{\eps_j}
\log\left( \frac{m}{\delta^{N}}\right)
\right)$,
and we have overall requirement of:
\begin{equation}
\mathsf{K}_{j} = \Omega
\left( \frac{1}{\eps_j} \left[
\log\left( \frac{P_{j}}{\delta^{M}\alpha} \log(n)\right)
+
\log\left( \frac{m}{\delta^{N}}\right)
\right]
+
\frac{1}{\eps^2}\log\left( \frac{2\eps}{\delta^{\prime}} \right)
\right)\text{.} \label{eq:numberOfEstimatorRequired}
\end{equation}
Now, recall that $\eps=10^{-2}$ (constant). Setting $\delta^{N}=\delta/(4\cdot (\beta+2))$, $\delta^M = \delta/(4\cdot(\beta+1))$ and denote $\delta^{*} = \delta/\beta$ we get that $\delta^{\prime} = O(\delta^{*}/m)$, $\delta^{M}, \delta^{N} = O(\delta^{*})$. And so Equation \ref{eq:numberOfEstimatorRequired} simplified:
\begin{equation}
\mathsf{K}_{j} = \Omega
\left( \sqrt{P_j\log\left(\frac{m}{\delta^{*}} \right)} \left[
\log\left( \frac{P_{j}}{\delta^{*}\alpha} \log(n)\right)
+
\log\left( \frac{m}{\delta^{*}}\right)
\right]
\right)\text{.} 
\end{equation}
The setting $\delta^{N}=\delta/(4\cdot (\beta+2))$, $\delta^M = \delta/(4\cdot(\beta+1))$ also yields that we have a good run w.p. at least $\delta/2$. And so, all the requirement of Lemma \ref{lem:outputAccuracy} are met and we have with probability at least $1-\delta$ the output is accurate in all time $t\in [m]$.
\end{proof}

%%%%%%%%%%%%%%%%%%%%%%%%%%%%%%%%%%%%%%%%%%%%
% Main result - Framework Space Complexity
%%%%%%%%%%%%%%%%%%%%%%%%%%%%%%%%%%%%%%%%%%%%
\subsection{Calculating the space complexity}\label{sec:formalSpaceComplexity} 
Algorithm \texttt{RobustDE} space complexity is determined by its input parameters: accuracy parameter $\alpha>0$, 
the flip number bound $\lambda$ of the input stream $\SSS$ for functionality $\mathcal{F}$, failure probability $\delta \in (0,1]$, and space complexity of the given subroutines $\mathsf{E}_{\ST}$ and $\mathsf{E}_{\TDE}$ (denoted $S_{\ST}(\alpha_{\ST},\delta_{\ST},n,m)$ and $S_{\TDE}(\gamma,\alpha_{\TDE},\delta_{\TDE},n,m)$ correspondingly).
Observe that the pointers to the subroutines are on order of the total number of estimators (i.e. $\sum_{j \in [\beta+1]} \mathsf{K}_{j}$). Therefore, the dominating parameter of the space complexity is the number of the estimators, which multiplied by the space complexity of the estimators will by the dominating term in the total space complexity.
These set sizes $\mathsf{K}_{j}$ are determined by the number of times the corresponding estimators type that were used caused an output modification, which is upper bounded by $P_j$ and calculated in Lemma \ref{lem:outputUpdatesBounds}. In Theorem \ref{thm:algCorrectness} we have the sufficient numbers of set-sizes to calculate the space complexity of algorithm \texttt{RobustDE}:

%%%%%%%%%%%%%%%%%%%%%%%%%%%%%%%%%%%%%%%%%%%%
% Formal proofs - Framework Space Complexity
%%%%%%%%%%%%%%%%%%%%%%%%%%%%%%%%%%%%%%%%%%%%
\begin{theorem}[Framework for Adversarial Streaming - Space]\label{thm:FrameworkSpace}
Provided that there exist: 
\begin{enumerate} 
	\item An oblivious streaming algorithm $\mathsf{E}_{\ST}$ for functionality $\mathcal{F}$, that guarantees that with probability at least $9/10$ all of it's estimates are accurate to within a multiplicative error of $(1\pm \alpha_{\ST})$ with space complexity of $S_{\ST}(\alpha_{\ST}, n,m)$
	\item For every $\gamma,p$ there is a $(\gamma,\alpha_{\TDE},p,\frac{1}{10})$-$\TDE$ for $\mathcal{F}$ using space $\gamma \cdot S_{\TDE}(\alpha_{\TDE},p,n,m)$.
\end{enumerate}  
Then there exist an adversarially robust streaming algorithm for functionality $\mathcal{F}$ that for any stream with a bounded flip number $\lambda_{\alpha/8,m}< \lambda$, s.t.\ with probability at least $1-\delta$ its output is accurate to within a multiplicative error of $(1\pm \alpha)$ for all times $t\in [m]$, and has a space complexity of
$$
O\left(
\sqrt{\alpha \lambda}
\cdot 
\polylog(\lambda, \alpha^{-1}, \delta^{-1}, n, m)
\right)
\cdot 
\left[ 
S_{\ST}(O(\alpha),n,m) + S_{\TDE}(O(\alpha/\log(\alpha^{-1})),\lambda,n,m) 
\right] \text{.}
$$
\end{theorem}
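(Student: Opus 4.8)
The plan is to instantiate Algorithm~\texttt{RobustDE} with the given oblivious strong tracker $\mathsf{E}_{\ST}$ (with accuracy $\alpha_{\ST}=O(\alpha)$) and, for each level $j\in[\beta]$, with a $(\gamma_j,\alpha_{\TDE},P_j,\tfrac1{10})$-TDE obtained from the assumed TDE family by setting $\gamma_j=\Omega(2^j\alpha)$ and $p=P_j$, where $\alpha_{\TDE}=O(\alpha/\log(\alpha^{-1}))$ and $P_j=\Omega(\lambda/2^j)$ (and $P_{\STwrapper}=P_\beta$). With these choices all hypotheses of Theorem~\ref{thm:algCorrectness} are met: condition~1 there is exactly $\gamma_j=\Omega(2^j\alpha)$; condition~3 is $P_j=\Omega(\lambda/2^j)$, which by Corollary~\ref{cor:NoCapping} guarantees no capping occurs (using $\lambda_{\alpha'}(\SSS)<\lambda$ with $\alpha'=(1/2)\StepSize(\alpha)=O(\alpha)$, consistent with the hypothesis $\lambda_{\alpha/8,m}<\lambda$); conditions~2 and~4 fix $\eps_j=\tilde O(P_j^{-1/2})$ and the set sizes $\mathsf{K}_j$. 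Theorem~\ref{thm:algCorrectness} then yields that with probability at least $1-\delta$ the output is within $(1\pm\alpha)$ of $\mathcal F(t)$ for all $t\in[m]$, which is the correctness claim.

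The remaining work is the space accounting. The dominant term is $\sum_{j\in J}\mathsf{K}_j\cdot(\text{space of one level-}j\text{ estimator})$, plus lower-order bookkeeping (pointers, counters $\tau,\Capp_j$, thresholds), which are all $\polylog$. For the strong-tracker levels ($j\in\{\beta,\STwrapper\}$) one estimator costs $S_{\ST}(O(\alpha),n,m)$; for a TDE level $j<\beta$ one estimator costs $\gamma_j\cdot S_{\TDE}(\alpha_{\TDE},P_j,n,m)=O(2^j\alpha)\cdot S_{\TDE}(O(\alpha/\log\alpha^{-1}),\lambda,n,m)$, using $P_j\le\lambda$ and monotonicity of $S_{\TDE}$ in $p$. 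From Theorem~\ref{thm:algCorrectness}, $\mathsf{K}_j=\tilde O(\sqrt{P_j})=\tilde O(\sqrt{\lambda/2^j})$ up to $\polylog(\lambda,\alpha^{-1},\delta^{-1},n,m)$ factors. Hence the cost of level $j<\beta$ is
$$
\mathsf{K}_j\cdot\gamma_j\cdot S_{\TDE}
=\tilde O\!\left(\sqrt{\lambda/2^j}\right)\cdot O(2^j\alpha)\cdot S_{\TDE}
=\tilde O\!\left(\sqrt{\alpha^2\lambda\,2^j}\cdot\alpha^{-1}\cdot\alpha\right)\cdot S_{\TDE},
$$
i.e.\ $\tilde O(\sqrt{\alpha\lambda}\cdot\sqrt{2^j/\beta})\cdot S_{\TDE}$ after absorbing constants; one checks the per-level bound is $\tilde O(\sqrt{\alpha\lambda})\cdot S_{\TDE}$ and the geometric-type sum over the $\beta=O(\log\alpha^{-1})$ levels contributes only another $\polylog$ factor. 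For the $\ST$ levels, $\mathsf{K}_\beta=\tilde O(\sqrt{P_\beta})=\tilde O(\sqrt{\alpha\lambda})$ (since $P_\beta=\Omega(\lambda/2^\beta)=\Omega(\alpha\lambda)$), so those two levels contribute $\tilde O(\sqrt{\alpha\lambda})\cdot S_{\ST}(O(\alpha),n,m)$. Summing, the total is
$$
\tilde O\!\left(\sqrt{\alpha\lambda}\right)\cdot\Big[S_{\ST}(O(\alpha),n,m)+S_{\TDE}(O(\alpha/\log\alpha^{-1}),\lambda,n,m)\Big],
$$
as claimed.

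The main obstacle is the space arithmetic for the TDE levels: one must verify that the product $\mathsf{K}_j\cdot\gamma_j\cdot S_{\TDE}$ is the \emph{same} (up to $\polylog$) for every level $j$, which is exactly the $\sqrt{2^{-j}}\cdot 2^j=\sqrt{2^j}$ cancellation that makes $\sum_j$ only lose a $\polylog$ factor rather than a $\poly(1/\alpha)$ factor; this is where the careful choices $P_j=\Theta(\lambda/2^j)$, $\eps_j=\tilde\Theta(P_j^{-1/2})$, and $\gamma_j=\Theta(2^j\alpha)$ all interlock. A secondary point is bounding $S_{\TDE}$ with $p=P_j$ uniformly by $S_{\TDE}$ with $p=\lambda$ (legitimate since $P_j\le\lambda$ and the cost is nondecreasing in $p$, per Corollary~\ref{cor:TDEfromDE} the dependence is only logarithmic in $p$ anyway), and checking that replacing $\lambda_{\alpha',m}(\SSS)$ everywhere by the hypothesis bound $\lambda_{\alpha/8,m}<\lambda$ is valid because $\alpha'=(1/2)\StepSize(\alpha)$ can be taken to be $\alpha/8$. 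Everything else is an appeal to Theorem~\ref{thm:algCorrectness}.
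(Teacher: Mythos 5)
Your proposal is correct and follows essentially the same route as the paper: instantiate \texttt{RobustDE} with $\gamma_j=\Theta(2^j\alpha)$, $P_j=\Theta(\lambda/2^j)$, $\eps_j=\tilde\Theta(P_j^{-1/2})$, invoke Theorem~\ref{thm:algCorrectness} for correctness, and then account space level-by-level using $\mathsf{K}_j\cdot\gamma_j\cdot S_{\TDE}=\tilde O(\alpha\sqrt{\lambda 2^j})\cdot S_{\TDE}$ (bounding $S_{\TDE}$ at $p=\lambda$ by monotonicity) plus $\tilde O(\sqrt{\alpha\lambda})\cdot S_{\ST}$ for the strong-tracker levels. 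The only cosmetic difference is that the paper sums the geometric series $\sum_j\sqrt{2^j}=O(\alpha^{-1/2})$ explicitly rather than bounding each level by its maximum and paying an extra $\log(\alpha^{-1})$ factor, which is immaterial under the stated $\polylog$ slack.
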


\begin{proof} 
Setting the estimators set sizes $\mathsf{K}_j$ for $j\in [\beta+1]$  according to Theorem \ref{thm:algCorrectness}, 
ensures that with probability at least $1-\delta$, algorithm \texttt{RobustDE} produce an accurate output at all times $t\in [m]$ with the presence of an adaptive adversary controlling the stream.
And so, we calculate the space needed according this setting.
We separate the calculation of levels according to the type of estimators used in each level.

\paragraph{$\TDE$ level $j$ space:} Consider an oblivious toggle difference estimator for the function $\mathcal{F}$, $\mathsf{E}_{\TDE}$, with space complexity $\Space(\mathsf{E}_{\TDE}) = \gamma\cdot S_{\TDE}(\alpha, p, n, m)$.
Recall (see Lemma \ref{lem:boundedEstimationRanges}, Lemma \ref{lem:outputAccuracy} and Corollary \ref{cor:NoCapping}) that for level $j$ estimators we have $\gamma_j = O(\alpha \cdot 2^{j})$, $\alpha_{\TDE} = O(\alpha/\beta) = O(\alpha/\log(\alpha^{-1}))$, $P_j = O(\lambda/2^j)$. Plugging in the parameters of granularity level $j$ yields an oblivious TDE with space complexity of:
\begin{align*}
\Space(\mathsf{E}_{\TDEj}) &= \gamma_j \cdot S_{\TDE}(\alpha_{\TDE}, P_j, n, m) \\
&= O(\alpha \cdot 2^{j}) \cdot S_{\TDE}\left( P_j \right)
\end{align*}
Accounting for the sufficient amount of estimators $\mathsf{K}_{\TDE,j}$:
\begin{align*}
\Space(\TDEj) &= \mathsf{K}_{\TDE,j}\cdot \Space(\mathsf{E}_{\TDE,j})\\
&= \sqrt{P_j\cdot\log\left(\frac{m}{\delta^{*}}\right)}\cdot \left[ \log\left( \frac{m}{\delta^{*}}\right) + \log\left( \frac{P_{j}}{\alpha\delta^{*}} \log(n)\right) \right] \cdot O(\alpha \cdot 2^{j}) \cdot S_{\TDE}\left( P_j \right)\\
&= \sqrt{\frac{\lambda}{2^j}\cdot\log\left(\frac{m}{\delta^{*}}\right)}\cdot \left[ \log\left( \frac{m}{\delta^{*}}\right) + \log\left( \frac{P_{j}}{\alpha\delta^{*}} \log(n)\right) \right] \cdot 
O(\alpha \cdot 2^{j}) \cdot S_{\TDE}\left( P_j \right)\\
&= O\left( 
\alpha \cdot \sqrt{\lambda \cdot 2^{j}}
\cdot \left[
\log\left( \frac{m}{\delta^{*}}\right) + \log\left( \frac{P_{j}}{\alpha\delta^{*}} \log(n)\right) 
\right]
\cdot \sqrt{\log\left(\frac{m}{\delta^{*}}
\right)}
\cdot S_{\TDE}\left( P_j \right)
\right)\\
&\stackrel{(\ast)}{\leq} O\left( 
\alpha \cdot \sqrt{\lambda \cdot 2^{j}}
\cdot \left[
\log\left( \frac{m}{\delta^{*}}\right) + \log\left( \frac{\lambda}{\alpha\delta^{*}} \log(n)\right) 
\right]
\cdot \sqrt{\log\left(\frac{m}{\delta^{*}}
\right)}
\cdot S_{\TDE}\left( \lambda \right)
\right)\\
&= O\left( 
\alpha \cdot \sqrt{\lambda \cdot 2^{j}}
\cdot
\polylog\text{}_{\TDE}(\lambda, \alpha^{-1}, \delta^{-1}, n, m)
\cdot S_{\TDE}\left( \lambda \right)
\right)
\end{align*}
Where $(\ast)$ holds since $S_{\TDE}$ is monotonic increasing in its $p$ parameter and for all $j$ $P_j = O(\lambda/2^j) \leq O(\lambda)$ and on last equality we denoted $\polylog\text{}_{\TDE} = \left[
\log\left( \frac{m}{\delta^{\prime}}\right) + \log\left( \frac{\lambda}{\alpha\delta^{\prime}} \log(n)\right) 
\right]
\cdot \sqrt{\log\left(\frac{m}{\delta^{*}}
\right)}$.

\paragraph{TDEs total space:}
We sum the calculated $\Space(\TDEj)$ over $j\in [\beta]$ for $\beta=\lceil \log(\alpha^{-1})\rceil$ to get the total TDE estimators space:
\begin{align*}
\Space(\AllTDE) &= \sum_{j\in [\beta]}\Space(\TDEj)\\
&\leq\sum_{j\in [\beta]}O\left( 
\alpha \cdot \sqrt{\lambda \cdot 2^{j}}
\cdot
\polylog\text{}_{\TDE}
\cdot S_{\TDE}
\right)\\
&= O\left( 
\alpha \cdot \sqrt{\lambda} \cdot
\polylog\text{}_{\TDE} \cdot S_{\TDE}
\right) \cdot \sum_{j\in [\beta]}\sqrt{2^{j}}\\
&\stackrel{(\ast)}= O\left( 
\alpha \cdot \sqrt{\lambda} \cdot
\polylog\text{}_{\TDE} \cdot S_{\TDE}
\right) \cdot O\left( \alpha^{-0.5} \right)\\
&= O\left( 
\sqrt{\alpha \cdot \lambda}
\cdot
\polylog\text{}_{\TDE} 
\right)
\cdot S_{\TDE}
\end{align*}
where equality $(\ast)$ is true since for $\beta=\lceil \log(\alpha^{-1})\rceil$ the following holds: 
$$\sum_{j\in [\beta]}\sqrt{2^j}=\sum_{j\in [\beta]}(\sqrt{2})^j=\frac{\sqrt{2}}{\sqrt{2}-1}\cdot\left( (\sqrt{2})^{\log(\alpha^{-1})} -1 \right)=\frac{\sqrt{2}}{\sqrt{2}-1}\cdot\left( \alpha^{-0.5} -1 \right) = O(\alpha^{-0.5})$$

\paragraph{ST space:} Here we argue about the space complexity of both $\ST$, $\STwrapper$ estimators. Since all their parameters are identical, we only calculate the space for $\ST$ estimators.
Consider an oblivious strong tracker for the function $\mathcal{F}$, $\mathsf{E}_{\ST}$, with space complexity $\Space(\mathsf{E}_{\ST}) = S_{\ST}(\alpha, n, m)$. 
Recall (see Lemma \ref{lem:outputAccuracy} and Corollary \ref{cor:NoCapping}) that for level $j\in \{ \ST, \STwrapper \}$ estimators we have $\alpha_{ST} = O(\alpha)$, $P_j = O(\alpha \cdot \lambda)$.
Then plugging in the parameters for the $\ST$, $\STwrapper$ type estimator yields an oblivious estimator with space complexity of:
\begin{align*}
\Space(\mathsf{E}_{\ST}) &= S_{\ST}(\alpha_{\ST},n,m)
\end{align*}
Accounting for the sufficient amount of estimators $\mathsf{K}_{\ST}$:
\begin{align*}
\Space(\AllST) &= \mathsf{K}_{\ST}\cdot \Space(\mathsf{E}_{\ST})\\
&= \left( \sqrt{P_{\ST}\cdot\log\left(\frac{m}{\delta^{*}}\right)}\cdot \left[ \log\left( \frac{m}{\delta^{*}}\right) + \log\left( \frac{P_{\ST}}{\alpha\delta^{*}} \log(n)\right) \right] \right)
\cdot S_{\ST}\\
&=O\left( \sqrt{\alpha \cdot \lambda\cdot\log\left(\frac{m}{\delta^{*}}\right)}\cdot \left[ \log\left( \frac{m}{\delta^{*}}\right) + \log\left( \frac{\alpha\lambda}{\alpha\delta^{*}} \log(n)\right) \right] \right)
\cdot S_{\ST}\\
&=O\left(\sqrt{\alpha\cdot \lambda}\cdot \left[ 
\log\left( \frac{m}{\delta^{*}}\right) + \log\left( \frac{\lambda}{\delta^{*}} \log(n)\right)
\right] 
\sqrt{\log\left(\frac{m}{\delta^{*}}\right)}
\right)\cdot S_{\ST}\\
&= O\left(\sqrt{\alpha\cdot \lambda}
\cdot \polylog\text{}_{\ST}(\lambda, \alpha^{-1}, \delta^{-1}, n, m)
\right)
\cdot S_{\ST}
\end{align*}
on last equality we denoted $\polylog\text{}_{\ST} = \left[ 
\log\left( \frac{m}{\delta^{*}}\right) + \log\left( \frac{\lambda}{\delta^{*}} \log(n)\right)
\right] 
\sqrt{\log\left(\frac{m}{\delta^{*}}\right)}$.\\

\paragraph{Algorithm space complexity:} We sum the contribution of all estimators, Strong trackers and TDEs, to get:
\begin{align*}
\Space(\texttt{RobustDE}) =& \Space(\AllST) + \Space(\AllTDE)\\
=&O\left(\sqrt{\alpha\cdot \lambda}
\cdot \polylog\text{}_{\ST}(\lambda, \alpha^{-1}, \delta^{-1}, n, m)
\right)
\cdot S_{\ST} \text{ }+\\
& O\left( 
\sqrt{\alpha \cdot \lambda}
\cdot
\polylog\text{}_{\TDE}(\lambda, \alpha^{-1}, \delta^{-1}, n, m)
\right)
\cdot S_{\TDE}\\
=&O\left(
\sqrt{\alpha\cdot \lambda} \cdot 
\polylog\text{}_{\ALG}(\lambda, \alpha^{-1}, \delta^{-1}, n, m)
\right)\cdot 
\left[
S_{\ST} + 
S_{\TDE}
\right]
\end{align*}
Where 
\begin{enumerate}
	\item $S_{\ST} = S_{\ST}(\alpha_{\ST},n,m) = S_{\ST}(O(\alpha),n,m)$.
	\item $S_{\TDE} = S_{\TDE}(\alpha_{\TDE},\lambda,n,m) = S_{\TDE}(O(\alpha/\log(\alpha^{-1})),\lambda,n,m)$.
	\item $\polylog_{\ALG} =  \left[ 
\log\left( \frac{m}{\delta^{*}}\right) + \log\left( \frac{\lambda}{\alpha\delta^{*}} \log(n)\right)
\right] 
\sqrt{\log\left(\frac{m}{\delta^{*}}\right)}
$, for $\delta^{*} = \delta/\log(\alpha^{-1})$.
\end{enumerate}
\end{proof}

%%%%%%%%%%%%%%%%%%%%%%%%%%%%%%%%%%%%%%%%%%%%%%%%%
% Corollary for functions with space \alpha^{-2}
%%%%%%%%%%%%%%%%%%%%%%%%%%%%%%%%%%%%%%%%%%%%%%%%%
\begin{corollary}
Provided that there exist: 
\begin{enumerate} 
	\item An oblivious streaming algorithm $\mathsf{E}_{\ST}$ for functionality $\mathcal{F}$, that guarantees that with probability at least $9/10$ all of it's estimates are accurate to within a multiplicative error of $(1\pm \alpha_{\ST})$ with space complexity of $ O\left( \frac{1}{\alpha_{\ST}^{2}}\cdot f_{\ST}\right)$ for $f_{\ST}=\polylog(\alpha_{\ST},n,m)$ 
	\item For every $\gamma,p$ there is a $(\gamma,\alpha_{\TDE},\frac{1}{10},p)$-$\TDE$ for $\mathcal{F}$ using space $\gamma \cdot O\left( \frac{1}{\alpha_{\TDE}^{2}}\cdot f_{\TDE} \right)$ for $f_{\TDE}=\polylog(\alpha_{\TDE},p,n,m)$.
\end{enumerate}  
Then there exist an adversarially robust streaming algorithm for functionality $\mathcal{F}$ that for any stream with a bounded flip number $\lambda_{\frac{1}{8}\alpha,m}< \lambda$, s.t.\ with probability at least $1-\delta$ its output is accurate to within a multiplicative error of $(1\pm \alpha)$ for all times $t\in [m]$, and has a space complexity of
$$
O\left(
\frac{\sqrt{\lambda}}{\alpha^{1.5}} \cdot 
\polylog(\lambda, \alpha^{-1}, \delta^{-1}, n, m)
\right)
$$
\end{corollary}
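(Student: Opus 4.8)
The plan is to obtain the statement as a direct instantiation of Theorem~\ref{thm:FrameworkSpace}. First I would verify that its two hypotheses hold for the objects given here: the assumed oblivious strong tracker is exactly the object required in item~(1) of Theorem~\ref{thm:FrameworkSpace}, with $S_{\ST}(\alpha_{\ST},n,m)=O(\alpha_{\ST}^{-2}f_{\ST})$, and the assumed family of $(\gamma,\alpha_{\TDE},\tfrac1{10},p)$-toggle difference estimators is exactly item~(2), with $S_{\TDE}(\alpha_{\TDE},p,n,m)=O(\alpha_{\TDE}^{-2}f_{\TDE})$. Thus Theorem~\ref{thm:FrameworkSpace} yields an adversarially robust algorithm for $\mathcal{F}$ that, for every stream with $\lambda_{\alpha/8,m}<\lambda$, outputs a $(1\pm\alpha)$-approximation of $\mathcal{F}(t)$ at all $t\in[m]$ with probability at least $1-\delta$ (this accuracy/robustness conclusion is inherited verbatim), using space
$$O\!\left(\sqrt{\alpha\lambda}\cdot\polylog(\lambda,\alpha^{-1},\delta^{-1},n,m)\right)\cdot\Big[\,S_{\ST}(O(\alpha),n,m)+S_{\TDE}(O(\alpha/\log(\alpha^{-1})),\lambda,n,m)\,\Big].$$

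Next I would evaluate the two space terms at the parameter settings the framework forces. Recall from Lemma~\ref{lem:boundedEstimationRanges}, the proof of Theorem~\ref{thm:algCorrectness}, and Corollary~\ref{cor:NoCapping} that the framework runs the strong trackers with $\alpha_{\ST}=\Theta(\alpha)$ and the toggle difference estimators with $\alpha_{\TDE}=\Theta(\alpha/\log(\alpha^{-1}))$, and that the per-level granularity multiplier $\gamma_j=\Theta(2^j\alpha)$ (together with the $\tilde{\Theta}(\sqrt{P_j})$ copies at level $j$, where $P_j=\Theta(\lambda/2^j)$) has already been absorbed into the $\sqrt{\alpha\lambda}$ prefactor of Theorem~\ref{thm:FrameworkSpace} via $\sum_{j\in[\beta]}\gamma_j\sqrt{P_j}=\sum_{j\in[\beta]}\Theta(\alpha\sqrt{\lambda 2^j})=O(\sqrt{\alpha\lambda})$; consequently only the ``base'' per-estimator costs enter the bracket. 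With $\alpha_{\ST}=\Theta(\alpha)$ we get $S_{\ST}(O(\alpha),n,m)=O(\alpha^{-2}f_{\ST})$, and with $\alpha_{\TDE}=\Theta(\alpha/\log(\alpha^{-1}))$ the quadratic dependence on $\alpha_{\TDE}^{-1}$ gives $S_{\TDE}(O(\alpha/\log(\alpha^{-1})),\lambda,n,m)=O(\log^2(\alpha^{-1})\cdot\alpha^{-2}\cdot f_{\TDE})$, where $f_{\ST}=\polylog(\alpha,n,m)$ and $f_{\TDE}=\polylog(\alpha/\log(\alpha^{-1}),\lambda,n,m)=\polylog(\alpha,\lambda,n,m)$. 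Folding the $\log^2(\alpha^{-1})$ factor and $f_{\ST},f_{\TDE}$ into a single $\polylog(\lambda,\alpha^{-1},\delta^{-1},n,m)$, the bracket is $O(\alpha^{-2}\cdot\polylog)$, and multiplying by the prefactor gives total space $O(\sqrt{\alpha\lambda}\cdot\alpha^{-2}\cdot\polylog)=O\!\left(\frac{\sqrt{\lambda}}{\alpha^{1.5}}\cdot\polylog(\lambda,\alpha^{-1},\delta^{-1},n,m)\right)$, which is the asserted bound.

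I do not expect any genuine obstacle: this is a book-keeping corollary of Theorem~\ref{thm:FrameworkSpace}. The only step deserving a moment of care is tracking how the framework-mandated accuracy degradation $\alpha_{\TDE}=\Theta(\alpha/\log(\alpha^{-1}))$ inflates the toggle-difference-estimator space by a $\log^2(\alpha^{-1})$ factor (because that space is quadratic in $1/\alpha_{\TDE}$), and confirming that this extra factor is subsumed by the polylogarithmic slack already present in Theorem~\ref{thm:FrameworkSpace}; all remaining steps are a direct substitution.
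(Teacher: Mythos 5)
Your proposal is correct and matches the paper's (implicit) argument: the corollary is stated as a direct instantiation of Theorem~\ref{thm:FrameworkSpace} with $S_{\ST}=O(\alpha_{\ST}^{-2}f_{\ST})$, $S_{\TDE}=O(\alpha_{\TDE}^{-2}f_{\TDE})$, $\alpha_{\ST}=O(\alpha)$ and $\alpha_{\TDE}=O(\alpha/\log(\alpha^{-1}))$, with the resulting $\log^{2}(\alpha^{-1})$ blowup absorbed into the polylogarithmic factor, yielding $\sqrt{\alpha\lambda}\cdot\alpha^{-2}\cdot\polylog=\sqrt{\lambda}\,\alpha^{-1.5}\cdot\polylog$. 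Your observation that the $\gamma_j$ multipliers are already accounted for in the $\sqrt{\alpha\lambda}$ prefactor (via $\sum_j \gamma_j\sqrt{P_j}$ in the proof of Theorem~\ref{thm:FrameworkSpace}) is exactly how the paper's space accounting works, so only the base per-estimator costs enter the bracket.
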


%%%%%%%%%%%%%%%%%%%%%%%%%%%%%%%%%%%%%%%%%%%%%%
% Section: Application - F_2
% Construction and formal analysis of Guardian
%%%%%%%%%%%%%%%%%%%%%%%%%%%%%%%%%%%%%%%%%%%%%%
\section{Formal Details for Applications (Section~\ref{sec:Applications})}\label{sec:MissingAppli}
In this section we give the formal details for the resulting space bounds for $F_2$. As these bounds are a function of a stream characterization, we begin with that.

\paragraph{Characterising the input streams for $F_2$.} 
The $F_2$ $\DE$ construction presented in \cite{woodruff2020tight} has an additional requirement for turnstile streams. We now present this requirement:
\begin{lemma}[Difference estimator for $F_2$ (Lemma 3.2, \cite{woodruff2020tight})] 
There exists a $(\gamma, \alpha, \delta)$-difference estimator for $F_2$ that uses space of $O(\gamma\eps^{-1}\log n (\log \alpha^{-1} + \log \delta^{-1}))$ for streams $\SSS$ that for any time $t>e$, where $e\in [m]$ is the enabling time, admit:
\begin{equation}
  F_2(\mathcal{S}_{t}^{e})\leq \gamma\cdot F_2(\mathcal{S}_{e}) \label{req:F2DEratio}
\end{equation}
\end{lemma}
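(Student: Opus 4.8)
I would reconstruct the linear-sketch difference estimator of \cite{woodruff2020tight}. Fix the (online) enabling time $e$, and for $t>e$ set $x := u^{(e)}$ and $y := u^{(t)} - u^{(e)}$, so that $u^{(t)} = x+y$ and, by the hypothesis~\eqref{req:F2DEratio}, $\|y\|_2^2 \le \gamma\,\|x\|_2^2$. The construction rests on the identity
$$
F_2(\mathcal{S}_t) - F_2(\mathcal{S}_e) \;=\; \|x+y\|_2^2 - \|x\|_2^2 \;=\; 2\langle x,y\rangle + \|y\|_2^2 .
$$
Since $\|y\|_2^2 \le \gamma\, F_2(\mathcal{S}_e)$, both terms on the right are ``small'' compared to $F_2(\mathcal{S}_e) = \|x\|_2^2$, and this is precisely what will allow a cheap sketch to estimate their sum to additive error $\alpha\, F_2(\mathcal{S}_e)$.

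Concretely: maintain one AMS-style linear sketch $Q \in \{\pm1\}^{k\times n}$, whose rows are drawn from a $4$-wise independent family, with $k = \Theta(\gamma/\alpha^2)$ (in the framework one only ever uses $\gamma = O(1)$, which keeps this budget honest). Keep the running vector $Q u^{(t)}$ --- coordinate-wise a counter of magnitude $\poly(n)$, hence $O(k\log n)$ bits. When $e$ arrives, freeze the snapshot $Q u^{(e)}$; thereafter, by linearity of $Q$, at every $t>e$ the algorithm outputs
$$
z_t \;:=\; \tfrac1k\Big( \|Q u^{(t)}\|_2^2 - \|Q u^{(e)}\|_2^2 \Big) \;=\; \tfrac1k\big( 2\langle Qx, Qy\rangle + \|Qy\|_2^2 \big),
$$
and by unbiasedness of AMS estimators $\E[z_t] = 2\langle x,y\rangle + \|y\|_2^2 = F_2(\mathcal{S}_t) - F_2(\mathcal{S}_e)$.

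The heart of the argument, and the only place hypothesis~\eqref{req:F2DEratio} enters, is the variance bound. A second-moment computation for $4$-wise independent sign sketches gives $\mathrm{Var}(z_t) = O\!\big( \|x\|_2^2\|y\|_2^2 + \|y\|_2^4 \big)/k$; plugging in $\|y\|_2^2 \le \gamma\|x\|_2^2$ (and $\gamma = O(1)$) this is $O(\gamma\,\|x\|_2^4/k)$, so with $k = \Theta(\gamma/\alpha^2)$ Chebyshev's inequality gives $\big|z_t - (F_2(\mathcal{S}_t)-F_2(\mathcal{S}_e))\big| \le \alpha\|x\|_2^2 = \alpha\, F_2(\mathcal{S}_e)$ with probability at least $\tfrac9{10}$. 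Running $\Theta(\log(m/\delta))$ independent copies and taking the median boosts this to failure probability $\delta/m$ at a fixed $t$, and a union bound over the (at most $m$) relevant times yields the for-all-$t$ guarantee with probability $1-\delta$; the space is $O\!\big(\gamma\,\alpha^{-2}\log n\,(\log m + \log\delta^{-1})\big)$, which matches the stated bound up to the refinement of \cite{woodruff2020tight} that replaces the $\log m$ factor by $\log\alpha^{-1}$.

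I expect two steps to need the most care. First, the exact second-moment calculation for the cross term $\langle Qx, Qy\rangle$ under only $4$-wise (rather than full) independence: it is routine, but one must track all the pairing patterns to confirm that the variance is controlled by $\|x\|_2^2\|y\|_2^2$ --- hence by $\gamma\|x\|_2^4$ --- rather than by $\|x\|_2^4$, which is the entire point of a difference estimator. Second, upgrading the per-time guarantee to a simultaneous all-times guarantee without a full $\log m$ blow-up in the number of sketch copies: the route of \cite{woodruff2020tight} exploits that $F_2$ moves by at most a $(1\pm O(\alpha))$ factor per update (cf.\ Condition~\ref{con:streamMaxUpdateSize}) to certify accuracy only at a sparse subsequence of ``checkpoint'' times while bounding the drift of $z_t$ in between, which is what trims the $\log m$ down to $\log\alpha^{-1}$.
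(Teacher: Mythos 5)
This lemma is quoted verbatim from \cite{woodruff2020tight} (their Lemma~3.2) and the paper supplies no proof of its own, so the only meaningful comparison is with the cited construction: your reconstruction --- the decomposition $F_2(\mathcal{S}_t)-F_2(\mathcal{S}_e)=2\langle x,y\rangle+\|y\|_2^2$ with $x=u^{(e)}$, $y=u^{(t)}-u^{(e)}$, a single linear AMS-type sketch snapshotted at the online enabling time, the second-moment bound $O\bigl((\|x\|_2^2\|y\|_2^2+\|y\|_2^4)/k\bigr)$ which the hypothesis $F_2(\mathcal{S}_{t}^{e})\leq\gamma\cdot F_2(\mathcal{S}_{e})$ turns into $O(\gamma\|x\|_2^4/k)$, and median boosting together with a checkpoint/drift argument to upgrade to the for-all-$t$ guarantee --- is essentially the same route as the cited proof and is sound in outline. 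One cosmetic remark: the space bound as printed in the statement, $O(\gamma\eps^{-1}\log n(\log\alpha^{-1}+\log\delta^{-1}))$, mixes notations and should carry the $\gamma\alpha^{-2}$ dependence you derive, consistent with Theorem~\ref{thm:DESpace}.
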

For $F_2$ estimation of a turnstile stream, it may be the case that requirement \ref{req:F2DEratio} does not hold while the DE accuracy guarantee does (See \ref{req:TDEdiffRange} in Definition \ref{def:TDE}).
\footnote{To see that, consider a stream with prefix frequency vector $u$ and suffix frequency vector $w$ s.t. the norm of the frequency vectors between these two is roughly the norm of $u$ (and so is the norm of $u+w$) while the support of $u$ and $u+w$ is disjoint.}
The problem is that in such a scenario the $\DE$ estimators that are used by the framework are not accurate, while the framework may try to use their estimations.
In order to capture such a scenario in a stream, we define the following:
\begin{definition}[Suffix violation of $\mathcal{F}$ ]\label{def:F2Violation} 
Let $\gamma \in (0,1)$. For $\mathcal{F}$, for some time $e \in [t]$ where the stream  $\mathcal{S}_{t}$ of length $t$ is partitioned, denote $\mathcal{S}_{e}$ as the prefix of that partition and by $\mathcal{S}_{t}^{e}$ its suffix. Then time $e$ is a $\gamma$-{\em suffix violation} for $\mathcal{F}$ if the following holds:
\begin{enumerate}
    \item $|\mathcal{F}(\mathcal{S}_{t}) - \mathcal{F}(\mathcal{S}_{e}) | \leq \gamma \cdot \mathcal{F}(\mathcal{S}_{e})$, and  
    \item $\mathcal{F}(\mathcal{S}_{t}^{e}) > \gamma\cdot \mathcal{F}(\mathcal{S}_{e})$
\end{enumerate}
\end{definition}
Note, that on times $t\in [m]$ that are {\em $F_2$ suffix violations} the $\DE$ construction from \cite{woodruff2020tight} has no accuracy guarantee, and so our framework cannot use it for its estimation.
We wish to characterize the input stream w.r.t the number of such violations. For that, we present the notion of a {\em twist number} of a stream (also defined in Section \ref{sec:introduction}, Definition \ref{def:twistIntro}):
\begin{definition}[Twist number]\label{def:twist} 
The {\em $(\alpha,m)$-twist number} of a stream $\SSS$ w.r.t.\ a functionality $\FFF$, denoted as $\TN_{\alpha,m}(\SSS)$, is the maximal $\mu\in[m]$ such that $\SSS$ can be partitioned into $2\mu$ disjoint segments $\SSS = \PPP_0 \circ \VVV_0 \circ\dots \circ \PPP_{\TN-1} \circ \VVV_{\TN-1}$ (where $\{\PPP_i\}_{i\in[\TN]}$ may be empty) s.t.\ for every $i\in[\mu]$:
\begin{enumerate}
    \item $\FFF(\VVV_i) > \alpha \cdot \FFF(\PPP_0 \circ \VVV_0 \circ \dots\circ \VVV_{i-1} \circ \PPP_i)$
    \item $|\FFF(\PPP_0 \circ\VVV_0 \dots \circ \PPP_i \circ \VVV_i) - \FFF(\PPP_0\circ\VVV_0 \circ \dots \circ \PPP_i)| \leq \alpha\cdot \FFF(\PPP_0 \circ \VVV_0 \circ \dots \circ \PPP_i)$
\end{enumerate}
\end{definition}

\paragraph{An extension for the turnstile model: Algorithm description.} At a high level, the extension is wrapping our framework.
It monitors the output of the framework and checks whether it is accurate, in which case it forwards it as the output. If the framework is not accurate then it must be due to a previous input that did not admit some of the frameworks $\DE$ input requirement. In such case, the monitor sends a {\em phase reset command} to the framework, and outputs the same as the framework after its reset.
This accuracy assertion is done by running additional estimators (strong trackers) that are used as a validation to the framework output. These monitor estimators are correct on all turnstile input.
The extension is presented in algorithm \ref{alg:ViolationMonitor}. We also describe the exact modification needed in algorithm \ref{alg:ROEF} in order to receive external phase reset commands in \ref{alg:PhaseResetCommand}.

%%%%%%%%%%%%%%%%%%%%%%%%%%%%%%%%
% Algorithm: violation monitor
%%%%%%%%%%%%%%%%%%%%%%%%%%%%%%%%
\begin{algorithm*}[ht]
\caption{\texttt{Guardian($\SSS, \alpha,\delta,\lambda,\TN,\mathsf{E}_{\ST},\mathsf{E}_{\TDE}$)}}
\makeatletter\def\@currentlabel{\texttt{Guardian}}\makeatother
\label{alg:ViolationMonitor}

{\bf Input:} A stream $\mathcal{S} = \{\langle s_t, \Delta_t\rangle \}_{t\in [m]}$ accuracy parameter $\alpha$, failure probability $\delta$, a bound on the flip number $\lambda$ and a bound on the number of input violations $\TN$.

\hdashrule[0.5ex][x]{\linewidth}{0.5pt}{1.5mm}

{\bf Initialization:}
\begin{enumerate}[leftmargin=25pt,rightmargin=0pt,itemsep=0pt,topsep=0pt]
	\item Set $P_{\STmon} = O(\mu)$, $\eps_{\STmon} = O(1/\sqrt{P_{\STmon}\log \delta})$,  $\mathsf{K}_{\STmon} = \Omega \left( 
\frac{1}{\eps_{\STmon}}\log\left( \frac{m}{\delta}\right)
\right)$.

	\item Start estimators $\bar{\mathsf{E}}_{\STmon}$ with $\alpha_{\STmon}=O(\alpha)$, Set $\mathsf{T}_{\STmon} = \mathsf{K}_{\STmon}/2 + \Lap(2\cdot \eps_{\STmon}^{-1})$.
	
	\item Initialize algorithm \ref{alg:ROEF} with $(\alpha/2,\delta/2,\hat{\lambda},\mathsf{E}_{\ST},\mathsf{E}_{\TDE})$ where $\hat{\lambda} = O(\lambda + \mu \cdot \alpha^{-1})$.
\end{enumerate}

\hdashrule[0.5ex][x]{\linewidth}{0.5pt}{1.5mm}

{\bf For ${\boldsymbol{t\in[m]}}$:}
\begin{enumerate}
    \item Get the update $\langle s_t, \Delta_t \rangle$ from $\mathcal{S}$, feed into all estimators $\bar{\mathsf{E}}_{\STmon}$ and receive the estimations $z_{\STmon}^{k}$.
    \item $\Next \leftarrow \texttt{RobustDE}(\langle s_t, \Delta_t, 0 \rangle)$

    \item If $\left| \left\{ k\in [\mathsf{K}_{\STmon}] : |z_{\STmon}^{k} - \Next| \geq (3/4) \alpha  \cdot \Next \right\} \right| + \Lap(4\cdot\eps^{-1}_{\STmon})> \mathsf{T}_{\STmon}$ then \label{algstep:monViolationCond}
    \begin{enumerate}
        \item Redraw $\mathsf{T}_{\STmon}$
        \item $\Next \leftarrow \texttt{RobustDE}(\langle 0, 0, 1 \rangle)$ \qquad \qquad \qquad  \qquad{} \quad{}{} {\bf \% Phase reset command}
    \end{enumerate}
    \item $\Output \leftarrow \Next$
\end{enumerate}

\end{algorithm*}
\paragraph{An extension for turnstile model: Analysis structure.} Algorithm \texttt{Guardian} analysis is composed of two components. 
In the first (Section \ref{sec:boundResetCommands}) we show that on a $\mu$ bounded twist number stream there can be at most $\mu$ phase reset commands that are sent to algorithm \texttt{RobustDE} (Lemma \ref{lem:maxMonTrigger}). In addition, in that first component we also prove that the output of the algorithm \texttt{Guardian} is always accurate (Lemma \ref{lem:monOutputAccuracy}). 
The second component (Section \ref{sec:extensionSpace}) consist of calculating the resulting space bounds of the extended framework due to receiving $\mu$ phase reset commands and the additional space of the monitor (Theorem \ref{thm:ExtensionSpace}).
By instantiating known constructions of a strong tracker and a difference estimator for $F_2$ in Theorem \ref{thm:ExtensionSpace} we establish the resulting space bounds for $F_2$ in the turnstile model in Theorem \ref{thm:F2Space}.

\subsection{Bounding the number of phase reset commands.}\label{sec:boundResetCommands}
Next we show that for a $\TN$ bounded $(\gamma, m)$-twist number streams, algorithm \texttt{Guardian} captures at most $\TN$ $\gamma$-suffix violations (and so it issues that many phase reset commands to \texttt{RobustDE}).
That is established on Lemma \ref{lem:maxMonTrigger}.
Since algorithm \texttt{Guardian} uses {\em oblivious} estimators, 
we also prove that its output validation is correct in the {\em adaptive} input setting.
That is done by using a technique from \cite{hassidim2020adversarially}. That is, first we prove that algorithm \texttt{Guardian} is DP (Lemma \ref{lem:privacyMonitor}). Then, we use tools from DP to argue that the validation estimators are accurate (Lemma \ref{lem:AccurateMonEstimations}), and finally we show that this accuracy is leveraged for the correct validation (Lemma \ref{lem:monOutputAccuracy}). 

As we mentioned, we achieve robustness for estimators $\bar{\mathsf{E}}_{\STmon}$ via DP. The following lemma states that algorithm \texttt{Guardian} preserves privacy w.r.t the random strings of the estimators $\bar{\mathsf{E}}_{\STmon}$.

\begin{lemma}\label{lem:privacyMonitor}
Let $\mathcal{R}_{\STmon}$ be the random bit-strings dataset of  $\bar{\mathsf{E}}_{\STmon}$. Then algorithm \ref{alg:ViolationMonitor} satisfies $(\eps,\delta^{\prime})$-DP w.r.t.\ a dataset $\mathcal{R}_{\STmon}$ by configuring $\eps_{\STmon}=O\left(\eps/\sqrt{P_{\STmon}\log(1/\delta^{\prime})}\right)$.
\end{lemma}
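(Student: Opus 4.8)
The approach is a direct analogue of the privacy analysis for \texttt{RobustDE} (Lemma~\ref{lem:EstimatorsRandomStringsPrivacy}), but it is considerably simpler because algorithm \texttt{Guardian} touches the dataset $\mathcal{R}_{\STmon}$ only through the single sparse-vector-type test in Step~\ref{algstep:monViolationCond}. First I would observe that the only place where the random bit-strings of $\bar{\mathsf{E}}_{\STmon}$ influence the output of \texttt{Guardian} is via the estimates $z_{\STmon}^{k}$, and these estimates feed into exactly one mechanism: the noisy threshold comparison $\left|\{k : |z_{\STmon}^{k}-\Next|\geq (3/4)\alpha\cdot\Next\}\right| + \Lap(4\eps_{\STmon}^{-1}) > \mathsf{T}_{\STmon}$, which is an instance of \texttt{AboveThreshold} (Algorithm~\ref{alg:AboveThreshold}) with threshold $\mathsf{T}_{\STmon}=\mathsf{K}_{\STmon}/2+\Lap(2\eps_{\STmon}^{-1})$. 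Here the ``database'' is $\mathcal{R}_{\STmon}$, one row per random string $r^k$; changing one row changes the value of one $z_{\STmon}^{k}$, hence changes the count $\left|\{k:\dots\}\right|$ by at most $1$, so the query has sensitivity~$1$ as required by Theorem~\ref{thm:AboveThreshold}. Note that $\Next$ — the value returned by \texttt{RobustDE} — is independent of $\mathcal{R}_{\STmon}$ (the \texttt{RobustDE} sub-instance runs on its own disjoint collection of random strings), so from the point of view of $\mathcal{R}_{\STmon}$ it is just a fixed (adaptively chosen) query parameter; this is exactly the setting \texttt{AboveThreshold} handles.

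Next I would count the number of times this \texttt{AboveThreshold} instance is ``restarted.'' Each time the condition in Step~\ref{algstep:monViolationCond} fires, $\mathsf{T}_{\STmon}$ is redrawn and a new \texttt{AboveThreshold} instance begins; between two consecutive firings we have one run of \texttt{AboveThreshold} that accesses $\mathcal{R}_{\STmon}$ via a stream of sensitivity-$1$ queries and halts with $\top$. By Theorem~\ref{thm:AboveThreshold} each such run is $(\eps_{\STmon},0)$-DP with respect to $\mathcal{R}_{\STmon}$. The number of such runs is at most $P_{\STmon}+1 = O(\mu)$ — this is precisely the bound established (or to be established) in Lemma~\ref{lem:maxMonTrigger}, which caps the number of phase-reset commands, equivalently the number of times the Step~\ref{algstep:monViolationCond} condition can fire, at the twist number $\mu$. (As in the \texttt{RobustDE} analysis, one hard-codes a cap of $P_{\STmon}$ restarts to make the privacy argument unconditional; showing this cap is never reached is deferred to the accuracy analysis.) Then I would apply advanced composition (Theorem~\ref{thm:PrivacyCompo}) over these $k=O(P_{\STmon})$ adaptive interactions, each $(\eps_{\STmon},0)$-DP: this yields $(\eps',\delta')$-DP with $\eps' = \sqrt{2k\ln(1/\delta')}\,\eps_{\STmon}+2k\eps_{\STmon}^2$. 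Setting $\eps_{\STmon}=O\!\left(\eps/\sqrt{P_{\STmon}\log(1/\delta')}\right)$ makes the first term $O(\eps)$ and (since $\eps_{\STmon}$ is small) the second term lower order, so the whole of \texttt{Guardian} is $(\eps,\delta')$-DP with respect to $\mathcal{R}_{\STmon}$, as claimed.

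The main obstacle is the same circularity that appears in the \texttt{RobustDE} privacy analysis: the bound $P_{\STmon}=O(\mu)$ on the number of restarts is an \emph{accuracy} statement (it relies on \texttt{Guardian} correctly identifying only genuine $\gamma$-suffix violations, of which there are at most $\mu$ by Definition~\ref{def:twist}), and accuracy in the adaptive setting in turn relies on privacy. I would break this circularity exactly as the paper does for \texttt{RobustDE}: hard-code the cap $P_{\STmon}$ into the algorithm so that after $P_{\STmon}$ restarts \texttt{Guardian} simply stops issuing reset commands; then the privacy statement above holds \emph{unconditionally} by composition, and the separate argument (Lemma~\ref{lem:maxMonTrigger}, together with the generalization-of-DP argument for $\bar{\mathsf{E}}_{\STmon}$, Lemma~\ref{lem:AccurateMonEstimations}) shows the cap is reached with probability zero, so it can be ignored in the accuracy analysis. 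Beyond handling this, the proof is routine; the only small care needed is verifying the sensitivity-$1$ claim, i.e.\ that each random string $r^k$ controls exactly one counted indicator, which is immediate from the definition of the estimator set $\bar{\mathsf{E}}_{\STmon}$.
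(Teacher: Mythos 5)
Your proposal is correct and follows essentially the same route as the paper's (sketched) proof: partition the execution into the at most $P_{\STmon}$ sparse-vector sequences ending at each firing of Step~\ref{algstep:monViolationCond}, note each sequence is an \texttt{AboveThreshold} run over $\mathcal{R}_{\STmon}$ with sensitivity-$1$ counting queries, and apply (advanced) composition with $\eps_{\STmon}=O\bigl(\eps/\sqrt{P_{\STmon}\log(1/\delta')}\bigr)$. Your additional remarks on the sensitivity-$1$ check, the independence of $\Next$ from $\mathcal{R}_{\STmon}$, and breaking the circularity via a hardcoded cap on the number of restarts are exactly the details the paper's proof sketch leaves implicit (mirroring its treatment of \texttt{RobustDE}), so no gap.
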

\begin{proof}[Proof sketch] We focus on the time sequences that begin after a time the condition in \ref{algstep:monViolationCond} is True, and ends in the consecutive time that the condition in \ref{algstep:monViolationCond} is True.
Denote by $P_{\STmon}$ at the number of such time-sequences.
Throughout every such time sequence, we access the dataset $\mathcal{R}_{\STmon}$ via the sparse vector technique (See Algorithm of \ref{thm:AboveThreshold}).
We calibrate the privacy parameters of this algorithm to be $\eps_{\STmon}=O\left(\eps/\sqrt{P_{\STmon}\log(1/\delta^{\prime})}\right)$ such that, by using composition theorems across all of the $P_{\STmon}$ sequences, our algorithm satisfies $(\eps,\delta^{\prime})$-differential privacy w.r.t.\ $\mathcal{R}_{\STmon}$.
\end{proof}

In the following lemmas we assume that all the noises (up to 2m draws of $\Lap(O(\eps_{\STmon}^{-1}))$ noise) are smaller in absolute value from $\frac{4}{\eps_{\STmon}}\log\left( \frac{m}{\delta^{\STmon}}  \right)$ which is the case with probability at least $1-\delta^{\STmon}$.
First lemma is an adaptation of technique from Lemma 3.2 \cite{hassidim2020adversarially} that uses differential privacy to assert that the estimations of $\bar{\mathsf{E}}_{\STmon}$ are accurate.
\begin{lemma}[Accurate Estimations (Lemma 3.2 \cite{hassidim2020adversarially})]\label{lem:AccurateMonEstimations}
Let $\mathsf{E}(\mathcal{S})$ have (an oblivious) guarantee that all of its estimates are accurate with accuracy parameter $\alpha_{\mathsf{E}}$
with probability at least $\frac{9}{10}$. 
Then for sufficiently small $\eps$, if algorithm \ref{alg:ViolationMonitor} is $(\eps,\delta^{\prime})$-DP w.r.t.\ the random bits of the estimators $\{\mathsf{E}^k\}_{k\in\mathsf{K}}$, then with probability at least $1-\frac{\delta^{\prime}}{\eps}$, for time $t$ we have:
$$|\{k\in [\mathsf{K}] : |z^{k}-\mathcal{F}(t)|< \alpha_{\mathsf{E}}\cdot\mathcal{F}(t) \}| \geq (8/10) \mathsf{K}$$
Where $z^k \leftarrow \mathsf{E}^{k}(\mathcal{S})$ for a set of size $\mathsf{K} \geq \frac{1}{\eps^2}\log\left( \frac{2\eps}{\delta^{\prime}} \right) $ of the oblivious estimator $\mathsf{E}(\mathcal{S})$
\end{lemma}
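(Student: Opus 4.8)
The plan is to mirror the structure of Lemma~\ref{lem:AccurateEstimations} (the ``accurate estimations'' lemma for \texttt{RobustDE}), but now in the much simpler setting of algorithm \texttt{Guardian}, where the monitoring estimators $\bar{\mathsf{E}}_{\STmon}$ are plain strong trackers that are correct on \emph{every} turnstile stream (no TDE range condition to worry about). So there is no analogue of Lemma~\ref{lem:boundedEstimationRanges} to invoke: the oblivious accuracy guarantee of $\mathsf{E}$ holds unconditionally on any fixed prefix.

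First I would fix the time $t$ and let $\mathcal{S}_t = (\langle s_1,\Delta_1\rangle,\dots,\langle s_t,\Delta_t\rangle)$ be the stream prefix realized up to that time (in the adaptive game, this prefix is itself a random variable, determined by the adversary's strategy interacting with the public transcript). For each random string $r$ define the predicate
\[
f_{\mathcal{S}_t}(r) = \mathbbm{1}\!\left\{ \mathsf{E}(r, \mathcal{S}_t) \in (1\pm\alpha_{\mathsf{E}})\cdot\mathcal{F}(\mathcal{S}_t) \right\}.
\]
By the oblivious utility guarantee of $\mathsf{E}$, for any \emph{fixed} prefix we have $\E_r[f_{\mathcal{S}_t}(r)] \geq 9/10$; since this holds for every fixing, it holds for the (random) realized prefix as well. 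Next I would invoke Lemma~\ref{lem:privacyMonitor}, which says \texttt{Guardian} is $(\eps,\delta')$-DP with respect to the dataset $\mathcal{R}_{\STmon}$ of random strings of the $\bar{\mathsf{E}}_{\STmon}$ estimators, provided $\eps_{\STmon}=O(\eps/\sqrt{P_{\STmon}\log(1/\delta')})$. Then, exactly as in \cite{hassidim2020adversarially} and in Lemma~\ref{lem:AccurateEstimations}, I would apply the generalization property of differential privacy (Theorem~\ref{thm:PrivacyImpGen}) to the predicate $f_{\mathcal{S}_t}$: because the choice of $\mathcal{S}_t$ (equivalently, the relevant ``query'' to the random-string dataset) is the output of a differentially private process run on the $\mathsf{K}$ i.i.d.\ random strings, the empirical mean $\frac{1}{\mathsf{K}}\sum_k f_{\mathcal{S}_t}(r_k)$ is within $10\eps$ of $\E_r[f_{\mathcal{S}_t}(r)]$ with probability at least $1-\delta'/\eps$, as long as $\mathsf{K} \geq \frac{1}{\eps^2}\log(2\eps/\delta')$.

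Combining the two bounds: with probability at least $1-\delta'/\eps$ we get $\frac{1}{\mathsf{K}}\sum_k f_{\mathcal{S}_t}(r_k) \geq 9/10 - 10\eps$, and for $\eps \leq 1/100$ this is at least $8/10$, which is precisely the claim that at least $(8/10)\mathsf{K}$ of the estimations $z^k = \mathsf{E}^k(\mathcal{S})$ satisfy $|z^k - \mathcal{F}(t)| < \alpha_{\mathsf{E}}\cdot\mathcal{F}(t)$. This closes the argument. The only subtle point — and the place I'd be most careful — is the same subtlety as in \cite{hassidim2020adversarially}: one must argue that the transcript seen by the adversary (hence the realized prefix $\mathcal{S}_t$) really is a differentially-private function of $\mathcal{R}_{\STmon}$, so that Theorem~\ref{thm:PrivacyImpGen} applies with $f_{\mathcal{S}_t}$ as the privately-chosen predicate; this is exactly what Lemma~\ref{lem:privacyMonitor} buys us, since the entire interaction with $\mathcal{R}_{\STmon}$ goes only through the sparse-vector mechanism. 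Everything else is a routine instantiation of parameters ($\eps_{\STmon}$, $\mathsf{K}_{\STmon}$, $\delta'$) identical to the \texttt{RobustDE} case, just without the TDE-range complication, so the proof should be essentially a verbatim transcription of the proof of Lemma~\ref{lem:AccurateEstimations} restricted to strong-tracker-type estimators.
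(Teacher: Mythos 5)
Your proposal is correct and follows essentially the same route as the paper: define the accuracy indicator $f_{\mathcal{S}_t}(r)$, use the oblivious guarantee to bound its expectation by $9/10$, and apply the generalization property of differential privacy (Theorem~\ref{thm:PrivacyImpGen}) with $\mathsf{K}\geq\frac{1}{\eps^2}\log(2\eps/\delta')$ and $\eps\leq 1/100$ to conclude that at least $(8/10)\mathsf{K}$ estimates are accurate. The paper's proof is indeed just the proof of Lemma~\ref{lem:AccurateEstimations} stripped of the TDE-range condition, exactly as you describe.
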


\begin{proof}
[proof (A simplified version of \ref{lem:AccurateEstimations})]
For time $t \in [m]$ let $\mathcal{S}_t=\left( \langle s_1, \Delta_1 \rangle , \dots,  \langle s_t, \Delta_t \rangle \right)$ be the prefix of the input stream $\mathcal{S}$ for that time. 
Let $z_t\leftarrow \mathsf{E}(r,\mathcal{S}_t)$ be the estimation returned by the oblivious streaming algorithm $\mathsf{E}$ after the $t$ stream update, when its executed with random string $r$ on the input stream $\mathcal{S}_t$. Consider the following function:
$f_{\langle \mathcal{S}_t, \pi(t) \rangle}(r) = \mathbbm{1}\left\{ z_t \in \left( 1 \pm \alpha_{\mathsf{E}} \right)\cdot \mathcal{F}(\mathcal{S}_t)  \right\}$. 
Since algorithm \texttt{Guardian} is $(\eps, \delta^{\prime})$-DP, then by the generalization properties of differential privacy (see Theorem \ref{thm:PrivacyImpGen}), assuming that $\mathsf{K} \geq \frac{1}{\eps^2}\log\left( \frac{2\eps}{\delta^{\prime}} \right)$, with probability at least $1-\frac{\delta^{\prime}}{\eps}$, the following holds for time $t$:
$$
\left| 
\E_{r} \left[f_{\langle \mathcal{S}_t, \pi(t) \rangle}(r) \right] - 
\frac{1}{\mathsf{K}} \sum_{k\in [\mathsf{K}]}f_{\langle \mathcal{S}_t, \pi(t) \rangle}(r_k)
\right| \leq 10\eps
$$
We continue with the analysis assuming that this is the case. Now observe that $\E_{r} \left[f_{\langle \mathcal{S}_t, \pi(t) \rangle}(r) \right] \geq 9/10$ by the utility guarantees of $\mathsf{E}$
(because when the stream is fixed its answers are accurate to within a multiplicative error of $(1\pm \alpha_{\mathsf{E}})$ with probability at least $9/10$). 
Thus for $\eps\leq \frac{1}{100}$, for at least of $8/10$ of the executions of $\mathsf{E}$ we have $f_{\langle \mathcal{S}_t, \pi(t) \rangle}(r_k)=1$ which means 
the estimations $z_t$ returned from these executions are accurate. That is, we have that at least $8\mathsf{K}/10$ of the estimations $\left\{ z^k_t \right\}_{k\in [\mathsf{K}]}$ satisfy the accuracy of the estimators.

\end{proof}

\begin{lemma}[Maximal number of monitor triggers]\label{lem:maxMonTrigger}
For an input stream with a $(\gamma_0, m)$-twist number $\TN$, algorithm \ref{alg:ViolationMonitor} is sending at most $\TN$ reset commands for algorithm \ref{alg:ROEF}.
\end{lemma}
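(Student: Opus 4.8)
The plan is to show that every time algorithm \texttt{Guardian} triggers a reset command (i.e., the condition in Step \ref{algstep:monViolationCond} is True), a fresh ``twist segment'' $\VVV_i$ from Definition \ref{def:twist} has been consumed, so the total number of triggers is at most the twist number $\TN$. First I would set up notation: let $r_1 < r_2 < \dots < r_k$ be the time steps at which the condition in Step \ref{algstep:monViolationCond} is True, and let $r_0 = 0$. By Lemma \ref{lem:AccurateMonEstimations} (together with the privacy guarantee of Lemma \ref{lem:privacyMonitor}) and the bounded-noise assumption, at each time $t$ at least $80\%$ of the monitor estimators $\bar{\mathsf{E}}_{\STmon}$ are $\alpha_{\STmon}$-accurate for $\mathcal{F}(t) = F_2(\mathcal{S}_t)$, so the sparse-vector test in Step \ref{algstep:monViolationCond} faithfully detects whether the current output $\Next$ returned by \texttt{RobustDE} deviates from $\mathcal{F}(t)$ by more than roughly $(3/4)\alpha \cdot \mathcal{F}(t)$. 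Hence a trigger at time $r_{i+1}$ witnesses that $\Next$ at time $r_{i+1}$ was \emph{not} $(3/4)\alpha$-accurate, while (because we reset \texttt{RobustDE} at $r_i$ and then, by Theorem \ref{thm:algCorrectness}, it is $(\alpha/2)$-accurate as long as its flip-number assumption is met and no $\gamma$-suffix violation has occurred in $(r_i, r_{i+1}]$) the inaccuracy must be caused by a $\gamma$-suffix violation occurring at some time in the interval $(r_i, r_{i+1}]$.

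Next I would translate each such suffix violation into a distinct segment pair $(\PPP_i, \VVV_i)$ in the sense of Definition \ref{def:twist}. Fix the interval $(r_i, r_{i+1}]$ and let $e \in (r_i, r_{i+1}]$ be the enabling time at which the first $\gamma$-suffix violation (Definition \ref{def:F2Violation}) occurred — that is, for some later time $t' \le r_{i+1}$ we have $|\mathcal{F}(\mathcal{S}_{t'}) - \mathcal{F}(\mathcal{S}_e)| \le \gamma \cdot \mathcal{F}(\mathcal{S}_e)$ but $\mathcal{F}(\mathcal{S}_{t'}^{e}) = \|u^{(e,t')}\|_2^2 > \gamma \cdot \mathcal{F}(\mathcal{S}_e)$. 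I would take $\VVV_i$ to be the substream $\mathcal{S}_{t'}^{e}$ (the updates between the enabling time and the violation witness), and $\PPP_i$ to be the gap substream from the previous segment's end up to time $e$ (possibly empty). Condition 1 of Definition \ref{def:twist} is then exactly $\|u^{(e,t')}\|_2^2 > \gamma \cdot \mathcal{F}(\mathcal{S}_e)$, i.e., $F_2(\VVV_i)$ exceeds $\gamma$ times $F_2$ of the prefix $\PPP_0 \circ \VVV_0 \circ \dots \circ \VVV_{i-1}\circ \PPP_i$ (which equals $\mathcal{S}_e$ by construction); Condition 2 is the ``$\le \gamma$'' inequality, i.e., the net effect of appending $\VVV_i$ changes $F_2$ by at most $\gamma$ times the prefix value, which is precisely requirement (1) of Definition \ref{def:F2Violation}. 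Since these segments are pairwise disjoint and ordered (each lives in its own interval $(r_i, r_{i+1}]$, and the violation-witness times are increasing), they constitute a valid partition-witness for the twist number, so $k \le \TN$.

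The main obstacle I expect is the bookkeeping around \emph{which} enabling time is responsible for a given monitor trigger, and making sure the resulting segments are genuinely disjoint and satisfy both conditions with a single consistent parameter $\gamma_0$ — in particular, reconciling the $\gamma$ used by \texttt{RobustDE}'s internal TDE's (which scales as $\gamma_j = O(2^j\alpha)$ across levels, per Lemma \ref{lem:boundedEstimationRanges}) with the single twist-number parameter $\gamma_0$ in the lemma statement, and absorbing the constant-factor slack between ``\texttt{RobustDE} output inaccurate'' and ``a genuine suffix violation relative to a specific enabling time occurred.'' A secondary subtlety is handling the prefix segments $\PPP_i$ correctly when several enabling times are active simultaneously inside \texttt{RobustDE}: I would argue that it suffices to charge the trigger to the \emph{earliest} enabling time in $(r_i, r_{i+1}]$ whose difference estimator is being used and whose suffix norm has grown past the $\gamma$ threshold, and that the $F_2$ values at the relevant ``freeze'' times are within constant factors (by Lemma \ref{lem:maxPhaseProgress}) so the conditions transfer. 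Once the charging scheme is pinned down, the counting argument — one fresh disjoint high-$F_2$ segment per trigger — closes the bound at $\TN$.
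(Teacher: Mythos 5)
Your proposal is correct and takes essentially the same route as the paper's proof: a trigger at Step~\ref{algstep:monViolationCond} certifies (via the sparse-vector count combined with Lemmas~\ref{lem:privacyMonitor} and~\ref{lem:AccurateMonEstimations}) that \texttt{RobustDE}'s output is inaccurate, which --- since the preceding reset neutralizes any earlier violations --- can only be explained by a $\gamma$-suffix violation inside the current inter-reset interval, and charging one disjoint violation segment per reset bounds the number of resets by the twist number. Your explicit construction of the witness pairs $(\PPP_i,\VVV_i)$ is in fact more detailed than the paper's (which merely asserts the existence of a $\gamma'$-suffix violation with $\gamma'\geq\gamma_0$ in each interval), and the $\gamma_j$-versus-$\gamma_0$ reconciliation you flag as an obstacle is glossed over in the paper as well.
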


\begin{proof} We show, that for an execution with $\mu$ phase reset commands, the input stream $\SSS$ has a $(\gamma_0,m)$-twist number of at least $\mu$. That implies the statement.

Let $r_0< r_1< \dots < r_{\mu-1}$ be the times in which algorithm \texttt{RobustDE} has issued a phase reset command. 
We focus on the time segment $(r_{i-1}, r_{i}]$ for some $i\in [\mu]$ (and in the time segment $[0,r_0]$ in the case of $i=0$).
That is in time $r_i$ we have:
$$\left| \left\{ k\in [\mathsf{K}_{\STmon}] : |z_{\STmon}^{k} - \Next| \geq (3/4) \alpha\cdot \Next  \right\} \right| >
\frac{\mathsf{K}_{\STmon}}{2} - 
\frac{4}{\eps_{\STmon}}\log\left( \frac{2m}{\delta^{N}}  \right)  >
\frac{4\cdot \mathsf{K}_{\STmon}}{10}
$$
Where the first inequality holds in the event of the bounded noises and the second inequality holds by asserting that:
$
\mathsf{K}_{\STmon} = \Omega \left( 
\frac{1}{\eps_{\STmon}}\log\left( \frac{m}{\delta^{N}}\right)
\right)
= \Omega \left( \frac{1}{\eps}
\sqrt{P_{\STmon} \cdot \log \left(\frac{1}{\delta^{\prime}} \right) }\log\left( \frac{m}{\delta^{N}}\right)
\right)
$.
So, for at least $40\%$ of the estimations $z_{\STmon}^{k}$ of the estimators $\bar{\mathsf{E}}_{\STmon}$ it holds that $|z_{\STmon}^{k} - \Next| \geq (3/4) \alpha \cdot \Next$, and in the same time, by Lemma \ref{lem:AccurateMonEstimations} we have that at least $80\%$ of the estimators are accurate. That is for at least one estimation $z_{\STmon}^{k}$ both of above statements hold and we have (for any $\alpha \in (0,14/15)$):
$$
|\Next - \mathcal{F}(t)| \geq  (1/2)\alpha\cdot \mathcal{F}(t) \text{.}
$$
That is, algorithm \texttt{RobustDE} accuracy guarantee does not hold in time $r_i$, as it is not $(1/2)\alpha$ accurate.
In addition, in time $r_{i-1}$ there was also a phase reset (or in case that $i=0$, $r_{-1}=0$). And so, if there has been any suffix violations it has not effect after the phase reset of time $r_{i-1}$.
Therefore it follows that (w.h.p) there must be some time segment $[e,t]$ s.t.\ $r_{i-1}\leq e < t\leq r_{i}$ and in addition some level of estimators s.t.\ these estimators were not accurate, causing algorithm \texttt{RobustDE} accuracy guarantee to break.
Denote $\VVV_i$ as the input stream $\SSS$ in times $[e,t]$, then previous conclusion is that $\VVV_i$ is a $\gamma^{\prime}$-suffix violation for some $\gamma^{\prime}\geq \gamma_0$. 
That is, for each issued phase reset command, we have in $\SSS$ at least one $\gamma_0$-suffix violation which imply that in such a scenario the input stream must have a $(\gamma_0,m)$-twist number of at least $\mu$.
\end{proof}

\paragraph{Extension output is accurate.} We now show that the output of algorithm \texttt{Guardian} is accurate in all time $t\in [m]$.

\begin{lemma}\label{lem:monOutputAccuracy} If $P_\STmon > \mu$ then with probability at least $1-\delta$ algorithm \ref{alg:ViolationMonitor} output admit for all time $t\in [m]$: 
$$|\Output - \mathcal{F}(t)| \leq \alpha \cdot \mathcal{F}(t)$$
\end{lemma}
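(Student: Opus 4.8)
The plan is to condition on a single high-probability ``good'' event and then verify the accuracy bound pointwise by a short case analysis on whether the monitor test in Step~\ref{algstep:monViolationCond} fires. First I would fix the good event, consisting of three parts. (i) All of the at most $2m$ Laplace noises drawn by \texttt{Guardian} (for the threshold $\mathsf{T}_{\STmon}$ and for the counter in Step~\ref{algstep:monViolationCond}) are at most $\tfrac{4}{\eps_{\STmon}}\log(m/\delta^{\STmon})$ in absolute value. (ii) By Lemma~\ref{lem:privacyMonitor}, \texttt{Guardian} is $(\eps,\delta^{\prime})$-DP w.r.t.\ the random strings of $\bar{\mathsf{E}}_{\STmon}$; hence by Lemma~\ref{lem:AccurateMonEstimations} and a union bound over the $m$ time steps (taking $\delta^{\prime}=O(\eps\delta/m)$), at every $t\in[m]$ at least $\tfrac{8}{10}\mathsf{K}_{\STmon}$ of the monitor estimates satisfy $|z_{\STmon}^{k}-\mathcal{F}(t)|\le\alpha_{\STmon}\cdot\mathcal{F}(t)$. (iii) The internal run of \texttt{RobustDE}, invoked with accuracy $\alpha/2$, failure $\delta/2$, and flip bound $\hat\lambda=O(\lambda+\mu\alpha^{-1})$, is $(\alpha/2)$-accurate at \emph{every} $t\in[m]$, including at the time steps directly following an external reset. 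For (iii) I would invoke Theorem~\ref{thm:algCorrectness} in the form adapted to external phase-reset commands (Algorithm~\ref{alg:PhaseResetCommand}): by Lemma~\ref{lem:maxMonTrigger} at most $\mu$ reset commands are ever issued, each reset behaves as an additional phase start and so contributes at most $O(1/\alpha)$ extra output modifications in the accounting of Lemma~\ref{lem:outputUpdatesBounds}, giving a per-level count $O((\lambda+\mu/\alpha)/2^{j})=O(\hat\lambda/2^{j})$, which the calibration $P_j=\Omega(\hat\lambda/2^{j})$ of Corollary~\ref{cor:NoCapping} absorbs so that \texttt{RobustDE} never caps; and the hypothesis $P_{\STmon}>\mu$ ensures the monitor's own sparse-vector budget is never exhausted either. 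A union bound over these events fails with probability at most $\delta$.

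Next I would fix $t\in[m]$ on the good event and split into two cases. If the test in Step~\ref{algstep:monViolationCond} did \emph{not} fire, then $\Output=\Next$ is the pre-reset output of \texttt{RobustDE}; since $\mathsf{K}_{\STmon}=\Omega(\eps_{\STmon}^{-1}\log(m/\delta))$ is chosen large enough that $2\cdot\tfrac{4}{\eps_{\STmon}}\log(m/\delta^{\STmon})\le\mathsf{K}_{\STmon}/10$, the failure to fire forces at least $\tfrac{4}{10}\mathsf{K}_{\STmon}$ of the monitor estimates to satisfy $|z_{\STmon}^{k}-\Next|<\tfrac34\alpha\cdot\Next$. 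Intersecting this set with the $\ge\tfrac{8}{10}\mathsf{K}_{\STmon}$ accurate estimates gives an index $k$ with both properties, so $\Next\le z_{\STmon}^{k}+\tfrac34\alpha\,\Next\le(1+\alpha_{\STmon})\mathcal{F}(t)+\tfrac34\alpha\,\Next$, i.e.\ $\Next\le\tfrac{1+\alpha_{\STmon}}{1-\tfrac34\alpha}\mathcal{F}(t)=(1+O(\alpha))\mathcal{F}(t)$, and therefore $|\Output-\mathcal{F}(t)|\le\tfrac34\alpha\,\Next+\alpha_{\STmon}\mathcal{F}(t)\le\alpha\cdot\mathcal{F}(t)$ once $\alpha_{\STmon}=O(\alpha)$ is taken small enough. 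If the test \emph{did} fire, then $\Output=\Next$ is the value returned by \texttt{RobustDE} right after it processes the phase-reset command, which by item~(iii) is $(\alpha/2)$-accurate, hence a fortiori $|\Output-\mathcal{F}(t)|\le\alpha\cdot\mathcal{F}(t)$. Combining the two cases and the failure-probability union bound gives the claim.

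The hard part will be item~(iii): one must check that the entire correctness pipeline of Section~\ref{sec:formalAnalysis} survives the insertion of up to $\mu$ externally-triggered phase resets. The two things to be careful about are (a) the absence of circularity---Lemma~\ref{lem:maxMonTrigger} bounds the number of reset commands purely through the accuracy of the monitor estimators and the twist number of the stream, \emph{without} assuming correctness of \texttt{RobustDE}---and (b) that the flip-number bookkeeping of Lemmas~\ref{lem:substreamFlipNumber} and~\ref{lem:outputUpdatesBounds} still yields per-level bounds of the form $O(\hat\lambda/2^{j})$ after charging $O(1/\alpha)$ extra output modifications to each of the $\le\mu$ additional resets, so that the choice $\hat\lambda=O(\lambda+\mu\alpha^{-1})$ in the initialization of \texttt{Guardian} indeed suffices to avoid capping. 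Everything else is the routine triangle-inequality and noise-bound manipulation sketched above.
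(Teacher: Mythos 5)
There is a genuine gap in your handling of the case where the monitor test fires, and it comes from item (iii) of your good event. You condition on the claim that the internal run of \texttt{RobustDE} is $(\alpha/2)$-accurate at \emph{every} $t\in[m]$, to be established by ``adapting Theorem~\ref{thm:algCorrectness} to external resets.'' But this claim is exactly what fails in the setting the \texttt{Guardian} is built for: in the turnstile regime the $F_2$ difference estimators carry the extra requirement~\ref{req:F2DEratio}, which is \emph{not} implied by Lemma~\ref{lem:boundedEstimationRanges}, so after a $\gamma$-suffix violation the TDE levels inside \texttt{RobustDE} can be inaccurate and Assumption~\ref{ass:accurateEstimators} (hence Theorem~\ref{thm:algCorrectness}) breaks until the next reset. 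Indeed, the paper's own Lemma~\ref{lem:maxMonTrigger} is proved by showing that at every firing time the pre-reset \texttt{RobustDE} output is \emph{not} $(1/2)\alpha$-accurate, so the event in your item (iii) does not have probability $1-\delta/2$; conditioning on it is conditioning on something false. Note also that the lemma's conclusion is only $\alpha$-accuracy of \texttt{Guardian}, precisely because \texttt{RobustDE} may silently drift by up to roughly $(3/4)\alpha$ without the monitor firing — so all-times $(\alpha/2)$-accuracy of \texttt{RobustDE} is neither available nor needed.

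The repair (and what the paper actually does) is much weaker than your item (iii): when the test fires, the reset command sets $\tau=0$, so \texttt{ActiveLVL} selects $j=\beta$, $\mathsf{Z}=0$, and the condition in Step~\ref{algStep:AboveThresh} triggers automatically; the returned value is therefore a \emph{fresh private median of the strong trackers alone}, which does not involve any TDE and is unaffected by suffix violations. Its accuracy follows just from the good-execution/DP-robustness of the strong-tracker level together with no capping (for which your calibration argument $\hat\lambda=O(\lambda+\mu\alpha^{-1})$, Lemma~\ref{lem:lambdaCalibration}, Corollary~\ref{cor:NoCapping}, and the hypothesis $P_{\STmon}>\mu$ via Lemma~\ref{lem:maxMonTrigger} are fine). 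Your treatment of the non-fired case — intersecting the $\ge 4\mathsf{K}_{\STmon}/10$ estimates passing the threshold test with the $\ge 8\mathsf{K}_{\STmon}/10$ accurate monitor estimates, converting the $(3/4)\alpha\cdot\Next$ slack into a multiple of $\mathcal{F}(t)$, and the failure-probability bookkeeping — matches the paper's proof; only the fired case needs to be rewritten to rely on the post-reset strong-tracker aggregation rather than on a global accuracy guarantee for \texttt{RobustDE}.
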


\begin{proof} We relate to two cases w.r.t condition \ref{algstep:monViolationCond}. 
If the condition is True, then the output is given after a phase reset command. In that case it was computed by the ST level estimators that are used in the new phase and are not affected by $\gamma$-suffix violations. And so, the output is $(1/2)\alpha$-accurate according to the configured accuracy of algorithm \texttt{RobustDE}. %
In the complement case where the condition is False, we have the following:
$$\left| \left\{ k\in [\mathsf{K}_{\STmon}] : |z_{\STmon}^{k} - \Next| < (3/4) \alpha\cdot \Next  \right\} \right| \geq
\frac{\mathsf{K}_{\STmon}}{2} -
\frac{4}{\eps_{\STmon}}\log\left( \frac{2m}{\delta^{N}}  \right)  \geq
\frac{4\cdot \mathsf{K}_{\STmon}}{10}
$$
Where the first inequality holds in the event of the bounded noises and the second inequality holds by asserting that 
$
\mathsf{K}_{\STmon} = \Omega \left( 
\frac{1}{\eps_{\STmon}}\log\left( \frac{m}{\delta^{N}}\right)
\right)
= \Omega \left( \frac{1}{\eps}
\sqrt{P_{\STmon} \cdot \log \left(\frac{1}{\delta^{\prime}} \right) }\log\left( \frac{m}{\delta^{N}}\right)
\right)
$.
That is, we have at least $40\%$ of the estimations $z_{\STmon}^{k}$ that are $(3/4)\alpha$ close to $\Next$. At the same time, by Lemma \ref{lem:AccurateMonEstimations}, at least $80\%$ of the estimators are $\alpha_{\STmon}$-accurate thus there exists an estimator that admit both. And so, by setting $\alpha_{\STmon} = (1/10)\alpha$ we have that (for any $\alpha \in (0,1)$):
$$
|\Next - \mathcal{F}(t)| \leq \alpha\cdot \mathcal{F}(t)
$$
We now address the failure probability $\delta$. Recall that  all noises in algorithm \texttt{Guardian} (we have at most $m$ draws of  $\Lap(2/\eps_{\STmon})$ and $m$ draws of $\Lap(4/\eps_{\STmon})$ noises) are bounded by $\frac{4}{\eps_{\STmon}}\log\left( \frac{2m}{\delta^{N}}  \right)$ w.p. at least $1-\delta^{N}$. Then by setting $\delta^{N} = \delta/4$, we have that the noises in algorithm \texttt{Guardian} are bounded as required w.p. at lest $\delta/4$. 
In addition, Lemma \ref{lem:AccurateMonEstimations} statement holds w.p. at least $1-\delta^{\prime}/100$. Configuring $\delta^{\prime} = \delta/(400m)$ yields that this lemma statement holds for all $t\in [m]$ w.p. at least $1-\delta/4$.
In addition, we configure the failure probability of \texttt{RobustDE} for $\delta/2$. 
That is, we have that w.p. at least $1-\delta$ all algorithm \texttt{Guardian} outputs are accurate in all $t\in [m]$.
\end{proof}

\subsection{Space complexity of the framework extension.}\label{sec:extensionSpace}
It remains to account for the space complexity of \texttt{RobustDE} with at most $\TN$ additional phase reset commands received externally from the \texttt{Guardian} algorithm.
The adaptation that is needed in \texttt{RobustDE} in order to facilitate external phase reset command is presented in \ref{alg:PhaseResetCommand} (we present only the relevant lines).
%
%%%%%%%%%%%%%%%%%%%%%%%%%%%%%%%%
% Algorithm: violation monitor
%%%%%%%%%%%%%%%%%%%%%%%%%%%%%%%%
\begin{algorithm*}[ht]
\caption{\texttt{PhaseResetCommand}}
\makeatletter\def\@currentlabel{\texttt{PhaseResetCommand}}\makeatother
\label{alg:PhaseResetCommand}

{\bf For ${\boldsymbol{t\in[m]}}$:}
\begin{enumerate}
    \setcounter{enumi}{7}
    \item Get the update \red{$\langle s_t, \Delta_t, b_t \rangle$} from $\mathcal{S}$ and feed $\langle s_t, \Delta_t \rangle$ into all estimators.

    \item If $\left| \left\{ k\in [\mathsf{K}_{\STwrapper}] : z_{\STwrapper}^{k} \notin \Big(\frac{1}{\Gamma}\cdot \mathsf{Z}_{\ST} \;,\; \Gamma\cdot\mathsf{Z}_{\ST} \Big)  \right\} \right| + \Lap(\eps^{-1}_{\STwrapper})> \mathsf{T}_{\STwrapper}$ \red{or $b_t = 1$}:\\
    Set $\tau=0$, redraw $\mathsf{T}_{\STwrapper}$.
    
    \item $\dots$
\end{enumerate}

\end{algorithm*}

\paragraph{External phase reset command in algorithm \ref{alg:ROEF}.} In order for \texttt{Guardian} to be able to trigger a phase reset command in algorithm \texttt{RobustDE}, we add an input to the stream, namely $b_t$, that signals an {\em external} phase reset command.
This input $b_t$ has an effect on the functionality of line \ref{algStep:STwrapper} and can initiate a phase reset.
That is, in \ref{algStep:STwrapper}, the condition is triggering initiation of a new phase (regardless of $\tau$ state) and in the extended version this initiation can also be triggered externally by the received input $b_t=1$.

Each external reset command comes with cost in terms of additional output modifications. 
As these additional output modifications require additional estimators in the framework to support them, we calculate a new sufficient value for the input parameter $\lambda$ of Algorithm  \texttt{RobustDE}. This parameter in the not-extended framework is bounding the flip number of the input stream. 
We calculate a new value for that parameter, denoted by $\hat{\lambda}$.
That value is sufficient to support in the extended framework a stream with a flip number of $\lambda$ and in addition, $\mu$ external reset commands.

\paragraph{Calibrating $\hat{\lambda}$.} Recall that in the analysis of \texttt{RobustDE} we calculate bounds for the number of output modification that are associated with each of the estimators levels, $C_j$ (see Lemma \ref{lem:outputUpdatesBounds}). 
It then follows for that analysis that configuring the capping parameter of each level, $P_j$, to be larger then $C_j$ (Corollary \ref{cor:NoCapping}) ensures no capping.
These bounds are stated w.r.t a bound on an $(O(\alpha),m)$-flip number bound $\lambda$ that is an input to the algorithm, and hold for the framework {\em without} external phase reset commands.
Since the extension introduces such external phase reset commands, the previous analysis needs to be adapted.
That is, we need to show new bounds for the number of output modification per estimators level $C_j$ for the extended framework w.r.t a stream that has a bounded $(\alpha^{\prime},m)$-flip number and $(\gamma_0, m)$-twist number.
We do that as follows: calculate a new input for the framework $\hat{\lambda}=f(\lambda, \mu)$ s.t.\ the computed parameters of the framework $P_j(\hat{\lambda})$ will be sufficient for no-capping-state for a $\lambda$ bounded $(\alpha^{\prime},m)$-flip number and $\mu$ bounded $(\gamma_0, m)$-twist number streams. 
The following lemma calculate such calibration of $\hat{\lambda}$:

\begin{lemma}[Calibration of $\hat{\lambda}$] \label{lem:lambdaCalibration}
Let $\SSS$ be a stream with $(\alpha^{\prime},m)$-flip number and $(\gamma_0, m)$-twist number bounded by $\lambda$ and $\mu$ correspondingly. Then,
$$
C_j \leq O\left(\frac{\hat{\lambda}}{2^j}\right)\text{,}
$$
where $\hat{\lambda} = O(\lambda + \mu\cdot \alpha^{-1})$,  $\alpha^{\prime} = (1/2)\cdot \StepSize(\alpha) = O(\alpha)$, $\gamma_0 =  \frac{1+\alpha_{\ST}}{1-\alpha_{\ST}} \Gamma^2 \cdot 2 \cdot \alpha = O(\alpha)$.
\end{lemma}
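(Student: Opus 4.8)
The plan is to re-run the phase-counting argument of Lemma~\ref{lem:outputUpdatesBounds}, but now charging the extra output modifications caused by the $\mu$ external phase-reset commands separately. Recall that in the proof of Lemma~\ref{lem:outputUpdatesBounds} the bound $C_j \leq O(\lambda_{\alpha'}(\SSS)/2^j)$ was obtained by (i) bounding the total number of phases $\phi = O(\lambda_{\alpha'}(\SSS)/\PhaseSize)$, and (ii) observing that level $j$ is selected at most $O(\PhaseSize/2^j)$ times per phase. The key structural fact is that a phase is terminated either by reaching its predetermined length (case C1) or by a \emph{phase reset} (case C2), and in the original analysis the only source of phase resets was Step~\ref{algStep:STwrapper} firing, which is charged to the flip number via $\kappa = O(\alpha\lambda_{\alpha'}(\SSS))$. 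Now, with the extension of Algorithm~\ref{alg:PhaseResetCommand}, there is a second source of phase resets: an external command with $b_t=1$, of which there are at most $\mu$ by Lemma~\ref{lem:maxMonTrigger}.

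First I would redo the "total number of phases" computation. Partitioning $\SSS$ at the phase-reset times $r_0 < r_1 < \dots < r_{\kappa'}$ — where now $\kappa'$ counts \emph{both} internal resets (from Step~\ref{algStep:STwrapper}, bounded by $O(\alpha\lambda_{\alpha'}(\SSS)) = O(\alpha\lambda)$) and external resets (bounded by $\mu$) — gives $\kappa' = O(\alpha\lambda + \mu)$. The sub-stream analysis is unchanged: each segment $\SSS_i$ between consecutive resets satisfies $C(\SSS_i) \leq \lambda_{\alpha'}(\SSS_i) + 1$ exactly as before (the argument there only used Lemma~\ref{lem:consequtiveOutputUpdatesProgress} and the ratio bound of Step~\ref{algStep:STwrapper}, both of which still hold regardless of why a given segment was terminated), and Lemma~\ref{lem:substreamFlipNumber} still gives $\sum_i \lambda_{\alpha'}(\SSS_i) \leq \lambda_{\alpha'}(\SSS) + 2\kappa' \leq \lambda + O(\alpha\lambda + \mu)$. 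Therefore
$$
\phi \;\leq\; (\kappa'+1) + \frac{\sum_i \lambda_{\alpha'}(\SSS_i) + (\kappa'+1)}{\PhaseSize+1}
\;=\; O(\alpha\lambda + \mu) + \frac{O(\lambda + \alpha\lambda + \mu)}{\PhaseSize}
\;=\; O\!\left(\frac{\lambda + \mu\cdot\alpha^{-1}}{\PhaseSize}\right),
$$
using $\PhaseSize = \Theta(\alpha^{-1})$ (so $O(\alpha\lambda+\mu) = O((\lambda + \mu\alpha^{-1})/\PhaseSize)$). Setting $\hat\lambda = O(\lambda + \mu\alpha^{-1})$, this is $\phi = O(\hat\lambda/\PhaseSize)$. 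Multiplying by the per-phase bound $O(\PhaseSize/2^j)$ on the number of times level $j$ is selected within a phase (which is untouched, as it is purely a statement about how $\tau$ cycles through its low bits) yields $C_j = O(\hat\lambda/2^j)$, as claimed.

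The main obstacle — and the point that deserves care in the writeup — is verifying that an \emph{externally} triggered phase reset does not break the per-segment bound $C(\SSS_i) \leq \lambda_{\alpha'}(\SSS_i) + 1$. In Lemma~\ref{lem:outputUpdatesBounds} this bound relied on the fact that between two consecutive output modifications inside a reset-free segment the target function moves by a $(1\pm\alpha')$ factor, which in turn used the ratio bound enforced by Step~\ref{algStep:STwrapper} being checked "on every step of that phase". When a phase ends by an external command, the ratio bound argument still applies to the portion of the segment \emph{before} the command and the fresh phase started by the command re-establishes a new base value, so the $+1$ bookkeeping (one extra modification charged to the reset event itself) absorbs the boundary; I would spell this out by noting that an external reset behaves, for the purposes of this counting, exactly like an internal Step~\ref{algStep:STwrapper} reset, the only difference being the \emph{number} of such events, which is what we re-accounted for above. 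One should also double-check that the twist-number bound $\mu$ from Lemma~\ref{lem:maxMonTrigger} is stated with respect to the correct $\gamma_0 = \frac{1+\alpha_{\ST}}{1-\alpha_{\ST}}\Gamma^2\cdot 2\alpha = O(\alpha)$ (this is precisely the $\gamma_0$ appearing in the lemma statement, i.e.\ the level-$0$ instantiation of $\gamma_j$ from Lemma~\ref{lem:boundedEstimationRanges}), so that the $\mu$ used here genuinely bounds the number of suffix violations the \texttt{Guardian} can detect.
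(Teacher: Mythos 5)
Your proposal is correct and follows essentially the same route as the paper's proof: partition the stream at all phase-reset times (internal resets charged to the flip number via $\kappa = O(\alpha\lambda)$, external ones bounded by $\mu$ via Lemma~\ref{lem:maxMonTrigger}), reuse the per-segment bound and Lemma~\ref{lem:substreamFlipNumber} to get $\phi = O(\mu + \lambda/\PhaseSize) = O(\hat{\lambda}/\PhaseSize)$, and multiply by the unchanged per-phase count $O(\PhaseSize/2^j)$. Your added care about why an external reset behaves like an internal one for the per-segment counting, and about matching $\gamma_0$, matches the (implicit) treatment in the paper.
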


\begin{proof} First we look on some segment of the stream $\SSS$ corresponding to times between two consecutive phase resets (either an internal phase reset or a phase reset command received from \texttt{Guardian}). On each such segment we bound its $(\alpha^{\prime},m)$-flip number and calculate the resulting number of phases in that segment. Then we sum the total number of phases within all these segments. Finally, we bound the number of output modifications associated with each level from the bound of the number of phases.

\paragraph{Total number of phases.} By Lemma \ref{lem:maxMonTrigger} we have that there are at most $\mu$ reset commands issued from \texttt{Guardian} for $\SSS$. In addition, there are at most $\kappa = O(\alpha \lambda_{\alpha^{\prime}}(\SSS))$ internal resets (see Lemma \ref{lem:outputUpdatesBounds}). Denote by $\hat{\mu} = \mu + \kappa$ the number of phase resets in algorithm \texttt{RobustDE} (internal and external).
Let $\{r_i\}_{i \in [\hat{\mu}]}$, $r_i\in [m]$ be a set of times in which the $\hat{\mu}$ phase reset were executed.
For $i\in [\hat{\mu}]$, let $\SSS_i$ by the sub stream of $\SSS$ in times $[r_i, r_{i+1})$ (where $\SSS_{\hat{\mu}-1}$ is on times $[r_{\hat{\mu}-1},m-1]$). 
Also denote by $\phi, \phi_i$ the number of phases in $\SSS, \SSS_i$ correspondingly.
The following holds:
\begin{align*}
\phi = \sum_{i\in [\hat{\mu}]} \phi_i 
&\stackrel{1}{=} \sum_{i\in [\hat{\mu}]} \left\lceil \frac{C(\SSS_i)}{\PhaseSize+1} \right\rceil 
\leq \sum_{i\in [\hat{\mu}]} \left( \frac{C(\SSS_i)}{\PhaseSize+1} + 1 \right) 
\stackrel{2}{\leq} \hat{\mu} + \frac{\sum_{i\in [\hat{\mu}]}  \lambda_{\alpha^{\prime}}(\SSS_i)+1}{\PhaseSize+1}  \\
&\stackrel{3}{\leq} \hat{\mu} + \frac{\lambda_{\alpha^{\prime}}(\SSS) + 3\hat{\mu}}{\PhaseSize+1}  
\leq \mu + \kappa + \frac{\lambda_{\alpha^{\prime}}(\SSS) + 3(\mu + \kappa)}{\PhaseSize+1} 
= O\left(\mu + \frac{\lambda}{\PhaseSize} \right)
\end{align*}
where (1) is true since on each segment $\SSS_i$ there is no phase reset and we start a new phase every $\PhaseSize+1$ number of steps (see the proof of Lemma \ref{lem:outputUpdatesBounds}), 
(2) holds since for every output modification the value of $\mathcal{F}$ progresses by at least factor of $(1/2)\cdot \StepSize \geq \alpha^{\prime}$ (see the proof of Lemma \ref{lem:outputUpdatesBounds}),
(3) is true by Lemma \ref{lem:substreamFlipNumber}.

\paragraph{Output per level.} In every phase there are at most $\PhaseSize$ number of output modifications. And so (See the proof of Lemma \ref{lem:outputUpdatesBounds}), for $j\in [\beta]$, the number of output modifications associated with  level $j$ estimators is $O(\PhaseSize/2^{j})$. That is:
$$
C_j(\SSS) = \phi \cdot O\left(\frac{\PhaseSize}{2^{j}}\right) = O\left(\mu + \frac{\lambda}{\PhaseSize} \right)\cdot  O\left(\frac{\PhaseSize}{2^{j}}\right)= 
O\left(\frac{\mu\cdot \alpha^{-1} + \lambda}{2^{j}}\right) = 
O\left(\frac{\hat{\lambda}}{2^{j}}\right)
$$

\end{proof}
An immediate Corollary is that calibrating the input $\hat{\lambda} = \Omega(\mu \alpha^{-1} + \lambda)$, algorithm \texttt{RobustDE} will not get to capping state. That is since algorithm \texttt{RobustDE} is setting the parameters $P_j = \Omega(\hat{\lambda}/2^{j})$ for an input $\hat{\lambda}$, resulting in $P_j > C_j(\SSS)$ as required.

\begin{corollary}[No capping in extended \ref{alg:ROEF}.] 
Let $\SSS$ be a stream with $(\alpha^{\prime},m)$-flip number and $(\gamma_0, m)$-twist number bounded by $\lambda$ and $\mu$ correspondingly. Calibrating $\hat{\lambda}$, the input of $\ref{alg:ROEF}$, to $\hat{\lambda} = \Omega(\mu\cdot \alpha^{-1} + \lambda)$ is sufficient to ensure $\ref{alg:ROEF}$ will not get into capping state.
\end{corollary}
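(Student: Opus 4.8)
The plan is to obtain this as an immediate bookkeeping consequence of Lemma~\ref{lem:lambdaCalibration} together with the definition of the capping mechanism in Algorithm~\texttt{RobustDE}. First I would recall how capping is triggered: the flag $\NoCapping$ equals \textsf{True} precisely when $\Capp_j < P_j$ for \emph{every} level $j\in[\beta+1]\cup\{\STwrapper\}$, and each counter $\Capp_j$ is incremented only at time steps on which level $j$ is selected and the output is modified. Hence at every time $t$ we have $\Capp_j \le C_j(t) \le C_j(m)$, so it suffices to guarantee $C_j(m) < P_j$ for all $j$.

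Next I would invoke Lemma~\ref{lem:lambdaCalibration}: since $\SSS$ has $(\alpha',m)$-flip number at most $\lambda$ and $(\gamma_0,m)$-twist number at most $\mu$, with $\alpha' = \tfrac12\StepSize(\alpha)$ and $\gamma_0=O(\alpha)$ as in that lemma, we get $C_j(m) \le c_1\cdot \hat\lambda / 2^j$ for every level $j$, where $c_1$ is the absolute constant hidden in the $O(\cdot)$ of that lemma and $\hat\lambda = \Theta(\lambda+\mu\alpha^{-1})$. On the other hand, \texttt{RobustDE} sets $P_j = c_2\cdot \hat\lambda/2^j$ with a constant $c_2$ that we are free to choose (this is exactly the freedom in the statement $P_j=\Omega(\hat\lambda/2^j)$, cf.\ Corollary~\ref{cor:NoCapping}). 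Taking $c_2 > c_1$ — equivalently, taking the hidden constant in $\hat\lambda=\Omega(\mu\alpha^{-1}+\lambda)$ large enough — forces $P_j > C_j(m)$ simultaneously for all $j\in[\beta+1]\cup\{\STwrapper\}$ (using also that the $\STwrapper$ level shares the budget of level $\beta$, i.e.\ $P_{\STwrapper}=P_\beta$, and is subject to the matching count bound). Therefore $\Capp_j<P_j$ at every step, $\NoCapping$ stays \textsf{True} throughout, and the algorithm never enters the capping state.

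The point I would be careful about is that this is all conditioned on the standing hypotheses under which Lemma~\ref{lem:lambdaCalibration} was proved: a good execution (Definition~\ref{def:goodRun}), Assumption~\ref{ass:accurateEstimators} holding at the relevant times, and Lemma~\ref{lem:maxMonTrigger}'s bound of $\mu$ on the number of external phase-reset commands issued by \texttt{Guardian}. So the corollary should really be read as: on a good execution (with Assumption~\ref{ass:accurateEstimators} in force, which itself holds with high probability by Lemma~\ref{lem:accurateEstimationsAllTime}), the calibration $\hat\lambda=\Omega(\mu\alpha^{-1}+\lambda)$ prevents capping. There is essentially no remaining obstacle — the substantive estimate $C_j \le O(\hat\lambda/2^j)$ is exactly Lemma~\ref{lem:lambdaCalibration}, and the corollary only contributes the constant-chasing step of comparing, level by level, the realized count $C_j$ against the allotted budget $P_j$.
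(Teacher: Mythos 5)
Your proposal is correct and follows the paper's own argument: the corollary is exactly the constant-chasing comparison of the counts $C_j \le O(\hat\lambda/2^j)$ from Lemma~\ref{lem:lambdaCalibration} against the budgets $P_j = \Omega(\hat\lambda/2^j)$ set by \texttt{RobustDE}, which is all the paper does. Your added remark that this is conditioned on a good execution and Assumption~\ref{ass:accurateEstimators} (and the bound of Lemma~\ref{lem:maxMonTrigger}) is consistent with how those hypotheses are carried implicitly in the paper.
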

We now present the resulting space bounds of the extended framework. 

\begin{theorem}[Extended framework for Adversarial Streaming - Space]\label{thm:ExtensionSpace}
Provided that there exist: 
\begin{enumerate} 
	\item An oblivious streaming algorithm $\mathsf{E}_{\ST}$ for functionality $\mathcal{F}$, that guarantees that with probability at least $9/10$ all of it's estimates are accurate to within a multiplicative error of $(1\pm \alpha_{\ST})$ with space complexity of $S_{\ST}(\alpha_{\ST}, \frac{1}{10}, n,m)$
	\item For every $\gamma$ there is a $(\gamma,\alpha_{\DE},\frac{1}{10})$-$\DE$ for $\mathcal{F}$ using space $\gamma \cdot S_{\DE}(\alpha_{\DE},\frac{1}{10},n,m)$.
\end{enumerate}  
Then there exist an adversarially robust streaming algorithm for functionality $\mathcal{F}$ that for any stream $\SSS$ with a bounded flip number $\lambda_{\alpha^{\prime},m}(\SSS)< \lambda$ and a bounded twist number $\mu_{\gamma_0, m}(\SSS) < \mu$ (where $\alpha^{\prime}, \gamma_0 = O(\alpha)$), s.t.\ with probability at least $1-\delta$ its output is accurate to within a multiplicative error of $(1\pm \alpha)$ for all times $t\in [m]$, and has a space complexity of
$$
O\left(
\sqrt{\alpha\cdot \lambda + \mu} \cdot 
\polylog\text{}_{\ALG}
\right)\cdot 
\left[
S_{\ST} + 
S_{\DE}
\right]  \text{.}
$$
Where:
\begin{enumerate}
	\item $S_{\ST} =  S_{\ST}(O(\alpha),\frac{1}{10},n,m)$.
	\item $S_{\DE} =  S_{\DE}(O(\alpha/\log(\alpha^{-1})),\frac{1}{10\hat{\lambda}},n,m)$, for $\hat{\lambda} = O(\lambda + \mu \cdot \alpha^{-1})$
	\item $\polylog_{\ALG} 
 = \polylog(\lambda, \mu, \alpha^{-1}, \delta^{-1}, m, n)$.
\end{enumerate}
\end{theorem}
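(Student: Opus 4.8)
The algorithm is \texttt{Guardian} (Algorithm~\ref{alg:ViolationMonitor}) wrapped around \texttt{RobustDE} (Algorithm~\ref{alg:ROEF} with the modification of Algorithm~\ref{alg:PhaseResetCommand}). The first step is to hand \texttt{RobustDE} the object it requires: applying Corollary~\ref{cor:TDEfromDE} to the assumed $(\gamma,\alpha_{\DE},\tfrac1{10})$-difference estimator with $p=\hat\lambda$ enabling times and target failure probability $\tfrac1{10}$ yields, for every $\gamma$, a $(\gamma,\alpha_{\DE},\hat\lambda,\tfrac1{10})$-toggle difference estimator of space $2\,S_{\DE}(\alpha_{\DE},\tfrac1{10\hat\lambda},n,m)$, so the hypotheses of Theorem~\ref{thm:FrameworkSpace} and of the extended analysis of Sections~\ref{sec:formalAnalysis}--\ref{sec:extensionSpace} are met. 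I then run \texttt{Guardian} with parameters $(\alpha,\delta,\lambda,\mu,\mathsf{E}_{\ST},\mathsf{E}_{\TDE})$; internally it runs \texttt{RobustDE} with accuracy $\alpha/2$, failure $\delta/2$ and flip-number parameter $\hat\lambda=\Theta(\lambda+\mu\alpha^{-1})$, together with the monitoring strong trackers $\bar{\mathsf{E}}_{\STmon}$ with $P_{\STmon}=\Theta(\mu)$.

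\noindent\textbf{Correctness.}
I would establish accuracy by chaining the lemmas already in place. Since the input stream has twist number below $\mu$, Lemma~\ref{lem:maxMonTrigger} gives that \texttt{Guardian} issues at most $\mu$ external phase-reset commands. Feeding $\hat\lambda=\Theta(\lambda+\mu\alpha^{-1})$ into \texttt{RobustDE} makes Lemma~\ref{lem:lambdaCalibration} and its corollary applicable: the per-level output-modification counts obey $C_j(\SSS)=O(\hat\lambda/2^j)<P_j$, and the $\STwrapper$-level sparse-vector budget $P_{\STwrapper}=\Theta(\alpha\hat\lambda)$ covers all phase resets (the $O(\alpha\lambda)$ internal ones plus the $\le\mu$ external ones), so the capping state is never reached. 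With no capping, the conditional analysis of Section~\ref{sec:formalAnalysis} runs with $\hat\lambda$ in the role of $\lambda$: privacy of \texttt{RobustDE} w.r.t.\ each estimator database (Lemma~\ref{lem:EstimatorsRandomStringsPrivacy}) together with the generalization property of differential privacy (Lemmas~\ref{lem:boundedEstimationRanges}, \ref{lem:AccurateEstimations}, \ref{lem:accurateEstimationsAllTime}) keeps $80\%$ of the estimators accurate at every level at all times, whence Theorem~\ref{thm:algCorrectness} yields that \texttt{RobustDE} is $(1\pm\alpha/2)$-accurate at every $t\in[m]$ with probability $\ge1-\delta/2$. For the outer layer, Lemmas~\ref{lem:privacyMonitor} and~\ref{lem:AccurateMonEstimations} make the violation-immune monitors $\bar{\mathsf{E}}_{\STmon}$ robust against the adaptive stream, and Lemma~\ref{lem:monOutputAccuracy} then gives that \texttt{Guardian}'s output is $(1\pm\alpha)$-accurate at every $t\in[m]$ with probability $\ge1-\delta$ (the $\delta$ absorbing, by a union bound, the $\delta/2$ from \texttt{RobustDE} and the $\delta/4+\delta/4$ from the monitor's Laplace noises and its estimators).

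\noindent\textbf{Space.}
The dominant cost is the number of stored estimators times the space of one estimator. For the \texttt{RobustDE} component this is exactly the quantity bounded in Theorem~\ref{thm:FrameworkSpace}, evaluated with flip-number bound $\hat\lambda$ (legitimate because, as above, the $\hat\lambda$-driven choices of $P_j,\eps_j,\mathsf{K}_j$ reproduce the hypotheses that no capping occurs and that all estimators stay accurate), namely $O\big(\sqrt{\alpha\hat\lambda}\cdot\polylog_{\ALG}\big)\cdot\big[S_{\ST}(O(\alpha),n,m)+S_{\TDE}(O(\alpha/\log(\alpha^{-1})),\hat\lambda,n,m)\big]$. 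Substituting $\hat\lambda=O(\lambda+\mu\alpha^{-1})$ collapses $\sqrt{\alpha\hat\lambda}$ to $O(\sqrt{\alpha\lambda+\mu})$, and Corollary~\ref{cor:TDEfromDE} rewrites the TDE space as $O\big(S_{\DE}(O(\alpha/\log(\alpha^{-1})),\tfrac1{10\hat\lambda},n,m)\big)$. The monitor itself stores $\mathsf{K}_{\STmon}=\tilde O(\eps_{\STmon}^{-1})=\tilde O(\sqrt{P_{\STmon}})=\tilde O(\sqrt\mu)$ copies of the strong tracker (the extra constraint $\mathsf{K}_{\STmon}\ge\eps^{-2}\log(2\eps/\delta')$ from Lemma~\ref{lem:AccurateMonEstimations} is only polylogarithmic since $\eps$ is constant), i.e.\ $\tilde O(\sqrt\mu)\cdot S_{\ST}(O(\alpha),n,m)$, which is dominated because $\sqrt\mu\le\sqrt{\alpha\lambda+\mu}$. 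Summing the two contributions gives the claimed bound.

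\noindent\textbf{Main obstacle.}
The delicate point is the no-capping step, where the intrinsic circularity of the argument --- the accuracy of \texttt{RobustDE} is needed to bound the number of private aggregations, which is needed for the privacy that justifies accuracy --- is broken only through the hard-coded caps $P_j$. The crux is to verify that one single setting $\hat\lambda=\Theta(\lambda+\mu\alpha^{-1})$ simultaneously restores $P_j>C_j$ at every toggle-difference-estimator level \emph{and} leaves the $\STwrapper$-level sparse-vector mechanism enough budget to absorb the $\mu$ externally triggered resets, i.e.\ that each external reset costs only one extra phase and hence only an additive $O(\mu)$ (resp.\ $O(\mu/2^j)$) term in the per-level count. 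This is precisely the content of Lemma~\ref{lem:lambdaCalibration}; the remaining work is the bookkeeping of checking that every cited statement's hypotheses hold under these parameter settings and that the failure probabilities and polylogarithmic factors compose as stated.
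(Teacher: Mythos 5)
Your proposal is correct and follows essentially the same route as the paper's proof: build the TDE from the assumed DE via Corollary~\ref{cor:TDEfromDE}, feed \texttt{RobustDE} the inflated flip-number parameter $\hat\lambda=\Theta(\lambda+\mu\alpha^{-1})$ justified by Lemma~\ref{lem:lambdaCalibration} (so no capping occurs despite the at most $\mu$ external resets from Lemma~\ref{lem:maxMonTrigger}), obtain accuracy from Theorem~\ref{thm:algCorrectness} together with Lemmas~\ref{lem:privacyMonitor}, \ref{lem:AccurateMonEstimations}, \ref{lem:monOutputAccuracy} for the monitor, and get the space bound by instantiating Theorem~\ref{thm:FrameworkSpace} at $\hat\lambda$ while noting the $\tilde O(\sqrt\mu)\cdot S_{\ST}$ cost of \texttt{Guardian} is subsumed.
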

\begin{proof} In order to use the framework of Algorithm $\texttt{RobustDE}$ for functionality $\mathcal{F}$ it is necessary (by Theorem \ref{thm:FrameworkSpace}) to have for every $\gamma,p$ a $(\gamma,\alpha_{\TDE},p,\frac{1}{10})$-$\TDE$ for $\mathcal{F}$ using space $\gamma \cdot S_{\TDE}(\alpha_{\TDE},p,n,m)$. 
By Corollary \ref{cor:TDEfromDE}, it is possible to construct a $\TDE$ from a $\DE$ (that has the same accuracy guarantee) with space of $ S_{\TDE}(\gamma,\alpha,\delta,p,n,m) = 2\cdot S_{\DE}(\gamma,\alpha,\delta/p,n,m)$.
Thus having a $(\gamma,\alpha_{\DE},1/10)$-$\DE$ with space of $\gamma\cdot S_{\DE}(\alpha_{\DE},1/10,n,m)$ imply a $(\gamma,\alpha_{\TDE}, p, 1/10)$-$\TDE$ with space of $S_{\TDE}=2\cdot S_{\DE}(\gamma, \alpha_{\TDE}, 1/(10\cdot p),n,m)$ with the same accuracy guarantee.

\paragraph{Sufficient parameter calibration.} By Lemma \ref{lem:lambdaCalibration}, calibrating $\hat{\lambda} = \Omega(\lambda + \mu\cdot \alpha^{-1})$ is sufficient to ensure that algorithm \texttt{RobustDE} will not get to capping state. 
(in addition in Lemma \ref{lem:lambdaCalibration} the required accuracy constant of the flip number is required to be $\alpha^{\prime} = (1/2)\cdot \StepSize(\alpha) \leq (1/2)\cdot \alpha/(2\Gamma) $ that is $\alpha^{\prime} = O(\alpha)$.)
If in addition we configure $P_{\STmon} > \mu$, then by Lemmas \ref{lem:privacyMonitor}, \ref{lem:monOutputAccuracy} we have that the output of \texttt{Guardian} is $\alpha$-accurate in all times $t\in [m]$.

\paragraph{Space of \ref{alg:ViolationMonitor}.} Space of \texttt{Guardian} alone is accounted with the space of a $\mathsf{K}_{\STmon}$ number of $\mathsf{E}_{\STmon}$ estimators. These are strong trackers with accuracy $\alpha_{\STmon} = (1/10)\cdot \alpha = O(\alpha)$.
\begin{align*}
\Space(\texttt{Guardian}(\mathsf{E}_{\ST}, \mathsf{E}_{\DE}, \lambda, \mu, \alpha, \delta, n,m)) &= \mathsf{K}_{\STmon}\cdot \Space(\mathsf{E}_{\STmon})\\
&=O\left( 
\sqrt{P_{\STmon} \cdot \log \left(\frac{1}{\delta^{\prime}} \right) }\log\left( \frac{m}{\delta^{N}}\right)
\right) \cdot S_
{\ST}\\
&=O\left( 
\sqrt{\mu} \cdot \log ^{1.5}\left( \frac{m}{\delta}\right)
\right) \cdot S_
{\ST}
\end{align*}
Since $\delta^{\prime} = O(\delta/m)$, $\delta^{N} = O(\delta)$, $P_{\STmon} = O(\mu)$.

\paragraph{Space of $\ref{alg:ROEF}$.} We have that  $S_{\TDE}=2\cdot S_{\DE}(\gamma, \alpha_{\TDE}, 1/(10\cdot p),n,m)$ and $\hat{\lambda} = O(\lambda + \mu\cdot \alpha^{-1})$.
And so, by plugging in $\hat{\lambda}$, $S_{\TDE}(\alpha,\delta,p,n,m) = 2\cdot S_{\DE}(\alpha,\delta/p,n,m)$ in \ref{thm:FrameworkSpace} we get the required bounds:
\begin{align*}
\Space(\texttt{RobustDE}(\mathsf{E}_{\ST}, \mathsf{E}_{\DE}, \hat{\lambda}, \alpha, \delta, n,m)) &= O\left(
\sqrt{\alpha\cdot \hat{\lambda}} \cdot 
\polylog\text{}_{\ALG}
\right)\cdot 
\left[
S_{\ST} + 
S_{\TDE}
\right] \\
&= O\left(
\sqrt{\alpha\cdot \lambda + \mu} \cdot 
\polylog\text{}_{\ALG}
\right)\cdot 
\left[
S_{\ST} + 
S_{\DE}
\right] 
\end{align*}
Where 
\begin{enumerate}
	\item $S_{\ST} = S_{\ST}(\alpha_{\ST},1/10,n,m) = S_{\ST}(O(\alpha),1/10,n,m)$.
	\item $S_{\DE} = S_{\DE}(\alpha_{\TDE},1/(10\hat{\lambda}),n,m) = S_{\DE}(O(\alpha/\log(\alpha^{-1})),1/(10\hat{\lambda}),n,m)$.
	\item $\polylog_{\ALG} =  \left[ 
\log\left( \frac{m}{\delta^{*}}\right) + \log\left( \frac{\hat{\lambda}}{\alpha\delta^{*}} \log(n)\right)
\right] 
\sqrt{\log\left(\frac{m}{\delta^{*}}\right)}
 = \polylog(\lambda + \mu\cdot \alpha^{-1}, \alpha^{-1}, \delta^{-1}, m, n)$.
    \item $\hat{\lambda} = O(\lambda + \mu \cdot \alpha^{-1})$, $\delta^{*} = \delta/\log(\alpha^{-1})$.
\end{enumerate}

\paragraph{Total space of extension.} It remain to calculate the resulting bounds of algorithms \texttt{RobustDE}, \texttt{Guardian}. Since $\log^{1.5}(m/\delta) = O(\polylog_{\ALG})$, then the space of \texttt{Guardian} is subsumed in the space of \texttt{RobustDE}.
\end{proof}
To apply our extended framework to $F_2$, we first cite constructions of a strong tracker and of a difference estimator for $F_2$, and then calculate the overall space complexity that results from our framework.

\begin{theorem}[Oblivious strong tracker for $F_2$ \cite{alon1999space,DBLP:conf/soda/ThorupZ04}]\label{thm:F2STSpace} There exists a strong tracker for $F_2$ functionality s.t.\ for every stream $S$ of length $m$ outputs on every time step $t\in [m]$ an $\alpha$-accurate estimation $z_t\in (1\pm \alpha)\cdot F_2(S)$ with probability at least $9/10$ and has space complexity of 
$
O
\left(
\frac{1}{\alpha^{2}}\log m
\left( 
\log n + \log m
\right) 
\right)
$
\end{theorem}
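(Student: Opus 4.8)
The plan is to derive the stated bound from the classical AMS second-moment sketch \cite{alon1999space} together with median-of-means boosting and a union bound over time steps; the Thorup--Zhang construction \cite{DBLP:conf/soda/ThorupZ04} then matches the same space while also giving fast updates, but for the space bound the generic argument suffices. First I would fix a $4$-wise independent family of sign functions $\varepsilon:[n]\to\{-1,+1\}$ (seed storable in $O(\log n)$ bits) and maintain the linear sketch $Y=\sum_{i\in[n]}\varepsilon(i)\,u^{(t)}[i]$. Since each update $\langle s_t,\Delta_t\rangle$ simply adds $\varepsilon(s_t)\Delta_t$ to $Y$, this is maintainable over a turnstile stream, and because $\mathcal{F}=F_2$ is assumed bounded polynomially in $n$ (and $\log m=\Theta(\log n)$), the value $Y$ is always a $\mathrm{poly}(n,m)$ integer representable in $O(\log n+\log m)$ bits.

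Next I would record the two standard moment identities, both of which use the $4$-wise independence of $\varepsilon$: $\mathbb{E}[Y^2]=\sum_i u^{(t)}[i]^2=F_2(S_t)$ and $\mathbb{E}[Y^4]=\sum_i u^{(t)}[i]^4+3\sum_{i\neq j}u^{(t)}[i]^2u^{(t)}[j]^2\le 3F_2(S_t)^2$, hence $\mathrm{Var}[Y^2]\le 2F_2(S_t)^2$. Then: (i) average the squares $Y_1^2,\dots,Y_{k_1}^2$ of $k_1=O(1/\alpha^2)$ independent copies, so the average has variance at most $\tfrac13\alpha^2F_2(S_t)^2$ and, by Chebyshev, lies in $(1\pm\alpha)F_2(S_t)$ with probability at least $2/3$ for each fixed $t$; (ii) take the median of $k_2=O(\log(m/\delta))$ such independent averaged estimators, so by a Chernoff bound the median lies in $(1\pm\alpha)F_2(S_t)$ with probability at least $1-\delta/m$ for each fixed $t$; (iii) union-bound over the $m$ time steps to conclude that with probability at least $1-\delta$ every reported value $z_t$ is a $(1\pm\alpha)$-approximation of $F_2(S_t)$. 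Setting $\delta=1/10$ gives the claimed failure probability (indeed the full in-probability guarantee of Definition~\ref{def:obliv}). For the space accounting I would multiply the number of maintained counters $k_1k_2=O(\tfrac{1}{\alpha^2}\log m)$ by the $O(\log n+\log m)$ bits per counter and add the $O(\tfrac{1}{\alpha^2}\log m\cdot\log n)$ bits for the hash seeds, obtaining $O\!\left(\tfrac{1}{\alpha^2}\log m\,(\log n+\log m)\right)$ as stated.

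The only delicate point — and the place I would expect to spend a sentence of care rather than any real effort — is the strong-tracking requirement: the estimator must be simultaneously correct at \emph{all} prefixes $S_1,\dots,S_m$. This is easy here precisely because the algorithm outputs at exactly the $m$ discrete times $t\in[m]$, so a plain union bound over $m$ events suffices and no continuous net over the evolving frequency vector is needed; this is exactly what the choice $k_2=O(\log(m/\delta))$ accommodates, and is the reason the bound carries a $\log m$ factor. Everything else is routine second-moment computation, so I would keep the write-up at the level of the sketch above and cite \cite{alon1999space,DBLP:conf/soda/ThorupZ04} for the remaining details.
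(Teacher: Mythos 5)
Your argument is correct: the paper does not prove this statement but imports it from \cite{alon1999space,DBLP:conf/soda/ThorupZ04}, and your reconstruction is exactly the standard route those references support — the AMS $4$-wise-independent sign sketch with $\mathbb{E}[Y^2]=F_2$ and $\mathrm{Var}[Y^2]\le 2F_2^2$, averaging $O(1/\alpha^2)$ copies, taking a median of $O(\log(m/\delta))$ groups, and union-bounding over the $m$ prefixes to get the strong-tracking guarantee, with the space accounting ($O(\frac{1}{\alpha^2}\log m)$ counters of $O(\log n+\log m)$ bits each plus seeds) matching the stated bound. Your observation that strong tracking here needs only a union bound over the $m$ discrete output times (no net argument) is precisely why the single $\log m$ factor suffices, so nothing is missing.
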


\begin{theorem}[Oblivious \DE\text{} for $F_2$ \cite{woodruff2020tight}]\label{thm:DESpace} There exists a $(\gamma, \alpha, \delta)$-difference estimator for $F_2$ that uses space of 
$O\left (\gamma\cdot \frac{\log n}{\alpha^{2}} \left(\log \frac{1}{\alpha} + \log \frac{1}{\delta} \right) \right)$
\end{theorem}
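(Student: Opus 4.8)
Theorem~\ref{thm:DESpace} is precisely the $F_2$ difference estimator of Woodruff and Zhou~\cite{woodruff2020tight}, so the proof is ultimately a citation; but here is the construction I would reproduce. Fix the enabling time $e$ and a current time $t>e$, write $u=u^{(e)}$ for the prefix frequency vector and $w=u^{(e,t)}$ for the suffix, so that $u^{(t)}=u+w$ and
\[
F_2(\mathcal{S}_t)-F_2(\mathcal{S}_e)=\|u+w\|_2^2-\|u\|_2^2=2\langle u,w\rangle+\|w\|_2^2 .
\]
I would aim to estimate the right-hand side to additive error $\alpha\|u\|_2^2$, exploiting the turnstile promise $\|w\|_2^2=F_2(\mathcal{S}_t^e)\le\gamma\|u\|_2^2$ of \ref{req:F2DEratio}, which by Cauchy--Schwarz also yields $|\langle u,w\rangle|\le\|u\|_2\|w\|_2\le\sqrt{\gamma}\,\|u\|_2^2$.

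The building block is a single AMS counter: draw a $4$-wise independent sign vector $(\sigma_i)_{i\in[n]}$, maintain $A=\sum_i\sigma_i u_i$ over the prefix and $B=\sum_i\sigma_i w_i$ over the updates arriving strictly after $e$, using the \emph{same} signs for both (these counters are maintained trivially under turnstile updates). Then $(A+B)^2-A^2=2AB+B^2$ is an unbiased estimator of the target $2\langle u,w\rangle+\|w\|_2^2$. The one step that needs genuine work is the variance bound: expanding $(2AB+B^2)^2$ and using $4$-wise independence gives $\mathbb{E}[A^2B^2]\le\|u\|_2^2\|w\|_2^2+2\langle u,w\rangle^2$, $\mathbb{E}[AB^3]\le 3|\langle u,w\rangle|\,\|w\|_2^2$, and $\mathbb{E}[B^4]\le 3\|w\|_2^4$; substituting $\|w\|_2^2\le\gamma\|u\|_2^2$ and $|\langle u,w\rangle|\le\sqrt{\gamma}\|u\|_2^2$, every term is $O(\gamma\,\|u\|_2^4)$, so $\mathrm{Var}\big((A+B)^2-A^2\big)=O(\gamma\,\|u\|_2^4)$. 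This cancellation is the whole point: estimating $\|u^{(t)}\|_2^2$ and $\|u\|_2^2$ separately would give variance $\Theta(\|u\|_2^4)$, whereas differencing with shared randomness shaves off a factor $\gamma$, which is exactly what lets the number of counters scale with $\gamma$.

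From there I would finish by standard amplification: average $k=\Theta(\gamma/\alpha^2)$ independent copies so the variance drops to $O(\alpha^2\|u\|_2^4)$ and Chebyshev gives additive error $\alpha\|u\|_2^2$ with constant probability, then take the median of $\Theta(\log\tfrac1\alpha+\log\tfrac1\delta)$ such averages, where the $\log\tfrac1\alpha$ term absorbs a union bound over the $\poly(1/\alpha)$ time steps $t$ in the relevant window at which the promise $|F_2(\mathcal{S}_t)-F_2(\mathcal{S}_e)|\le\gamma F_2(\mathcal{S}_e)$ can hold (so the estimator is a strong tracker over that window). Each of the $O\!\left(\gamma\alpha^{-2}(\log\tfrac1\alpha+\log\tfrac1\delta)\right)$ counters stores a value of magnitude $\poly(n,m)$, hence $O(\log n)$ bits using $\log m=\Theta(\log n)$, and each sign seed is $O(\log n)$ bits, giving the claimed space $O\!\left(\gamma\cdot\frac{\log n}{\alpha^2}(\log\tfrac1\alpha+\log\tfrac1\delta)\right)$. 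The main obstacle is thus purely the fourth-moment variance computation that establishes the $\gamma$-cancellation; everything else is routine.
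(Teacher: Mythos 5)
The paper never proves Theorem~\ref{thm:DESpace}: it is imported verbatim from \cite{woodruff2020tight} (their $F_2$ difference estimator), so there is no internal argument to compare you against, only the cited one. Your reconstruction does capture the heart of that construction, and the part you identify as the crux is done correctly: sharing the sign vector between the prefix sketch $A$ and the suffix sketch $B$, writing the target as $2\langle u,w\rangle+\|w\|_2^2$, and the fourth-moment computation showing that under the promise~(\ref{req:F2DEratio}) every term of $\mathbb{E}[(2AB+B^2)^2]$ is $O(\gamma\|u\|_2^4)$, so that $\Theta(\gamma/\alpha^2)$ averages plus a median give additive error $\alpha\|u\|_2^2$ per fixed time with exponentially small failure probability, within the stated space.

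The genuine gap is the final tracking step, which you dismiss as routine. Definition~\ref{def:TDE} (and the way the framework consumes the estimator via Corollary~\ref{cor:TDEfromDE} and Lemma~\ref{lem:boundedEstimationRanges}) demands that, with probability $1-\delta$, the output be accurate \emph{simultaneously at every} time $t$ for which the promise holds. Your justification --- ``a union bound over the $\poly(1/\alpha)$ time steps at which the promise can hold'' --- is not correct as stated: the set of times satisfying $|F_2(\mathcal{S}_t)-F_2(\mathcal{S}_e)|\le\gamma F_2(\mathcal{S}_e)$ can comprise essentially the entire suffix of the stream, i.e.\ up to $m$ times, so its cardinality is not $\poly(1/\alpha)$ and a naive union bound would cost a $\log m$ factor that the claimed space bound does not contain. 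What is bounded by roughly $\gamma/\alpha$ is the number of times the true difference can move by an additive $\Theta(\alpha)\|u\|_2^2$; converting that into an ``all $t$'' guarantee needs either monotonicity of the difference in $t$ (true insertion-only, false in the turnstile setting relevant here, where both $\langle u,w^{(e,t)}\rangle$ and $\|w^{(e,t)}\|_2^2$ can oscillate) or a checkpoint/chaining argument that also controls the sketch's fluctuations between checkpoints. That is exactly the non-routine content of the cited lemma, and it is also why this paper must police condition~(\ref{req:F2DEratio}) externally with the twist number and the \texttt{Guardian} wrapper. So: core estimator and variance bound correct, space accounting correct, but the strong-tracking upgrade is asserted rather than proved.
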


\begin{theorem}[$F_2$ Robust estimation]\label{thm:F2Space}
There exists an adversarially robust $F_2$ estimation algorithm for turnstile streams of length $m$ with a bounded  $(\alpha^{\prime}, m)$-flip number and $(\gamma_0, m)$-twist number  with parameters $\lambda$ and $\mu$ correspondingly (where $\alpha^{\prime}, \gamma_0 = O(\alpha)$), that guarantees $\alpha$-accuracy with probability at least $1-1/m$ in all time $t\in [m]$ with space complexity of
$$\tilde{O}\left(\frac{\sqrt{\alpha\lambda+\mu}}{\alpha^{2}}\log^{3.5}(m)\right)\text{.}$$
where $\tilde{O}$ stands for omitting $\polylog(\alpha^{-1})$ factors.
\end{theorem}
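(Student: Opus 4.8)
The plan is to derive Theorem~\ref{thm:F2Space} by instantiating the extended framework of Theorem~\ref{thm:ExtensionSpace} with the standard oblivious building blocks for $F_2$ and then tracking the polylogarithmic factors. First I would verify that $F_2$ meets the two preconditions of Section~\ref{sec:turnstileFramework}: a strong tracker with space $O\big(\tfrac{1}{\alpha^{2}}\log m(\log n+\log m)\big)$ is supplied by Theorem~\ref{thm:F2STSpace}, and a $(\gamma,\alpha_{\DE},\tfrac1{10})$-difference estimator with space $\gamma\cdot O\big(\tfrac{\log n}{\alpha_{\DE}^{2}}(\log\tfrac1{\alpha_{\DE}}+\log\tfrac1{\delta})\big)$ is supplied by Theorem~\ref{thm:DESpace}. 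Condition~\ref{con:streamMaxUpdateSize} (a single $\pm1$ update moves $F_2$ by at most a $(1\pm O(\alpha))$ factor) is handled exactly as in the remark following it: run the framework only on the portion of the stream where $F_2=\Omega(1/\alpha^{2})$, and below that threshold maintain $F_2$ with an oblivious tracker whose rounded answers are \emph{exact}, hence trivially robust.

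The conceptual point to address is that the $F_2$ difference estimator of Theorem~\ref{thm:DESpace} carries the extra turnstile requirement~\ref{req:F2DEratio} on the suffix norm $F_2(\mathcal{S}_t^{e})$, which is strictly stronger than the range condition~\ref{req:TDEdiffRange} that the framework itself guarantees via Lemma~\ref{lem:boundedEstimationRanges}. This is precisely what the \texttt{Guardian} wrapper and the twist number absorb: a time at which~\ref{req:TDEdiffRange} holds but~\ref{req:F2DEratio} fails is a $\gamma$-suffix violation in the sense of Definition~\ref{def:F2Violation}, on every such step \texttt{Guardian} detects the resulting loss of accuracy using its fresh oblivious trackers and issues a phase-reset command, and by Lemma~\ref{lem:maxMonTrigger} a stream whose $(\gamma_0,m)$-twist number is at most $\mu$ causes at most $\mu$ such resets. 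Hence, with $\alpha^{\prime},\gamma_0=O(\alpha)$ and $\hat\lambda=O(\lambda+\mu\alpha^{-1})$ as in Lemma~\ref{lem:lambdaCalibration}, the hypotheses of Theorem~\ref{thm:ExtensionSpace} are met and it yields space $O\big(\sqrt{\alpha\lambda+\mu}\cdot\polylog_{\ALG}\big)\cdot[S_{\ST}+S_{\DE}]$.

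It then remains to evaluate the three factors at $\delta=1/m$. Using $\log m=\Theta(\log n)$, the strong tracker contributes $S_{\ST}=S_{\ST}(O(\alpha),\tfrac1{10},n,m)=O\big(\tfrac{1}{\alpha^{2}}\log^{2}m\big)$. For the difference estimator I would plug in accuracy $\alpha_{\TDE}=O(\alpha/\log(\alpha^{-1}))$ and failure probability $\tfrac1{10\hat\lambda}$, noting that $\hat\lambda=O(\lambda+\mu\alpha^{-1})=\poly(m)$ since flip and twist numbers are at most $m$ and $F_2$ is polynomially bounded, so $\log\hat\lambda=O(\log m)$ and $S_{\DE}=\tilde{O}\big(\tfrac{\log^{2}m}{\alpha^{2}}\big)$ with $\tilde{O}$ hiding $\polylog(\alpha^{-1})$ factors; the $\gamma$ prefactor and the $p$ dependence from the $\DE$-to-$\TDE$ reduction of Corollary~\ref{cor:TDEfromDE} are already internalised in Theorem~\ref{thm:ExtensionSpace}. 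Finally $\polylog_{\ALG}=\big[\log(m/\delta^{*})+\log(\tfrac{\hat\lambda}{\alpha\delta^{*}}\log n)\big]\sqrt{\log(m/\delta^{*})}$ with $\delta^{*}=\delta/\log(\alpha^{-1})$ collapses to $O(\log^{1.5}m)$ up to $\polylog(\alpha^{-1})$ factors. Multiplying, $\sqrt{\alpha\lambda+\mu}\cdot O(\log^{1.5}m)\cdot\tilde{O}\big(\tfrac{\log^{2}m}{\alpha^{2}}\big)=\tilde{O}\big(\tfrac{\sqrt{\alpha\lambda+\mu}}{\alpha^{2}}\log^{3.5}m\big)$, as claimed, and the $\tilde{O}(\sqrt{\mu})$-factor cost of the \texttt{Guardian}'s own trackers is subsumed in the \texttt{RobustDE} space, as already observed in the proof of Theorem~\ref{thm:ExtensionSpace}. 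The main obstacle is not any new idea but disciplined bookkeeping: checking that $\lambda,\mu,\hat\lambda$ are all $\poly(m)$ so their logarithms fold into the $\log m$ terms, and confirming that the accuracy constants demanded by the extension ($\alpha^{\prime}=\tfrac12\StepSize(\alpha)$ and $\gamma_0=\tfrac{1+\alpha_{\ST}}{1-\alpha_{\ST}}\Gamma^{2}\cdot 2\alpha$) are both $O(\alpha)$, so that the flip- and twist-number hypotheses of Theorem~\ref{thm:F2Space} are exactly what the machinery needs.
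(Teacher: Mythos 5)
Your proposal is correct and follows essentially the same route as the paper: instantiate Theorem~\ref{thm:ExtensionSpace} with the oblivious $F_2$ strong tracker of Theorem~\ref{thm:F2STSpace} and the difference estimator of Theorem~\ref{thm:DESpace} (converted to a TDE via Corollary~\ref{cor:TDEfromDE}), let the \texttt{Guardian}/twist-number machinery (Lemmas~\ref{lem:maxMonTrigger} and~\ref{lem:lambdaCalibration}) absorb the suffix-norm requirement~\ref{req:F2DEratio}, and then evaluate $S_{\ST}$, $S_{\DE}$, and $\polylog_{\ALG}$ at $\delta=1/m$ with $n=\poly(m)$ to obtain the stated $\tilde{O}\big(\sqrt{\alpha\lambda+\mu}\,\alpha^{-2}\log^{3.5}m\big)$ bound. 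Your additional remarks on Condition~\ref{con:streamMaxUpdateSize} are consistent with the paper's treatment and do not change the argument.
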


\begin{proof} By Theorem \ref{thm:F2STSpace}, there exists a $(\alpha_{\ST}, \frac{1}{10})$-strong tracker for functionality $F_2$ with space complexity of $S_{\ST}(\alpha_{\ST}, \frac{1}{10}, n,m) = O\left( \alpha_{\ST}^{-2} \log m \left( \log n + \log m \right) \right) $. 
For $m=\poly(n)$ we get 
$S_{\ST} = O(\alpha^{-2} \log^2(m))$.
By theorem \ref{thm:DESpace}, there exists a $(\gamma, \alpha_{\DE}, \frac{1}{10})$-difference estimator for functionality $F_2$ with space complexity of $\gamma\cdot S_{\DE}(\alpha_{\DE}, \delta, n,m)$ where $S_{\DE} = O\left (\alpha_{\DE}^{-2} \log n \left(\log \alpha_{DE}^{-1} + \log \delta^{-1} \right) \right)$.
Then by Theorem \ref{thm:ExtensionSpace} we have:
\begin{align*}
&Space(F_2\text{-Extension}) = O\left(
\sqrt{\alpha\cdot \lambda + \mu} \cdot 
\polylog\text{}_{\ALG}
\right)\cdot 
\left[ S_{\ST} + S_{\DE} \right] \\
&\stackrel{1}{=} \tilde{O}\left(
\sqrt{\alpha\cdot \lambda + \mu} \cdot 
\left[ 
\log\left( \frac{m}{\delta}\right) + \log\left( \frac{\lambda + \mu\alpha^{-1}}{\alpha\delta} \log(n)\right)
\right] 
\sqrt{\log\left(\frac{m}{\delta}\right)}
\cdot 
\left[ S_{\ST} +  S_{\DE} \right]
\right)\\
&\stackrel{2}{=} \tilde{O}\left(
\sqrt{\alpha\cdot \lambda + \mu} \cdot 
\log^{1.5}\left( \frac{m}{\delta}\right) 
\left[ \alpha^{-2} \log^2(m)) + 
\alpha^{-2}\log(m) \log (\delta^{-1}) \right]
\right)
\end{align*}
where (1) is by plugging in $\polylog_{\ALG}$ and omitting factors of $\polylog \alpha^{-1}$, (2) is by again omitting factors of $\polylog \alpha^{-1}$, noting that $\lambda, \mu < m$ and by assuming that $n = \poly(m)$. Now, setting $\delta = 1/m$ we get:
$$
Space(F_2\text{-Extension}) = \tilde{O} \left(
\frac{\sqrt{\alpha\cdot \lambda + \mu}}{\alpha^2}\log^{3.5}(m)
\right)
$$
\end{proof}

\end{document}